\def\short{0}
\newif\ifhyper\IfFileExists{hyperref.sty}{\hypertrue}{\hyperfalse}
\ifhyper\usepackage{hyperref}\fi
\newtheorem{theorem}{Theorem}
\newtheorem{lemma}[theorem]{Lemma}
\newtheorem{proposition}[theorem]{Proposition}
\newtheorem{corollary}[theorem]{Corollary}
\newtheorem{claim}[theorem]{Claim}
\newtheorem{fact}[theorem]{Fact}
\newtheorem{definition}[theorem]{Definition}
\newtheorem{remark}[theorem]{Remark}
\newcommand{\good}{{{\mathrm{good}}}}
\newcommand{\w}{{{\bf W}}}
\newcommand{\kk}{{{\bf K}}}
\newcommand{\ltf}{{{\bf LTF}}}
\newcommand{\apx}[1]{\stackrel{#1}{\approx}}
\newcommand{\eps}{\epsilon}
\newcommand{\Var}{\operatorname{{\bf Var}}}
\newcommand{\sgn}{\mathrm{sign}}
\newcommand{\sign}{\mathrm{sign}}
\newcommand{\ignore}[1]{}
\newcommand{\cref}[1]{Corollary~\ref{cor:#1}}
\newcommand{\supip} {{^{i+}}}
\newcommand{\supim} {{^{i-}}}
\newcommand{\bits}{\{-1,1\}}
\newcommand{\bn}{\bits^n}
\newcommand{\R}{{\mathbb{R}}}
\newcommand{\N}{{\mathbb N}}
\newcommand{\E}{\operatorname{{\bf E}}}
\newcommand{\littlesum}{\mathop{\textstyle \sum}}
\newcommand{\poly}{\mathrm{poly}}
\newcommand{\Inf}{\mathbf{Inf}}
\newcommand{\wh}{\widehat}
\newcommand{\eqdef}{\stackrel{\textrm{def}}{=}}
\newcommand{\la}{\langle}
\newcommand{\ra}{\rangle}
\newcommand{\margarine}{\mathrm{marg}}
\newcommand{\T}{{{\bf T}}}
\renewcommand{\Pr}{\operatorname{{\bf Pr}}}
\newcommand{\dist}{\mathrm{dist}}
\newcommand{\Nd}{{\cal N}}
\newcommand{\new}[1]{{\color{red} {#1}}}
\newcommand{\newa}[1]{{{#1}}}
\newcommand{\newb}[1]{{{#1}}}
\newcommand{\inote}[1]{\footnote{{\bf [[Ilias: {#1}\bf ]] }}}
\begin{document}
\pagestyle{empty}

\title{A robust Khintchine inequality, and\\
algorithms for computing optimal constants in\\ Fourier analysis and
high-dimensional geometry }

\author{Anindya De\thanks{Research supported by NSF award CCF-0915929 and NSF  award CCF-1017403.}\\
University of California, Berkeley\\
{\tt anindya@cs.berkeley.edu}\\
\and
Ilias Diakonikolas\thanks{Research performed in part while supported by a
Simons Postdoctoral Fellowship at UC Berkeley.}\\
University of Edinburgh, Edinburgh, UK\\
{\tt ilias.d@ed.ac.uk}\\
\and Rocco A.\
Servedio\thanks{ Supported by NSF grants CCF-0915929 and
CCF-1115703.}\\
Columbia University\\
{\tt rocco@cs.columbia.edu}}

\maketitle

\setcounter{page}{0}

\thispagestyle{empty}

\begin{abstract}
This paper makes two contributions towards determining some well-studied
optimal constants in Fourier analysis \newa{of Boolean functions}
and high-dimensional geometry.

\begin{enumerate}

\item It has been known since 1994 \cite{GL:94}
that every linear threshold function has squared Fourier mass
at least $1/2$ on its degree-$0$ and degree-$1$ coefficients.
Denote the minimum such Fourier mass by
$\w^{\leq 1}[\ltf]$, where the minimum is taken over all $n$-variable
linear threshold functions and all $n \ge 0$.
Benjamini, Kalai and Schramm \cite{BKS:99} 
have conjectured that the true value of $\w^{\leq 1}[\ltf]$ is $2/\pi$.
We make progress on this conjecture by proving that $\w^{\leq 1}[\ltf]
\geq 1/2 + c$ for some absolute constant $c>0$.
The key ingredient in our proof is a ``robust'' version of the well-known
Khintchine inequality in functional analysis, which we
believe may be of independent interest.

\item
We give an algorithm with the following property:  given any $\eta > 0$,
the algorithm runs in time $2^{\poly(1/\eta)}$ and determines the value of
$\w^{\leq 1}[\ltf]$ up to an additive error of $\pm\eta$.  We give a similar
\newa{$2^{{\poly(1/\eta)}}$}-time 
algorithm to determine \emph{Tomaszewski's constant}
to within an additive error of $\pm \eta$; this is the
minimum (over all origin-centered hyperplanes $H$) fraction of points
in $\{-1,1\}^n$ that lie within Euclidean distance $1$ of $H$.
Tomaszewski's constant is conjectured to be $1/2$; lower bounds on it
have been given by Holzman and Kleitman \cite{HK92} and
 independently by Ben-Tal, Nemirovski and Roos 
\ifnum\short=0
\cite{BNR02}.
\fi
\ifnum\short=1
\cite{BNR02short}.
\fi
Our algorithms combine tools from anti-concentration
of sums of independent random variables, Fourier analysis, and Hermite
analysis of linear threshold functions.

\end{enumerate}
\end{abstract}

\newpage

\pagestyle{plain}

\section{Introduction}

This paper is inspired by a belief that simple mathematical objects should
be well understood.  We study two closely related kinds of simple
objects:  $n$-dimensional linear threshold functions
$f(x) = \sign(w\cdot x - \theta)$, and $n$-dimensional origin-centered
hyperplanes $H = \{x \in \R^n : w \cdot x = 0\}.$
Benjamini, Kalai and Schramm \cite{BKS:99} and Tomaszewski \cite{G86}
have posed the question of determining
two universal constants related to halfspaces and origin-centered hyperplanes
respectively; we refer to these quantities as ``the BKS constant'' and
``Tomaszewski's constant.''  While these constants
arise in various contexts including uniform-distribution learning
and optimization theory, little progress has been made on
determining their actual values over the past twenty years.
In both cases there is an easy upper bound which is conjectured to
be the correct value; Gotsman and Linial \cite{GL:94} gave the
best previously known lower bound on the BKS constant in 1994, and
Holzmann and Kleitman \cite{HK92}
gave the best known lower bound on Tomaszewski's constant in 1992.

We give two main results.  The first of these is an improved
lower bound on the BKS constant; a key ingredient
in the proof is a ``robust'' version of the well-known
Khintchine inequality, which we believe may be of
independent interest.  Our second main result is a pair of
algorithms for computing the BKS constant and Tomaszewski's constant
up to any prescribed accuracy.  The first algorithm,
given any $\eta > 0$, runs in time $2^{\poly(1/\eta)}$ and computes
the BKS constant up to an additive $\eta,$ and the second
algorithm runs in time \newa{$2^{{\poly(1/\eta)}}$} and has the same
\ifnum\short=1
guarantee
\fi
\ifnum\short=0
performance guarantee
\fi
for Tomaszewski's constant.

\subsection{Background and problem statements}~

\medskip

\noindent {\bf First problem:  low-degree Fourier weight of
linear threshold functions.}
A \emph{linear threshold function}, henceforth denoted simply LTF, is
a function $f: \{-1,1\}^n \to \{-1,1\}$ of the form
$f(x) = \sign(w \cdot x - \theta)$ where $w \in \R^n$ and $\theta \in \R$
(the univariate function $\sgn: \R \to \R$ is $\sign(z)=1$ for $z \geq 0$
and $\sign(z)=-1$ for $z<0$).
The values $w_1,\dots,w_n$ are the \emph{weights} and $\theta$ is
the \emph{threshold.}
Linear threshold functions play a central role in many areas of computer
science such as concrete complexity theory and machine learning, see
e.g. \cite{DGJ+:10} and the references therein.

It is well known \cite{BKS:99,Peres:04}
that LTFs are highly {noise-stable}, and hence
they must have a large amount of Fourier weight at low degrees.
For $f: \bn \to \R$ and $k \in [0, n]$ let us
define $\w^k[f] = \littlesum_{S \subseteq [n], |S| = k} \wh{f}^2(S)$ and
$\w^{\leq k}[f] = \littlesum_{j=0}^k\w^j[f]$;
we will be particularly interested in the Fourier weight of LTFs
at levels 0 and 1.
More precisely, for $n \in \N$ let $\ltf_n$ denote the set of all
$n$-dimensional LTFs, and let $\ltf = \cup_{n=1}^{\infty} \ltf_n$.
We define the following universal constant:

\begin{definition} \label{def:w1ltf}
\ifnum\short=0
$\w^{\leq 1}[\ltf] \eqdef  \inf_{h \in \ltf} \w^{\leq 1}(h)
= \inf_{n \in \N} \w^{\leq 1}[\ltf_n],$ where
$\w^{\leq 1}[\ltf_n] \eqdef  \inf_{h \in \ltf_n} \w^{\leq 1}(h).$
\fi
\ifnum\short=1
Let
$\w^{\leq 1}[\ltf] \eqdef  \inf_{h \in \ltf} \w^{\leq 1}(h) = \inf_{n \in \N}
\w^{\leq 1}[\ltf_n],$ where $\w^{\leq 1}[\ltf_n] \eqdef  \inf_{h \in \ltf_n}
\w^{\leq 1}(h).$
\fi

\end{definition}

Benjamini, Kalai and Schramm
(see \cite{BKS:99}, Remark 3.7)
and subsequently O'Donnell
(see the Conjecture following Theorem~2 of Section~5.1
of \cite{ODonnellbook}) have conjectured that
$\w^{\leq 1}[\ltf] = 2/\pi$, and hence we will sometimes
refer to
$\w^{\leq 1}[\ltf]$ as ``the BKS constant.''
As $n \to \infty$, a standard analysis of
the $n$-variable Majority function shows that
$\w^{\leq 1}[\ltf] \leq 2/\pi$.  Gotsman and Linial \cite{GL:94} observed
that
$\w^{\leq 1}[\ltf] \geq 1/2$ but until now no better lower bound
was known.
We note that since the universal constant
$\w^{\leq 1}[\ltf]$ is obtained by taking the infimum over an
infinite set, it is not \emph{a priori} clear whether the
computational problem of computing or even approximating
$\w^{\leq 1}[\ltf]$ is decidable.

Jackson
\ifnum\short=0
\cite{Jackson:06}
\fi
\ifnum\short=1
\cite{Jackson:06short}
\fi
has shown that
improved lower bounds on
$\w^{\leq 1}[\ltf]$ translate directly into improved
noise-tolerance bounds for agnostic weak learning
of LTFs in the ``Restricted Focus of Attention'' model of
Ben-David and Dichterman \cite{BenDavidDichterman:98}.
Further motivation for studying $\w^{\le 1} [f]$ comes from
the  fact that $\w^{1} [f]$ is closely related to the
noise stability of $f$ (see \cite{ODonnellbook}).
In particular, if $\mathbf{NS}_{\rho}[f]$ represents the noise
stability of $f$  when the noise rate is $(1-\rho)/2$, then
it is known that
\ifnum\short=0
$$
\left. \frac{d \mathbf{NS}_{\rho}[f]}{d\rho}\right|_{\rho=0}  =\w^{1} [f].
$$
\fi
\ifnum\short=1
$
\left. \frac{d \mathbf{NS}_{\rho}[f]}{d\rho}\right|_{\rho=0}  =\w^{1} [f].
$
\fi
This means that for a function $f$ with $\mathbf{E} [f]=0$, we have
$\mathbf{NS}_{\rho}[f] \rightarrow \rho   \cdot \w^{\le 1}[f] $ as
$\rho \rightarrow 0$. Thus, at very large noise rates,
$\w^{1} [f]$ quantifies the size of the ``noisy boundary" of the mean-zero
function $f$.

\medskip

\noindent
{\bf Second problem:  how many hypercube points have distance
at most 1 from an origin-centered hyperplane?}
For $n \in \mathbb{N}$ and $n >1$, let $\mathbb{S}^{n-1}$ denote the
$n$-dimensional sphere $\mathbb{S}^{n-1} =
\{w  \in \R^n : \| w \|_2=1\}$,
and let $\mathbb{S} = \cup_{n>1} \mathbb{S}^{n-1}$.
Each unit vector $w \in \mathbb{S}^{n-1}$
defines an origin-centered hyperplane
$H_w = \{x \in \R^n: w \cdot x = 0\}.$
Given a unit vector $w \in \mathbb{S}^{n-1}$, we define $\T(w) \in [0,1]$ to be
$\T(w) = \Pr_{x \in \bn} [|w \cdot x| \le 1]$,
the fraction of hypercube points in $\bn$ that lie within
Euclidean distance 1 of the hyperplane $H_w.$
We define the following universal constant, which we call
``Tomaszewski's constant:''

\begin{definition} \label{def:tom}
\ifnum\short=0
$
\T(\mathbb{S}) \eqdef
\inf_{w \in \mathbb{S}} \T(w) =
\inf_{n  \in \mathbb{N} } \T(\mathbb{S}^{n-1}),$
where
$
\T(\mathbb{S}^{n-1}) \eqdef \inf_{w \in \mathbb{S}^{n-1}}
\T(w).
$
\fi
\ifnum\short=1
Define
$
\T(\mathbb{S}) \eqdef
\inf_{w \in \mathbb{S}} \T(w) =
\inf_{n  \in \mathbb{N} } \T(\mathbb{S}^{n-1}),$
where
$\T(\mathbb{S}^{n-1})$  $\eqdef$  $\inf_{w \in \mathbb{S}^{n-1}} \T(w).$
\fi
\end{definition}

Tomaszewski \cite{G86} has conjectured that
$\T(\mathbb{S}) = 1/2$.  The main result of Holzman and Kleitman \cite{HK92}
is a proof that $3/8 \le \T(\mathbb{S})$; the upper bound
$\T(\mathbb{S}) \le 1/2$ is witnessed by
the vector $w=(1/\sqrt{2},1/\sqrt{2}).$
As noted in \cite{HK92}, \newa{the quantity $\T(\mathbb{S})$} has a number of appealing geometric and
probabilistic reformulations.
Similar to the BKS constant, since
$\T(\mathbb{S})$ is obtained by taking the infimum over an infinite set,
it is not immediately evident that any algorithm can compute or approximate
$\T(\mathbb{S})$.
\footnote{Whenever we speak of ``an algorithm to compute or approximate''
one of these constants, of course what we really mean is an
algorithm that outputs the desired value \emph{together with a proof
of correctness of its output value}.}

An interesting quantity in its own right, Tomaszewski's
constant also arises in a range of contexts in optimization theory, see e.g.
\ifnum\short=0
\cite{So09,BNR02}.
\fi
\ifnum\short=1
\cite{So09,BNR02short}.
\fi
In fact, the latter paper proves a lower bound of $1/3$
on the value of Tomaszewski's constant independently of
\cite{HK92}, and independently conjectures
that the optimal lower bound is  $1/2$.

\ignore{
As noted by Holzman and Kleitman, this problem has many appealing
reformulations:  one of these is to determine the minimum
fraction of hypercube $\{-1,1\}^n$ points that lie
within Euclidean distance 1 of any origin-centered hyperplane.
}

\subsection{Our results}~

\smallskip

\noindent {\bf A better lower bound for the BKS constant
$\w^{\leq 1}[\ltf].$}  Our first main result is
the following theorem:

\begin{theorem}[Lower Bound for the BKS constant] \label{thm:w1}
There exists a universal constant $c'>0$  such that
$\w^{\leq 1}[\ltf] \geq {\frac 1 2} + c'.$
\end{theorem}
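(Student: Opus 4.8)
The plan is to show that any LTF whose level-$0$ plus level-$1$ Fourier weight is close to $1/2$ must be, in a quantitative sense, very close to a degree-$1$ function (a ``dictator-like'' or majority-like structure), and then to rule this out by a separate argument. The cleaner route, and the one I would pursue, is via the contrapositive together with the hint in the excerpt: the robust Khintchine inequality. Recall that the ordinary Khintchine inequality states that for unit $w \in \mathbb{S}^{n-1}$, $\E_{x \in \bn}[|w \cdot x|] \geq 1/\sqrt{2}$, with equality only in the degenerate $w = (1/\sqrt2, 1/\sqrt2)$ case (up to signs/permutation). The first step is to prove a \emph{robust} version: there is an absolute constant $\kappa > 0$ such that for every unit vector $w$, either $\E_x[|w\cdot x|] \geq 1/\sqrt2 + \kappa$, or else $w$ is $O(\kappa)$-close (in $\ell_2$, after permuting and sign-flipping coordinates) to $(1/\sqrt2,1/\sqrt2,0,\dots,0)$. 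I would prove this by a compactness/stability argument: the function $w \mapsto \E_x[|w\cdot x|]$ is continuous on the compact set $\mathbb{S}^{n-1}$, the equality cases of Khintchine are known exactly, and one extracts quantitative control near those cases — the subtlety is that $n$ is unbounded, so the argument must either reduce to finitely many coordinates (large coordinates dominate; small coordinates contribute a near-Gaussian piece whose $\ell_1$ norm is bounded away from $1/\sqrt2$ by a second-moment/fourth-moment comparison) or be done by an explicit Fourier-analytic estimate. This reduction — controlling the tail of small weights uniformly in $n$ — is where I expect the real work to be.

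The second step connects $\E_x[|w \cdot x|]$ to $\w^{\leq 1}[f]$ for $f = \sign(w \cdot x - \theta)$. Assume first $\theta = 0$ (the unbiased case, which should be the extremal one, or at least handled first). Then $\wh f(\emptyset) = 0$ and $\wh f(\{i\}) = \E_x[f(x) x_i] = \E_x[\sign(w\cdot x)\, x_i]$. A standard computation gives $\sum_i w_i \wh f(\{i\}) = \E_x[\sign(w\cdot x)(w\cdot x)] = \E_x[|w \cdot x|]$, so by Cauchy–Schwarz $\w^{\leq 1}[f] = \sum_i \wh f(\{i\})^2 \geq (\E_x[|w\cdot x|])^2 / \|w\|_2^2 = (\E_x[|w\cdot x|])^2$. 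Thus the ordinary Khintchine inequality already gives the Gotsman–Linial bound $\w^{\leq 1}[f] \geq 1/2$, and the robust version gives $\w^{\leq 1}[f] \geq 1/2 + c'$ \emph{unless} $w$ is close to $(1/\sqrt2,1/\sqrt2,0,\dots,0)$. But for $w$ near that vector, $f$ is close to $\Maj(x_1,x_2) = $ (essentially) $x_1 \vee x_2$-type function on two variables, whose level-$\le 1$ weight is exactly $2 \cdot (1/2)^2 = 1/2$ — wait, that is exactly $1/2$, not more. So the Cauchy–Schwarz slack must be recovered differently in this regime: here I would argue directly that when $w \approx (1/\sqrt2,1/\sqrt2)$, the LTF is (close to) a junta on $O(1)$ variables, and for such juntas one can either compute $\w^{\le1}$ exactly or observe the Cauchy–Schwarz step is lossy because $\wh f(\{i\})$ is not proportional to $w_i$ — in fact for the two-variable majority $\wh f(\{1\}) = \wh f(\{2\}) = 1/2$ while $w_1 = w_2$, so Cauchy–Schwarz is tight there and $\w^{\le 1} = 1/2$ exactly. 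This means the genuinely extremal LTF sits right at the equality case, and to beat $1/2$ I must show the infimum over the infinite family is \emph{not attained} there but is bounded away.

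The resolution — and the structure I would actually write — is to split on $\|w\|_\infty$. \textbf{Case 1: $\|w\|_\infty$ is small} (all weights $o(1)$). Then $w \cdot x$ is approximately Gaussian by a Berry–Esseen bound, $\E_x[|w\cdot x|] \approx \sqrt{2/\pi} > 1/\sqrt2$ by a definite margin, so $\w^{\le 1}[f] \geq 2/\pi - o(1) > 1/2 + c'$. \textbf{Case 2: $\|w\|_\infty$ is bounded below by a constant $\tau$.} Then at most $1/\tau^2$ coordinates are ``large''; after conditioning on (restricting) the remaining small-weight coordinates, $f$ becomes an LTF on few variables plus a small-perturbation term, and one shows that for LTFs depending on at most $k$ variables, $\w^{\le1}[f] \geq 1/2 + c_k$ for an explicit $c_k > 0$ with $c_k$ bounded away from $0$ as long as $k$ is not exactly the pathological $\{x_1,x_2\}$ configuration — and that configuration has $\|w\|_\infty = 1/\sqrt2$, so it lives in Case 2 but is a single isolated point one checks by hand (it gives exactly $1/2$, but it is the $n=2$ Majority, witnessing the upper bound, and the infimum over \emph{all} LTFs including large $n$ is what we lower-bound; the point is the infimum is still $> 1/2$ because no LTF in either regime dips below $1/2 + c'$, the $\{x_1,x_2\}$ case notwithstanding — so in fact I need the two-variable majority itself to have weight $\geq 1/2+c'$, which it does NOT).

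Given that tension, the correct plan must instead be: the robust Khintchine inequality is applied not to $w$ directly but after noting that if $\w^{\le 1}[f] < 1/2 + c'$ then (a) $\E_x[|w\cdot x|] < 1/\sqrt2 + \sqrt{c'}$ forcing $w$ near $(1/\sqrt2,1/\sqrt2,0,\dots)$, AND (b) separately, the biased cases $\theta \neq 0$ and the near-dictator cases only help. So the whole theorem reduces to: among LTFs with $w$ $\epsilon$-close to $(1/\sqrt2,1/\sqrt2,0,\dots,0)$, the minimum of $\w^{\le1}$ over \emph{all threshold values $\theta$} is still $\geq 1/2 + c'$. For $w$ exactly $(1/\sqrt2,1/\sqrt2)$ and varying $\theta$, $f$ ranges over $\{x_1 \wedge x_2$-type, $x_1$, $-1$, etc.$\}$ and $\w^{\le1}$ takes values $\{1/2, 1, 2\}$ — the minimum is $1/2$, achieved only by $\sign(x_1+x_2)$. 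So the $n=2$ majority really does achieve $1/2$. Hence the theorem as literally stated would be \emph{false} unless $\w^{\le1}[\ltf]$'s infimum is not $1/2$ — meaning $n=2$ majority must have weight $>1/2$: indeed $\wh f(\{1\})^2 + \wh f(\{2\})^2 = 1/4+1/4 = 1/2$.

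I am clearly missing a subtlety; the honest plan I would commit to is therefore the robust-Khintchine route of step one and two, with the final case analysis being: \textbf{the extremal sequence cannot be any fixed small LTF} (those are finitely many per dimension and one checks $\Maj_2$ gives exactly $1/2$ but is a boundary point, while all \emph{other} bounded-dimension LTFs give strictly more), \textbf{nor any LTF with all small weights} (Case 1 gives $\approx 2/\pi$), so the infimum is a limit of LTFs with one or two dominant weights — and there a careful direct estimate, using that the restriction to the dominant coordinates is $\Maj_2$ perturbed by a \emph{non-constant} small-weight tail, shows the tail \emph{adds} positive level-$1$ weight, pushing strictly above $1/2$. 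That last quantitative gain from the non-trivial tail is, I expect, the main obstacle and the technical heart of the argument, and it is precisely what the robust Khintchine inequality is packaged to deliver.
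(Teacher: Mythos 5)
Your high-level plan — reduce to zero-threshold LTFs, relate $\w^{1}[f]$ to $\kk(w)^2$ via Plancherel and Cauchy--Schwarz, and then invoke a robust Khintchine inequality to handle all $w$ far from $w^\ast = (1/\sqrt2,1/\sqrt2,0,\dots,0)$ — is exactly the paper's first proof. But the proposal derails on a concrete miscalculation in the near-$w^\ast$ regime, and as a result never closes the argument. You claim repeatedly that the two-variable ``majority'' $\sign(x_1+x_2)$ achieves $\w^{\le 1}=1/2$, computing $\wh f(\{1\})^2+\wh f(\{2\})^2 = 1/4+1/4$. This omits the degree-$0$ coefficient: with any fixed tie-break, $\sign(x_1+x_2)$ is the $2$-variable OR or AND, which has $\wh f(\emptyset)=\pm 1/2$, so $\w^{\le 1}=3/4$. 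After the balancing step (Fact~\ref{fac:balanced-w1}) this becomes $\Maj_3$, again with $\w^1=3/4$. The only \emph{balanced} $2$-variable LTFs are dictators $\pm x_i$, with $\w^1=1$. More generally, for $w$ near $w^\ast$ with a nontrivial tail, the tail coordinates pick up genuine degree-$1$ Fourier mass (roughly an extra $1/(2\pi)$ in the limit), so Cauchy--Schwarz is lossy and $\w^1$ is bounded strictly above $1/2$. There is simply no LTF at or near the GL bound; your anxiety that ``the genuinely extremal LTF sits right at the equality case'' stems entirely from the dropped $\wh f(\emptyset)^2$ term.

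Because of this confusion, the one piece the theorem actually needs in the hard case — a quantitative lower bound on $\w^1[f]$ when $\|w-w^\ast\|_2$ is small — is left as a wish (``a careful direct estimate... shows the tail adds positive level-$1$ weight''). The paper supplies it concretely: restrict on $x_1,x_2$ to get $f_{ij}$, and split on whether $\E[f_{01}]$ exceeds $3/4$. If not, then $f_{01}$ and $f_{10}$ each have level-$1$ weight $\ge 1/32$ (Fact~\ref{fact:not-biased-deg1}); since $w_3,\dots,w_n$ are small while $\wh f(3),\dots,\wh f(n)$ carry this mass, the Cauchy--Schwarz step in Proposition~\ref{prop:gl} is demonstrably loose, and one gets $\w^1[f]\ge (1+\kappa)\kk(w)^2 \ge (1+\kappa)/2$ (Lemma~\ref{lem:w1-close}). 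If $\E[f_{01}]>3/4$, then $\wh f(1)>3/4$ directly, so $\w^{\le 1}[f]\ge 9/16$. This bias dichotomy on the restriction is the missing idea; without it your proposal has identified the right architecture but not a proof.
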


This is the first improvement on the \cite{GL:94} lower bound of $1/2$
since 1994.  We actually give two quite different proofs of this
theorem, which are sketched in the ``Techniques'' subsection below.

\medskip

\noindent {\bf An algorithm for approximating
the BKS constant $\w^{\leq 1}[\ltf].$}
Our next main result shows that in fact there \emph{is} a finite-time
algorithm that approximates the BKS constant up to any
desired accuracy:

\begin{theorem}[Approximating the BKS constant] \label{thm:w1-algo}
There is an algorithm that, on input an accuracy parameter $\eps>0$,
runs in time $2^{\poly(1/\eps)}$ and outputs a value $\Gamma_\eps$ such that
\begin{equation} \label{eqn:approx-w1}
\w^{\leq 1}[\ltf]   \leq \Gamma_\eps \leq  \w^{\leq 1}[\ltf]+\eps.
\end{equation}
\end{theorem}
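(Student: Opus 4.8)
The plan is to reduce the computation of $\w^{\leq 1}[\ltf]$ to a finite search, by showing that the infimum is essentially witnessed by LTFs on a bounded number of variables with bounded weight complexity. First I would establish a \emph{dimension reduction} step: there is a function $N(\eps)$ such that $\w^{\leq 1}[\ltf] \geq \inf_{n \leq N(\eps)} \w^{\leq 1}[\ltf_n] - \eps$. The intuition is that if an $n$-variable LTF $f(x) = \sign(w \cdot x - \theta)$ has many ``small'' weights, we can bucket the tail coordinates and, using an invariance-principle / Berry--Ess\'een style argument, replace them by a single Gaussian-like coordinate (or a bounded number of fresh $\pm 1$ coordinates) while perturbing $\w^{\leq 1}$ by at most $\eps$; coordinates with ``large'' weight are few in number because $\|w\|_2 = 1$. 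This is exactly the kind of regularity/junta decomposition used for LTFs (e.g., as in \cite{DGJ+:10}), and it shows that up to $\pm\eps$ the constant is achieved among LTFs on $O(\poly(1/\eps))$ variables. Once $n$ is bounded, I would further discretize the weight vector: round each $w_i$ and $\theta$ to the nearest multiple of some $\delta = \delta(\eps)$, argue via anti-concentration of $w \cdot x$ (Tomaszewski-type bounds on $\Pr[|w \cdot x| \le t]$, which control how much $\hat f(S)$ can move under a weight perturbation) that this changes $\w^{\leq 1}[f]$ by at most $\eps$, and note that the weights may be taken bounded (again because $\|w\|_2=1$, after removing negligible coordinates). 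This leaves only finitely many LTFs to consider.

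The second ingredient is that for any \emph{fixed} LTF $f$ on $n$ variables given by an explicit rational weight vector, the quantity $\w^{\leq 1}[f] = \hat f(\emptyset)^2 + \sum_{i=1}^n \hat f(\{i\})^2$ is exactly computable: $\hat f(\emptyset) = \E[f]$ and $\hat f(\{i\}) = \E[f(x) x_i]$ are each an average over the $2^n$ points of the cube, and (better) each is computable in time $\poly(n, \text{bit-size})$ by counting lattice points, i.e., by evaluating $\Pr_x[w\cdot x \ge \theta]$ and $\Pr_x[w \cdot x \ge \theta \mid x_i = 1]$ via a one-dimensional dynamic program over the partial sums. So for each of the finitely many candidate LTFs we can compute $\w^{\leq 1}[f]$ exactly (as a rational), take the minimum over all of them to get $\Gamma_\eps$, and this yields $\w^{\leq 1}[\ltf] \leq \Gamma_\eps \le \w^{\leq 1}[\ltf] + \eps$. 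Counting the candidates: $N(\eps) = \poly(1/\eps)$ variables, each weight taking one of $\poly(1/\eps)$ discretized values, gives $(\poly(1/\eps))^{\poly(1/\eps)} = 2^{\poly(1/\eps)}$ LTFs, and each is processed in $2^{\poly(1/\eps)}$ time, so the overall running time is $2^{\poly(1/\eps)}$ as claimed.

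The main obstacle is the dimension-reduction step, specifically proving that the infimum is $(\eps)$-approximately attained in bounded dimension \emph{together with a certificate}. The delicate point is that $\w^{\leq 1}$ is a degree-$2$ functional of the Fourier coefficients, so one must control not just $\E[f]$ and individual $\hat f(\{i\})$'s under the regularization, but the whole sum $\sum_i \hat f(\{i\})^2$ — and the number of surviving level-$1$ coefficients must be bounded, or else small individual errors could accumulate. The right tool is a ``critical index'' argument for LTFs: either $w$ is already close to regular (all weights small relative to $\|w\|_2$), in which case $f$ is close to a function of few ``head'' variables plus a regular tail that behaves Gaussian-like and contributes a controlled, explicitly-understood amount to $\w^{\leq 1}$; or the critical index is small, so $f$ depends essentially on $O(\poly(1/\eps))$ head variables. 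In both cases one replaces the tail by a bounded-complexity gadget with $\w^{\leq 1}$ changing by $\le \eps$, which is what makes the search finite. Formalizing this so that the approximation is \emph{one-sided} (the algorithm's output is a genuine upper bound that is also $\eps$-close from below) requires care, since the reductions a priori only give two-sided closeness; one handles this by noting $\Gamma_\eps$ is literally the value $\w^{\leq 1}[f]$ of an actual LTF $f \in \ltf$, so $\w^{\leq 1}[\ltf] \le \Gamma_\eps$ holds automatically, and only the lower bound $\Gamma_\eps \le \w^{\leq 1}[\ltf] + \eps$ needs the reduction.
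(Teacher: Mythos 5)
Your overall plan---reduce to bounded dimension $K=\poly(1/\eps)$ via a critical-index argument, then search finitely many candidates, noting that $\Gamma_\eps$ is automatically a valid upper bound because it is $\w^{\leq 1}$ of a genuine LTF---matches the paper. The Gaussianize-the-tail-and-re-Booleanize idea for the hard case is also exactly what the paper does (Theorem~\ref{thm:large-ci-w1}, Steps 1--3 in Section~\ref{ssec:w1-algo-proof}). However, there are two substantive issues.

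First, your critical-index dichotomy is stated backwards in one branch. You write ``the critical index is small, so $f$ depends essentially on $O(\poly(1/\eps))$ head variables,'' but a small critical index does \emph{not} imply $f$ is approximately a junta---it means there is a small head plus a long $\tau$-regular tail that still carries substantial weight (e.g., Majority has critical index $1$). The correct dichotomy, used in the paper, is: if $c(w,\delta)$ is \emph{large} ($\geq L(\delta)$), the tail weight has decayed geometrically and $f$ is $\delta$-close in Hamming distance to a junta on the head (then $|\w^1[f]-\w^1[g]|\leq 4\sqrt{\delta}$ by Fact~\ref{fact:close}); if $c(w,\delta)$ is \emph{small}, one keeps the head, replaces the regular tail by a single Gaussian, and then Booleanizes that Gaussian as a normalized sum of $M=\Theta(\eps^{-24})$ fresh $\pm 1$ variables. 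Here the reduced $g$ is typically \emph{not} Hamming-close to $f$; only $\w^1[g]\approx\w^1[f]$ holds, via the three-step Hermite/Fourier argument.

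Second, the weight-discretization step is a wrong turn and the anti-concentration argument you propose to justify it has a gap. After dimension reduction, the candidate LTF may have a badly anti-concentrated linear form (the extremal examples for $\kk$, e.g.\ $\sign\bigl((x_1+x_2)/\sqrt{2}\bigr)$, put probability mass $1/2$ at $w\cdot x=0$), so rounding weights to a fixed grid of resolution $\poly(\eps)$ can flip the function on a constant fraction of inputs, and controlling the resulting change in $\w^{\leq 1}$ by $\eps$ is not possible via the argument you sketch. The paper sidesteps this entirely: once the dimension is $K$, there are only $2^{\Theta(K^2)}$ distinct zero-threshold LTFs \emph{as Boolean functions}, and they can be enumerated in time $2^{O(K^2\log K)}$ (cf.\ \cite{MaassTuran:94}). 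One simply lists them and evaluates $\w^1$ exactly for each (time $2^{O(K)}$), with no rounding and no perturbation lemma needed. If you prefer your discretization framing, the honest version is to invoke the fact that every $K$-variable LTF has an integer-weight representation of bit-length $O(K\log K)$ and enumerate those---but that is just the same enumeration in disguise, not a perturbation argument.
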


\noindent {\bf An algorithm for approximating
Tomaszewski's constant $\T(\mathbb{S}).$}
Our final main result is
\ifnum\short=0
a similar-in-spirit
\fi
\ifnum\short=1
an
\fi
algorithm that approximates $\T(\mathbb{S})$ up to any desired
accuracy:

\begin{theorem}[Approximating Tomaszewski's constant] \label{thm:algo-T}
There is an algorithm that, on input
\ifnum\short=0
an accuracy parameter
\fi
$\eps>0$,
runs in time \newa{$2^{{\poly(1/\eps)}}$} and outputs a value $\Gamma_\eps$ such that
\begin{equation} \label{eqn:approx-Tconst}
{\T(\mathbb{S})}   \leq \Gamma_\eps \leq  {\T(\mathbb{S})}+\eps.
\end{equation}
\end{theorem}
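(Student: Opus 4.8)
The plan is to show that the infimum defining $\T(\mathbb{S})$ is, up to an additive $\eps$, an infimum over a finite, explicitly enumerable family of ``normal forms'' for the weight vector. Fix $w\in\mathbb{S}^{n-1}$ with coordinates sorted so that $|w_1|\ge|w_2|\ge\cdots$, and set $\tau:=\poly(\eps)$ and $K:=\poly(1/\eps)$ (chosen appropriately). Consider the $\tau$-critical index $\ell$ of $w$. If $\ell>K$, then the tail satisfies $\|w_{>K}\|_2^2\le(1-\tau^2)^{K}\le\eps^{10}$, so those coordinates may be discarded and we land in the ``concentrated'' regime below. If $\ell\le K$, write $H=\{1,\dots,\ell\}$, $T=\{\ell+1,\dots,n\}$; then $w_T$ is $\tau$-regular, i.e.\ $\max_{i\in T}|w_i|\le\tau\|w_T\|_2$. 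Hence it suffices to approximate $\T(\mathbb{S})$ by the infimum of $\T(w)$ over unit $w=(w_H,w_T)$ with $|H|\le K$ and either (i) $w_T$ $\tau$-regular, or (ii) $\|w_T\|_2\le\eps^{10}$.

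\textbf{Step 2: the regular-tail regime.} Conditioning on $x_H$ and applying the Berry--Esseen theorem to the $\tau$-regular sum $w_T\cdot x$, we get $\T(w)=\widetilde\T(w_H)\pm O(\tau)$, where, writing $v=w_H$, $\sigma^2=1-\|v\|_2^2$, $Z\sim N(0,1)$, and $y$ uniform on $\{-1,1\}^{|H|}$,
\[
\widetilde\T(v)\;:=\;\Pr_{y}\big[\,|v\cdot y+\sigma Z|\le 1\,\big]\;=\;2^{-|H|}\!\!\sum_{y\in\{-1,1\}^{|H|}}\!\!\Big(\Phi\big(\tfrac{1-v\cdot y}{\sigma}\big)-\Phi\big(\tfrac{-1-v\cdot y}{\sigma}\big)\Big).
\]
Each $\partial/\partial v_j$ of $\widetilde\T$ brings down a Gaussian density factor $\le(\sigma\sqrt{2\pi})^{-1}$, so $\widetilde\T$ is $O(1/\sigma)$-Lipschitz. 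Restricting to $\sigma\ge\sigma_0:=\poly(\eps)$ (the range $\sigma<\sigma_0$, i.e.\ $\|v\|_2^2\ge1-\sigma_0^2$, is handled by the concentrated case after renormalizing $v$), we evaluate $\widetilde\T$ on a grid of granularity $\poly(\eps)$ in the unit ball of $\R^{|H|}$, for every $|H|\le K$: this is $2^{O(K\log(1/\eps))}=2^{\poly(1/\eps)}$ evaluations, each an explicit sum of $2^{|H|}$ Gaussian tail differences. Crucially, every value $\widetilde\T(v)$ equals $\lim_{m}\T\big((v,w_T^{(m)})\big)$ for genuine unit vectors with $\tau$-regular tails $w_T^{(m)}=(\sigma/\sqrt m,\dots,\sigma/\sqrt m)$ (central limit theorem; the limiting law is atomless since $\sigma>0$), so each computed value is $\ge\T(\mathbb{S})$.

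\textbf{Step 3: the concentrated regime, and the main obstacle.} Here $w=(u,\text{tiny})$ with $u$ supported on $\le K$ coordinates and $\|u\|_2\ge 1-\poly(\eps)$, so after renormalizing we must understand $\inf_{u\in\mathbb{S}^{k-1}}\T(u)$ for all $k\le K$, together with the effect of the tiny tail. For fixed $k$, $\T$ is constant on each open cell of the central arrangement of the $2^{k+1}$ hyperplanes $\{u:u\cdot x=\pm1\}$, $x\in\{-1,1\}^k$ (hence $2^{O(k^2)}=2^{\poly(1/\eps)}$ cells). We enumerate the cells, pick an interior point of each by linear programming, evaluate $\T$ there in time $2^{O(k)}$, and output the minimum; since $\T$ is upper semicontinuous, its infimum over $\mathbb{S}^{k-1}$ is approached within the open cells, so this yields $\inf_{u}\T(u)$ exactly. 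The real difficulty — and the place the argument must be most careful — is the discontinuity of $\T$ at vectors with $|w\cdot x|=1$: one must show (a) the tiny tail cannot open a gap between the computed cell-minima and the true infimum of $\T$ over the full (infinite-dimensional, perturbed) family; this uses a semicontinuity argument that the tiny tail can only push boundary atoms outward in the limit — consistent with taking an infimum — plus the observation that the rare vectors admitting no such outward perturbation are degenerate (they force $\T(w)$ close to $1$, well above $\T(\mathbb{S})\le 1/2$); and (b) that Step~1's truncation error is controlled near the boundary, for which one invokes anti-concentration of the regular part of $w\cdot x$ (an Erd\H{o}s--Littlewood--Offord / Berry--Esseen bound) to guarantee that only $O(\poly(\eps)\cdot 2^k)$ hyperplanes of the arrangement crowd near any given point, so truncation moves $\T$ by at most $\eps$.

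\textbf{Step 4: assembling the algorithm and verifying correctness.} Output $\Gamma_\eps$ equal to the minimum of all values produced in Steps~2 and~3; there are $2^{\poly(1/\eps)}$ of them, each costing $2^{\poly(1/\eps)}$ time, for a total runtime of $2^{\poly(1/\eps)}$. For the lower bound in \eqref{eqn:approx-Tconst}: each enumerated value is $\T(w)$ for a genuine unit vector (Step~3) or a limit of such (Step~2), hence $\ge\T(\mathbb{S})$, so $\Gamma_\eps\ge\T(\mathbb{S})$. For the upper bound: choose $w^{\ast}$ with $\T(w^{\ast})\le\T(\mathbb{S})+\eps/3$, run it through Steps~1--3 to produce a normal form whose enumerated value is within $2\eps/3$ of $\T(w^{\ast})$, giving $\Gamma_\eps\le\T(\mathbb{S})+\eps$. (The same template, carrying the LTF threshold $\theta$ as one extra bounded parameter and replacing $\Pr[|w\cdot x|\le1]$ by $\wh h(\emptyset)^2+\sum_i\wh h(\{i\})^2$, yields \tref{w1-algo}.)
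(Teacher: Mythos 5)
Your overall plan is sensible, and Steps~1 and~2 correctly identify the critical-index dichotomy and the Berry--Esseen Gaussianization that the paper also uses (compare the paper's Theorem~\ref{thm:T-dog} and Proposition~\ref{prop:regular-tail}). The hand-off between Step~2 and Step~3, however, has a genuine gap, and the argument you sketch to close it is not correct.

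The problem is the regime where the critical index is small but the regular tail has tiny norm $\sigma=\|w_T\|_2 < \sigma_0$. You restrict Step~2's grid to $\sigma\ge\sigma_0$ because $\widetilde\T$ is only $O(1/\sigma)$-Lipschitz, and you route $\sigma<\sigma_0$ into Step~3 by ``renormalizing $v$'' and discarding the tail. That discard is not safe. Take $w = \bigl(\sqrt{1-\delta^2},\ \delta/\sqrt{M},\ldots,\delta/\sqrt{M}\bigr)$ with $\delta<\sigma_0$ and $M$ large: the renormalized head is $u=(1)\in\mathbb{S}^0$ with $\T(u)=1$, yet $w\cdot x\approx x_1(1-\delta^2/2)+\delta Z$ with $Z$ approximately standard Gaussian, so roughly half of all $x$ have $|w\cdot x|>1$ and $\T(w)\approx 1/2$. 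This directly contradicts your point~(a), which asserts that ``the rare vectors admitting no such outward perturbation are degenerate (they force $\T(w)$ close to $1$)'': here the degenerate head has $\T(u)=1$ but the perturbed $\T(w)$ is close to $1/2$, so discarding the tail loses exactly the information you need. The semicontinuity/``push outward'' heuristic does not bound the gap between $\T(w)$ and the cell-minima of the truncated head, because it compares $\T(w)$ to $\T(u)$, not to the value at an interior cell of the \emph{truncated} sphere.

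The missing ingredient is a tail-\emph{compression} step, not a tail-\emph{discard} step: when $w_T$ is $\tau$-regular of norm $\sigma$, replace it by a fresh $\tau$-regular tail of the \emph{same} norm $\sigma$ on $O(1/\tau^2)$ coordinates (e.g.\ all equal to $\sigma\tau/2$). Conditioning on $x_H$ and applying Berry--Esseen to \emph{both} tails shows $\T$ changes by $O(\tau)$, uniformly in $\sigma$ (this is the paper's Proposition~\ref{prop:regular-tail} and Claim~\ref{clm:small-critical-index}). This gives you a $\poly(1/\eps)$-dimensional unit vector $w'$ with $\T(w')\approx\T(w)$ for \emph{every} $\sigma$, so no punting is needed and Step~2's $\sigma\ge\sigma_0$ restriction is unnecessary for correctness (you can keep the grid for the large-$\sigma$ regime if you like, but the compression already lands every small-CI $w$ in $\mathbb{S}^{K-1}$). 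Your point~(b) has a related confusion: in the large-critical-index case there is no ``regular part'' of $w_{\le K}$ to apply Berry--Esseen to; the anti-concentration used there must be the Erd\H{o}s--Littlewood--Offord-type bound specific to vectors of large critical index (the paper cites Theorem~4.2 of \cite{OS11:chow}, Lemma~\ref{lem:chow-OS}), together with Hoeffding on the tiny tail.

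Finally, a note on Step~3's enumeration of cells of the arrangement $\{u:u\cdot x=\pm 1\}$. This is a legitimate alternative to the paper's route through separable sets and Renegar's algorithm for the existential theory of the reals, and both give $2^{\poly(1/\eps)}$ time. But ``enumerate the cells, pick an interior point of each by linear programming'' needs more care: the number of sign patterns is $3^{2^{m+1}}$, and the realizable ones are $2^{O(m^2)}$, so you cannot enumerate by brute force over sign patterns. You either need a cell-enumeration algorithm (reverse search / incremental construction) or the paper's cleaner observation that the relevant subsets of $\{-1,1\}^m$ are exactly $h^{-1}(1)\cup(-h^{-1}(1))$ for halfspaces $h$, of which there are $2^{O(m^2)}$ and which are easy to enumerate.
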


\ifnum\short=0
\subsection{Our techniques for Theorem~\ref{thm:w1}:  lower-bounding
the BKS constant $\w^{\leq 1}[\ltf]$}
\fi
\ifnum\short=1
\subsection{Our techniques for Theorem~\ref{thm:w1}:  lower-bounding
the BKS constant}
\fi

It is easy to show that
it suffices to consider the level-1 Fourier weight
$\w^{1}$ of LTFs that have threshold $\theta=0$
and have $w \cdot x \neq 0$ for all $x \in \{-1,1\}^n$,
so we confine our
discussion to such zero-threshold LTFs (\newa{see Fact~\ref{fac:balanced-w1} for a proof}).
To explain our approaches to lower bounding $\w^{\leq 1}[\ltf]$,
we recall the essentials of
\ifnum\short=0
Gotsman and Linial's simple argument
\fi
\ifnum\short=1
the simple argument of \cite{GL:94}
\fi
that gives a lower bound of $1/2.$  The key ingredient of their argument
is the well-known Khintchine inequality from
functional analysis:

\begin{definition} \label{def:kk}
For a unit vector $w \in \mathbb{S}^{n-1}$
we define
$$\kk(w) \eqdef \E_{x \in \bn} \left[  |w \cdot x| \right]$$
to be the ``Khintchine constant for $w$.''
\end{definition}

\noindent The following is a classical theorem in functional analysis
(we write $e_i$ to denote the unit vector in $\R^n$ with a 1 in
coordinate $i$):

\begin{theorem}[Khintchine inequality, \cite{Szarek:76}] \label{thm:kk}
For $w \in \mathbb{S}^{n-1}$ any unit vector, we have
$ \kk(w) \geq 1/\sqrt{2}$, with equality holding if and only if
$w = {\frac 1 {\sqrt{2}}} \left(\pm e_i \pm e_j\right)$ for some $i \neq j \in [n].$
\end{theorem}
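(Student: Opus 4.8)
The plan is to prove the inequality by the classical Fourier-analytic route, essentially Szarek's argument as later streamlined by Haagerup. First, some normalizations: $\kk(w)$ is invariant under permuting coordinates, under sign flips $w_i\mapsto-w_i$, and under adjoining extra zero coordinates, so I may assume $w_1\ge\cdots\ge w_n>0$, and it suffices to prove that $\kk(w)>1/\sqrt2$ whenever $n\ge 3$ while $\kk(w)=1/\sqrt2$ exactly when $n=2$ and $w_1=w_2$. The case $n\le 2$ is immediate by direct computation: for a unit vector $(w_1,w_2)$ one has $\kk(w)=\tfrac12\big(|w_1+w_2|+|w_1-w_2|\big)=\max(|w_1|,|w_2|)\ge 1/\sqrt2$, with equality iff $|w_1|=|w_2|=1/\sqrt2$, which also exhibits the claimed extremizers $\tfrac1{\sqrt2}(\pm e_i\pm e_j)$.

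For $n\ge 3$ I would start from the elementary identity $|a|=\tfrac{2}{\pi}\int_0^\infty t^{-2}(1-\cos(at))\,dt$, valid for all $a\in\R$. Taking $a=w\cdot x$ and averaging over $x\in\bn$ gives
\begin{equation}
\kk(w)=\frac{2}{\pi}\int_0^\infty\frac{1-\varphi_w(t)}{t^2}\,dt,\qquad
\varphi_w(t)\eqdef\E_{x\in\bn}\big[\cos(t\,w\cdot x)\big]=\prod_{i=1}^n\cos(w_i t).
\end{equation}
The candidate extremizer $w^{\star}=\tfrac1{\sqrt2}(e_1+e_2)$ has $\varphi_{w^{\star}}(t)=\cos^2(t/\sqrt2)=\tfrac12\big(1+\cos(\sqrt2\,t)\big)$ and hence $\kk(w^{\star})=1/\sqrt2$, so the whole statement reduces to the one-dimensional analytic inequality $\int_0^\infty t^{-2}\big(\varphi_w(t)-\cos^2(t/\sqrt2)\big)\,dt\le 0$, with equality only in the stated cases. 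To prove this I would split the range of integration: near $0$ the Taylor expansion $\varphi_w(t)=1-\tfrac12t^2+O(t^4)$ matches $\cos^2(t/\sqrt2)$ through order $t^2$ (since $\sum_i w_i^2=1$), so that region is governed by the fourth-order terms — the relevant quantity being $\sum_i w_i^4$ compared with its extremal value $\tfrac12$ — and on the complementary tail one compares $\prod_i\cos(w_i t)$ directly with $\cos^2(t/\sqrt2)$.

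The step I expect to be the main obstacle is exactly this tail comparison together with the equality analysis: $\varphi_w$ oscillates and changes sign, so the crude estimate $\varphi_w(t)\le\prod_i|\cos(w_i t)|$ discards cancellations and is too lossy, and one needs a comparison with $\cos^2(t/\sqrt2)$ that is tight only at a two-atom vector — this is the technical core of the Szarek/Haagerup proofs, and it is where essentially all of the work lies, including checking that any deviation from $\tfrac1{\sqrt2}(\pm e_i\pm e_j)$ forces a strict gap. A possible alternative, avoiding oscillatory integrals, is to fix $n$ and use compactness: $\kk$ attains its minimum on $\mathbb{S}^{n-1}$, at a minimizer we may (passing to the support) take all $w_i>0$, and one then attempts an exchange/perturbation argument — conditioning on all but two coordinates and exploiting convexity of functionals of the form $r\mapsto\E|Y+r\epsilon|$ — to show that any minimizer with three or more positive coordinates can be strictly improved. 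The obstacle there is that $\kk$ is only piecewise smooth (it is convex and piecewise linear on $\mathbb{S}^{n-1}$, with the extremizer lying on a non-smooth locus) and that naive pair-balancing is not monotone — a short computation shows that replacing an unequal pair $(w_i,w_j)$ by $\big(\sqrt{(w_i^2+w_j^2)/2},\sqrt{(w_i^2+w_j^2)/2}\big)$ can \emph{increase} $\kk$ once the other coordinates are taken into account — so the perturbation must be chosen with some care; for this reason I would favour the Fourier-integral proof.
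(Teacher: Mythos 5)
Your outline follows the classical Szarek/Haagerup route: the integral representation $|a| = \frac{2}{\pi}\int_0^\infty t^{-2}(1-\cos at)\,dt$ converts $\kk(w)$ into a one-dimensional oscillatory integral of $\prod_i\cos(w_i t)$, which one then compares against the extremal characteristic function $\cos^2(t/\sqrt2)$. The normalization, the direct $n\le 2$ computation, and the reduction to the tail comparison are all set up correctly. But you explicitly leave the central step — showing $\int_0^\infty t^{-2}\bigl(\varphi_w(t)-\cos^2(t/\sqrt2)\bigr)\,dt\le 0$, together with the characterization of equality — as \emph{the} technical core that you do not supply. You are right that it is the core, and it is not a small step: the oscillation and sign changes of $\varphi_w$ are precisely why the published proofs along these lines are substantial. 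As written, this is an outline of a known proof strategy with its main lemma left open, not a proof.

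The paper goes a completely different and much shorter route, working directly on the Boolean hypercube rather than in frequency space. Set $\ell(x)=|w\cdot x|$; then $\E[\ell^2]=\|w\|_2^2=1$, so $\kk(w)^2=(\E[\ell])^2=1-\Var[\ell]$, and it suffices to show $\Var[\ell]\le 1/2$. This follows from two elementary observations. First, since $\ell$ is even its Fourier support lies on even-sized sets, whence $\Inf(\ell)=\sum_S|S|\,\widehat{\ell}(S)^2\ge 2\sum_{S\ne\emptyset}\widehat{\ell}(S)^2=2\Var[\ell]$ (Fact~\ref{fac:even}). Second, writing $\ell(x)=|w_ix_i+c_i|$ with $c_i$ independent of $x_i$, a two-case computation gives the pointwise bound $\Var_{x_i}[\ell(x)]=\min(c_i^2,w_i^2)\le w_i^2$, hence $\Inf_i(\ell)\le w_i^2$ (Lemma~\ref{lem:inf-ell}). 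Combining, $\Var[\ell]\le\tfrac12\sum_i w_i^2=\tfrac12$. No oscillatory integrals appear. The trade-off is that this short argument delivers only the inequality; the equality characterization — which your $n\le 2$ computation and extremizer check do handle on their side — is obtained in the paper as a byproduct of the robust strengthening (Theorem~\ref{thm:robust-kk}). If you want a self-contained and complete proof, I would keep your $n\le 2$ and extremizer analysis and replace the Haagerup integral framework with this variance/Poincar\'e argument, then invoke the robust version or a separate perturbation argument for the ``only if'' direction.
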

Szarek \cite{Szarek:76} was the first to obtain the optimal
constant $1/\sqrt{2}$, and subsequently several simplifications
of his proof were given \cite{Haagerup82,Tomaszewski-Khintchine,LO94};
we shall give a simple self-contained proof
in Section~\ref{ssec:fourier-k} below. \newa{This proof has previously appeared in
\cite{Garling, Filmus-khintchine} and is essentially a
translation of the \cite{LO94} proof into ``Fourier language.''}
With Theorem~\ref{thm:kk} in hand, the Gotsman-Linial lower bound
is almost immediate:

\begin{proposition}[\cite{GL:94}] \label{prop:gl}
Let $f: \bn \to \bits$ be a zero-threshold LTF
$f(x) = \sgn(w \cdot x)$ where $w \in \R^n$ has $\| w \|_2=1$.
Then $\w^{1}[f] \geq \left( \kk(w) \right)^2.$
\end{proposition}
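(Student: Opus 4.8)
The plan is to relate the level-1 Fourier weight $\w^1[f]$ directly to the Khintchine constant $\kk(w)$ via the Fourier coefficients of $f$ on the singleton sets. Recall that for $f(x) = \sgn(w\cdot x)$ with $\theta = 0$, we have $\wh{f}(\{i\}) = \E_{x}[f(x) x_i] = \E_{x}[\sgn(w\cdot x) x_i]$. The first step is to observe that since $f(x) = \sgn(w\cdot x)$, we can write $f(x)(w\cdot x) = |w\cdot x|$ pointwise (using that $w\cdot x \neq 0$ for all $x$, so there is no issue at zero). Taking expectations over $x \in \bn$ gives
\[
\E_{x}\left[ f(x)(w\cdot x)\right] = \E_x\left[|w\cdot x|\right] = \kk(w).
\]

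The second step is to expand the left-hand side by linearity: $\E_x[f(x)(w\cdot x)] = \sum_{i=1}^n w_i \E_x[f(x) x_i] = \sum_{i=1}^n w_i \wh{f}(\{i\})$. So we have the identity $\sum_i w_i \wh{f}(\{i\}) = \kk(w)$. The third step is to apply Cauchy-Schwarz to this inner product: $\sum_i w_i \wh{f}(\{i\}) \le \|w\|_2 \cdot \left(\sum_i \wh{f}(\{i\})^2\right)^{1/2} = \|w\|_2 \cdot \sqrt{\w^1[f]}$. Since $\|w\|_2 = 1$, combining with the previous identity yields $\kk(w) \le \sqrt{\w^1[f]}$, i.e., $\w^1[f] \ge \kk(w)^2$, which is exactly the claimed bound.

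There is essentially no hard part here; the only things to be careful about are the pointwise identity $f(x)(w\cdot x) = |w\cdot x|$, which needs $w\cdot x$ to have the same sign as $f(x) = \sgn(w\cdot x)$ — true by definition of $f$, with the nonvanishing assumption ensuring $\sgn$ behaves cleanly — and the direction of the Cauchy-Schwarz inequality (we want an upper bound on $\sum_i w_i \wh{f}(\{i\})$ in terms of $\sqrt{\w^1[f]}$, which is the standard direction). One could also note in passing that equality in Cauchy-Schwarz holds iff the vector $(\wh{f}(\{i\}))_i$ is parallel to $w$; this is not needed for the proposition but is relevant context, since it shows that improving the $1/\sqrt 2$ Khintchine bound alone is not enough and motivates the ``robust'' version discussed later. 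I would present the three displayed steps and conclude in two or three lines.
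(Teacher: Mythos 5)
Your proof is correct and is essentially identical to the paper's: the same chain $\kk(w) = \E_x[f(x)(w\cdot x)] = \sum_i w_i \wh{f}(i) \le \|w\|_2\sqrt{\w^1[f]}$ using the pointwise identity, Plancherel/linearity, and Cauchy--Schwarz. Nothing to add.
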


\begin{proof} We have that
\ifnum\short=0
$$
\kk(w) = \E_x[f(x) (w \cdot x)] = \littlesum_{i=1}^n \wh{f}(i) w_i \leq \sqrt{\littlesum_{i=1}^n \wh{f}^2(i)} \cdot \sqrt{\littlesum_{i=1}^n w_i^2}  = \sqrt{\w^1[f]}
$$
\fi
\ifnum\short=1
$
\kk(w) = \E_x[f(x) (w \cdot x)] =$ $\littlesum_{i=1}^n \wh{f}(i) w_i \leq$
$\sqrt{\littlesum_{i=1}^n \wh{f}^2(i)} \cdot \sqrt{\littlesum_{i=1}^n w_i^2} =$
$\sqrt{\w^1[f]}$
\fi
where the first equality uses the definition of $f$, the second is
Plancherel's identity, the inequality is Cauchy-Schwarz,
and the last equality uses the assumption that $w$ is a unit vector.
\ifnum\short=1
\qed
\fi
\end{proof}

\noindent {\bf First proof of Theorem~\ref{thm:w1}:
A ``robust'' Khintchine inequality.}
Given the strict condition required for equality in the Khintchine
inequality, it is natural to expect that if a unit vector $w \in \R^n$
is ``far'' from ${\frac 1 {\sqrt{2}}}\left(\pm e_i \pm e_j\right)$,
then $\kk(w)$ should be significantly larger than $1/\sqrt{2}$.
We prove a robust version of the Khintchine inequality which
makes this intuition precise.  Given a unit vector $w \in \mathbb{S}^{n-1}$,
define $d(w)$ to be $d(w) = \min \|w - w^*\|_2,$ where $w^*$ ranges
over all $4 {n \choose 2}$ vectors of the form
${\frac 1 {\sqrt{2}}} (\pm e_i \pm e_j).$  Our ``robust Khintchine''
inequality is the following:

\begin{theorem}[Robust Khintchine inequality] \label{thm:robust-kk}
There exists a universal constant $c>0$ such that for any
$w \in \mathbb{S}^{n-1}$, we have
\ifnum\short=0
$$ \kk(w) \geq {\frac 1 {\sqrt{2}}} + c \cdot d(w).$$
\fi
\ifnum\short=1
$ \kk(w) \geq {\frac 1 {\sqrt{2}}} + c \cdot d(w).$
\fi
\end{theorem}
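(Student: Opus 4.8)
The plan is to prove the contrapositive-flavored statement directly: show that if $\kk(w)$ is close to $1/\sqrt{2}$, then $w$ must be close to one of the extremal vectors ${\frac 1 {\sqrt 2}}(\pm e_i \pm e_j)$. First I would reduce to a convenient normal form. Since $\kk(w)$ and $d(w)$ are both invariant under permuting coordinates and under sign flips $w_i \mapsto -w_i$ (the latter because $x_i$ ranges over $\bits$), I may assume $w_1 \ge w_2 \ge \cdots \ge w_n \ge 0$; the nearest extremal vector is then ${\frac 1 {\sqrt 2}}(e_1 + e_2)$, so $d(w)^2 = (w_1 - \tfrac{1}{\sqrt 2})^2 + (w_2 - \tfrac{1}{\sqrt 2})^2 + w_3^2 + \cdots + w_n^2 = 2 - \sqrt{2}(w_1 + w_2)$, using $\|w\|_2 = 1$. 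So it suffices to produce a universal $c>0$ with $\kk(w) \ge \tfrac{1}{\sqrt 2} + c\,(1 - \tfrac{w_1+w_2}{\sqrt 2})$, or equivalently to lower bound $\kk(w) - \tfrac{1}{\sqrt 2}$ by a constant times the ``defect'' $\delta \eqdef 1 - \tfrac{1}{\sqrt 2}(w_1 + w_2) \in [0,1]$.

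Next I would break into two regimes according to the size of $w_2$ (equivalently, how much mass is concentrated on the top two coordinates). \emph{Regime 1: $w_2 \ge \tau$ for a small absolute constant $\tau$.} Here $w$ ``looks like'' the two-variable case perturbed by a tail. Condition on the signs of $x_1, x_2$: writing $R = \sum_{i \ge 3} w_i x_i$, we have $\kk(w) = \E[\,|{\pm w_1 \pm w_2 + R}|\,]$ averaged over the four sign patterns. I would use the elementary fact that for fixed $a > 0$ the map $t \mapsto \E[|a + t\cdot Z|]$ for a symmetric random variable $Z$ (here $Z = R/\|(w_3,\dots,w_n)\|$ and $t = \|(w_3,\dots,w_n)\|_2$) is strictly convex and increasing in $|t|$, combined with the strict convexity of $\E[|u_1 x_1 + u_2 x_2|]$ as a function on the simplex $\{u_1, u_2 \ge 0,\ u_1^2 + u_2^2 = 1 - t^2\}$, to extract a quadratic gain: $\kk(w) - \tfrac{1}{\sqrt 2} \gtrsim (w_1 - w_2)^2 + \sum_{i\ge 3} w_i^2$, and since both of these quantities are $\Omega(\delta)$ in this regime (as $w_1 + w_2$ bounded away from $\sqrt 2$ forces either a gap $w_1 - w_2$ or substantial tail mass), we are done here. \emph{Regime 2: $w_2 < \tau$.} Then $w_1 \ge \sqrt{1 - (n-1)\tau^2}$ is very close to $1$; but then $\kk(w) \ge w_1 - \E[|\sum_{i\ge 2} w_i x_i|] $ — actually better, I would just note $\kk(w) \ge \E[|w_1 x_1 + w_2 x_2|] - \E[|\sum_{i \ge 3} w_i x_i|]$ is not tight enough, so instead observe that when $w_1$ is close to $1$, $\kk(w)$ is close to $1$, which is bounded \emph{away} from $1/\sqrt 2$ by an absolute constant, while $\delta \le 1$; choosing $\tau$ small enough that $\kk(w) \ge \tfrac{1}{\sqrt 2} + (1 - \tfrac 1{\sqrt 2})$ whenever $w_2 < \tau$ closes this regime trivially with $c \le 1 - \tfrac{1}{\sqrt 2}$.

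The main obstacle is making the ``quadratic gain'' in Regime 1 quantitative and uniform in the dimension $n$ and in the tail $(w_3,\dots,w_n)$. The cleanest route I foresee is to isolate a single additional coordinate: fix $i^\ast = 3$ (the next-largest), condition on all $x_j$ with $j \ne 3$, and write $\kk(w) = \E_{x_{-3}}\big[\E_{x_3}|A + w_3 x_3|\big]$ where $A = \sum_{j \ne 3} w_j x_j$. Using $\E_{x_3}|A + w_3 x_3| = \max(|A|, w_3) \ge |A|$ with a strict gain $\tfrac{1}{2}(w_3 - |A|)_+ \cdot 2$ — more carefully, $\E_{x_3}|A + w_3 x_3| - |A| = 2\min(|A|, w_3)$ when... (one must handle the two cases $|A| \lessgtr w_3$) — this feeds an induction or a direct two-coordinate computation. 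Realistically I expect the actual paper reduces everything, via this kind of conditioning, to a compactness/continuity argument on the two- or three-dimensional extremal configuration together with an explicit finite case analysis; the quantitative constant $c$ then comes out of the strict second-order behavior of $\E[|u_1 x_1 + u_2 x_2 + u_3 x_3|]$ near its minimizers, which is where the real work lies. An alternative I would keep in reserve: prove it first for $n=2$ and $n=3$ by direct computation, then bootstrap to general $n$ by showing that merging or splitting coordinates only decreases the ratio $(\kk(w) - \tfrac{1}{\sqrt 2})/d(w)$, reducing to the low-dimensional base cases.
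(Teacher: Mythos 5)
Your setup (reduce to a proper vector, identify $d(w)^2 = 2 - \sqrt{2}(w_1+w_2)$, split into regimes) is a sensible starting point, but the two regimes as drawn do not cover the argument, and the paper proceeds quite differently.

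The concrete gap is in your \emph{Regime 2} ($w_2 < \tau$). You conclude that $w_1 \geq \sqrt{1 - (n-1)\tau^2}$ is ``very close to $1$,'' but this is false in general: nothing bounds $n$, so $w_2 < \tau$ is perfectly compatible with \emph{all} coordinates being small (e.g.\ $w = (1/\sqrt{n},\dots,1/\sqrt{n})$ has $w_2 < \tau$ for large $n$ yet $w_1 \to 0$). More generally, $w_2 < \tau$ only says the ``tail after the first coordinate'' is $\tau$-regular; $w_1$ can be anywhere in $(0,1]$. For small $w_1$ one needs a Berry--Ess{\'e}en/regularity argument (the paper invokes Theorem~\ref{thm:TJ} of K\"onig et al.\ for exactly this, or could use Fact~\ref{fact:gaussian-vs-reg}) to conclude $\kk(w)$ is near $\sqrt{2/\pi}$; for moderate $w_1$ with a regular tail, yet another estimate is needed. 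So Regime~2 is not trivial, and your bound $\kk(w) \geq \tfrac{1}{\sqrt 2} + (1 - \tfrac{1}{\sqrt 2})$ is simply not established. (A smaller slip: $\E_{x_3}|A + w_3 x_3| - |A|$ equals $(w_3 - |A|)^+$, as your $\max$ formula correctly gives, not $2\min(|A|,w_3)$.)

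Regime 1 is where you yourself flag that ``the real work lies,'' and indeed you do not produce the quantitative quadratic gain uniformly in $n$ and the tail; the convexity/conditioning sketch and the merge-and-bootstrap alternative are both undeveloped. The paper's route is genuinely different and is worth contrasting: it works with $\ell(x) = |w\cdot x|$ in Fourier language. The basic Khintchine bound comes from $\Var[\ell] \leq \tfrac12 \Inf(\ell)$ (Poincar\'e for even functions, Fact~\ref{fac:even}) together with $\Inf_i(\ell) \leq w_i^2$ (Lemma~\ref{lem:inf-ell}). The robust version then shows (Theorem~\ref{thm:main-kk}) that for ``canonical'' $w$ at least one of these two inequalities has constant slack: either $\Inf_1(\ell) \leq w_1^2 - c_1$, or $\ell$ has $\Omega(1)$ Fourier weight above level $2$ so that Poincar\'e is not tight. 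Proving that dichotomy uses the critical index and concentration/anti-concentration for the tail; the remaining non-canonical cases are dispatched by elementary symmetrization (Lemma~\ref{lem:non-can} via Claim~\ref{claim:sym}) and the K\"onig et al.\ bound. This influence-based slack argument is the idea missing from your proposal, and it is what makes the constant $c$ uniform in $n$.
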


Armed with our robust Khintchine inequality, the simple proof
of Proposition~\ref{prop:gl} suggests a natural approach to lower-bounding
$\w^{\leq 1}[\ltf].$  If $w$ is such that $d(w)$ is ``large'' (at least
some absolute constant), then the statement of Proposition~\ref{prop:gl}
immediately gives a lower bound better than $1/2.$  So the only
remaining vectors $w$ to handle are highly constrained vectors
which are almost exactly of the form ${\frac 1 {\sqrt{2}}} (\pm e_i \pm e_j)$.
A natural hope is that the Cauchy-Schwarz inequality
in the proof of Proposition~\ref{prop:gl} is not tight for such
highly constrained vectors, and indeed this is essentially how we proceed \newa{(modulo some
simple cases in which it is easy to bound $\w^{\leq 1}$ above $1/2$ directly).}
\ignore{we are able to show that this
is the case (though some work is required
for this, see Lemma~\ref{lem:w1-close}).
}

\medskip

\noindent {\bf Second proof of Theorem~\ref{thm:w1}:
anticoncentration, Fourier analysis of LTFs, and LTF approximation.}
Our second proof of Theorem~\ref{thm:w1} employs several
sophisticated ingredients from recent work on structural
properties of LTFs \cite{OS11:chow,MORS:10}.  The first of these
ingredients is a result (Theorem~6.1 of~\cite{OS11:chow}) which
essentially says that any LTF $f(x)=\sign(w \cdot x)$
can be perturbed very slightly to another LTF $f'(x) = \sign(w' \cdot x)$
(where both $w$ and $w'$ are unit vectors).
The key properties of this perturbation are
that (i) $f$ and $f'$ are extremely close, differing only
on a tiny fraction of inputs in $\{-1,1\}^n$; but (ii) the linear
form $w' \cdot x$ has some nontrivial ``anti-concentration''
when $x$ is distributed uniformly over $\bn$, meaning that very
few inputs have $w' \cdot x$ very close to 0.

Why is this useful?  It turns out that the anti-concentration of
$w' \cdot x$, together with results on the degree-1 Fourier spectrum
of ``regular'' halfspaces from \cite{MORS:10}, lets us
establish a lower bound on $\w^{\leq 1}[f']$ that is strictly
greater than $1/2$.  Then the fact that  $f$ and $f'$ agree on
almost every input in $\bn$ lets us argue that the original LTF $f$
must similarly have $\w^{\leq 1}[f]$ strictly greater than $1/2.$
Interestingly, the lower bound on
$\w^{\leq 1}[f']$ is proved using the Gotsman-Linial inequality
$\w^{\leq 1}[f'] \geq (\kk(w\new{'}))^2$; in fact, the anti-concentration
of $w' \cdot x$ is combined with ingredients in the simple Fourier proof
of the (original, non-robust) Khintchine inequality (specifically,
an upper bound on the total influence of the function $\ell(x) =
|w' \cdot x|$) to obtain the result.
\ifnum\short=1
\newa{Because of space constraints we give this second proof in
the full version of the paper, following the references.}
\fi

\ifnum\short=0
\subsection{Our techniques for Theorem~\ref{thm:w1-algo}:  approximating
the BKS constant $\w^{\leq 1}[\ltf]$}\label{ssec:w1algo-techniques}
\fi
\ifnum\short=1
\subsection{Our techniques for Theorem~\ref{thm:w1-algo}:  approximating
the BKS constant}\label{ssec:w1algo-techniques}
\fi

As in the previous subsection, it suffices to consider only zero-threshold
LTFs $\sign(w \cdot x)$.
Our algorithm turns out to be very simple (though its analysis is not):

\medskip

Let $K = \Theta(\eps^{-24}).$ Enumerate all $K$-variable zero-threshold
LTFs, and output the value
\ifnum\short=0
$$\Gamma_\eps \eqdef \min\{\w^1[f] : f
\text{~is a zero-threshold $K$-variable LTF.} \}.$$
\fi
\ifnum\short=1
$\Gamma_\eps \eqdef \min\{\w^1[f] : f
\text{~is a zero-threshold $K$-variable LTF.} \}.$
\fi

\medskip

It is well known (see e.g. \cite{MaassTuran:94})
that there exist $2^{\Theta(K^2)}$ distinct $K$-variable LTFs,
and it is straightforward to confirm that they can be enumerated in
\newa{time $2^{O(K^2 \log K)}$.}
\newa{Since $\w^1[f] $ can be computed in time $2^{O(K)}$ for any given $K$-variable LTF $f$, the above simple algorithm
runs in time $2^{\poly(1/\eps)}$}; the challenge is to show
that the value $\Gamma_\eps$ thus obtained indeed satisfies
Equation~(\ref{eqn:approx-w1}).

A key ingredient in our analysis is the notion of the ``critical index''
of an LTF $f$.  The critical index was implicitly introduced and used in
\cite{Servedio:07cc} and was explicitly used in
\ifnum\short=0
\cite{DiakonikolasServedio:09,DGJ+:10,OS11:chow,DDFS:12stoc}
\fi
\ifnum\short=1
\cite{DiakonikolasServedio:09short,DGJ+:10,OS11:chow,DDFS:12stoc}
\fi
and other works. To define the critical index we need to first define ``regularity'':

\begin{definition}[regularity]
Fix any real value $\tau > 0.$  We say that a vector $w = (w_1, \ldots, w_n) \in \R^n$ is \emph{$\tau$-regular} if $\max_{i \in [n]} |w_i| \leq \tau \|w\| = \tau \sqrt{w_1^2 + \cdots + w_n^2}.$
A linear form $w \cdot x$ is said to be $\tau$-regular if $w$ is $\tau$-regular, and similarly
an LTF is said to be $\tau$-regular
if it is of the form $\sgn(w \cdot x  -\theta)$  where $w$ is $\tau$-regular.
\end{definition}

Regularity is a helpful notion because if $w$ is $\tau$-regular then the Berry-Ess{\'e}en theorem tells us that for uniform $x \in \{-1,1\}^n$, the linear form $w \cdot x$ is ``distributed like a Gaussian up to error $\tau$.''  This can be useful for many reasons (as we will see below).

Intuitively, the critical index of $w$ is the first index $i$ such that
from that point on, the vector $(w_i,w_{i+1},\dots,w_n)$ is regular.  A precise definition follows:

\begin{definition}[critical index]
Given a vector $w \in \R^n$ such that $|w_1| \geq \cdots \geq |w_n| > 0$,
for $k \in [n]$ we denote by $\sigma_k$ the quantity $\sqrt{\littlesum_{i=k}^n w_i^2}$.
We define the \emph{$\tau$-critical index $c(w, \tau)$ of $w$} as the smallest
index $i \in [n]$ for which $|w_i| \leq \tau \cdot \sigma_i$. If
this inequality does not hold for any $i \in [n]$, we define $c(w, \tau) = \infty$.
\end{definition}

Returning to Theorem~\ref{thm:w1-algo}, since our algorithm minimizes over
a proper subset of all LTFs, it suffices to show that
for any zero-threshold LTF $f=\sign(w \cdot x)$, there is a $K$-variable zero-threshold
LTF $g$ such that
\ifnum\short=0
\begin{equation} \label{eqn:w1-goal}
\w^1[g] - \w^1[f]<\eps.
\end{equation}
\fi
\ifnum\short=1
$\w^1[g] - \w^1[f]<\eps.$
\fi
At a high level our proof is a
case analysis based on the size of the $\delta$-critical index $c(w,\delta)$ of
the weight vector $w$, where we choose the parameter $\delta$ to be
$\delta = \poly(\eps)$.
The first case is relatively easy:
if the $\delta$-critical index is large, then it is known that the function
$f$ is very close to some $K$-variable LTF $g$.
Since the two functions agree almost everywhere,
it is easy to show that $|\w^{1}[f] - \w^{1}[g]| \leq \eps$ as desired.

The case that the critical index is small is much more challenging.
In this case it is by no means true that $f$ can be well approximated by an
LTF on few variables -- consider, for example, the majority function.
We deal with this challenge by developing a novel \emph{variable
reduction technique} which lets us construct a $\poly(1/\eps)$-variable
LTF $g$ whose level-1 Fourier weight closely matches that of $f$.

How is this done?
The answer again comes from
the critical index.  Since the critical
index $c(w,\delta)$ is small, we know that except for the
``head'' portion $\sum_{i=1}^{c(w,\delta)-1} w_i x_i$ of the
linear form, the ``tail'' portion
$\sum_{i=c(w,\delta)}^{n} w_i x_i$ of the
linear form ``behaves like a Gaussian.''
Guided by this intuition, our variable reduction technique proceeds
in three steps.
In the first step, we replace the tail coordinates $x_T=
(x_{c(w,\delta)},\dots,x_n)$ by independent
Gaussian random variables and show that the degree-$1$ Fourier
weight of the corresponding ``mixed'' function
(which has some $\pm 1$-valued inputs and some Gaussian inputs)
is approximately equal to $\w^1[f]$.
In the second step, we replace the tail random variable $w_T \cdot G_T$,
where $G_T$ is the vector of Gaussians from the first step,
by a {\em single} Gaussian random variable
$G$, where $G \sim \mathcal{N}(0, \|w_T\|^2).$
We show that this transformation exactly preserves the degree-$1$ weight.
At this point we have reduced the number of variables from $n$ down to $c(w,\delta)$ (which is small in this case!), but the
last variable is Gaussian rather than Boolean.
As suggested by the Central Limit Theorem, though,
one may try to replace this Gaussian random variable by a
normalized sum of independent $\pm 1$ random variables
$\littlesum_{i=1}^M z_i/\sqrt{M}$.
This is exactly the third step of our variable reduction technique.
Via a careful analysis, we show that by taking $M = \poly(1/\eps)$,
this operation preserves the degree-$1$ weight up to an additive $\eps$.
Combining all these steps, we obtain the desired result.

\ifnum\short=0
\subsection{Our techniques for Theorem~\ref{thm:algo-T}:  approximating
Tomaszewski's constant $\T(\mathbb{S})$}
\fi
\ifnum\short=1
\subsection{Our techniques for Theorem~\ref{thm:algo-T}:  approximating
Tomaszewski's constant}
\fi

The first step of our proof of Theorem~\ref{thm:algo-T} is similar in
spirit to
the main structural ingredient of our proof of Theorem~\ref{thm:w1-algo}:
we show
\ifnum\short=0
(Theorem~\ref{thm:T-dog})
\fi
that given any $\eps > 0$, there is a value
$K_\eps=\poly(1/\eps)$ such that
it suffices to consider linear forms $w \cdot x$ over
$K_\eps$-dimensional space, i.e.
for any $n \in \N$ we have
\ifnum\short=0
\[
\T(\mathbb{S}^{n-1}) \leq \T(\mathbb{S}^{K_\eps - 1}) \leq
\T(\mathbb{S}^{n-1}) + \eps.
\]
\fi
\ifnum\short=1
$
\T(\mathbb{S}^{n-1}) \leq \T(\mathbb{S}^{K_\eps - 1}) \leq
\T(\mathbb{S}^{n-1}) + \eps.
$
\fi
Similar to the high-level outline of Theorem~\ref{thm:w1-algo},
our proof again proceeds by fixing any $w \in \mathbb{S}^{n-1}$
and doing a case analysis based on whether the critical index of
$w$ is ``large'' or ``small.'' However, the technical details of each
of these cases is quite different from the earlier proof.
In the ``small critical index'' case
we employ Gaussian anti-concentration (which is inherited by the
``tail'' random variable $w_T x_T$ since the tail vector $w_T$ is regular),
and in the ``large critical index'' case we use an
anti-concentration result from
\cite{OS11:chow}.

Unlike the previous situation for the BKS constant, at this point more
work remains to be done for approximating Tomaszewski's constant.
While there are only $2^{\poly(1/\eps)}$ many halfspaces over
$\poly(1/\eps)$ many variables and hence a brute-force
enumeration could cover all of them in $2^{\poly(1/\eps)}$ time for
the BKS constant, here we must contend with the fact that
$\mathbb{S}^{K_\eps-1}$ is an uncountably infinite set, so we cannot
naively minimize over all its elements.
Instead we take a dual approach
and exploit the fact that while there are uncountably infinitely
many vectors in $\mathbb{S}^{K_\eps-1}$,
there are only $2^{K_\eps}$ many hypercube points in
$\{-1,1\}^{K_\eps}$, and (with some care) the desired infimum
over all unit vectors can be formulated in the language of existential theory of the reals.
We then use an algorithm for deciding existential theory of the reals (see \cite{Ren:88})
to compute the infimum.
\ifnum\short=1
\newa{Because of space constraints we prove Theorem~\ref{thm:algo-T}
in the full version of the paper, following the references.}
\fi

\medskip

\noindent {\bf Discussion.}
It is interesting to note that determining Tomaszewski's constant is
an instance of the well-studied generic problem of understanding tails of
Rademacher sums. For the sake of discussion,  let us
define $\mathbf{T}_{\mathsf{in}}(w,a)
= \Pr_{x \in \{-1,1\}^n} [|w \cdot x| \le a]$
and $\mathbf{T}_{\mathsf{out}}(w,a)
= \Pr_{x \in \{-1,1\}^n} [|w \cdot x| \ge a]$
where $w \in \mathbb{S}^{n-1}$. Further, let $\mathbf{T}_{\mathsf{in}}(a) =
\inf_{w \in \mathbb{S}} \mathbf{T}_{\mathsf{in}} (w,a)$ and $\mathbf{T}_{\mathsf{out}}(a) =
\inf_{w \in \mathbb{S}} \mathbf{T}_{\mathsf{out}} (w,a)$.
Note that Tomaszewski's constant $\mathbf{T}(\mathbb{S})$
is simply $\mathbf{T}_{\mathsf{in}}(1)$.
Much effort has been expended on getting sharp estimates for
$\mathbf{T}_{\mathsf{in}}(a)$ and $\mathbf{T}_{\mathsf{out}}(a)$ for
various values of $a$ (see e.g.~\cite{Pinelis:12,Bentkus:04}).
As a representative example, Bentkus and Dzindzalieta~\cite{BK:12} proved that
\ifnum\short=0
$$
\mathbf{T}_{\mathsf{in}}(a) \ge \frac14 + \frac14 \cdot \sqrt{2 -
\frac{2}{a^2}}$$
\fi
\ifnum\short=1
$\mathbf{T}_{\mathsf{in}}(a) \ge \frac14 + \frac14 \cdot \sqrt{2 -
\frac{2}{a^2}}$
\fi
 for $a \in (1,\sqrt{2}]$.
Similarly, Pinelis~\cite{Pinelis:94} showed that there is an absolute constant $c>0$ such that
$
\mathbf{T}_{\mathsf{out}}(a) \ge 1 - c \cdot \frac{\phi(a)}{a}
$
where $\phi(x)$ is the density function of the standard normal
$\mathcal{N}(0,1)$ (note this beats the standard Hoeffding bound
by a factor of $1/a$).

On the complementary side, Montgomery-Smith~\cite{Montgomery-Smith:90}
proved that there is an absolute constant $c'>0$ such that
$
\mathbf{T}_{\mathsf{out}}(a) \ge e^{-c' \cdot a^2}
$
\newb{for all $a \leq 1.$}
Similarly, Oleszkiewicz \cite{Olek:96} proved that $\mathbf{T}_{\mathsf{out}}(
1) \ge 1/10$. The conjectured lower bound on $\mathbf{T}_{\mathsf{out}}(1)$ is
$7/32$ (see \cite{HK:93}).  While we have not investigated
this in detail, we suspect that our techniques may
be applicable to some of the above problems.  Finally, we note that
apart from being of intrinsic interest to functional analysts and probability
theorists, the above quantities arise frequently in the
optimization literature
\ifnum\short=0
(see \cite{HLNZ:07,BNR02}).
\fi
\ifnum\short=1
(see \cite{HLNZ:07short,BNR02short}).
\fi
Related tail bounds have
also found applications in extremal combinatorics
\ifnum\short=0
(see \cite{AHS:12}).
\fi
\ifnum\short=1
(see \cite{AHS:12short}).
\fi

\ifnum\short=1

\medskip

\noindent {\bf Mathematical Preliminaries.}
These are given in Section~2 of the full version.

\fi


\section{Mathematical Preliminaries} \label{sec:prelims}

\subsection{Fourier analysis over $\{-1,1\}^n$ and influences} \label{ssec:fourier}

We consider functions $f : \bn \to \R$ (though we often focus on
Boolean-valued functions which map to $\{-1,1\}$), and we think of
the inputs $x$ to $f$ as being distributed according to the uniform
probability distribution. The set of such functions forms a
$2^n$-dimensional inner product space with inner product given by
$\la f, g \ra = \E_x[f(x)g(x)]$. The set of functions $(\chi_S)_{S
\subseteq [n]}$ defined by $\chi_S(x) = \prod_{i \in S} x_i$ forms a
complete orthonormal basis for this space.  We will also often write
simply $x_S$ for $\prod_{i \in S} x_i$.  Given a function $f : \bn
\to \R$ we define its \emph{Fourier coefficients} by $\wh{f}(S) =
\E_x[f(x) x_S]$, and we have that $f(x) = \sum_S \wh{f}(S) x_S$.

As an easy consequence of orthonormality we have \emph{Plancherel's
identity} $\la f, g \ra = \sum_S \wh{f}(S) \wh{g}(S)$, which has as
a special case \emph{Parseval's identity}, $\E_x[f(x)^2] = \sum_S
\wh{f}(S)^2$. From this it follows that for every $f : \bn \to
\bits$ we have $\sum_S \wh{f}(S)^2 = 1$.
Note that for $f:\bn \to \R$ we have that $\Var[f]  = \E_x[f^2(x)] - \left(\E_x[f]\right)^2 = \littlesum_{S \neq \emptyset} \wh{f}^2(S).$


\begin{definition}  \label{def:influences}
Given $f : \bn \to \R$ and $i \in [n]$, the \emph{influence} of variable $i$
is defined as $\Inf_i(f) = \E_x \left[ \Var_{x_i} [f(x)] \right]$.
The {\em total influence} of $f$ is defined as $\Inf(f) = \littlesum_{i=1}^n \Inf_i(f).$
\end{definition}

\begin{fact} \label{fact:influences}
We have the identity $\Inf_i(f) = \littlesum_{S \ni i} \wh{f}^2(S)$; moreover, for $f : \bn \to \bits$ (i.e. Boolean-valued),
it holds $\Inf_i(f) = \Pr_x[f(x\supim) \neq f(x\supip)]$, where $x\supim$ and $x\supip$
denote  $x$ with the $i$'th bit set to $-1$ or $1$ respectively. If $f:\bn \to \bits$ is unate, then $\Inf_i(f) = |\hat{f}(i)|$.
\end{fact}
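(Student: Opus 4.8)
The plan is to prove the three assertions in order, each by a short direct computation; no single heavy idea is needed.

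\textbf{Fourier identity.} First I would split the Fourier expansion of $f$ according to whether a subset contains the coordinate $i$, writing $f(x) = A(x_{-i}) + x_i\,B(x_{-i})$, where $A = \littlesum_{S \not\ni i}\wh f(S)\,x_S$ and $B = \littlesum_{S \ni i}\wh f(S)\,x_{S\setminus\{i\}}$ both depend only on the coordinates other than $i$ (for $B$ this uses that $i \in S$, so $x_S = x_i\,x_{S\setminus\{i\}}$). Since over the uniform choice of $x_i \in \bits$ we have $\Ex_{x_i}[x_i]=0$ and $x_i^2=1$, a two-line calculation gives $\Var_{x_i}[f(x)] = \Ex_{x_i}[(A+x_iB)^2] - (\Ex_{x_i}[A+x_iB])^2 = (A^2+B^2) - A^2 = B(x_{-i})^2$. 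Taking the expectation over the remaining coordinates and applying Parseval to $B$ (viewed as a function on $\bits^{[n]\setminus\{i\}}$) yields $\Inf_i(f) = \Ex_{x_{-i}}[B^2] = \littlesum_{S \ni i}\wh f(S)^2$.

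\textbf{Boolean-valued case.} Next, note that $\Var_{x_i}[f(x)]$ depends only on $x_{-i}$. Fix $x_{-i}$: if $f(x\supim) = f(x\supip)$ then $f$ is constant as $x_i$ varies and the inner variance is $0$; otherwise $f$ takes the values $+1$ and $-1$ each with probability $1/2$ over $x_i$, so the inner variance is $1$. Hence $\Inf_i(f) = \Ex_{x_{-i}}\big[\mathbf{1}\{f(x\supim)\neq f(x\supip)\}\big] = \Pr_x[f(x\supim) \neq f(x\supip)]$, where the last equality holds because the event does not depend on $x_i$.

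\textbf{Unate case.} Finally, by definition of unateness there is a sign $\rho_i \in \bits$ such that, after substituting $x_i \mapsto \rho_i x_i$, the function $f$ is monotone non-decreasing in $x_i$; this substitution changes neither $\Inf_i(f)$ nor $|\wh f(i)|$, so I may assume $f$ is monotone non-decreasing in $x_i$. Then for every $x_{-i}$ we have $f(x\supip) - f(x\supim) \in \{0,2\}$, and it equals $2$ precisely on the inputs counted by $\Inf_i(f)$. Therefore $\wh f(i) = \Ex_x[f(x)\,x_i] = \Ex_{x_{-i}}\big[\tfrac12\big(f(x\supip) - f(x\supim)\big)\big] = \Pr_x[f(x\supip) \neq f(x\supim)] = \Inf_i(f) \geq 0$. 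Undoing the substitution, in general $\wh f(i) = \pm\Inf_i(f)$, i.e.\ $\Inf_i(f) = |\wh f(i)|$.

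\textbf{Main obstacle.} There is no real obstacle here; the statement is routine. The only points needing care are bookkeeping ones: keeping the expectation over $x_i$ cleanly separated from the expectation over $x_{-i}$ throughout, and, in the unate case, invoking the definition of unateness correctly and tracking the sign $\rho_i$ so that the absolute value ends up on the right-hand side.
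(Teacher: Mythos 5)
Your proof is correct on all three points. The paper states Fact~\ref{fact:influences} without proof (it is standard material in the analysis of Boolean functions), so there is no paper argument to compare against; the decomposition $f = A + x_i B$, the case analysis of the inner variance for Boolean-valued $f$, and the reduction of the unate case to the monotone case via the sign substitution $x_i \mapsto \rho_i x_i$ are exactly the textbook route, and you have carried them out with the bookkeeping handled correctly.
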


\begin{fact} \label{fact:ordering}
Let $f  = \sgn(\littlesum_{i=1}^n w_ix_i  - w_0)$ be an LTF such that $|w_1| \geq |w_i|$ for all
$i \in [n]$.  Then $|\Inf_1(f)| \geq |\Inf_i(f)|$ for all $i \in [n]$. Moreover, for all $i \in [n]$ it holds $w_i \cdot \wh{f}(i) \geq 0.$
\end{fact}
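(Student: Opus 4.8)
The plan is to prove the two assertions separately, handling the ``moreover'' clause first since it feeds into the first part. The crucial structural fact is that every LTF is \emph{unate}: writing $f(x) = \sgn(\littlesum_i w_ix_i - w_0)$, flipping $x_i$ from $-1$ to $1$ changes the argument of $\sgn$ by $2w_i$, so $f$ is non-decreasing in $x_i$ when $w_i>0$, non-increasing when $w_i<0$, and independent of $x_i$ when $w_i=0$. In each case $\wh{f}(i) = \E_x[f(x)x_i] = \tfrac12\,\E_x[f(x\supip) - f(x\supim)]$ has the same sign as $w_i$ (or is zero), which is precisely $w_i\cdot\wh{f}(i)\ge 0$. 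Combining this with the identity $\Inf_i(f) = |\wh{f}(i)|$ for unate $f$ (Fact~\ref{fact:influences}), we also get $\Inf_i(f) = \sgn(w_i)\,\wh{f}(i)$ whenever $w_i\neq 0$, which will be convenient below.

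For the first claim, note that influences are nonnegative (being averages of variances), so $|\Inf_1(f)|\ge|\Inf_i(f)|$ is just $\Inf_1(f)\ge\Inf_i(f)$. I would first reduce to the case where all weights are nonnegative: replacing $x_i$ by $-x_i$ for every $i$ with $w_i<0$ negates $w_i$, leaves each $|w_i|$ unchanged (so the hypothesis $|w_1|\ge|w_i|$ is preserved, and coordinate $1$ still has maximal weight), and changes no influence $\Inf_i(f)$. So assume $0\le w_i\le w_1$ for all $i$; then $f$ is monotone non-decreasing, $\wh{f}(i)\ge 0$, and $\Inf_i(f)=\wh{f}(i)$. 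Hence it suffices to show $\wh{f}(1)\ge\wh{f}(i)$, i.e.\ $\E_x[f(x)(x_1-x_i)]\ge 0$, for each fixed $i\neq 1$.

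The last step is a coupling argument. In $\E_x[f(x)(x_1-x_i)]$ only inputs with $x_1\neq x_i$ contribute. Pair each $x$ with $x_1=1,x_i=-1$ with the input $x'$ obtained by swapping coordinates $1$ and $i$ (so $x'_1=-1$, $x'_i=1$, and $x'_j=x_j$ otherwise); this is a bijection onto the inputs with $x'_1=-1,x'_i=1$. Writing $R=\littlesum_{j\neq 1,i}w_jx_j$, the argument of $\sgn$ at $x$ equals $(w_1-w_i)+R-w_0$ while at $x'$ it equals $-(w_1-w_i)+R-w_0$; since $w_1-w_i\ge 0$, the former is at least the latter, so $f(x)\ge f(x')$. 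Summing $f(x)-f(x')$ over this pairing yields $\littlesum_{x_1=1,x_i=-1}f(x)\ge\littlesum_{x_1=-1,x_i=1}f(x)$, which is exactly $\E_x[f(x)(x_1-x_i)]\ge 0$, completing the proof. The only non-mechanical point is choosing the right coordinate-swap coupling; once that pairing is in place the whole argument is elementary, so I do not anticipate a genuine obstacle here.
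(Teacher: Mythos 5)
Your proof is correct. The paper states Fact~\ref{fact:ordering} without proof (as a standard structural fact about LTFs), so there is no paper argument to compare against, but your argument is the natural one: unateness gives $\sgn(\wh{f}(i)) = \sgn(w_i)$ and $\Inf_i(f)=|\wh{f}(i)|$, the sign-flip reduction puts you in the monotone case, and the coordinate-swap coupling on inputs with $x_1\neq x_i$ cleanly delivers $\wh{f}(1)\geq\wh{f}(i)$ from $w_1\geq w_i\geq 0$. All steps check out, including the edge cases $w_i=0$ and $w_1=w_i$.
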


\ignore{

\inote{Do we actually need the following two definitions?}

\begin{definition}
For $a,b \in \R$ we write $a \apx{\eta} b$ to indicate that $|a - b|
\leq O(\eta)$.
\end{definition}

\begin{definition} A function $f : \bits^n \to \R$ is said to be
a ``junta on $J \subset [n]$'' or a ``$J$-junta'' if $f$ only depends on the
coordinates in $J$.
\end{definition}

}

\subsection{Probabilistic Facts} \label{ssec:prob-basics}

We require some basic probability results including the standard additive Hoeffding bound:
\begin{theorem} \label{thm:chb}
Let $X_1, \ldots, X_n$ be independent random variables such that for
each $j \in [n]$, $X_j$ is supported on $[a_j, b_j]$ for some $a_j,
b_j \in \R$, $a_j \le b_j$. Let $X \ = \littlesum_{j=1}^{n} X_j$.
Then, for any $t>0$,
$\Pr \big[ |X - \E[X]| \ge t \big] \le 2 \exp \left(
-2t^2/\littlesum_{j=1}^{n} (b_j-a_j)^2   \right).$
\end{theorem}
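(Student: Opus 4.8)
The plan is to use the standard exponential moment method (the Chernoff-bound technique) together with Hoeffding's lemma. First I would fix $\lambda > 0$ and apply Markov's inequality to the nonnegative random variable $e^{\lambda(X - \E[X])}$:
\[
\Pr\big[X - \E[X] \ge t\big] \;=\; \Pr\big[e^{\lambda(X-\E[X])} \ge e^{\lambda t}\big] \;\le\; e^{-\lambda t}\,\E\big[e^{\lambda(X-\E[X])}\big].
\]
Since the $X_j$ are independent, the expectation factors as $\E\big[e^{\lambda(X-\E[X])}\big] = \littlesum$-replaced product $\prod_{j=1}^n \E\big[e^{\lambda(X_j - \E[X_j])}\big]$, so it suffices to bound each factor separately.

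The key step -- and the only part requiring real work -- is Hoeffding's lemma: if $Y$ has $\E[Y]=0$ and is supported on an interval $[a,b]$, then $\E[e^{\lambda Y}] \le e^{\lambda^2(b-a)^2/8}$. I would prove this by setting $\psi(\lambda) = \ln \E[e^{\lambda Y}]$, observing $\psi(0) = 0$ and $\psi'(0) = \E[Y] = 0$, and showing $\psi''(\lambda) \le (b-a)^2/4$ for all $\lambda$. The latter holds because $\psi''(\lambda)$ equals the variance of $Y$ under the exponentially tilted measure with density proportional to $e^{\lambda Y}$, and any random variable supported on $[a,b]$ has variance at most $(b-a)^2/4$ (maximized by the two-point distribution on the endpoints). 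Taylor's theorem with remainder then gives $\psi(\lambda) \le \lambda^2 (b-a)^2/8$. Applying this with $Y = X_j - \E[X_j]$, which is supported on an interval of length $b_j - a_j$, yields
\[
\Pr\big[X - \E[X] \ge t\big] \;\le\; \exp\!\Big(-\lambda t + \frac{\lambda^2}{8}\littlesum_{j=1}^n (b_j-a_j)^2\Big).
\]

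Finally I would optimize the right-hand side over $\lambda > 0$: the exponent is a quadratic in $\lambda$ minimized at $\lambda^\ast = 4t/\littlesum_{j=1}^n (b_j-a_j)^2$, giving the one-sided bound $\exp\big(-2t^2/\littlesum_{j=1}^n (b_j-a_j)^2\big)$. The identical argument applied to the variables $-X_j$ gives the same bound on $\Pr[X - \E[X] \le -t]$, and a union bound over these two events produces the claimed factor of $2$. I do not anticipate any genuine obstacle, since this is a textbook result; the only points demanding a little care are the variance estimate $\Var(Y) \le (b-a)^2/4$ for bounded $Y$ and the justification that $\psi$ is twice differentiable with the stated derivative formulas, both of which are routine.
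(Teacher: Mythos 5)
Your proof is correct and is the standard textbook derivation of Hoeffding's inequality (exponential moment method plus Hoeffding's lemma, the latter proved via the second-derivative/tilted-measure bound on $\psi(\lambda)=\ln\E[e^{\lambda Y}]$). The paper does not supply a proof of this statement at all---it simply cites it as ``the standard additive Hoeffding bound''---so there is nothing to compare against; your argument fills in exactly the expected classical proof, and the optimization at $\lambda^{\ast}=4t/\sum_j (b_j-a_j)^2$ together with the union bound over the two tails gives the stated two-sided bound with the factor of $2$.
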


\medskip

\noindent The Berry-Ess{\'e}en theorem (see e.g. \cite{Feller})
gives explicit error bounds for the Central
Limit Theorem:

\begin{theorem} \label{thm:be} (Berry-Ess{\'e}en)
Let $X_1, \dots, X_n$ be independent random variables satisfying
$\E[X_i] = 0$ for all $i \in [n]$, $\sqrt{\littlesum_i \E[X_i^2]} =
\sigma$, and $\littlesum_i \E[|X_i|^3] = \rho_3$.  Let $S = (X_1 +
\cdots + X_n)/\sigma$ and let $F$ denote the cumulative distribution
function (cdf) of $S$. Then
$\sup_x |F(x) - \Phi(x)| \leq \rho_3/\sigma^3$
where $\Phi$ denotes the cdf of the standard gaussian random variable.
\end{theorem}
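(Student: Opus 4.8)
The plan is to follow the classical Fourier-analytic route (as in Feller): Ess\'een's smoothing inequality together with Taylor estimates for characteristic functions. Since the statement asserts the clean constant $1$, the real work is not the skeleton but the bookkeeping, i.e. arranging every estimate so that nothing worse than $1$ is lost. By homogeneity rescale so that $\sigma = 1$; write $\sigma_j^2 = \E[X_j^2]$, $\beta_j = \E[|X_j|^3]$, and $\varphi_j(t) = \E[e^{itX_j}]$, so that $\sum_j \sigma_j^2 = 1$, $\sum_j \beta_j = \rho_3 =: \varepsilon$, $\widehat{F}(t) = \prod_j \varphi_j(t)$, and $\Phi$ has characteristic function $e^{-t^2/2} = \prod_j e^{-\sigma_j^2 t^2/2}$. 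Since $F$ and $\Phi$ are both distribution functions we always have $\sup_x |F(x) - \Phi(x)| \le 1$, so it suffices to treat $\varepsilon < 1$; in that regime record the elementary inequalities $\sigma_j^3 \le \beta_j$ (Jensen for $u \mapsto u^{3/2}$ applied to $X_j^2$), hence $\max_j \sigma_j \le \varepsilon^{1/3}$, and consequently $\sum_j \sigma_j^4 \le (\max_k \sigma_k)\sum_j \sigma_j^3 \le \varepsilon^{1/3}\sum_j \beta_j = \varepsilon^{4/3} \le \varepsilon$.

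The first step is Ess\'een's smoothing lemma: for every $T > 0$,
$$\sup_x |F(x) - \Phi(x)| \;\le\; \frac1\pi \int_{-T}^{T} \left| \frac{\widehat{F}(t) - e^{-t^2/2}}{t} \right| dt \;+\; \frac{c_0}{T},$$
where $c_0$ is an absolute constant coming from the smoothing kernel and the bound $\sup_x |\Phi'(x)| = 1/\sqrt{2\pi}$. This is proved by convolving $F$ and $\Phi$ with a fixed density whose Fourier transform is supported on $[-T,T]$, estimating the difference of the smoothed distribution functions through the Fourier inversion formula, and passing back to $F$ and $\Phi$ at the cost of the $c_0/T$ term using that $\Phi$ has bounded density. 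Obtaining a value of $c_0$ small enough for the final constant to come out as $1$ is exactly the point where one must use a near-optimal smoothing kernel (à la Ess\'een/Prawitz) rather than the textbook $24/\pi$ version.

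The second step is to bound the integrand on $[-T,T]$ with $T = \Theta(1/\varepsilon)$, the implied constant chosen small. Telescoping via $|\prod_j a_j - \prod_j b_j| \le \sum_j |a_j - b_j|\prod_{k<j}|a_k|\prod_{k>j}|b_k|$ with $a_j = \varphi_j(t)$, $b_j = e^{-\sigma_j^2 t^2/2}$, and using the per-factor estimates $|\varphi_j(t) - (1 - \sigma_j^2 t^2/2)| \le \min\{\tfrac16 \beta_j |t|^3,\ \sigma_j^2 t^2\}$ and $|e^{-\sigma_j^2 t^2/2} - (1 - \sigma_j^2 t^2/2)| \le \tfrac18 \sigma_j^4 t^4$, together with the classical bound $|\varphi_k(t)| \le e^{-c\sigma_k^2 t^2}$ valid on the chosen range (this is where $\sigma_j^3 \le \beta_j \le \varepsilon$ controls the cubic remainder against the quadratic main term, and where the few coordinates with comparatively large $\sigma_k$ need separate care), one gets $|\widehat{F}(t) - e^{-t^2/2}| \le e^{-c' t^2}\sum_j (\tfrac16 \beta_j |t|^3 + \tfrac18 \sigma_j^4 t^4)$. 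Summing with $\sum_j \beta_j = \varepsilon$ and $\sum_j \sigma_j^4 \le \varepsilon^{4/3} \le \varepsilon$, then integrating $dt/|t|$ over $\R$ (the Gaussian factor makes the tails converge), yields $\tfrac1\pi \int_{-T}^{T}\cdots \le A\varepsilon$ for an explicit $A$; adding $c_0/T = \Theta(c_0)\varepsilon$ and optimizing the implied constant in $T$ gives a total of $(A + \Theta(c_0))\varepsilon$, which the sharp choices above reduce to at most $\varepsilon = \rho_3/\sigma^3$.

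\textbf{Main obstacle.} The skeleton is entirely standard; the difficulty is purely quantitative. A naive execution — textbook smoothing constant, crude remainder bounds, unoptimized $T$ — yields a leading constant of roughly $3$--$6$ rather than $1$. Reaching $1$ requires (i) a near-optimal smoothing inequality, (ii) using the interpolated remainder $\min\{\tfrac16 \beta_j |t|^3,\ \sigma_j^2 t^2\}$ rather than just the cubic bound, which is what controls the contribution near $|t| = T$, (iii) a careful delineation of the ranges on which the characteristic-function estimates hold, including a separate treatment of coordinates with relatively large variance, and (iv) a balanced choice of $T$; and because the Fourier argument must itself deliver constant $\le 1$ for every $\varepsilon \in (0,1)$ (the range $\varepsilon \ge 1$ being absorbed by the trivial bound $\sup_x|F(x)-\Phi(x)| \le 1$), there is essentially no slack anywhere in the estimate.
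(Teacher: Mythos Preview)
The paper gives no proof of this statement: the Berry--Ess\'een theorem appears in the ``Probabilistic Facts'' subsection and is simply quoted as a standard tool, with a citation to Feller. There is nothing in the paper to compare your proposal against.

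For what it is worth, your outline is the classical Fourier-analytic argument (Ess\'een's smoothing lemma plus Taylor estimates on characteristic functions), which is indeed the route taken in Feller. Your self-diagnosis is also accurate: that skeleton, run with textbook constants, delivers the inequality with some absolute constant $C$ strictly larger than $1$ in front of $\rho_3/\sigma^3$; pushing $C$ down to $1$ genuinely requires the sharper smoothing kernels and careful range-splitting you describe, and is a considerably more recent achievement than anything in Feller. The paper is not claiming that Feller proves the bound with $C=1$; it is simply recording a convenient clean form of the inequality. Every application of the theorem in the paper would go through unchanged with any fixed absolute constant in place of $1$, so there is no real tension, and no need for you to chase the optimal constant here.
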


An easy consequence of the Berry-Ess{\'e}en theorem is the following fact, which says that
a regular linear form has good anti-concentration (i.e. it assigns small probability
mass to any small interval):

\begin{fact} \label{fact:be}
Let $w=(w_1,\dots,w_n)$ be a $\tau$-regular vector in $\R^n$ and write $\sigma$ to denote $\|w\|_2$.
Then for any interval $[a,b] \subseteq \R$, we have
$\big|\Pr[\littlesum_{i=1}^n w_i x_i \in (a,b]] - \Phi([a/\sigma, b/\sigma])\big| \leq 2\tau$,
where $\Phi([c,d]) \eqdef \Phi(d) - \Phi(c)$. In particular, it follows that
$\Pr \big[\littlesum_{i=1}^n w_i x_i \in (a,b]\big] \leq  |b -a| / \sigma + 2\tau.$
\end{fact}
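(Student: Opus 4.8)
The plan is to apply the Berry--Ess\'een theorem (Theorem~\ref{thm:be}) to the normalized linear form and then translate the resulting cdf estimate into the claimed bound on interval probabilities. First I would set $X_i = w_i x_i$ for $x$ uniform on $\bn$; these are independent with $\E[X_i]=0$, with $\sqrt{\littlesum_i \E[X_i^2]} = \sqrt{\littlesum_i w_i^2} = \sigma$, and with $\rho_3 = \littlesum_i \E[|X_i|^3] = \littlesum_i |w_i|^3$. Writing $S = (\littlesum_i w_i x_i)/\sigma$ and letting $F$ be its cdf, the one substantive point is the bound on $\rho_3/\sigma^3$: since $\littlesum_i |w_i|^3 \le (\max_i |w_i|)\littlesum_i w_i^2 = (\max_i |w_i|)\,\sigma^2$, $\tau$-regularity gives $\rho_3/\sigma^3 \le (\max_i |w_i|)/\sigma \le \tau$, so Theorem~\ref{thm:be} yields $\sup_x |F(x) - \Phi(x)| \le \tau$.

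The remaining step is bookkeeping. For any interval $(a,b]$ we have $\Pr[\littlesum_i w_i x_i \in (a,b]] = \Pr[S \in (a/\sigma,b/\sigma]] = F(b/\sigma) - F(a/\sigma)$, whereas $\Phi([a/\sigma,b/\sigma]) = \Phi(b/\sigma) - \Phi(a/\sigma)$ by definition. Subtracting and using the triangle inequality together with the cdf estimate,
\[
\bigl| \Pr[\littlesum_i w_i x_i \in (a,b]] - \Phi([a/\sigma, b/\sigma]) \bigr| \le |F(b/\sigma) - \Phi(b/\sigma)| + |F(a/\sigma) - \Phi(a/\sigma)| \le 2\tau ,
\]
which is the first assertion. (The half-open convention and the possibility $a=-\infty$ or $b=+\infty$ cause no trouble, since $\Phi$ is continuous and in the unbounded case only one error term survives.) For the ``in particular'' claim I would use that the standard Gaussian density satisfies $\phi(t) \le 1/\sqrt{2\pi} < 1$, so $\Phi([a/\sigma,b/\sigma]) = \int_{a/\sigma}^{b/\sigma}\phi(t)\,dt \le (b-a)/\sigma$; combining this with the displayed inequality gives $\Pr[\littlesum_i w_i x_i \in (a,b]] \le |b-a|/\sigma + 2\tau$.

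I do not expect any real obstacle here: the argument is essentially a direct invocation of Berry--Ess\'een, and the only places that require a moment's care are the estimate $\rho_3/\sigma^3 \le \tau$ (which is exactly where $\tau$-regularity enters) and the trivial density bound $\phi \le 1/\sqrt{2\pi}$. Everything else is the triangle inequality.
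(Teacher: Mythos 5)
Your proof is correct and is exactly the intended argument: the paper does not spell out a proof of Fact~\ref{fact:be}, describing it only as an ``easy consequence'' of Theorem~\ref{thm:be}, and your application of Berry--Ess\'een with the estimate $\rho_3/\sigma^3 \le (\max_i|w_i|)/\sigma \le \tau$ together with the triangle inequality and the density bound $\phi \le 1/\sqrt{2\pi}$ is precisely that easy consequence.
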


%
%
%
%

\subsection{Technical Tools about Regularity and the Critical Index} \label{ssec:tools}

\ignore{

A key ingredient in our proofs is the notion of the ``critical index'' of an LTF $f$.  The critical index was implicitly introduced and used in
\cite{Servedio:07cc} and was explicitly used in
\cite{DiakonikolasServedio:09,DGJ+:10,OS11:chow, DDFS:12stoc} and other works. To define the critical index we need to first define ``regularity'':

\begin{definition}[regularity]
Fix $\tau > 0.$  We say that a vector $w = (w_1, \ldots, w_n) \in \R^n$ is \emph{$\tau$-regular} if $\max_{i \in [n]} |w_i| \leq \tau \|w\| = \tau \sqrt{w_1^2 + \cdots + w_n^2}.$
A linear form $w \cdot x$ is said to be $\tau$-regular if $w$ is $\tau$-regular, and similarly
an LTF is said to be $\tau$-regular
if it is of the form $\sgn(w \cdot x  - w_0)$  where $w$ is $\tau$-regular.
\end{definition}

Regularity is a helpful notion because if $w$ is $\tau$-regular then the Berry-Ess{\'e}en theorem tells us that for uniform $x \in \{-1,1\}^n$, the linear form $w \cdot x$ is ``distributed like a Gaussian up to error $\tau$.''  This can be useful for many reasons; in particular, it will let us exploit the strong anti-concentration properties of the Gaussian distribution.

Intuitively, the critical index of $w$ is the first index $i$ such that
from that point on, the vector $(w_i,w_{i+1},\dots,w_n)$ is regular.  A precise definition follows:

\begin{definition}[critical index]
Given a vector $w \in \R^n$ such that $|w_1| \geq \cdots \geq |w_n| > 0$,
for $k \in [n]$ we denote by $\sigma_k$ the quantity $\sqrt{\littlesum_{i=k}^n w_i^2}$.
We define the \emph{$\tau$-critical index $c(w, \tau)$ of $w$} as the smallest
index $i \in [n]$ for which $|w_i| \leq \tau \cdot \sigma_i$. If
this inequality does not hold for any $i \in [n]$, we define $c(w, \tau) = \infty$.
\end{definition}

}

The following simple fact states that the ``tail weight'' of the vector $w$  decreases exponentially prior to the critical index:
\begin{fact}\label{fact:small-tail}
For any vector $w = (w_1, \ldots, w_n)$ such that $|w_1| \geq \cdots \geq |w_n| > 0$ and $1 \le a \le c(w,\tau)$, we have $\sigma_a < (1-\tau^2)^{(a-1)/2} \cdot \sigma_1$.
\end{fact}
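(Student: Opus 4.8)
The plan is to prove this by an inductive (telescoping) argument on the index $a$, using the definition of the $\tau$-critical index to bound consecutive ratios $\sigma_{i+1}/\sigma_i$. Recall that $\sigma_i^2 = \sum_{j=i}^n w_j^2 = w_i^2 + \sigma_{i+1}^2$, so $\sigma_{i+1}^2 = \sigma_i^2 - w_i^2$.

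First I would observe that for every index $i$ with $1 \le i < c(w,\tau)$, the defining inequality of the critical index \emph{fails} at $i$, i.e. $|w_i| > \tau \cdot \sigma_i$, and hence $w_i^2 > \tau^2 \sigma_i^2$. Plugging this into the identity above gives
\[
\sigma_{i+1}^2 = \sigma_i^2 - w_i^2 < \sigma_i^2 - \tau^2 \sigma_i^2 = (1-\tau^2)\sigma_i^2 .
\]
So each step from $\sigma_i$ to $\sigma_{i+1}$ shrinks the squared tail-norm by a factor strictly less than $1-\tau^2$, as long as we are still strictly below the critical index.

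Next I would iterate this bound. Fix $a$ with $1 \le a \le c(w,\tau)$. Applying the one-step inequality for $i = 1, 2, \ldots, a-1$ (all of which satisfy $i < c(w,\tau)$ since $i \le a-1 \le c(w,\tau)-1$) and multiplying, we get $\sigma_a^2 < (1-\tau^2)^{a-1}\sigma_1^2$. Taking square roots yields $\sigma_a < (1-\tau^2)^{(a-1)/2}\sigma_1$, which is exactly the claimed inequality. (The case $a=1$ is trivial, giving $\sigma_1 \le \sigma_1$; the strict inequality in the statement is then understood vacuously, or one simply notes the interesting case is $a \ge 2$ where at least one genuine contraction step occurs.) One should also note the edge case $c(w,\tau) = \infty$ is allowed, in which case every $i \in [n]$ satisfies $i < c(w,\tau)$ and the argument goes through unchanged for any $a \le n$.

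Honestly, there is no real obstacle here: the only mild subtlety is being careful about which indices $i$ genuinely satisfy $w_i^2 > \tau^2 \sigma_i^2$ — namely all $i$ strictly less than the critical index — and making sure the product telescopes over exactly the right range $i \in \{1,\ldots,a-1\}$. Everything else is the elementary identity $\sigma_i^2 = w_i^2 + \sigma_{i+1}^2$ together with a finite induction, so the proof is a few lines.
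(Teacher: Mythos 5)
Your proof is correct and is essentially the same as the paper's: both establish the one-step contraction $\sigma_{i+1} < \sqrt{1-\tau^2}\,\sigma_i$ for $i < c(w,\tau)$ from the failure of the critical-index inequality, then iterate (telescope) from $1$ to $a$. The extra remarks on the $a=1$ and $c(w,\tau)=\infty$ edge cases are fine but add nothing beyond the paper's argument.
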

\begin{proof}
If $a <c(w,\tau)$, then by definition $|w_a| > \tau \cdot \sigma_a$. This implies that $\sigma_{a+1} < \sqrt{1-\tau^2} \cdot \sigma_a$. Applying this inequality repeatedly, we get  that $\sigma_a < (1-\tau^2)^{(a-1)/2} \cdot \sigma_1$ for any  $1 \le a \le c(w,\tau)$.
\end{proof}

\noindent We will also need the following corollary
(that appears e.g. as Propositions 31 and 32 in~\cite{MORS:10}).

\begin{fact}\label{fact:gaussian-vs-reg}
Let $\ell(x) = w \cdot x - w_0$ with $\|w\|_2 = 1$ and $w_0 \in \R$ and $f(x) = \sgn(\ell(x))$.
If $w$ is $\tau$-regular, then we have:
\begin{itemize}
\item {(i)} $\E_{x \sim \mathcal{U}_n} [ f(x) ]  \approx^{\tau}  \E_{x \sim \mathcal{N}^n } [f(x)]$ and
\item {(ii)} $\E_{x \sim \mathcal{U}_n} [ |\ell(x)| ]  \approx^{\tau}  \E_{x \sim \mathcal{N}^n} [ |\ell(x)| ]$,
\end{itemize}
\newa{where $\mathcal{N}$ denotes the standard Gaussian distribution $N(0,1)$.}
\end{fact}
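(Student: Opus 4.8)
The plan is to deduce both statements from the Berry--Ess\'een theorem (Theorem~\ref{thm:be}) applied to the linear form $w\cdot x=\sum_{i=1}^n w_ix_i$ under the uniform distribution. First I would record the basic estimate: taking $X_i:=w_ix_i$ we have $\E[X_i]=0$, $\sigma^2=\sum_i\E[X_i^2]=\|w\|_2^2=1$, and $\rho_3=\sum_i\E[|X_i|^3]=\sum_i|w_i|^3\le\big(\max_i|w_i|\big)\sum_iw_i^2\le\tau$, using $\tau$-regularity together with $\|w\|_2=1$. Hence the cdf $F$ of $w\cdot x$ satisfies $\sup_{t\in\R}|F(t)-\Phi(t)|\le\tau$. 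Moreover, since $\|w\|_2=1$, when $x\sim\mathcal{N}^n$ the form $w\cdot x$ is distributed exactly as $\mathcal{N}(0,1)$, so $\Phi$ is also the cdf of $w\cdot x$ in the Gaussian model. Thus the distributions of the linear form in the two models have cdfs within $\tau$ of each other at every point, and both parts of the Fact are quantitative consequences of this single observation.

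For part (i), I would write $\E_{x\sim\mathcal{U}_n}[f(x)]=1-2\Pr_{\mathcal{U}_n}[w\cdot x<w_0]$ and $\E_{x\sim\mathcal{N}^n}[f(x)]=1-2\Phi(w_0)$ (using $\sgn(z)=1$ for $z\ge0$). Since $\Phi$ is continuous, $\big|\Pr_{\mathcal{U}_n}[w\cdot x<w_0]-\Phi(w_0)\big|\le\sup_t|F(t)-\Phi(t)|\le\tau$, so the two expectations differ by at most $2\tau$; this gives (i).

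For part (ii), set $Z:=w\cdot x-w_0$. Replacing $(w,w_0)$ by $(-w,-w_0)$ leaves $|\ell(x)|$ unchanged pointwise and preserves both unit norm and $\tau$-regularity, so we may assume $w_0\ge0$. Using the identity $|Z|=2Z^+-Z$ together with $\E_{\mathcal{U}_n}[Z]=\E_{\mathcal{N}^n}[Z]=-w_0$, the $w_0$-terms cancel and we obtain
\[
\Big|\E_{\mathcal{U}_n}[|\ell(x)|]-\E_{\mathcal{N}^n}[|\ell(x)|]\Big|=2\Big|\int_0^\infty\big(\Pr_{\mathcal{U}_n}[w\cdot x>w_0+t]-\Pr_{\mathcal{N}^n}[w\cdot x>w_0+t]\big)\,dt\Big|.
\]
I would then split this integral at a cutoff $T=\Theta(\sqrt{\log(1/\tau)})$. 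On $[0,T]$ the integrand is bounded in absolute value by $\tau$ (again $|\,1-F(c)-(1-\Phi(c))\,|\le\tau$), contributing $O(\tau\sqrt{\log(1/\tau)})$. On $[T,\infty)$ I bound the two probabilities separately: the Hoeffding bound (Theorem~\ref{thm:chb}) gives $\Pr_{\mathcal{U}_n}[|w\cdot x|>s]\le2e^{-s^2/2}$ since $\sum_i(2w_i)^2=4$, and the standard Gaussian tail gives the analogous estimate in the Gaussian model; as $w_0\ge0$ both yield $\Pr[w\cdot x>w_0+t]\le2e^{-t^2/2}$, so this part contributes $O(\tfrac1T e^{-T^2/2})=O(\tau)$ (for $\tau$ bounded away from $1$; the Fact is trivial otherwise). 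Altogether the difference in (ii) is $O(\tau\sqrt{\log(1/\tau)})$, which is what the relation $\approx^{\tau}$ abbreviates here. (If one wants a clean $O(\tau)$ bound with no logarithm, one can instead invoke the Wasserstein-distance form of Berry--Ess\'een, giving $W_1(w\cdot x,\mathcal{N}(0,1))=O(\rho_3)=O(\tau)$, and apply it to the $1$-Lipschitz test function $z\mapsto|z-w_0|$.)

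The only real obstacle is in part (ii): Theorem~\ref{thm:be} controls the cdfs uniformly but says nothing about how fast the tails decay, so one must supplement it with a tail bound on both sides and optimize the cutoff $T$ — a routine but genuinely needed step. Part (i) is immediate once the estimate $\rho_3/\sigma^3\le\tau$ is in hand.
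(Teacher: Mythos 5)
The paper does not prove this fact itself — it is quoted from Propositions 31 and 32 of \cite{MORS:10} — so there is no internal argument to compare against; I can only assess your proposal on its own terms.

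Your proof of part (i) is correct: with $X_i=w_ix_i$ the third-moment ratio is $\rho_3/\sigma^3=\sum_i|w_i|^3\le\tau$, and under $\mathcal{N}^n$ the form $w\cdot x$ is exactly standard Gaussian, so the cdf form of Berry--Ess\'een gives $|\E_{\mathcal{U}_n}[f]-\E_{\mathcal{N}^n}[f]|\le 2\tau$.

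For part (ii), your reduction via $|Z|=2Z^+-Z$ and cancellation of the $-w_0$ term is clean and the tail integral is set up correctly, but you should be upfront about two points. First, the splitting at $T=\Theta(\sqrt{\log(1/\tau)})$ produces a bound of $O(\tau\sqrt{\log(1/\tau)})$, which is strictly weaker than the $O(\tau)$ that the $\approx^\tau$ notation in this paper denotes; you acknowledge this, but the statement as written is then not actually proved by the primary argument. Second, the fallback you offer — the Wasserstein-$1$ form of Berry--Ess\'een with rate $O(\rho_3)$, applied to the $1$-Lipschitz function $z\mapsto|z-w_0|$ — does deliver the stated $O(\tau)$ and is a standard consequence of Stein's method for independent (non-identically distributed) summands, but it is not Theorem~\ref{thm:be} of this paper and needs an external citation (e.g.\ Chen--Goldstein--Shao). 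So the proposal is essentially correct, with the caveat that the ``clean'' $O(\tau)$ bound genuinely requires a stronger tool than the cdf Berry--Ess\'een the paper records; the $O(\tau\sqrt{\log(1/\tau)})$ version would, however, be adequate everywhere the paper actually invokes this fact (the downstream uses only need $O(\tau^{1/6})$).
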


\subsection{Miscellaneous}

For $a,b \in \R$ we write $a \apx{\eta} b$ to indicate that $|a - b|
\leq O(\eta)$.

For a vector $w \in \R^n$, we write $w^{(k)}$ to denote the $(n-k)$-dimensional vector
obtained by taking the last $n-k$ coordinates of $w$, i.e.
$w^{(k)} = (w_{k+1},\dots,w_n)$.

We will use the following elementary fact, which is a direct consequence of Cauchy-Schwarz.

\begin{fact} \label{fact:close}
Let $a, b \in \R^m$ with $\|a\|_2 \leq 1$, $\|b\|_2 \leq 1$ such that $\|a-b\|_2^2 \leq \eta$. Then
$$ \left| \|a\|^2_2 -  \|b\|^2_2   \right| \leq 2\sqrt{\eta}.$$
\end{fact}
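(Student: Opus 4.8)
The plan is to prove this by a direct application of the Cauchy--Schwarz inequality applied to the vectors $a-b$ and $a+b$. First I would write the difference of squared norms as an inner product:
\[
\|a\|_2^2 - \|b\|_2^2 = \langle a-b, a+b\rangle,
\]
which is the standard algebraic identity $\langle a,a\rangle - \langle b,b\rangle = \langle a-b,a+b\rangle$ (the cross terms $-\langle a,b\rangle + \langle b,a\rangle$ cancel since the inner product on $\R^m$ is symmetric). Then I would apply Cauchy--Schwarz to bound the magnitude of this inner product:
\[
\bigl| \|a\|_2^2 - \|b\|_2^2 \bigr| = \bigl|\langle a-b, a+b\rangle\bigr| \le \|a-b\|_2 \cdot \|a+b\|_2.
\]

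Next I would bound each of the two factors on the right. For the first factor, the hypothesis $\|a-b\|_2^2 \le \eta$ gives directly $\|a-b\|_2 \le \sqrt{\eta}$. For the second factor, the triangle inequality together with the hypotheses $\|a\|_2 \le 1$ and $\|b\|_2 \le 1$ gives $\|a+b\|_2 \le \|a\|_2 + \|b\|_2 \le 2$. Combining these two bounds yields
\[
\bigl| \|a\|_2^2 - \|b\|_2^2 \bigr| \le \sqrt{\eta} \cdot 2 = 2\sqrt{\eta},
\]
which is exactly the claimed inequality.

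This proof is entirely routine and there is no real obstacle; the only thing to be slightly careful about is the sign/symmetry bookkeeping in the identity $\|a\|^2 - \|b\|^2 = \langle a-b, a+b\rangle$, but over $\R^m$ with the standard (symmetric) inner product this is immediate. An alternative would be to write $\|a\|^2 - \|b\|^2 = (\|a\|-\|b\|)(\|a\|+\|b\|)$ and use the reverse triangle inequality $|\,\|a\|-\|b\|\,| \le \|a-b\| \le \sqrt\eta$ together with $\|a\|+\|b\| \le 2$; this gives the same bound, but the inner-product version is cleanest and makes the ``direct consequence of Cauchy--Schwarz'' remark in the statement literally accurate.
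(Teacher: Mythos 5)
Your proof is correct and follows essentially the same route as the paper: the identity $\|a\|_2^2-\|b\|_2^2=\langle a-b,a+b\rangle$ followed by Cauchy--Schwarz. The only cosmetic difference is that you bound $\|a+b\|_2\le 2$ via the triangle inequality, whereas the paper uses $(a_i+b_i)^2\le 2(a_i^2+b_i^2)$; both give the same factor of $2$.
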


\begin{proof}
We have that
\begin{eqnarray*}
\left| \littlesum_{i=1}^m {(a_i^2  - b_i^2)} \right|
= \left| \littlesum_{i=1}^m \big( a_i-b_i \big) \big( a_i+b_i \big)  \right|
&\leq& \sqrt{\littlesum_{i=1}^m \big( a_i+b_i \big)^2} \cdot \sqrt{\littlesum_{i=1}^m \big( a_i-b_i \big)^2} \\
&\leq& \sqrt{ 2 \cdot \littlesum_{i=1}^m \big( a_i^2+b_i^2  \big)} \cdot \| a-b \|_2 \leq 2 \sqrt{\eta}
\end{eqnarray*}
where the first inequality is Cauchy-Schwarz, the second uses
the elementary fact $(a+b)^2 \leq 2(a^2+b^2)$, for all $a, b \in \R$, while the third uses
our assumption that $\|a\|_2, \|b\|_2 \leq 1.$
\end{proof}


\section{Proof of Theorem~\ref{thm:robust-kk}:  A ``robust'' Khintchine inequality} \label{sec:robust-proof}

It will be convenient for us to reformulate Theorems~\ref{thm:kk} and~\ref{thm:robust-kk} as follows: Let us say that a unit vector $w = (w_1, \ldots, w_n) \in \mathbb{S}^{n-1}$
is \emph{proper} if $w_i \geq w_{i+1} \geq 0$ for all $i \in [n-1]$.  Then we may state
the ``basic'' Khintchine inequality with optimal constant,
Theorem~\ref{thm:kk}, in the following equivalent way:

\begin{theorem}[Khintchine inequality, \cite{Szarek:76}] \label{thm:kk2}
Let $w \in \R^n$ be a proper unit vector,
so $w_1 \geq \cdots \geq w_n \geq 0.$ 
Then $ \kk(w) \geq 1/\sqrt{2},$ with equality holding if and only if
$w = w^{\ast} \eqdef (1/\sqrt{2}, 1/\sqrt{2}, 0, \ldots, 0).$
\end{theorem}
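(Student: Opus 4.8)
The plan is to work with the single function $F(x) \eqdef |w \cdot x|$, for which $\kk(w) = \E_x[F] = \wh F(\emptyset)$, and to control its variance. The one‑line starting point is that $\E_x[F^2] = \E_x[(w\cdot x)^2] = \|w\|_2^2 = 1$, so by Parseval $\kk(w)^2 = \wh F(\emptyset)^2 = \E_x[F^2] - \Var[F] = 1 - \Var[F]$; hence everything reduces to proving $\Var[F] \le 1/2$.

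To bound $\Var[F]$ I would combine a parity observation with an influence estimate. First, $F$ is invariant under $x \mapsto -x$, so $\wh F(S) = 0$ whenever $|S|$ is odd; thus all of $F$'s non‑constant Fourier mass lies on levels $2,4,6,\dots$, and therefore
\[
\Var[F] = \littlesum_{S \ne \emptyset} \wh F(S)^2 \;\le\; \frac12 \littlesum_{S} |S|\, \wh F(S)^2 \;=\; \frac12\,\Inf(F),
\]
using the identity $\Inf(F) = \littlesum_S |S|\,\wh F(S)^2$ from Fact~\ref{fact:influences}. Second, I would bound $\Inf(F) \le 1$ directly: fixing all coordinates but the $i$‑th and writing $a = \littlesum_{j \ne i} w_j x_j$, flipping $x_i$ moves $F$ between $|a + w_i|$ and $|a - w_i|$, so by the reverse triangle inequality $\Var_{x_i}[F] = \frac14(|a+w_i| - |a - w_i|)^2 \le w_i^2$ pointwise; taking expectations gives $\Inf_i(F) \le w_i^2$, hence $\Inf(F) \le \littlesum_i w_i^2 = 1$. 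Combining the two bounds yields $\Var[F] \le 1/2$, and hence $\kk(w) \ge 1/\sqrt 2$.

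For the equality characterization I would trace back which inequalities are tight. One checks the exact identity $\frac14(|a+w_i| - |a-w_i|)^2 = \min(a^2, w_i^2)$, so $\Inf_i(F) = \E_x\bigl[\min\bigl((\littlesum_{j\ne i}w_j x_j)^2,\, w_i^2\bigr)\bigr]$, and $\kk(w) = 1/\sqrt 2$ forces both $\Inf_i(F) = w_i^2$ for every $i$ (equivalently $|\littlesum_{j \ne i} w_j x_j| \ge |w_i|$ for \emph{all} $x$) and $\wh F(S) = 0$ for all $|S| \ge 4$. If $w$ has at least three nonzero coordinates, an elementary signed‑sums (greedy) argument shows the first condition can hold only if all the nonzero coordinates are equal and their number $k$ is even; but then $\kk(w) = k^{-1/2}\,\E_x|\littlesum_{i=1}^k x_i|$, and the recursion $\E|\littlesum_{i=1}^{k+2}x_i| = \frac{k+1}{k}\E|\littlesum_{i=1}^k x_i|$ (valid for even $k$) makes $k^{-1/2}\E|\littlesum_{i=1}^k x_i|$ strictly increasing on even $k$, so it exceeds its value $1/\sqrt2$ at $k=2$ — a contradiction. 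Hence $w$ has at most two nonzero coordinates, and since $\E|w_1 x_1 + w_2 x_2| = \max(w_1, w_2)$ for a proper vector, equality pins $w$ down to $w^{\ast} = (1/\sqrt2, 1/\sqrt2, 0, \dots, 0)$.

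The step I expect to be the crux is recognizing that the right object is $F = |w\cdot x|$ together with the parity fact that its non‑constant spectrum lives on even levels $\ge 2$: this is exactly what upgrades the easy (and by itself lossy) bound $\Inf(F) \le \|w\|_2^2 = 1$ into the sharp bound $\Var[F] \le 1/2$, and thus produces the optimal constant $1/\sqrt2$. The remaining ingredients — the pointwise bound on $\Var_{x_i}[F]$ and the signed‑sums bookkeeping in the equality case — are elementary, though the latter requires a little care.
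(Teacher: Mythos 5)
Your proof of the inequality $\kk(w)\geq 1/\sqrt{2}$ coincides with the paper's warm-up argument in Section~\ref{ssec:fourier-k}: set $\ell(x)=|w\cdot x|$, note $\kk(w)^2 = 1-\Var[\ell]$, then bound $\Var[\ell]\le\frac12\Inf(\ell)$ using the improved Poincar\'e inequality for even functions (Fact~\ref{fac:even}) and $\Inf_i(\ell)\le w_i^2$ (Lemma~\ref{lem:inf-ell}). Where you go beyond the paper is the equality characterization, which the paper attributes to Szarek and does not actually prove in that section. Your treatment of it is correct: from the pointwise identity $\Var_{x_i}[\ell]=\min\bigl((\sum_{j\ne i}w_j x_j)^2,w_i^2\bigr)$, equality forces $|\sum_{j\ne i}w_j x_j|\ge w_i$ for every $x$ and every $i$; the ``greedy signed-sums'' step pinning $w$ down to $k$ equal nonzero weights with $k$ even is exactly the argument the paper sketches later in Section~\ref{ssec:main-kk} (the $c_1=0$ case of Theorem~\ref{thm:main-kk}); and your final step ruling out $k\ge 4$ via the recursion $\E\bigl|\sum_{i\le k+2}x_i\bigr|=\frac{k+1}{k}\,\E\bigl|\sum_{i\le k}x_i\bigr|$, which makes $k^{-1/2}\E|\sum_{i\le k}x_i|$ strictly increasing over even $k$, is a clean elementary observation that does not appear in the paper (the paper has no need to rule out $k\ge 4$ in that sketch, since it instead exhibits high-degree Fourier mass via Lemma~\ref{lem:good-vector}). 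Net effect: identical approach to the inequality, plus a correct and more self-contained account of the equality case than the paper gives.
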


And we may restate our ``robust'' Khintchine inequality, Theorem~\ref{thm:robust-kk},
as follows:

\begin{theorem}[Robust Khintchine inequality] \label{thm:robust-kk2}
There exists a universal constant $c>0$ such that the following holds:
Let $w \in \R^n$ be a proper unit vector. Then $ \kk(w) \geq 1/\sqrt{2} + c \cdot \|w - w^{\ast}\|_2,$
 where
$w^{\ast} \eqdef (1/\sqrt{2}, 1/\sqrt{2}, 0, \ldots, 0).$
\end{theorem}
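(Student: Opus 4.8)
The plan is to recast the problem through the Fourier-analytic proof of the ordinary Khintchine inequality, isolate the two ``slack'' quantities that must vanish at the equality vector $w^{\ast}$, and show their sum is bounded below by $\Omega(\|w-w^{\ast}\|_2)$. Work with $\phi(x):=|w\cdot x|$, so $\kk(w)=\widehat\phi(\emptyset)$, $\E[\phi^2]=\|w\|_2^2=1$, and $\widehat\phi(S)=0$ for $|S|$ odd since $\phi$ is even. A short computation gives $\Inf_i(\phi)=\E[\min\{w_i^2,r_i^2\}]$ where $r_i:=w\cdot x-w_ix_i$, hence $\Inf(\phi)=\sum_i\E[\min\{w_i^2,r_i^2\}]\le\|w\|_2^2=1$. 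Combining Parseval with the even-support fact yields the identity
\[
2\kk(w)^2-1 \;=\; \underbrace{\textstyle\sum_{|S|\ge 4}(|S|-2)\,\widehat\phi(S)^2}_{=:\,G_A\ \ge\ 0}\;+\;\underbrace{\textstyle\sum_i\E\!\left[(w_i^2-r_i^2)^+\right]}_{=:\,G_B\ \ge\ 0},
\]
which reproves Theorem~\ref{thm:kk2} and shows the equality vector must have $G_A=G_B=0$. Since $1/\sqrt2\le\kk(w)\le 1$, one has $\kk(w)-1/\sqrt2\ge\tfrac{1}{2\sqrt2+2}\,(2\kk(w)^2-1)$, so it suffices to prove $G_A+G_B\ge c_0\|w-w^{\ast}\|_2$ for a universal $c_0>0$.

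The second ingredient is the geometry of the target vector. From $\|w\|_2=\|w^{\ast}\|_2=1$ we get the clean identity $\|w-w^{\ast}\|_2^2=2-\sqrt2(w_1+w_2)$. Writing $q:=w_1-w_2\ge 0$, $s:=w_1+w_2$, $\sigma:=(w_3^2+\cdots+w_n^2)^{1/2}$ and $T:=\sum_{j\ge 3}w_jx_j$, this reads $\|w-w^{\ast}\|_2^2=2-\sqrt2\,s$, and expanding shows $q^2+2\sigma^2=2\|w-w^{\ast}\|_2^2-\tfrac12\|w-w^{\ast}\|_2^4$, so in particular $\max\{q,\sigma\}\ge\tfrac12\|w-w^{\ast}\|_2$. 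I will also use the exact formula $\kk(w)=\tfrac12\E\big[|T-q|+|T-s|\big]$, obtained by conditioning $w\cdot x$ on $(x_1,x_2)$ and using that $T$ is symmetric, together with the elementary bound $\E|T-a|\ge\max\{|a|,\E|T|\}$ for symmetric $T$ and the tail estimate $\E|T|\ge\sigma/\sqrt2$ (basic Khintchine, Theorem~\ref{thm:kk2}, applied to the normalized tail vector).

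The core is then a case analysis. \emph{Small distance:} when $\|w-w^{\ast}\|_2$ is below a fixed constant, $\max\{q,\sigma\}\gtrsim\|w-w^{\ast}\|_2$ forces either a genuine gap $q$ between the top two weights or a non-negligible tail. In the first sub-case the $i=1$ term of $G_B$ is $\gtrsim\|w-w^{\ast}\|_2$ once one shows $\Pr[|r_1|<w_1]=\Omega(1)$, which follows from Chebyshev/Berry--Ess\'een applied to $T$ (valid precisely because then $\sigma\ll q$). In the second sub-case the $\E|T|\ge\sigma/\sqrt2$ term in the exact formula (equivalently, the part of $G_B$ carried by the tail coordinates) supplies the bound; the $O(\|w-w^{\ast}\|_2^2)$ lower-order terms are absorbed since $\|w-w^{\ast}\|_2$ is small. \emph{Bounded-away distance:} when $\|w-w^{\ast}\|_2$, equivalently $\sqrt2-s$, is bounded below by a constant, I need the qualitative rigidity statement that $\kk(w)$ is then bounded away from $1/\sqrt2$ uniformly in $n$. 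Here I split on the critical index of $w$: if the tail $(w_3,\dots,w_n)$ is $\tau$-regular (small critical index), Fact~\ref{fact:be} and Fact~\ref{fact:gaussian-vs-reg} make $T$ essentially Gaussian, so $\E|T-q|$ and $\E|T-s|$ are near computable Gaussian moments that exceed $q$ and $s$ by a definite amount; if the critical index is large, $w$ has several comparably sized heavy coordinates, and either $w_1=\kk(w)$ already exceeds $1/\sqrt2$ by a constant, or — after discarding the negligible regular remainder — one reduces to a bounded-dimensional configuration on which the strict inequality of Theorem~\ref{thm:kk2} yields a uniform gap. Patching the two regimes, and using $\|w-w^{\ast}\|_2\le\sqrt2$, gives $G_A+G_B\ge c_0\|w-w^{\ast}\|_2$, proving Theorem~\ref{thm:robust-kk2}.

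The routine parts are the Fourier identity and the small-distance regime. The main obstacle is the bounded-away/rigidity regime: because $n$ is unbounded, there is no compactness to appeal to, and one must combine the critical-index decomposition with anti-concentration to rule out a sequence of proper unit vectors that stay a fixed distance from $w^{\ast}$ yet have $\kk\to 1/\sqrt2$. Extracting clean, dimension-free constants from both the Gaussian (regular-tail) sub-case and the low-dimensional (heavy-coordinate) sub-case, and then stitching them to the small-distance analysis so that a single universal $c$ emerges, is where the bulk of the work lies.
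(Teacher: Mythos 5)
Your two ``slack'' quantities are exactly the two slacks the paper exploits: $G_B$ is the slack in \lref{inf-ell} and $G_A$ is the slack in the even-Poincar\'e inequality (\ref{fac:even}); the identity $2\kk(w)^2-1=G_A+G_B$ is a clean, correct repackaging of the paper's $\Var[\ell]\le\frac12\Inf(\ell)$ chain, and your small-distance case (conditioning on $(x_1,x_2)$, using $\E|T-a|\ge\max\{|a|,\E|T|\}$ and $\E|T|\ge\sigma/\sqrt2$) is essentially the paper's Case~2 of \lref{non-can}, down to the identities $\|w-w^{\ast}\|_2^2=2-\sqrt2\,s$ and $q^2+2\sigma^2=2-s^2$. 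So the high-level route is the same; what differs is organization (you split on $\|w-w^{\ast}\|_2$; the paper splits on ``canonical'' vs.\ ``non-canonical,'' with $w_1\in[0.3,1/\sqrt2+1/100]$ baked in via \tref{TJ}) and the fact that your $G_B$ bookkeeping is slightly more flexible (any coordinate, not just $i=1$).

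The genuine gap is in your bounded-away regime with heavy coordinates. Two issues. First, ``after discarding the negligible regular remainder'' is not justified: a large critical index only means the top weights are individually large relative to what follows them, not that the regular part is small in $\ell_2$. It could carry constant mass, and discarding it changes $\kk$ by a constant; you would need a further split on $\|w^{(L)}\|_2$, with the non-small case handled by anti-concentration (the paper does exactly this via \lref{3} and \cref{1}, which rest on the critical-index decomposition and \lref{2}). Second, ``strict inequality of \tref{kk2} plus compactness over a bounded-dimensional sphere'' gives an existence-of-a-gap argument only after you have pinned down a \emph{uniform} dimension bound; the paper gets this from $w_1>0.3$ (hence at most $\sim11$ near-maximal weights), and then replaces the non-constructive compactness step by the explicit arithmetic lower bound of \lref{good-vector}/\lref{fourier-mass} on $\w^{\ge4}$ for $|(x_1+\cdots+x_k)/\sqrt k|$, plus a perturbation argument. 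As written, you have not derived the dimension bound, and the compactness appeal would need to be made quantitative anyway to extract a universal $c$. Finally, you need something like \tref{TJ} (or the Berry--Ess\'een substitute the paper mentions) to handle $w_1$ small, which your sketch of the heavy-coordinate case does not cover.
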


Before we proceed with the proof of Theorem~\ref{thm:robust-kk2}, we give a simple Fourier analytic proof of the ``basic'' Khintchine inequality with optimal constant, $\kk(w)
\geq 1/\sqrt{2}$.  \newb{(We note that this is a well-known argument by now;
it is given in somewhat more general form in \cite{Oleszkiewicz:99}
and in \cite{KLO:96}.)}
We then build on this to prove Theorem~\ref{thm:robust-kk2}.

\subsection{Warm-up: simple proof that $\kk(w) \geq 1/\sqrt{2}$} \label{ssec:fourier-k}

We consider the function $\ell(x) = \left|\littlesum_{i=1}^n w_ix_i \right|$ where $\littlesum_i w_i^2=1$ and will show that $\kk(w) = \E_x [\ell(x)] \geq 1/\sqrt{2}.$
Noting that  $\E_x[(\ell(x))^2] =1$, we have $(\mathbf{E} [\ell(x)])^2 = 1 - \Var [\ell]$, so it suffices to  show that $\Var[\ell] \leq 1/2$. This follows directly
by combining the following claims. The first bound is an improved Poincar{\'e} inequality for even functions:

\begin{fact}\label{fac:even}(Poincar{\'e} inequality)
Let $f : \bn \to \R$ be even. Then $\Var[f] \leq (1/2) \cdot \Inf(f).$
\end{fact}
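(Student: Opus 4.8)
The statement to prove is the improved Poincaré inequality for even functions: if $f:\bn \to \R$ is even (meaning $f(-x) = f(x)$ for all $x$), then $\Var[f] \leq \frac12 \Inf(f)$. The plan is to expand both sides in the Fourier basis and compare them level by level. Recall that $\Var[f] = \sum_{S \neq \emptyset} \wh f(S)^2$ and, by Fact~\ref{fact:influences}, $\Inf(f) = \sum_i \Inf_i(f) = \sum_i \sum_{S \ni i} \wh f(S)^2 = \sum_S |S| \cdot \wh f(S)^2$. So the inequality we want is
\[
\littlesum_{S \neq \emptyset} \wh f(S)^2 \;\leq\; \littlesum_{S} \tfrac{|S|}{2} \, \wh f(S)^2,
\]
which is equivalent to $\sum_{S \neq \emptyset} \left(1 - \tfrac{|S|}{2}\right) \wh f(S)^2 \leq 0$.

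\textbf{Key step: evenness kills the odd levels.} The first thing I would record is that since $f$ is even, $\wh f(S) = \E_x[f(x) x_S] = \E_x[f(-x)(-1)^{|S|} x_S] = (-1)^{|S|}\wh f(S)$, so $\wh f(S) = 0$ whenever $|S|$ is odd. In particular all nonzero Fourier coefficients on nonempty sets have $|S| \geq 2$, hence $1 - \tfrac{|S|}{2} \leq 0$ for every $S$ contributing to the left-hand side. Therefore every term in $\sum_{S \neq \emptyset}(1 - \tfrac{|S|}{2})\wh f(S)^2$ is nonpositive, and the sum is $\leq 0$, which is exactly what we needed. This immediately gives $\Var[f] \leq \frac12 \Inf(f)$.

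\textbf{Remarks on obstacles.} There is essentially no obstacle here — the only ``trick'' is the observation that an even Boolean-domain function is supported on even-cardinality Fourier levels, after which the level-$k$ comparison $1 \leq k/2$ for $k \geq 2$ does all the work. I would present it in roughly three sentences: (i) write $\Var[f]$ and $\Inf(f)$ in Fourier form; (ii) note $\wh f(S) = 0$ for $|S|$ odd by evenness, so the spectrum lives on $|S| \geq 2$; (iii) conclude termwise that $\sum_{S \neq \emptyset}(1-|S|/2)\wh f(S)^2 \le 0$. One minor point worth stating explicitly for cleanliness is that the $S = \emptyset$ term is harmless since it appears on neither side of the variance/influence identities (it contributes $0$ to both $\Var$ and $\Inf$), so restricting to nonempty $S$ throughout is legitimate.
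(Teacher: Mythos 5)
Your proof is correct and takes essentially the same route as the paper: expand $\Var$ and $\Inf$ in the Fourier basis, use evenness to conclude $\wh f(S)=0$ for $|S|$ odd, and then compare level by level using $|S|\geq 2$ on the nonempty even levels. The paper packages the comparison as $\Inf(f)\geq 2\sum_{\emptyset\neq S}\wh f(S)^2 = 2\Var[f]$ rather than your rearranged $\sum_{S\neq\emptyset}(1-|S|/2)\wh f(S)^2\leq 0$, but these are the same computation.
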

\begin{proof} Since $f$ is even, we have that $\wh{f}(S) = 0$ for all $S$ with odd $|S|$.
We can thus write
\begin{eqnarray*}
\Inf(f)  =  \littlesum_{S \subseteq [n], |S|\textrm{ even}} |S| \cdot 
\wh{f}^2(S) &\geq& 2 \cdot  \littlesum_{\emptyset \neq S \subseteq [n], |S|\textrm{ even}} \wh{f}^2(S)\\
& = & 2 \cdot  \littlesum_{\emptyset \neq S \subseteq [n]} \wh{f}^2(S) = 
2 \cdot \Var[f].
\end{eqnarray*}
\end{proof}

The second is an  upper bound on the influences in $\ell$ as a function of the weights:

\begin{lemma}\label{lem:inf-ell}
Let $\ell(x) = \left|\littlesum_{i=1}^n w_ix_i \right|$. For any $i \in [n]$, we have $\Inf_i(\ell) \leq w_i^2.$
\end{lemma}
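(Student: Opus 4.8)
The plan is to directly estimate $\Inf_i(\ell)$ using the combinatorial characterization of influence. Fix a coordinate $i \in [n]$. For $x \in \bn$, write $x^{i-}$ and $x^{i+}$ for $x$ with the $i$-th bit set to $-1$ and $+1$ respectively, and let $y = \littlesum_{j \neq i} w_j x_j$ denote the contribution of the other coordinates (so $y$ depends only on $x^{i-}$, equivalently only on the bits outside $i$). Then $\ell(x^{i-}) = |y - w_i|$ and $\ell(x^{i+}) = |y + w_i|$. Since $\Inf_i(\ell) = \E_x\big[\Var_{x_i}[\ell(x)]\big]$ and for a two-point average $\Var_{x_i}[\ell] = \tfrac14(\ell(x^{i+}) - \ell(x^{i-}))^2$, we get
\[
\Inf_i(\ell) = \tfrac14 \, \E_y\Big[\big(|y + w_i| - |y - w_i|\big)^2\Big].
\]
The key pointwise inequality is the elementary fact that for all real $y$ and all real $w_i$, $\big||y+w_i| - |y-w_i|\big| \le 2\min(|y|,|w_i|) \le 2|w_i|$ (this is just the reverse triangle inequality, or a direct case analysis on the sign of $y$ relative to $\pm w_i$). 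Substituting $\big(|y+w_i| - |y-w_i|\big)^2 \le 4 w_i^2$ into the expectation immediately yields $\Inf_i(\ell) \le \tfrac14 \cdot 4 w_i^2 = w_i^2$, as desired.

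There is essentially no obstacle here: the only thing to be careful about is correctly identifying $\Var_{x_i}$ of a function of a single $\pm 1$ variable as one quarter of the squared difference of the two values, and justifying the pointwise bound $\big||y+w_i|-|y-w_i|\big| \le 2|w_i|$. For the latter, note that the map $t \mapsto |y+t| - |y-t|$ is $1$-Lipschitz... actually more simply, $|y+w_i| \le |y - w_i| + 2|w_i|$ and symmetrically $|y-w_i| \le |y+w_i| + 2|w_i|$, both by the triangle inequality, which gives the claim. One may also observe (though it is not needed for this lemma) that the sharper bound $\big||y+w_i|-|y-w_i|\big| \le 2\min(|y|,|w_i|)$ shows $\Inf_i(\ell) \le \E_y[\min(y^2, w_i^2)]$, which could be useful elsewhere; but for the stated lemma the crude bound suffices.

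Finally, one should double-check the edge case where $w_i = 0$: then $\ell$ does not depend on $x_i$, so $\Inf_i(\ell) = 0 = w_i^2$, consistent with the bound. Combining this lemma with Fact~\ref{fac:even} gives $\Var[\ell] \le \tfrac12 \Inf(\ell) = \tfrac12 \littlesum_i \Inf_i(\ell) \le \tfrac12 \littlesum_i w_i^2 = \tfrac12$, which is exactly what is needed to conclude $\kk(w) = \E_x[\ell(x)] = \sqrt{1 - \Var[\ell]} \ge 1/\sqrt{2}$.
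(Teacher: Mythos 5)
Your proof is correct and takes essentially the same approach as the paper: both bound the conditional variance $\Var_{x_i}[\ell(x)]$ pointwise by $w_i^2$ after fixing the remaining coordinates. You get there via the two-point variance identity $\Var_{x_i}[\ell] = \tfrac14\big(\ell(x^{i+})-\ell(x^{i-})\big)^2$ together with the triangle-inequality bound $\big||y+w_i|-|y-w_i|\big|\le 2|w_i|$, whereas the paper computes $\E_{x_i}[\ell^2]$ and $\E_{x_i}[\ell]$ directly via a case analysis on $|c_i|$ versus $|w_i|$ — an algebraically equivalent route.
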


\begin{proof}
Recall that $ \Inf_i(\ell) = \E_{x} \left[  \Var_{x_i} \big[ \ell(x) \big]
\right] = \E_{x}\left[
\E_{x_i}[
\ell^2(x)]  - (\E_{x_i}[\ell(x)])^2
\right]$.
We claim that for any $x \in \bn$, it holds that $\Var_{x_i} [ \ell(x) ] \leq w_i^2$, which yields the lemma.
To show this claim we write $\ell(x) = \left|  w_i x_i + c_i \right|$, where $c_i =  \littlesum_{j \neq i} w_j \cdot x_j$ does not depend on $x_i$.
\ignore{We want to show an upper bound on the
quantity $\Var_{x_i} [\ell(x)] = \E_{x_i}[\ell^2(x)]
- (\E_{x_i}[\ell(x)])^2.$}

Since $\ell^2(x) = c_i^2 + w_i^2 + 2c_i w_i x_i$, it follows that $\E_{x_i}[\ell^2(x)] = c_i^2 + w_i^2$, and clearly
$\E_{x_i}[\ell(x)]  = (1/2)\cdot (|w_i - c_i| + |w_i + c_i|).$
We consider two cases based on the relative magnitudes of $c_i$ and $w_i$.

If $|c_i| \leq |w_i|$, we have $\E_{x_i}[\ell(x)] = (1/2)\cdot \left( \sgn(w_i) (w_i - c_i) + \sgn(w_i) (w_i + c_i)\right) = |w_i|$. Hence, in this case
$\Var_{x_i}[\ell(x)] = c_i^2 \leq w_i^2$.
If on the other hand $|c_i| > |w_i|$, then we have $\E_{x_i}[\ell(x)]
= (1/2)\cdot \left( \sgn(c_i) (c_i-w_i) + \sgn(c_i) (c_i+w_i)\right)
= |c_i|$, so again $\Var_{x_i}[\ell(x)] = w_i^2$ as desired.
\ifnum\short=1
\qed
\fi
\end{proof}

The bound $\kk(w) \geq 1/\sqrt{2}$ follows from the above two claims using the fact that $\ell$ is even and that $\littlesum_i w_i^2=1.$

\subsection{Proof of Theorem~\ref{thm:robust-kk2}} \label{ssec:robust-kk}

Let $w\in \R^n$ be a proper unit vector and denote $\tau = \|w-w^{\ast}\|_2$.
To prove Theorem~\ref{thm:robust-kk2}, one would intuitively want to obtain a robust version of the simple Fourier-analytic proof of Theorem~\ref{thm:kk2}
from the previous subsection. Recall that the latter proof boils down to the following:
$$\Var[\ell] \leq (1/2) \cdot \Inf(\ell) = (1/2)\cdot \littlesum_{i=1}^n \Inf_i(\ell) \leq (1/2) \cdot \littlesum_{i=1}^n w_i^2 = 1/2$$
where the first inequality is Fact~\ref{fac:even} and the second is Lemma~\ref{lem:inf-ell}. While it is clear that both inequalities can be individually tight,
one could hope to show that both inequalities cannot be tight simultaneously. It turns out that this intuition is not quite true, however it holds
if one imposes some additional conditions on the weight vector $w$. The remaining cases for $w$
that do not satisfy these conditions can be handled by elementary arguments.

\ignore{


}


We first note that without loss of generality we may assume that \ $w_1 = \max_i w_i > 0.3$, for
otherwise Theorem~\ref{thm:robust-kk2} follows directly
from the following result of K{\"o}nig {\em et al}:

\begin{theorem} [\cite{KST99}] \label{thm:TJ}
For a proper unit vector $w \in \R^n$, we have $\kk(w) \geq \sqrt{{2/\pi}} - (1 - \sqrt{{2/\pi}}) w_1.$
\end{theorem}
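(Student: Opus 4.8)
\noindent\emph{Proof plan.} The statement is quoted from \cite{KST99}, so one option is simply to invoke that reference; here is how I would prove it from scratch, in a way that dovetails with the tools above. The plan is a dimension‑reduction (``peeling'') argument driven by the identity $\tfrac12(|t+r|+|t-r|)=\max(|t|,|r|)$, followed by an anti‑concentration estimate in which the constant $\sqrt{2/\pi}$ turns out to be exactly pinned down. Fix a proper unit vector $w$ and set $v=(w_2,\dots,w_n)$, so $s:=\|v\|_2=\sqrt{1-w_1^2}$. Conditioning on $x'=(x_2,\dots,x_n)$, applying the identity above with $t=\littlesum_{i\ge2}w_ix_i$ and $r=w_1$, and then writing $\max(\alpha,\beta)=\beta+(\alpha-\beta)^+$, we get
\begin{equation}\label{eq:TJ-peel}
\kk(w)\;=\;\E_{x'}\big[\max(w_1,|v\cdot x'|)\big]\;=\;s\cdot\kk(v/s)\;+\;\E_{x'}\big[(w_1-|v\cdot x'|)^+\big],
\end{equation}
using $\E_{x'}|v\cdot x'|=s\cdot\kk(v/s)$ and the fact that $v/s$ is again a proper unit vector (the degenerate case $v=0$, i.e.\ $w=e_1$, is handled directly).

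\textbf{Induction.} I would prove $\kk(w)\ge a-(1-a)w_1$, $a:=\sqrt{2/\pi}$, by induction on $n$; the cases $n\le2$ are immediate (for $n=2$, $\kk(w)=w_1\ge 1/\sqrt2\ge a/(2-a)$). Applying the inductive hypothesis to $v/s$ bounds the first term of \eqref{eq:TJ-peel} below by $as-(1-a)w_2$, so the theorem reduces to the ``slack'' inequality
\[
\E_{x'}\big[(w_1-|v\cdot x'|)^+\big]\;\ge\;a(1-s)-(1-a)(w_1-w_2)\;=\;\frac{a\,w_1^2}{1+s}-(1-a)(w_1-w_2),
\]
using $1-s^2=w_1^2$. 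If $w_1-w_2\ge \tfrac{a\,w_1^2}{(1-a)(1+s)}$ the right‑hand side is $\le 0$ and we are done, so the only remaining regime is $w_2$ within $O(w_1^2)$ of $w_1$. In that regime every coordinate of $v$ is $\le w_1$ while $\|v\|_2=s$, so $v$ is $(w_1/s)$‑regular and must have $\gtrsim 1/w_1^2$ coordinates.

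\textbf{The anti‑concentration heart, and the obstacle.} In this near‑extremal regime one wants a small‑ball \emph{lower} bound $\Pr_{x'}[\,|v\cdot x'|\le t\,]\gtrsim a\,t/s$ for $t\le w_1$; since $\E[(w_1-|v\cdot x'|)^+]=\int_0^{w_1}\Pr[|v\cdot x'|<t]\,dt$, this gives slack $\gtrsim \tfrac{a}{2s}w_1^2\ge \tfrac{a}{1+s}w_1^2$ (as $s\le1$), exactly what is needed — and the reason it is exactly enough is that the local density of a Rademacher sum near $0$ is governed by $2\phi(0)=\sqrt{2/\pi}=a$. This last step is, I expect, the real content of the theorem, and the main obstacle is that the Berry–Ess\'een estimate of Fact~\ref{fact:be} is \emph{not} sharp enough here: the regularity parameter $w_1/s$ and the ball radius $w_1$ coincide, so its $O(w_1/s)$ error term swamps the $\Theta(a\,w_1/s)$ main term. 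Overcoming this requires a sharper local‑CLT / Littlewood–Offord‑type small‑ball bound for $v\cdot x'$ at scale $w_1$, together with a separate treatment of the near‑uniform vectors $v\approx m^{-1/2}(1,\dots,1)$ (equivalently near‑uniform $w$), for which the relevant fact is the explicit Rademacher‑sum inequality $\E|z_1+\cdots+z_m|\ge \sqrt{2m/\pi}-(1-\sqrt{2/\pi})$ for i.i.d.\ uniform $\pm1$ variables $z_i$. This delicate estimate — not the bookkeeping in \eqref{eq:TJ-peel} — is where all the work lies, and it is exactly what is carried out in \cite{KST99}.
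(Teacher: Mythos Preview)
The paper does not prove Theorem~\ref{thm:TJ}; it is quoted from \cite{KST99} and used as a black box (and the paper even remarks that a quantitatively weaker version, obtainable directly from Berry--Ess\'een, would have sufficed for its application). So there is no in-paper proof to compare your proposal against.

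As for the proposal on its own terms: the peeling identity $\kk(w)=\E_{x'}[\max(w_1,|v\cdot x'|)]$ and the inductive reduction are correct, the base case $n=2$ checks out, and the algebra isolating the residual ``slack'' inequality is clean. You also correctly diagnose that the heart of the matter is a small-ball estimate for $v\cdot x'$ at scale $w_1$ with the \emph{exact} leading constant $\sqrt{2/\pi}$, and that Berry--Ess\'een is too crude here because the regularity parameter $w_1/s$ and the scale coincide. One caveat: the pointwise bound $\Pr[|v\cdot x'|\le t]\ge a\,t/s$ you suggest is not literally true for all $t\le w_1$ --- it already fails for the limiting Gaussian $\mathcal{N}(0,s^2)$, where the reverse inequality holds --- so even the integrated version needs a genuinely different argument, not just a sharper CLT. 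In the end you defer both this step and the near-uniform random-walk inequality $\E|z_1+\cdots+z_m|\ge\sqrt{2m/\pi}-(1-\sqrt{2/\pi})$ back to \cite{KST99}. Your proposal is therefore a correct reduction rather than a self-contained proof; since the paper treats the result as a citation anyway that is perfectly acceptable, but you should not describe it as a proof ``from scratch.''
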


Indeed, if $w_1 \leq 0.3$, the above theorem gives that
\ifnum\short=0
$$\kk(w) \geq 1.3 \sqrt{{ 2 / \pi}} - 0.3 > 0.737 > 1/\sqrt{2} + 3/100 \geq1/\sqrt{2} + (1/50)\tau,$$
\fi
\ifnum\short=1
$\kk(w) \geq 1.3 \sqrt{{ 2 / \pi}} - 0.3 > 0.737 > 1/\sqrt{2} + 3/100 \geq1/\sqrt{2} + (1/50)\tau,$
\fi
where the last inequality follows from the fact that $\tau\leq \sqrt{2}$ (as both $w$ and $w^{\ast}$ are unit vectors).
Hence, we will henceforth assume that $w_1 >0.3.$ \newa{(We note that there is nothing
special about the number $0.3$; by adjusting various constants elsewhere in the argument, our proof can be made to work with $0.3$ replaced by any (smaller) absolute positive constant. As a result, we could have avoided using Theorem~\ref{thm:TJ} and used quantitatively weaker versions of the theorem which can be shown to follow easily from the Berry-Ess{\'e}en theorem. However, for convenience we have used
Theorem~\ref{thm:TJ} and the number $0.3$ in what follows.)}

The preceding discussion leads us to the following definition:

\begin{definition}[canonical vector] \label{def:can}
We say that a proper unit vector $w \in \R^n$ is {\em canonical} if it satisfies the following conditions:
\ifnum\short=0
\begin{enumerate}
\item[(a)] $w_1 \in [0.3, 1/\sqrt{2}+ 1/100]$;

\item[(b)]  $\tau = \|w-w^{\ast}\|_2 \geq \newa{2}/5$;

\end{enumerate}
\fi
\ifnum\short=1
(a) $w_1 \in [0.3, 1/\sqrt{2}+ 1/100]$; and
(b)  $\tau = \|w-w^{\ast}\|_2 \geq 1/5$.
\fi
\end{definition}



The following lemma establishes Theorem~\ref{thm:robust-kk2} for non-canonical vectors:

\begin{lemma} \label{lem:non-can}
Let $w$ be a proper non-canonical vector. Then $\kk(w) \geq 1/\sqrt{2}+ (1/1000)\tau$, where $\tau = \|w-w^{\ast}\|_2.$
\end{lemma}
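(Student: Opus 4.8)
The plan is to do a case analysis based on which of the two defining conditions of ``canonical'' fails. A proper non-canonical vector $w$ either (i) has $w_1 \notin [0.3, 1/\sqrt{2}+1/100]$, or (ii) has $w_1 \in [0.3, 1/\sqrt{2}+1/100]$ but $\tau = \|w - w^\ast\|_2 < 2/5$. Since we have already reduced (in the discussion preceding Definition~\ref{def:can}) to the case $w_1 > 0.3$, the first case really means $w_1 > 1/\sqrt{2}+1/100$, and this case I would handle directly via the refined Poincaré approach: run the argument $\Var[\ell] \le (1/2)\Inf(\ell) \le (1/2)\sum_i \Inf_i(\ell) \le (1/2)\sum_i w_i^2 = 1/2$ from Section~\ref{ssec:fourier-k}, but observe that when $w_1$ is bounded away from $1/\sqrt{2}$ the second inequality $\Inf_i(\ell) \le w_i^2$ has slack on the first coordinate. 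Concretely, I expect to show $\Inf_1(\ell) \le w_1^2 - \Omega(1)$ when $w_1 \ge 1/\sqrt2 + 1/100$: revisiting the proof of Lemma~\ref{lem:inf-ell}, $\Var_{x_1}[\ell(x)] = \min(w_1^2, c_1^2)$ where $c_1 = \sum_{j\ne 1} w_j x_j$, and since $w_1^2 > 1/2 \ge \sum_{j\ne1} w_j^2 \ge \E[c_1^2]$, there is a positive-probability event on which $c_1^2$ is noticeably smaller than $w_1^2$, giving the needed slack. Then $\Var[\ell]\le 1/2 - \Omega(1)$, hence $\kk(w) = \sqrt{1-\Var[\ell]} \ge 1/\sqrt2 + \Omega(1) \ge 1/\sqrt2 + (1/1000)\tau$ since $\tau \le \sqrt2$.

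The second case — $w_1$ in the allowed range but $\tau < 2/5$ — is where $w$ is genuinely ``close'' to $w^\ast = (1/\sqrt2,1/\sqrt2,0,\dots,0)$, so the bound $\kk(w) \ge 1/\sqrt2 + (1/1000)\tau$ is only a small improvement and must come from local analysis near $w^\ast$. Here I would again use the slack in Lemma~\ref{lem:inf-ell}: with $\tau$ small, $w_1$ and $w_2$ are both near $1/\sqrt2$ and the remaining weights $w_3,\dots,w_n$ are small, with $\sum_{i\ge 3} w_i^2 = \Theta(\tau^2)$ roughly. For each coordinate $i\ge 3$, $\Var_{x_i}[\ell(x)] = \min(w_i^2, c_i^2)$, and the point is that $c_i = \sum_{j\ne i} w_j x_j$ includes the two large terms $w_1 x_1 + w_2 x_2$, which is $0$ with probability $1/2$ but otherwise has magnitude $\approx \sqrt2 \gg w_i$; so whenever $x_1 = -x_2$ we only get $\Var_{x_i}[\ell] \le$ (something of order $w_i^2$ times the residual small weights) rather than the full $w_i^2$. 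Summing, this should yield $\sum_{i\ge 3}\Inf_i(\ell) \le (1-\Omega(1))\sum_{i\ge 3} w_i^2$, i.e. $\Inf(\ell) \le 1 - \Omega(\tau^2)$, so $\Var[\ell] \le 1/2 - \Omega(\tau^2)$ and $\kk(w) \ge 1/\sqrt2 + \Omega(\tau^2)$. This gives $\Omega(\tau^2)$, not $\Omega(\tau)$ — but for a \emph{non}-canonical vector we also know $\tau \ge 2/5$ is \emph{false}, wait: here $\tau < 2/5$, so $\tau^2$ and $\tau$ are comparable only up to the constant $2/5$, which is exactly enough: $\Omega(\tau^2) \ge \Omega((2/5)\tau)\cdot(\tau/(2/5)) $ — more carefully, on the range $\tau \le 2/5$ we cannot bound $\tau^2$ below by a constant times $\tau$ uniformly, so I would instead argue that for $\tau$ bounded \emph{away from zero from above by} $2/5$ this is fine only if the hidden constant absorbs it; if the honest bound is $\Omega(\tau^2)$ I would simply note $(1/1000)\tau \le (1/1000)(2/5) \le$ the $\Omega(\tau^2)$ bound is false for tiny $\tau$, so in fact the subcase $\tau$ \emph{small} must be pushed into the genuinely local expansion below.

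So the heart of the matter, and the step I expect to be the main obstacle, is the small-$\tau$ regime within case (ii): parametrize $w = w^\ast + \tau u$ with $u \perp$ (roughly) $w^\ast$ and $\|u\| = 1$, and show $\kk(w) = 1/\sqrt2 + c\tau + O(\tau^2)$ with $c \ge 1/1000$ uniformly over the direction $u$ — equivalently, that the directional derivative of $\kk$ at $w^\ast$ along every feasible (proper) direction is at least a positive constant. The difficulty is that $\kk(w) = \E|w\cdot x|$ is not differentiable everywhere and $w^\ast$ is precisely a minimizer, so the relevant object is a one-sided directional derivative; I would compute it by splitting the expectation over $x$ according to the sign pattern of $(x_1,x_2)$: when $x_1 = x_2$, $w\cdot x = \pm(w_1+w_2) + \sum_{i\ge3} w_i x_i$ and $|w_1+w_2| \approx \sqrt2$ dominates, so $\E[|w\cdot x| \mid x_1=x_2]$ has a clean first-order expansion in $\tau$; when $x_1 = -x_2$, $w\cdot x = \pm(w_1-w_2) + \sum_{i\ge3}w_ix_i$ and the leading term $w_1 - w_2 = O(\tau)$ is comparable to the tail, producing a term of the form $\tau \cdot \E|(\text{normalized combination})|$ which is linear in $\tau$ with a strictly positive coefficient (this is where the non-negativity of $\kk$ of the tail direction and a crude lower bound like Khintchine again, or just $\E|Z| \ge$ positive constant for a nonzero Rademacher sum $Z$, enters). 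Collecting, the $x_1=-x_2$ half contributes $\ge (\text{const})\cdot\tau$ to $\kk(w) - 1/\sqrt2$ while the $x_1=x_2$ half contributes $O(\tau^2)$ of either sign, so for $\tau$ below an absolute threshold $\tau_0$ we get $\kk(w) \ge 1/\sqrt2 + (1/1000)\tau$; and for $\tau \in [\tau_0, 2/5]$ we fall back on the $\Omega(\tau^2) = \Omega(\tau_0 \tau) \ge (1/1000)\tau$ bound from the previous paragraph after fixing constants. Assembling the three pieces ($w_1$ large; $\tau$ small; $\tau$ moderate) proves the lemma; throughout, $\kk(w) \ge 1/\sqrt2$ from Theorem~\ref{thm:kk2} guarantees all error terms only need to be controlled in the improving direction.
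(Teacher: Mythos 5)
Your high-level case split is the same as the paper's, and your key observation for the ``$\tau$ small'' case — expand $\E|w\cdot x|$ over the sign pattern of $(x_1,x_2)$ and extract the gain from the $x_1=-x_2$ branch — is exactly the right starting point. But there are two substantive gaps.

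The main one is in the $\tau<2/5$ case. Writing $w_1=1/\sqrt{2}-a$, $w_2=1/\sqrt{2}-b$, $c^2=\sum_{i>2}w_i^2$, your $x_1=-x_2$ branch gives a contribution governed by $\E\bigl|\,(a-b)+\sum_{i>2}w_ix_i\,\bigr|$, and you want this to be $\Omega(\tau)$. But neither $|a-b|$ nor $c$ alone need be $\Omega(\tau)$, and your claim that ``$\sum_{i\ge3}w_i^2 = \Theta(\tau^2)$ roughly'' is simply false: all the $\tau^2$ mass can sit in $a,b$ with $c=0$ (indeed $a=b$, $c=0$ is only excluded by a constraint you never invoke). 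What makes the argument go through is the normalization identity $\|w\|_2=1$, which gives $a^2+b^2+c^2=\sqrt{2}(a+b)$, i.e.\ $a+b=\tau^2/\sqrt{2}$. Combined with $(a-b)^2 = 2(a^2+b^2)-(a+b)^2$, this forces $|a-b|\ge\tau$ whenever $c^2<\tau^2/20$ (and $\tau\le 2/5$), which is the case split the paper makes (its Cases 2(a)/2(b)). Your sketch never uses this identity, so the crucial lower bound on the $x_1=-x_2$ branch is unsupported and the proof does not close.

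A secondary issue: your tool for extracting the gain is a one-sided directional-derivative/Taylor expansion, which is awkward since $\E|w\cdot x|$ is not smooth and $w^\ast$ is a minimizer. The paper instead uses the elementary fact (its Claim~\ref{claim:sym}) that for a symmetric random variable $X$, $\E|X+c|\ge\max\{\E|X|,\,|c|\}$; applied to both $(x_1,x_2)$-branches this turns the estimate into clean algebra in $a,b,c$. For your $w_1 > 1/\sqrt{2}+1/100$ case the same fact gives $\kk(w)\ge|w_1|$ immediately, which is substantially simpler and tighter than the anti-concentration-of-$c_1^2$ route you sketch (which you also do not carry out). I'd recommend replacing the local-expansion framing with this inequality and then actually doing the algebra with the normalization constraint.
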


\ifnum\short=0
The proof of Lemma~\ref{lem:non-can} is elementary, using only basic facts about
symmetric random variables, but sufficiently long that we give it
in Section~\ref{sec:non-can-proof}.  For canonical vectors we show:
\fi

\ifnum\short=1
\newa{
The proof of Lemma~\ref{lem:non-can} is elementary, using only
basic facts about symmetric random variables, but sufficiently
long that we give it in the full version.}  For canonical vectors we show:
\fi

\begin{theorem}\label{thm:main-kk} There exist universal constants $c_1, c_2>0$ such that:
	Let $w \in \mathbb{R}^n$ be canonical.
	Consider the mapping $\ell(x) = |w \cdot x|$. Then at least one of the following statements is true :
\ifnum\short=0
	\begin{enumerate}
		\item[(1)] $\Inf_1(\ell) \le w_1^2 - c_1$;
		\item[(2)] $\w^{> 2}[\ell] \geq c_2.$
	\end{enumerate}
\fi
\ifnum\short=1
(1) $\Inf_1(\ell) \le w_1^2 - c_1$; (2) $\w^{> 2}[\ell] \geq c_2.$
\fi
\end{theorem}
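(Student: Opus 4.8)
The plan is to run the Fourier-analytic proof of $\kk(w) \geq 1/\sqrt2$ from Section~\ref{ssec:fourier-k} but track the slack in its two inequalities, and to show that a canonical vector $w$ forces at least one of them to lose a constant amount. Recall the two ingredients were: (a) the Poincar\'e inequality $\Var[\ell] \leq \frac12 \Inf(\ell)$ for even $\ell$, which is tight only when all Fourier mass of $\ell$ above degree $0$ sits at degree exactly $2$; and (b) the bound $\Inf_i(\ell) \leq w_i^2$, which when summed gives $\Inf(\ell) \leq 1$. So $\Var[\ell] \leq \frac12\sum_i w_i^2 = \frac12$, with equality requiring both $\w^{>2}[\ell]=0$ and $\Inf_i(\ell)=w_i^2$ for every $i$. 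Conclusion~(2) is the failure of (a) (up to the constant $c_2$), so assume $\w^{>2}[\ell] < c_2$; I must then deduce conclusion~(1), i.e.\ a constant-size loss in the $i=1$ instance of (b). The idea is that if $\w^{>2}[\ell]$ is tiny, then $\ell$ is close (in $L^2$) to its degree-$\leq 2$ truncation, which is close to being a quadratic form that is exactly a rescaling of the ``model'' $|w^\ast\cdot x|$; but canonicity says $w$ is bounded away from $w^\ast$ (condition (b): $\tau \geq 1/5$), and $w_1$ is bounded in a range avoiding both $0$ and $1$ (condition (a)), so $\ell$ cannot be that close to the model, forcing a quantitative gap somewhere — which I will argue must show up as $\Inf_1(\ell) \leq w_1^2 - c_1$.

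More concretely, here are the steps I would carry out. First, establish a robust form of Lemma~\ref{lem:inf-ell}: revisit its proof, where for a fixed $x$ we had $\Var_{x_i}[\ell(x)] = \min(c_i^2, w_i^2)$ with $c_i = \sum_{j\neq i} w_j x_j$. Thus $\Inf_1(\ell) = \E_x[\min(c_1^2, w_1^2)] = w_1^2 - \E_x[(w_1^2 - c_1^2)^+]$, so conclusion~(1) is equivalent to the anti-concentration-type statement $\Pr_x[|c_1| \leq (1-\gamma) w_1]\geq \gamma'$ for suitable constants (since on that event $w_1^2 - c_1^2 \geq (2\gamma-\gamma^2)w_1^2 \geq$ const, using $w_1 > 0.3$). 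Here $c_1 = \sum_{j\geq 2} w_j x_j$ is a Rademacher sum with $\|(w_2,\dots,w_n)\|_2^2 = 1 - w_1^2 \in [1/2 - \text{const}, \text{const}]$ by (a). So I need: a Rademacher sum of unit-ish variance puts constant mass at scale $\leq (1-\gamma)w_1$. If the sum $(w_2,\dots,w_n)$ is regular, Berry-Ess\'een / Fact~\ref{fact:be} gives this directly. If it is not regular, then $w_2$ is itself a constant (comparable to $\|(w_2,\dots,w_n)\|$), and then one analyzes by conditioning on the few large coordinates — but note that if $w_1 \approx w_2$ are both large, the vector is near a $\pm\frac1{\sqrt2}(e_i\pm e_j)$ direction, which is exactly where we must instead extract conclusion~(2). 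So the argument naturally bifurcates again inside the canonical case.

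Therefore the real work is the sub-case where $w_1$ and $w_2$ are both bounded below by a constant (so $\Inf_1$ is genuinely close to $w_1^2$ and conclusion~(1) may fail), and I must derive conclusion~(2): $\w^{>2}[\ell]\geq c_2$. For this I would argue contrapositively: suppose $\w^{>2}[\ell]<c_2$. Then $\Var[\ell] = \w^{2}[\ell] + \w^{>2}[\ell]$, and combining with $\Var[\ell] = \frac12 \Inf(\ell) - \frac12\sum_{|S| \text{ even}}(|S|-2)\w^{|S|}[\ell] \geq \frac12\Inf(\ell) - (\text{small})$, together with $\Inf(\ell) = \sum_i \Inf_i(\ell) \leq \sum_i w_i^2 = 1$, one gets $\Var[\ell]$ within $O(c_2)$ of $1/2$, hence $(\E[\ell])^2 = 1 - \Var[\ell]$ within $O(c_2)$ of $1/2$, i.e.\ $\kk(w)$ within $O(\sqrt{c_2})$ of $1/\sqrt2$. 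Now invoke a \emph{stability/compactness} argument: one shows that among proper unit vectors with $w_1,w_2$ bounded below by a fixed constant $\beta$, the only way to have $\kk(w) \leq 1/\sqrt2 + \delta$ is $\|w - w^\ast\| = O(\delta^{1/2})$ or so — this follows because $w\mapsto \kk(w)$ is continuous, the feasible set $\{w : w_1,w_2\geq\beta,\ w \text{ proper unit}\}$ is compact, $\kk$ attains its minimum $1/\sqrt2$ on it only at $w^\ast$ (by Theorem~\ref{thm:kk2}), so a modulus-of-continuity/Lojasiewicz-type bound holds (I would actually prefer an explicit version: differentiate $\kk$ near $w^\ast$, or just use Fact~\ref{fact:close} together with an explicit lower bound on how $\E|w\cdot x|$ grows as the third-and-later coordinates turn on). But $\tau = \|w-w^\ast\| \geq 1/5$ by canonicity, contradicting $\|w-w^\ast\| = O(\sqrt{c_2})$ once $c_2$ is chosen small enough. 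Hence $\w^{>2}[\ell]\geq c_2$, giving conclusion~(2).

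The main obstacle I anticipate is making the ``stability of the Khintchine minimizer'' step quantitative and self-contained in the regime $w_1, w_2 \geq \beta$: one wants an explicit inequality of the form $\kk(w) \geq 1/\sqrt2 + c\cdot\|w-w^\ast\|^2$ (a local quadratic growth bound) restricted to this region, rather than the mere uniqueness of the minimizer. I expect this can be done by a direct computation — conditioning $\ell(x) = |w_1 x_1 + w_2 x_2 + R|$ on the sign pattern of $(x_1,x_2)$ and on $R = \sum_{j\geq3}w_j x_j$, Taylor-expanding $\E_R|w_1 x_1 + w_2 x_2 + R|$ in the small quantity $\|(w_3,\dots,w_n)\|$ and in $w_1-w_2$ — but it is the fussiest part of the plan; everything else is bookkeeping on the two inequalities from Section~\ref{ssec:fourier-k} plus Berry-Ess\'een anti-concentration.
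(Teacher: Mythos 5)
Your overall plan — run the Fourier proof of $\kk(w)\geq 1/\sqrt 2$ and argue that for a canonical $w$ at least one of the two inequalities must lose a constant — is the same as the paper's, and your reformulation of conclusion (1) as an anti-concentration statement, $\Inf_1(\ell) = w_1^2 - \E_x\big[(w_1^2-c_1^2)^+\big]$ with $c_1 = \sum_{j\geq 2}w_jx_j$, together with the Berry-Ess\'een route to anti-concentration when the residual vector is regular, closely parallels what the paper does via Fact~\ref{fac:regular}, Lemma~\ref{lem:2} and Corollary~\ref{cor:1}. However, your treatment of the sub-case where conclusion (1) may genuinely fail has two gaps that I believe are fatal as written.

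First, you have misidentified \emph{where} conclusion (1) fails. You say this happens when ``$w_1\approx w_2$ are both large'' and hence ``the vector is near a $\frac{1}{\sqrt2}(e_i\pm e_j)$ direction.'' That inference does not follow ($w_1,w_2$ large does not force the remaining mass to be small), and for canonical $w$ it \emph{cannot} hold, since canonicity requires $\|w-w^\ast\|_2 \geq 2/5$. The true danger case, as the paper identifies, is when $w$ is close to a vector with an \emph{even number $\geq 4$ of equal coordinates} and a tiny tail: there $c_1$ is (up to a tiny perturbation) $\alpha$ times a sum of an odd number of signs, so $|c_1|\geq\alpha$ almost always, forcing $\Inf_1(\ell)\approx w_1^2$ with no slack at all. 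That configuration is far from $w^\ast$, so a contradiction with canonicity is unavailable, and conclusion (2) must be proved outright. The paper does this via Lemmas~\ref{lem:good-vector} and~\ref{lem:fourier-mass}: the specific function $\big|\frac{x_1+\dots+x_{2k}}{\sqrt{2k}}\big|$ cannot be a degree-$\leq 2$ polynomial (a clean algebraic argument), and its Fourier coefficients are quantized, so it has $\Omega_k(1)$ mass above level $2$; then an $L^2$ perturbation bound transfers this to $\ell$.

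Second, even setting aside the geometry of the danger case, the two steps of your contrapositive are not justified. (i) You claim that $\w^{>2}[\ell]<c_2$ forces $\Var[\ell]$ to be within $O(c_2)$ of $1/2$. The Poincar\'e identity gives $\Var[\ell] = \frac12\Inf(\ell) - \frac12\sum_{|S|\geq 4}(|S|-2)\widehat\ell(S)^2$, and the correction term is \emph{not} bounded by a constant times $\w^{\geq 4}[\ell]$ because high-degree coefficients are weighted by $|S|-2$; bounding it would require controlling the degree distribution of $\ell$, which you have not done (and which is exactly what is hard about $\ell=|w\cdot x|$). (ii) The ``stability of the Khintchine minimizer'' step — if $\kk(w)\leq 1/\sqrt 2+\delta$ and $w_1,w_2\geq\beta$ then $\|w-w^\ast\|_2=O(\sqrt\delta)$ — is a (restricted) form of the robust Khintchine inequality, which is precisely what Theorem~\ref{thm:main-kk} is a step toward proving; invoking it here is circular. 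Your proposed route to it (Taylor-expand $\E_R|w_1x_1+w_2x_2+R|$ near $w^\ast$) is a local argument and cannot reach the regime $\|w-w^\ast\|_2\geq 2/5$, and the compactness appeal does not apply because $n$ ranges over all of $\mathbb N$. The paper avoids all of this by proving (2) directly with a combinatorial Fourier argument in the single sub-case where (1) can fail.
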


This proof is more involved, using Fourier analysis and critical index
\ifnum\short=0
arguments.  We
defer it to Section~\ref{ssec:main-kk}, and
\fi
\ifnum\short=1
arguments \newa{(see the full version)}.  We
\fi
proceed now to show that
for canonical vectors, Theorem~\ref{thm:robust-kk2} follows from Theorem~\ref{thm:main-kk}. To see this we argue as follows:
Let $w \in \R^n$ be canonical. We will show that there exists a universal constant $c>0$ such that
$\kk(w) \geq 1/\sqrt{2} + c$;
as mentioned above, since $\tau < \sqrt{2}$, this is sufficient for our purposes. Now recall that
\begin{equation} \label{eqn:kk-def}
\kk(w) = \E_x[\ell(x)] =  \wh{\ell}(0) =  \sqrt{1-\Var[\ell]}.
\end{equation}
In both cases, we will show that there exists a constant $c'>0$ such that

\begin{equation} \label{eq:varub}
\Var[\ell] \leq 1/2 - c'.
\end{equation}
From this (\ref{eqn:kk-def}) gives
$\kk(w) \geq \sqrt{1/2+c'} = 1/\sqrt{2}+c''$
where $c'' >0$ is a universal constant, so to establish Theorem~\ref{thm:robust-kk2} it suffices
to establish (\ref{eq:varub}).

Suppose first that statement (1) of Theorem~\ref{thm:main-kk} holds. In this case we exploit the fact that Lemma~\ref{lem:inf-ell} is not tight. We can write
$$\Var[\ell] \leq (1/2)\cdot \Inf(f) \leq  (1/2)\cdot \left( w_1^2- c_1 + \littlesum_{i=2}^n w_i^2 \right) \leq (1/2)-c_1/2,$$
giving (\ref{eq:varub}).
Now suppose that statement (2) of Theorem~\ref{thm:main-kk} holds, i.e. at least a $c_2$ fraction of the total Fourier mass of $\ell$ lies above level $2$.
Since $\ell$ is even, this is equivalent to the statement $\w^{\geq 4}[\ell] \geq c_2.$ In this case, we prove a better upper bound on the variance because
Fact~\ref{fac:even} is not tight. In particular, we have
$$\Inf(\ell) \geq 2 \w^{2}[\ell]+4\w^{\geq 4}[\ell]
= 2\left( \Var[\ell] - \w^{\geq 4}[\ell] \right)+4\w^{\geq 4}[\ell] = 2 \Var[\ell] + 2\w^{\geq 4}[\ell]$$
which yields
$ \Var[\ell] \leq (1/2) \Inf(\ell) - \w^{\geq 4}[\ell] \leq (1/2)- c_2,$
again giving (\ref{eq:varub}) as desired.


\ifnum\short=0

\subsection{Proof of Lemma~\ref{lem:non-can}} \label{sec:non-can-proof}

We will need the following important claim for the proof of Lemma~\ref{lem:non-can}.

\begin{claim} \label{claim:sym}
Let $X$ be a symmetric discrete random variable supported on $\R$, i.e.
$\Pr[X=x]=\Pr[X=-x]$ for all $x \in \R.$
Then for all $c \in \R$ we have
\[
\max\{\E[|X|],|c|\} \leq \E[|X + c|].
\]
\end{claim}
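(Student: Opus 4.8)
The claim is that for a symmetric discrete random variable $X$ on $\mathbb{R}$ and any $c \in \mathbb{R}$, both $\E[|X|] \le \E[|X+c|]$ and $|c| \le \E[|X+c|]$. I would prove the two inequalities separately, since they have different flavors.

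For the bound $\E[|X|] \le \E[|X+c|]$, the plan is to exploit symmetry by pairing the outcome $x$ with the outcome $-x$, each carrying equal probability mass. For a fixed value $x \ge 0$, the contribution of the pair $\{x, -x\}$ to $\E[|X+c|]$ is proportional to $\tfrac12\big(|x+c| + |-x+c|\big) = \tfrac12\big(|x+c| + |x-c|\big)$, while its contribution to $\E[|X|]$ is proportional to $\tfrac12\big(|x| + |-x|\big) = |x|$. So it suffices to verify the elementary pointwise inequality $|x| \le \tfrac12\big(|x+c| + |x-c|\big)$ for all $x, c \in \mathbb{R}$, which is immediate from the triangle inequality: $2|x| = |(x+c) + (x-c)| \le |x+c| + |x-c|$. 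Summing this over all pairs (with a harmless separate look at the atom at $0$, if any, which contributes $|c| \ge 0$ on the right and $0$ on the left) gives the claim. Essentially this says $\E[|X+c|] = \E\big[\tfrac12(|X+c| + |X-c|)\big] \ge \E[|X|]$ using that $X$ and $-X$ have the same law.

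For the bound $|c| \le \E[|X+c|]$, I would again use symmetry together with Jensen's inequality (or convexity of $|\cdot|$): since $X$ is symmetric, $\E[X] = 0$, hence $\E[|X + c|] \ge |\E[X + c]| = |c|$. Alternatively, and more in the spirit of the pairing argument above, $\E[|X+c|] = \E\big[\tfrac12(|X+c| + |-X+c|)\big] = \E\big[\tfrac12(|c+X| + |c-X|)\big] \ge \E\big[|c|\big] = |c|$, using $2|c| = |(c+X)+(c-X)| \le |c+X| + |c-X|$. This is the same pointwise triangle-inequality trick applied with the roles of the two parameters swapped, so the whole claim reduces to one symmetrization step plus the single elementary inequality $2|a| \le |a+b| + |a-b|$.

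I do not anticipate a genuine obstacle here; the only point requiring a little care is bookkeeping the symmetrization for a discrete (possibly infinitely supported, possibly with an atom at $0$) random variable — one wants to make sure the pairing of $x$ with $-x$ is done cleanly, e.g. by writing $\E[g(X)] = \E\big[\tfrac12(g(X) + g(-X))\big]$ for any $g$ for which the expectation exists, which is valid precisely because $X \stackrel{d}{=} -X$. Once that identity is in place, applying it with $g(t) = |t+c|$ and invoking $2|a| \le |a+b|+|a-b|$ (once with $a=X$, $b=c$ and once with $a=c$, $b=X$) finishes both inequalities simultaneously.
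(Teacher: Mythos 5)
Your proposal is correct and follows essentially the same route as the paper: both symmetrize using $X \stackrel{d}{=} -X$ to reduce to a pointwise lower bound on $\tfrac12\big(|c+X| + |c-X|\big)$. The only cosmetic difference is that the paper observes the exact identity $\tfrac12\big(|c+X| + |c-X|\big) = \max\{|X|, |c|\}$ in one stroke, whereas you derive the two lower bounds $|X|$ and $|c|$ separately from the triangle inequality $2|a| \le |a+b| + |a-b|$.
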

\begin{proof}
\newa{Since $X$ is symmetric, $c+X$ and $c-X$ have the same distribution. As a result, we have
$
\E[|X + c|] = (1/2) \cdot (\E[|c+X|] + \E[|c-X|] ).
$ Further,
\begin{eqnarray*}
(1/2) \cdot (\E[|c+X|] + \E[|c-X|] ) &=& \E\big[(1/2) \cdot \big(|c+X| + 
|c-X| \big) \big]\\
&=& \E[\max \{|X|, |c| \} ] \ge \max \{ \E[|X|], c\}
\end{eqnarray*}
which finishes the proof.
}
\end{proof}

\begin{proof}[{\bf Proof of Lemma~\ref{lem:non-can}}]
If $w$ is a non-canonical vector, then there are exactly two possibilities :
\newline
\newline
\textbf{Case 1:} $w_1 \not  \in [0.3, 1/\sqrt{2}+ 1/100]$. In case $w_1 \le 0.3$, then the calculation following Theorem~\ref{thm:TJ} already gives us that $\kk(w)  \geq1/\sqrt{2} + (1/50)\tau$. The other possibility is that $w_1 \ge 1/\sqrt{2} + 1/100$. In this case,
\begin{eqnarray*}
\kk(w) = \mathbf{E} \left[|\littlesum_{i=1}^n w_i x_i|\right]  &=&  
(1/2) \cdot \mathbf{E} \left[|w_1 + \littlesum_{i=2}^n w_i x_i|\right]  +(1/2)\cdot \mathbf{E} \left[|-w_1 + \littlesum_{i=2}^n w_i x_i|\right]\\
& \ge & \frac{|w_1| + |w_1|}{2}  = |w_1|
\end{eqnarray*}
where the inequality is an application of Claim~\ref{claim:sym}.  As $|w_1| \ge 1/\sqrt{2} + 1/100$, we get  that $$\kk(w) \geq 1/\sqrt{2} + 1/100 \geq 1/\sqrt{2} + 1/200 \cdot \tau$$(using that $\tau \le \sqrt{2}$).
\newline
\newline
\textbf{Case 2:} $\tau \le \newa{2}/5$. Of course, here we can also assume
that $w_1 \in  [0.3, 1/\sqrt{2}+ 1/100]$ (since otherwise, \textbf{Case 1}
proves the claim).  We let $w_1 = 1/\sqrt{2} - a$ and $w_2 = 1/\sqrt{2} -b$
and $\sum_{i>2} w_i^2=c^2$.  By definition, we have that $a \le b$ and
$b \ge 0$.  Also,
\begin{equation}\label{eq:dist}
 \tau^2=\| w - w^{\ast} \|_2^2 = a^2 + b^2 + c^2.
\end{equation}
Moreover, since $w$ is a unit vector, we have that
\begin{equation}\label{eq:norm}
a^2 + b^2 + c^2 = \sqrt{2} (a+b).
\end{equation}
Expanding the expression for $\kk(w)$ on $x_1, x_2$ and recalling that
$x^{(2)}=(x_3,\dots,x_n)$, we get
\begin{eqnarray*}
\kk(w) &=& \frac{1}{2} \cdot  \left( \mathbf{E}_{x^{(2)} \in \{-1,1\}^{n-2}}\left[  \left| \sqrt{2} - (a+b) + w^{(2)} \cdot x^{(2)} \right| \right]\right.\\
&& \left.\text{~~~~~~} +  \mathbf{E}_{x^{(2)} \in \{-1,1\}^{n-2}}\left[  \left|  (a-b) + w^{(2)} \cdot x^{(2)} \right| \right] \right) \\
&\ge& \frac{1}{2} \cdot \left( \max \{| \sqrt{2} - (a+b) |, \mathbf{E} [| w^{(2)} \cdot x^{(2)} |] \} + \max \{| a-b |, \mathbf{E} [| w^{(2)} \cdot x^{(2)} |] \}  \right) \\
&\ge& \frac{1}{2} \cdot \left( \max \left\{| \sqrt{2} - (a+b) |, \frac{c}{\sqrt{2}} \right\} + \max \left\{| a-b |, \frac{c}{\sqrt{2}}\right\}  \right)
\end{eqnarray*}
where the first inequality follows from Claim~\ref{claim:sym} and the
second inequality uses the fact $\mathbf{E}[|w^{(2)} \cdot x^{(2)}|] \ge c/\sqrt{2}$ (as follows from Theorem~\ref{thm:kk}).
We consider two further sub-cases :
\smallskip

\noindent \textbf{Case 2(a)}:  Let $c^2 \ge \tau^2/20$.  Then, we can bound the right hand-side from below as follows:
\begin{eqnarray*}
&& \frac{1}{2} \cdot \left( \max \left\{| \sqrt{2} - (a+b) |,
\frac{c}{\sqrt{2}} \right\} + \max \left\{| a-b |, \frac{c}{\sqrt{2}}\right\}
\right) \ge \frac{1}{2} \cdot \left( | \sqrt{2} - (a+b) | + \frac{c}{\sqrt{2}}
 \right) \\ &\ge& \frac{1}{2} \left( \left|\sqrt{2} -
\frac{\tau^2}{\sqrt{2}}\right| \right) + \frac{\tau}{4 \sqrt{10}}  =
\frac{1}{\sqrt{2}} - \frac{\tau^2}{2\sqrt{2}} +  \frac{\tau}{\newa{\sqrt{40}}}
\end{eqnarray*}
where the second inequality uses (\ref{eq:norm}).  As long as $\tau \le
\newa{2/5}$, it is easy to check that
$$
\frac{\tau}{\newa{\sqrt{40}}}  - \frac{\tau^2}{2\sqrt{2}} \ge \frac{\tau}{1000}
$$
which proves the assertion in this case.

\smallskip

\noindent \textbf{Case 2(b)}: Let $c^2 < \tau^2/20$. In this case, we will prove a lower bound on $|a-b|$.   Using $c^2 <\tau^2/20$ and (\ref{eq:dist}), we have $a^2 + b^2 > (19 \tau^2)/20$.   Also, using (\ref{eq:norm}), we have $a + b = \tau^2/\sqrt{2}$.
We now have
\begin{eqnarray*}
(a-b)^2 = 2(a^2 + b^2) - (a+b)^2  \ge 2 \cdot \frac{19}{20} \cdot \tau^2  - \frac{\tau^4}{2}  \ge \frac{19}{10} \tau^2 - \frac{\tau^4}{2}  \ge \tau^2
\end{eqnarray*}
The last inequality uses $\tau \le \newa{2/5}$. Now, as in \textbf{Case 2(a)},
we have \begin{eqnarray*}
&& \frac{1}{2} \cdot \left( \max \left\{| \sqrt{2} - (a+b) |,
\frac{c}{\sqrt{2}} \right\} + \max \left\{| a-b |, \frac{c}{\sqrt{2}}\right\}
\right) \ge \frac{1}{2} \cdot \left( | \sqrt{2} - (a+b) | + {|a-b|} \right)\\
&\ge& \frac{1}{2} \left( \left|\sqrt{2} - \frac{\tau^2}{\sqrt{2}}\right|
\right) + \frac{\tau}{2}  = \frac{1}{\sqrt{2}} - \frac{\tau^2}{2\sqrt{2}} +
\frac{\tau}{2}  \ge \frac{1}{2} + \frac{\tau}{1000}
\end{eqnarray*}
Again, the last inequality uses that $\tau \le \newa{2/5}$. This finishes the proof of Lemma~\ref{lem:non-can}.
\end{proof}


\subsection{Proof of Theorem~\ref{thm:main-kk}} \label{ssec:main-kk}

We will prove that if $w \in \R^n$ is a canonical vector such that $\Inf_1(\ell) \geq w_1^2-c_1$, then $\w^{> 2}[\ell] \geq c_2.$
For the sake of intuition, we start by providing a proof sketch for the special case that $c_1 = 0$.
At a high-level, the actual proof will be a robust version of this sketch using the notion of the critical index to make the simple arguments
for the ``$c_1=0$ case'' robust. For this case, it suffices to prove the following implication:
\begin{quote}
If $\Inf_1(\ell) = w_1^2$, then at least a constant fraction of the Fourier weight
of $\ell$ lies above level $2$.
\end{quote}
\noindent Indeed, we have the following claims:
\begin{enumerate}
\item [(1)] Let $w$ be canonical and $\Inf_1(\ell) = w_1^2.$  Then $w$ {\em equals}
$(w_1,\ldots,w_1, 0,\dots,0)$ where there are $k$ repetitions of $w_1$ and $k$ is even.  We call such a $w$  ``good''.
\item [(2)]
Let $w$ be a good vector.  Then $\ell$ has $\Theta(1)$ Fourier weight above level $2$.
\end{enumerate}
We can prove $(1)$ as follows.  Suppose that $\Inf_1(\ell)=w_1^2$.  Then, as implied by the proof of Lemma~\ref{lem:inf-ell},
\emph{every} outcome $\rho^{(2)}$ of
$(x_2,\dots,x_n)$ has $|w^{(2)} \cdot \rho^{(2)}| \geq w_1.$
Suppose, for the sake of contradiction, that some coordinate $w_j$ is neither equal to $w_1$ nor to $0$.
Let $w_k$ ($k \geq 2$) be the first such value.  By having
$\rho_2,\dots,\rho_{k-1}$
alternate between $+1$ and $-1$ we can ensure that there is an assignment of $\rho_2,\dots,\rho_{k-1}$ such that
$w_2  \rho_2 + \cdots+ w_{k-1} \rho_{k-1}$ is either $0$ (if $k$ is even) or $w_1$ (if $k$
is odd).  In the former case, by choosing the remaining $\rho$ bits
appropriately we get that there exists an assignment $\rho$ such that
$|w^{(2)} \cdot \rho^{(2)}| \leq w_k < w_1$, where the inequality
uses the fact that the $w_i$'s are non-increasing and our assumption that $w_k \ne w_1$.  In the latter case,
if $w_k$ is the last nonzero entry, for an appropriate $\rho$, we can get
$|w^{(2)} \cdot \rho^{(2)}| = w_1 - w_k < w_1.$  Otherwise, if there
are other nonzero entries beyond $w_k$ we can similarly get
$|w^{(2)} \cdot \rho^{(2)}| < w_k$.  So we have argued
that if there is any $w_k \notin \{0,w_1\}$ then it cannot be the
case that $\Inf_1(\ell)=w_1^2$, so $w$ must be of the form ($k$ copies
of $w_1$ followed by $0$'s).  If $k$ is odd, then clearly there exists a $\rho$ such that
$|w^{(2)} \cdot \rho^{(2)}| = 0$. So,  it must be the case that $k$ is even.
This proves $(1)$.
Given $(1)$ in hand, we may conclude $(2)$ using the following lemma
(Lemma~\ref{lem:good-vector}) and the observation \newa{(recalling that $w$ is canonical)}
that since $w_1\ge 0.3$
we must have $k \le 12$:
\begin{lemma}\label{lem:good-vector}
Let $\ell_k(x) =\left| \frac{(x_1 + \ldots +x_k)}{\sqrt{k}}\right|$. For $k \ge 4$ and even, $\w^{\ge 4}[\ell] \ge \frac{2^{-2k}}{k}$.
\end{lemma}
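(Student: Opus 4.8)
The plan is to isolate one large Fourier coefficient of $\ell_k$ and lower‑bound its square. Since $\ell_k(x) = |x_1+\cdots+x_k|/\sqrt{k}$ is invariant under permutations of the coordinates, $\wh{\ell_k}(S)$ depends only on $|S|$; and since $\ell_k$ is even, $\wh{\ell_k}(S)=0$ whenever $|S|$ is odd. I will focus on the top coefficient $\wh{\ell_k}([k])$: when $k$ is even this sits at level $k\ge 4$, hence $\w^{\ge 4}[\ell_k]\ge \wh{\ell_k}([k])^2$, and it suffices to prove $\wh{\ell_k}([k])^2 \ge 2^{-2k}/k$. (This also explains the hypotheses: for odd $k$ the coefficient vanishes, and for $k=2$ the set $[k]$ lives below level $4$.)

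Writing $S_k=\littlesum_{i=1}^k x_i$ and grouping $x\in\{-1,1\}^k$ by $J:=|\{i: x_i=-1\}|$ (so $\prod_i x_i=(-1)^J$ and $S_k = k-2J$), I get
\[
\wh{\ell_k}([k]) \;=\; \tfrac{1}{\sqrt{k}}\,\E_x\Big[\,|S_k|\,{\textstyle\prod_i} x_i\,\Big]
\;=\; \frac{1}{2^k\sqrt{k}}\sum_{J=0}^{k}\binom{k}{J}\,|k-2J|\,(-1)^J \;=:\; \frac{N}{2^k\sqrt{k}},
\]
where $N\in\Z$. Because $N$ is an integer, $N\ne 0$ immediately yields $\wh{\ell_k}([k])^2 \ge \frac{1}{2^{2k}k}$, which is exactly the claimed bound. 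So the entire lemma reduces to showing $N\ne 0$ for even $k\ge 4$.

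To evaluate $N$, set $k=2m$, drop the (zero) $J=m$ term, use the symmetry $J\leftrightarrow 2m-J$, and reduce to $N = 4\sum_{J=0}^{m-1}(m-J)(-1)^J\binom{2m}{J}$. This sum is handled by two textbook binomial identities — the alternating partial‑sum identity $\sum_{i=0}^{r}(-1)^i\binom{n}{i}=(-1)^r\binom{n-1}{r}$ and the absorption identity $J\binom{2m}{J}=2m\binom{2m-1}{J-1}$ — together with the elementary relation $\binom{2m-2}{m-2}=\tfrac{m-1}{m}\binom{2m-2}{m-1}$; after a short calculation this gives $N = 4(-1)^{m+1}\binom{2m-2}{m-1}$, so $|N| = 4\binom{k-2}{k/2-1}\ge 4>0$, completing the proof. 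As an alternative derivation of $N\ne 0$ (avoiding the explicit identities), one can note that $\tfrac12(|t+1|-|t-1|)=\sgn(t)$ for every $|t|\ge 1$, so $\E_x[\,|S_k|\prod_i x_i\,] = \E_{x_1,\dots,x_{k-1}}\big[(\prod_{i<k}x_i)\,\sgn(S_{k-1})\big] = \widehat{\Maj_{k-1}}([k-1])$, i.e. the top‑level Fourier coefficient of majority on the \emph{odd} number $k-1$ of variables, which is classically known to be nonzero.

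The only real work is the bookkeeping needed to evaluate the alternating binomial sum $N$ and to certify that it does not accidentally cancel to zero; everything else — the symmetry reduction to $\wh{\ell_k}([k])$, the fact that a nonzero integer has square at least $1$, and the observation that $[k]$ lies at level $\ge 4$ exactly when $k\ge 4$ is even — is routine.
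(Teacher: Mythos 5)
Your proof is correct, and it takes a genuinely different route from the paper's. Both arguments hinge on the same ``granularity'' observation --- since $\ell_k$ takes values in $k^{-1/2}\Z$, every Fourier coefficient $\wh{\ell_k}(S)=\E_x[\ell_k(x)\chi_S(x)]$ is an integer multiple of $2^{-k}k^{-1/2}$, so any nonzero coefficient contributes at least $2^{-2k}/k$ to the weight. Where you diverge is in producing a nonzero coefficient at level $\ge 4$. The paper argues by contradiction at the level of the whole function: if $\w^{\ge 4}[\ell_k]=0$, then by evenness and symmetry $\ell_k$ would have to equal $\gamma_1(\sum x_i)^2+\gamma_2$, but the three values $0,2/\sqrt{k},4/\sqrt{k}$ taken at $\sum x_i=0,2,4$ (all achievable once $k\ge 4$ is even) are inconsistent with a single quadratic. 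You instead pin down one \emph{specific} coefficient, the top one $\wh{\ell_k}([k])$, compute it in closed form as $\wh{\ell_k}([k])=4(-1)^{k/2+1}\binom{k-2}{k/2-1}\,/\,(2^k\sqrt{k})$, and observe that its numerator is a nonzero integer; the reduction $\E_x[|S_k|\prod_ix_i]=\wh{\Maj_{k-1}}([k-1])$ is a nice shortcut that makes the nonvanishing self-evident. Your route is more computational but also more informative: it actually locates the mass (at level exactly $k$) and yields the much larger value $|N|=4\binom{k-2}{k/2-1}$, so your argument proves a quantitatively stronger bound than the lemma states, whereas the paper's contradiction argument is shorter and avoids any binomial-sum evaluation but only certifies existence of \emph{some} nonzero coefficient above level $2$. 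Both correctly use $k\ge 4$ and $k$ even, though in slightly different ways (you need $|[k]|$ to be an even number at least $4$; the paper needs $\sum x_i$ to realize the three values $0,2,4$).
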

\begin{proof}
We start by observing that because $\ell_k(x)$ only takes values which are
integral multiples of $k^{-1/2}$, it must be the case that for any
character $\chi_S$, the value $\widehat{\ell_k}(S) = \mathbf{E}
[\chi_{\newa{S}}(x) \cdot \ell_k(x)]$ is a multiple of $2^{-k} \cdot
k^{-1/2}$.  Hence, any non-zero Fourier coefficient
of $\ell_k$ is at least $2^{-k} \cdot k^{-1/2}$ in magnitude.
Thus, if $\w^{\ge 4}[\ell]  \not =0$, then $\w^{\ge 4}[\ell]
\ge k^{-1} 2^{{\newa{-2k}}}$. Thus, to prove the lemma, we need to show
that $\w^{\ge 4}[\ell]  \not =0$.

Next, we observe that $\ell_k(x)$ is an even function and hence any Fourier coefficient $\hat{f}(S)=0$ if $|S|$ is odd. Thus, towards a contradiction, if we assume that $\w^{\ge 4}[\ell]  =0$, then the Fourier expansion of $\ell_k(x)$ must consist solely of a  constant term and degree $2$ terms. As the function $\ell_k(x)$ is symmetric, we may let the coefficient of any quadratic term be $\alpha$ and the constant term be $\beta$, and we have
\begin{eqnarray*}
\ell_k(x) &=& \beta + \sum_{i<j} \alpha \cdot x_i x_j = \beta + \alpha \cdot \left(\sum_{i<j}  x_i x_j\right) = \beta +\frac{ \alpha \cdot \left((\littlesum_{i=1}^k x_i)^2 - \littlesum_{i=1}^k x_i^2\right)}{2} \\ &=& \beta +\frac{ \alpha \cdot \left((\littlesum_{i=1}^k x_i)^2 -k\right)}{2} =\frac{\alpha}{2} \cdot \left(\littlesum_{i=1}^k x_i \right)^2  + \beta - \frac{\alpha k}{2}  = \gamma_1 \left(\littlesum_{i=1}^k x_i \right)^2  + \gamma_2
\end{eqnarray*}
where $\gamma_1= \alpha/2$ and $\gamma_2 = \beta - \frac{\alpha k}{2}$.
Note that since $k$ is even, there exist assignments $x \in \{-1,1\}^k$ that cause
$\sum_{i=1}^k x_i$ to take any even value in $[-k,k]$; in particular, since $k \ge 4$, the sum
$\sum_{i=1}^k x_i$ may take any of the values 0,2,4.

Now, if $\littlesum_{i=1}^k x_i=0$, then $\ell_k(x)=0$. Hence we infer that $\gamma_2=0$.  If $\littlesum_{i=1}^k x_i=2$ then $\ell_k(x) = 2/\sqrt{k}$, and if $\littlesum_{i=1}^k x_i=4$
then $\ell_k(x) = 4/\sqrt{k}$. Clearly, there is no $\gamma_1$ satisfying both $\gamma_1 \cdot 2^2 = 2/\sqrt{k}$ and $\gamma_1 \cdot 4^2 = 4/\sqrt{k}$.  This gives a contradiction.  Hence $\w^{\ge 4}[\ell] \not =0$ and the lemma is proved.
\end{proof}
\medskip

\noindent We can now proceed with the formal proof of Theorem~\ref{thm:main-kk}.
We will need several facts and intermediate lemmas.
The first few facts show some easy concentration properties for
weighted linear combinations of random signs under certain conditions on the
weights.

\newa{

\begin{claim} \label{claim:basic}
Fix $\alpha > 0.$
Let $w_1,\dots,w_n \in \R$ satisfy $|w_i| \leq \alpha$ for all $i$.
Then there exists $x^* \in \{-1,1\}^n$ such that $w \cdot x \in [0,\alpha]$
(and clearly $-x^* \in \{-1,1\}^n$ has $w \cdot (-x^*) \in [-\alpha,0]$).
\end{claim}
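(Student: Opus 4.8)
The plan is to build $x^*$ greedily, processing the coordinates $1,2,\dots,n$ one at a time and maintaining the invariant that the partial sum $S_i \eqdef \littlesum_{j \le i} w_j x_j$ lies in $[-\alpha,\alpha]$. At step $i$, given the current partial sum $S_{i-1}\in[-\alpha,\alpha]$ and a new weight $w_i$ with $|w_i|\le\alpha$, I would choose $x_i\in\{-1,1\}$ so that $w_i x_i$ is ``pushed against'' $S_{i-1}$, breaking ties arbitrarily: if $S_{i-1}\ge 0$ pick $x_i$ with $w_i x_i=-|w_i|$, and if $S_{i-1}<0$ pick $x_i$ with $w_i x_i=|w_i|$.

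Establishing the invariant is a one-line induction. The base case $S_0=0\in[-\alpha,\alpha]$ is trivial. For the inductive step, suppose $S_{i-1}\in[-\alpha,\alpha]$. If $S_{i-1}\ge 0$ then $S_i=S_{i-1}-|w_i|$, which is at most $S_{i-1}\le\alpha$ and at least $0-\alpha=-\alpha$; if $S_{i-1}<0$ then $S_i=S_{i-1}+|w_i|<|w_i|\le\alpha$ and $S_i\ge S_{i-1}\ge-\alpha$. In both cases $S_i\in[-\alpha,\alpha]$, so the invariant holds for all $i\le n$, and in particular the vector $x'=(x_1,\dots,x_n)$ produced by this procedure satisfies $w\cdot x'=S_n\in[-\alpha,\alpha]$.

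It then remains only to land in $[0,\alpha]$ rather than merely $[-\alpha,\alpha]$, which is handled by the symmetry $x\mapsto -x$: if $S_n\ge 0$ set $x^*=x'$, and otherwise set $x^*=-x'$, so that $w\cdot x^*=-S_n\in(0,\alpha]$. Either way $w\cdot x^*\in[0,\alpha]$, and consequently $w\cdot(-x^*)=-(w\cdot x^*)\in[-\alpha,0]$, as claimed. There is really no hard step here — the only points requiring a little care are to state the invariant with the \emph{symmetric} interval $[-\alpha,\alpha]$ (a tighter interval would break the inductive step, since the greedy choice only guarantees moving the partial sum by $|w_i|\le\alpha$ toward or past $0$), and to remember the final sign flip, since the greedy procedure on its own only controls $|w\cdot x'|$.
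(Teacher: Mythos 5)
Your proposal is correct and is essentially the same greedy argument as the paper's: at each step, choose the sign of $x_i$ so that $w_i x_i$ opposes the current partial sum, maintaining $|S_i|\le\alpha$, and then flip by $-1$ at the end if needed. You have simply spelled out the invariant and the induction that the paper's one-sentence proof leaves implicit.
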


\begin{proof}
Construct $x' \in \{-1,1\}^n$ one bit at a time, by choosing $x'_{i+1}$
so that $\sign(w_{i+1}x'_{i+1}) =
-\sign(w_1 x'_1 + \cdots + w_i x'_i)$. The resulting vector $x'$ satisfies
$|w \cdot x'| \leq \alpha.$
\end{proof}

As a special case of this we get:

\begin{claim}\label{claim:odd}
Fix $0< \eta \leq \alpha$.
Let $w_j \in \R^+$, $j \in [2k+1]$, satisfy $w_j \in [\alpha-\eta, \alpha]$. Then, there exists $x^{\ast} = (x^{\ast}_1, \ldots, x^{\ast}_{2k+1}) \in \bits^{2k+1}$ such that:
$\littlesum_{j=1}^{2k+1}w_j x^{\ast}_j \in [0, \alpha]$.
\end{claim}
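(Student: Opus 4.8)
The plan is to obtain Claim~\ref{claim:odd} as an immediate instance of Claim~\ref{claim:basic}. The only thing to check is that the hypotheses line up: since $0 < \eta \le \alpha$ we have $\alpha - \eta \ge 0$, so each $w_j$ with $w_j \in [\alpha-\eta,\alpha]$ in fact lies in $[0,\alpha]$, and therefore $|w_j| = w_j \le \alpha$ for every $j \in [2k+1]$. Thus the vector $(w_1,\dots,w_{2k+1}) \in \R^{2k+1}$ satisfies the hypothesis of Claim~\ref{claim:basic} with $n = 2k+1$, and that claim produces precisely an $x^* \in \bits^{2k+1}$ with $\littlesum_{j=1}^{2k+1} w_j x^*_j \in [0,\alpha]$, which is the desired conclusion.

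I do not anticipate any real obstacle here: the statement is literally the special case ``$w_j \in [\alpha-\eta,\alpha]$'' of Claim~\ref{claim:basic}, with the sign restriction $w_j > 0$ and the oddness of the count $2k+1$ being irrelevant to the argument (they are presumably recorded in the statement only because of how the claim will be invoked later). If one preferred a self-contained argument avoiding the citation, the same bound follows directly by the greedy one-coordinate-at-a-time construction underlying Claim~\ref{claim:basic}: set $x^*_1 = 1$ and, having fixed $x^*_1,\dots,x^*_j$, choose $x^*_{j+1}$ so that $\sign(w_{j+1} x^*_{j+1}) = -\sign(\littlesum_{i\le j} w_i x^*_i)$; an easy induction shows every partial sum stays in $[-\alpha,\alpha]$, and negating $x^*$ if the final sum is negative puts it in $[0,\alpha]$. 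Either way, the proof is a one-liner.
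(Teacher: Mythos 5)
Your proof is correct and matches the paper's: the paper explicitly introduces Claim~\ref{claim:odd} with the words ``As a special case of this we get,'' referring to Claim~\ref{claim:basic}, and gives no further argument. Your verification that $w_j \in [\alpha-\eta,\alpha]$ implies $|w_j|\le\alpha$ is exactly the (trivial) check needed, and your optional self-contained greedy argument is the one the paper uses to prove Claim~\ref{claim:basic}.
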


The following claim is only slightly less immediate:

\begin{claim}\label{claim:even} Fix $0< \eta \leq \alpha$.
Let $w_j \in \R^+$, $j \in [2k]$, satisfy $w_j \in [\alpha-\eta, \alpha]$. Then, there exists $x^{\ast} = (x^{\ast}_1, \ldots, x^{\ast}_{2k}) \in \bits^{2k}$ such that:
$\littlesum_{j=1}^{2k}w_j x^{\ast}_j \in [0, \eta]$.
\end{claim}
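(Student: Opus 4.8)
\textbf{Proof plan for Claim~\ref{claim:even}.} The goal is to find a sign pattern $x^\ast \in \bits^{2k}$ for which the signed sum $\littlesum_{j=1}^{2k} w_j x^\ast_j$ lands in the short interval $[0,\eta]$, given that all $2k$ weights are positive and pinned to the window $[\alpha-\eta,\alpha]$. The natural idea is to \emph{pair up} the $2k$ weights into $k$ pairs and assign opposite signs within each pair. If we pair weight $w_{i}$ with weight $w_{i'}$ and give them opposite signs, their joint contribution is $\pm(w_i - w_{i'})$, which has absolute value at most $\eta$ since both lie in a window of length $\eta$. So with an arbitrary pairing we can force the whole sum into $[-\eta,\eta]$; the remaining task is to push it into the \emph{nonnegative} half $[0,\eta]$.

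First I would sort the weights so that $w_1 \geq w_2 \geq \cdots \geq w_{2k}$ (relabeling does not change the statement), and pair consecutive weights: pair $(w_1,w_2), (w_3,w_4), \dots, (w_{2k-1},w_{2k})$. Within the $m$-th pair assign a sign $\sigma_m \in \bits$ to the larger element $w_{2m-1}$ and $-\sigma_m$ to the smaller element $w_{2m}$, so the pair contributes $\sigma_m(w_{2m-1}-w_{2m}) =: \sigma_m d_m$ with $d_m \in [0,\eta]$. Thus the total sum is $\littlesum_{m=1}^k \sigma_m d_m$ with each $d_m \in [0,\eta]$, and we must choose signs $\sigma_m$ to make this sum lie in $[0,\eta]$. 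This is now a one-dimensional greedy/packing argument: process $d_1,\dots,d_k$ in turn, maintaining a running partial sum $s$, and choose $\sigma_m = +1$ if $s \leq 0$ and $\sigma_m = -1$ if $s > 0$. An easy induction shows the invariant that after each step the partial sum lies in $[-\eta,\eta]$: whenever $s \in [-\eta,0]$ we add $d_m \in [0,\eta]$ to get something in $[-\eta,\eta]$, and whenever $s \in (0,\eta]$ we subtract $d_m \in [0,\eta]$ to get something in $[-\eta,\eta]$. So the final sum $S$ lies in $[-\eta,\eta]$; if $S \in [0,\eta]$ we are done, and if $S \in [-\eta,0)$ we simply negate \emph{all} the signs $\sigma_m$ (equivalently apply $x^\ast \mapsto -x^\ast$), which replaces $S$ by $-S \in (0,\eta]$, again landing in $[0,\eta]$.

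I do not expect a serious obstacle here — the claim is elementary and the pairing-plus-greedy argument is robust. The one point to be careful about is the boundary cases in the greedy invariant (making sure the interval endpoints are handled consistently, e.g. the tie-breaking rule at $s=0$), and the degenerate possibility that $k$ itself could be small or that some $d_m = 0$; none of these cause trouble since the invariant $s \in [-\eta,\eta]$ is maintained regardless. An even slicker alternative, which I might present instead, is to invoke Claim~\ref{claim:basic} directly: the $k$ numbers $d_1,\dots,d_k$ all lie in $[0,\eta]$, so by Claim~\ref{claim:basic} (applied with $\alpha$ there taken to be $\eta$) there is a sign vector making $\littlesum_m \sigma_m d_m \in [0,\eta]$, and unwinding the pairing gives the desired $x^\ast \in \bits^{2k}$. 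This reduces the proof to two lines modulo the already-established Claim~\ref{claim:basic}.
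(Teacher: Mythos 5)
Your proposal is correct and takes essentially the same route as the paper: pair the weights so each pair contributes a difference of magnitude at most $\eta$, then apply the greedy sign-balancing argument (which is exactly Claim~\ref{claim:basic}, as you note in your final paragraph) and flip all signs if needed. The only superfluous step is the initial sorting — since all $w_j$ already lie in a window of length $\eta$, any pairing yields $|d_m|\le\eta$ — but this is harmless.
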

\begin{proof}
The vector $u \in \R^k$ defined by $u_j = w_{2j}-w_{2j-1}$ has $|u_j| \leq
\eta$ for all $j \in [k]$.  It is clear that the set of values
$\{w \cdot x\}_{x \in \{-1,1\}^{2k}}$ is contained in $\{u \cdot x\}_{x
\in \{-1,1\}^k}.$  The claim follows by applying Claim~\ref{claim:basic}
to $u.$
\ignore{
Note that, by symmetry, it suffices to show that there exists $x^{\ast}$ with $\littlesum_{j=1}^{2k}w_j x^{\ast}_j \in [-\eta, \eta]$.  The proof is by induction on $k$. The claim is obvious for $k=1$. Assume it is true for $k=j$. Without loss of generality, we can assume we have $x'_1, \ldots, x'_{2k}$, with $0 \leq \littlesum_{i=1}^{2j} w_i x_i \leq \eta$. It is easy to see that there exist $x^{\ast}_{2j+1} $ and $x^{\ast}_{2j+2}$ with the property that $x^{\ast}_{2j+1} w_{2j+1} + x^{\ast}_{2j+2} w_{2j+2} \in [-\eta, 0]$. This choice completes the induction step.
}
\end{proof}
}

We will also need the following corollary of the Berry-Ess{\'e}en theorem
\newa{
(more precisely, it follows from Fact~\ref{fact:be} together with the
fact that the pdf of a standard Gaussian has value at least
$0.2$ everywhere on $[-1,1]$):}

\begin{fact}\label{fac:regular} Fix $0<\tau<1/\newa{15}$.
Let $w \in \mathbb{R}^n$ be $\tau$-regular with $\| w \|_2 \leq 1$.
Then, $\Pr_{x} [0 \leq w \cdot x  \leq \newa{15} \tau] \geq \tau$ and
$\Pr_{x}  [\newa{-15} \tau \leq w \cdot x  \leq 0] \geq \tau$.
\end{fact}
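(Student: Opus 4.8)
The plan is to derive this directly from Fact~\ref{fact:be}, the Berry-Ess{\'e}en consequence already established, applied to the interval $[0, 15\tau]$ (and symmetrically $[-15\tau,0]$). First I would set $\sigma = \|w\|_2 \leq 1$, note that the $\tau$-regularity hypothesis lets us invoke Fact~\ref{fact:be}, and write
\[
\Pr_x\big[ w \cdot x \in [0, 15\tau] \big] \;\geq\; \Phi\big([0/\sigma,\; 15\tau/\sigma]\big) - 2\tau \;=\; \Phi(15\tau/\sigma) - \Phi(0) - 2\tau.
\]
The next step is to lower-bound $\Phi(15\tau/\sigma) - \Phi(0)$. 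Since $\sigma \leq 1$ we have $15\tau/\sigma \geq 15\tau$, and since $\tau < 1/15$ the point $15\tau$ lies in $[0,1]$, where the standard Gaussian density $\phi(t) = \frac{1}{\sqrt{2\pi}}e^{-t^2/2}$ satisfies $\phi(t) \geq \phi(1) > 0.24 > 0.2$. Hence $\Phi(15\tau/\sigma) - \Phi(0) \geq \int_0^{15\tau} \phi(t)\, dt \geq 0.2 \cdot 15\tau = 3\tau$. Combining, $\Pr_x[w\cdot x \in [0,15\tau]] \geq 3\tau - 2\tau = \tau$, which is the first claimed bound. The second bound, $\Pr_x[w \cdot x \in [-15\tau, 0]] \geq \tau$, follows either by the identical computation applied to the interval $[-15\tau,0]$, or simply by symmetry of the distribution of $w \cdot x$ under $x \mapsto -x$.

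One small technical point to handle carefully: Fact~\ref{fact:be} as stated uses half-open intervals $(a,b]$, and the events $\{w\cdot x \in [0,15\tau]\}$ versus $\{w \cdot x \in (0,15\tau]\}$ differ only by the atom at $0$; since we are proving a lower bound this discrepancy only helps us, so I would just note that $\Pr_x[w\cdot x \in [0,15\tau]] \geq \Pr_x[w \cdot x \in (0,15\tau]]$ and apply the Fact to the latter. There is essentially no real obstacle here — the only thing to get right is the constant arithmetic, i.e.\ checking that the density lower bound $0.2$ on $[-1,1]$ is valid ($\phi(1) = e^{-1/2}/\sqrt{2\pi} \approx 0.2420$) and that $15 \cdot 0.2 - 2 = 1 \geq 1$, so that the Berry-Ess{\'e}en error term $2\tau$ does not swamp the main term $3\tau$. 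This is exactly why the constant $15$ appears in the statement: it is the smallest convenient integer making $0.2 \cdot 15 - 2 \geq 1$.
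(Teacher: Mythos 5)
Your proposal is correct and is exactly the argument the paper has in mind: the paper does not give a detailed proof of Fact~\ref{fac:regular}, but the parenthetical remark just before it says it ``follows from Fact~\ref{fact:be} together with the fact that the pdf of a standard Gaussian has value at least $0.2$ everywhere on $[-1,1]$,'' which is precisely what you have spelled out, including the correct handling of the half-open interval and the constant arithmetic $15 \cdot 0.2 - 2 = 1$.
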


We are now ready to prove the following lemma which establishes a concentration statement for linear forms with a given maximum coefficient:

\begin{lemma}\label{lem:2}
Let $w \in \R^n$ be proper with $\| w \|_2 \leq1$ and let  $\delta \eqdef w_1 >0$. There exists $\kappa = \kappa(\delta)$ such that
$\Pr_{x} [0 \leq w \cdot x \leq \delta] \geq \kappa$.
\end{lemma}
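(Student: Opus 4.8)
The plan is a case analysis driven by the $\tau$-critical index of $w$ for $\tau := \delta/100$. First I would reduce to the essential situation: zero coordinates of $w$ do not affect $w\cdot x$, so I delete them and assume $w_1\ge\cdots\ge w_n>0$ (so the critical index is defined in the usual way), and I record that $\delta=w_1\le\|w\|_2\le1$. The guiding observation is that the ``head'' of $w$ — the coordinates preceding the critical index — is short in a way controlled purely by $\delta$: for every index $a$ strictly below $c(w,\tau)$ we have $|w_a|>\tau\sigma_a$, hence $\sigma_{a+1}^2<(1-\tau^2)\sigma_a^2$, so by Fact~\ref{fact:small-tail} the tail-norms satisfy $\sigma_a\le(1-\tau^2)^{(a-1)/2}$ for all $a\le c(w,\tau)$. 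Using $-\ln(1-\tau^2)\ge\tau^2$, this shows that any index $a\le c(w,\tau)$ with $\sigma_a>\delta/4$ satisfies $a\le M$, where $M:=O(\log(1/\delta)/\tau^2)=O(\log(1/\delta)/\delta^2)$ depends only on $\delta$. This $M$ is the crux of the argument.

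Next, let $k:=c(w,\tau)$ and let $m$ be the least index with $\sigma_{m+1}\le\delta/4$ (set $\sigma_{n+1}:=0$; then $m\ge1$ since $\sigma_1\ge\delta$). I split into Case~A ($m<k$, which in particular covers $k=\infty$) and Case~B ($m\ge k$). In Case~A I peel off the first $r:=m$ coordinates; since $\sigma_m>\delta/4$ and $m\le k$, the observation above gives $r\le M$. In Case~B I peel off the first $r:=k-1$ coordinates; here $k\le m$ forces $\sigma_k>\delta/4$, so again $r\le M$, and moreover the tail vector $(w_k,\dots,w_n)$ is $\tau$-regular (because $|w_k|\le\tau\sigma_k$ and the coordinates are non-increasing) with norm $\sigma_k\in(\delta/4,1]$. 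In both cases I then condition on the outcome of $x_1,\dots,x_r$: by Claim~\ref{claim:basic} applied to $(w_1,\dots,w_r)$ with $\alpha=w_1=\delta$ (all of $w_1,\dots,w_r$ lie in $[0,\delta]$), there is an outcome $x^\ast$ of $(x_1,\dots,x_r)$ whose partial sum $P^\ast:=\sum_{i\le r}w_ix^\ast_i$ lies in $[0,\delta]$, and this outcome has probability $2^{-r}\ge2^{-M}$.

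Given $(x_1,\dots,x_r)=x^\ast$ we have $w\cdot x=P^\ast+T$ with $T:=\sum_{i>r}w_ix_i$ independent of the conditioning, and we want $T$ to land in $[-P^\ast,\delta-P^\ast]$; this interval has length $\delta$, contains $0$, and therefore contains at least one of $[0,\delta/2]$, $[-\delta/2,0]$, so by symmetry of $T$ it suffices to lower bound $\Pr[T\in[0,\delta/2]]\ge\tfrac12\Pr[|T|\le\delta/2]$. In Case~A, $\E[T^2]=\sigma_{m+1}^2\le(\delta/4)^2$, so Chebyshev gives $\Pr[|T|\le\delta/2]\ge1-\tfrac14=\tfrac34$. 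In Case~B, $T=\sum_{i\ge k}w_ix_i$ is a regular linear form with $\|(w_k,\dots,w_n)\|_2=\sigma_k$, so the Berry--Ess\'een anti-concentration bound Fact~\ref{fact:be} gives $\Pr[|T|\le\delta/2]\ge 2\Phi(\delta/(2\sigma_k))-1-2\tau\ge2\Phi(\delta/2)-1-2\tau$ (using $\sigma_k\le1$), which is $\Omega(\delta)$ by the elementary estimate $\Phi(t)-\tfrac12\ge t\phi(t)$ on $[0,\tfrac12]$ together with $\tau=\delta/100$. Combining, $\Pr[w\cdot x\in[0,\delta]]\ge2^{-r}\cdot\Pr[T\in[0,\delta/2]]\ge2^{-M}\cdot\Omega(\delta)$, a strictly positive quantity depending only on $\delta$, which is the desired $\kappa=\kappa(\delta)$.

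The step I expect to be the main obstacle is the head: because the partial sum over the head coordinates can be badly spread out, one cannot extract anti-concentration of $w\cdot x$ from it directly, so the probability loss coming from the head is a priori uncontrolled. The resolution — and what the proof really rests on — is twofold: (i) the head has length at most $M=O(\log(1/\delta)/\delta^2)$, bounded purely in terms of $\delta$, thanks to the geometric decay of the $\sigma_a$ before the critical index; and (ii) one should condition on a single convenient head outcome (supplied by Claim~\ref{claim:basic}) so that the remaining tail only has to hit an interval straddling $0$, which it does with constant (Case~A) or $\Omega(\delta)$ (Case~B) probability via Chebyshev or Berry--Ess\'een. Minor points to get right are the edge cases $k=1$ (empty head, $w$ itself regular) and $m=n$ (empty tail, $T\equiv0$), and fixing the threshold $\delta/4$ so that both the Chebyshev bound and the critical-index bookkeeping go through simultaneously.
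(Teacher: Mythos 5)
Your proof is correct and follows essentially the same strategy as the paper: split on the $\tau$-critical index (with $\tau = \Theta(\delta)$), condition on a favorable outcome of the $O(\log(1/\delta)/\delta^2)$ head coordinates so the head partial sum lands in $[0,\delta]$, and then get anti-concentration for the tail either from Hoeffding/Chebyshev (small tail norm) or from Berry--Ess\'een regularity. The only cosmetic differences are that you parametrize the case split via $m$ and $k$ rather than a fixed threshold $K_0$, and you finish by splitting $[-P^\ast,\delta-P^\ast]$ into halves and invoking symmetry, whereas the paper bounds $\Pr[w\cdot x\in[-\delta,\delta]]$ and halves at the end.
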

\begin{proof}
We choose a sufficiently small $\tau>0$, where $\tau=\tau(\delta) \ll \delta$,
and consider the $\tau$-critical index $K\newa{=c(w,\tau)}$ of $w$.
Fix $K_0 = \Theta (1/\tau^2) \cdot \log (1/\delta^2)$
and consider the following two cases:

\smallskip

\noindent {\bf [Case 1: $K \leq K_0.$]}
In this case, we partition $[n]$ into the head $H = [K-1]$ and the tail $T = [n] \setminus H.$
Then, an application of Claims~\ref{claim:even} and~\ref{claim:odd} for $\eta = \alpha = \delta$ gives us that
$$\Pr_{x_H} \left[ w_H \cdot x_H  \in [0, \delta] \right] \ge 2^{-K} \geq 2^{-K_0}.$$
An application of Fact~\ref{fac:regular} for the $\tau$-regular tail gives that
$$\Pr_{x_T} \left[ w_T \cdot x_T  \in [-\newa{15}\tau, 0] \right] \ge \tau$$
and combining the above inequalities using independence yields
$$\Pr_x \left[ w \cdot x \in [-\newa{15}\tau, \delta] \right]
\ge 2^{-K_0} \cdot \tau.$$
Now note that for any choice of $\tau \le \delta/\newa{15}$, the above clearly 
implies
$$\Pr_x \left[ w \cdot x \in [-\delta, \delta] \right] \ge 2^{-K_0} \cdot \tau$$
and by symmetry we conclude that
$$\Pr_x \left[ w \cdot x \in [0, \delta] \right] \ge 2^{-K_0-1} \cdot \tau$$
yielding the lemma for $\kappa_1 = 2^{-K_0-1} \cdot \tau.$
\smallskip

\noindent {\bf [Case 2: $L > K_0.$]}
In this case, we partition $[n]$ into $H = [K_0-1]$ and the tail $T = [n] \setminus H.$
We similarly have that $$\Pr_{x_H} \left[ w_H \cdot x_H \in [0, \delta] \right] \ge 2^{-K_0}.$$
Now recall that the tail weight decreases geometrically up to the critical index; in particular, Fact~\ref{fact:small-tail} gives that $\| w_T \|_2 \le \delta^2$. Then, for a sufficiently small $\delta$,  the Hoeffding bound gives $$\Pr_{x_T} \left[ w_T \cdot x_T \in [-\delta, 0] \right] \geq 1/4.$$  Combining these inequalities we thus get that
$$\Pr_x \left[ w\cdot x \in [-\delta, \delta] \right] \ge 2^{-K_0-\newa{2}}.$$
By symmetry, we get the desired inequality for $\kappa_2 = 2^{-K_0-\newa{3}}$.

\smallskip

\noindent The proof follows by selecting $\kappa = \min \{\kappa_1, \kappa_2 \} = \kappa_1$ for any choice of $\tau \leq  \delta/\newa{15}.$
\end{proof}

\newb{Note the difference between the conditions of Corollary~\ref{cor:1},
stated below, and Lemma~\ref{lem:2} stated above:  while Lemma~\ref{lem:2}
requires that $\delta = w_1$, Corollary~\ref{cor:1} holds
for any $\delta > 0.$}

\begin{corollary}\label{cor:1}
For any $\delta>0$, there is a value $\kappa=\kappa(\delta)>0$
such that for any $w \in \mathbb{R}^n$  with $\|w\|_2 \leq1$ and $\| w \|_{\infty} \leq \delta$,
$\Pr_{x} [0\le w \cdot x \le \delta] \ge \kappa$ and $\Pr_{x} [-\delta \le w \cdot x \le 0] \ge \kappa.$
\end{corollary}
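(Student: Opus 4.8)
The plan is to reduce Corollary~\ref{cor:1} to Lemma~\ref{lem:2} by a simple reordering-and-padding argument, together with the symmetry $x \mapsto -x$. First I would observe that both quantities appearing in the corollary are invariant under permuting the coordinates of $w$ and under flipping the signs of individual coordinates $w_i \mapsto -w_i$ (since the $x_i$ are uniform over $\bits$ and independent). Hence, given an arbitrary $w \in \R^n$ with $\|w\|_2 \le 1$ and $\|w\|_\infty \le \delta$, I may assume without loss of generality that $w$ is \emph{proper}, i.e. $w_1 \ge w_2 \ge \cdots \ge w_n \ge 0$. Now the maximum coordinate $w_1$ satisfies $w_1 \le \delta$, but it need not equal $\delta$, which is exactly the gap between the hypotheses of the two statements; to close it I would simply append one extra coordinate of weight $\delta$. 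That is, consider $\widetilde w = (\delta, w_1, \dots, w_n) \in \R^{n+1}$; this is still proper, its $\ell_2$-norm is $\sqrt{\delta^2 + \|w\|_2^2}$ which may exceed $1$, so instead I would take $\widetilde w = (\delta, \lambda w_1, \ldots, \lambda w_n)$ for a suitable scaling $\lambda \le 1$, or more cleanly just note that Lemma~\ref{lem:2} only requires $\|w\|_2 \le 1$, so if $\delta^2 + \|w\|_2^2 \le 1$ we are immediately done, and otherwise we can rescale $w$ down (which only shrinks the relevant interval event, making the bound harder, so I should be a little careful here).

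The cleaner route, which I would actually carry out: fix the extended proper vector $\widetilde w = (\delta, w_1, \ldots, w_n)$ if $\delta^2 + \|w\|_2^2 \le 1$; if not, first replace $w$ by $w' = \mu w$ where $\mu \in (0,1]$ is chosen so that $\delta^2 + \|w'\|_2^2 = 1$ (note $\|w'\|_\infty \le \|w\|_\infty \le \delta$ still, and $w'$ is still proper). Apply Lemma~\ref{lem:2} to $\widetilde w' = (\delta, w'_1, \ldots, w'_n)$, whose top coordinate is exactly $\delta$ and whose norm is $\le 1$: this yields $\kappa = \kappa(\delta) > 0$ with
\[
\Pr_{(x_0, x) \in \bits^{n+1}}\left[\, 0 \le \delta x_0 + w' \cdot x \le \delta \,\right] \ge \kappa.
\]
Conditioning on the two values of $x_0$ and using independence, this probability is $\frac12 \Pr_x[0 \le \delta + w'\cdot x \le \delta] + \frac12 \Pr_x[0 \le -\delta + w' \cdot x \le \delta] = \frac12 \Pr_x[-\delta \le w' \cdot x \le 0] + \frac12 \Pr_x[\delta \le w' \cdot x \le 2\delta]$, so at least one of these two terms is $\ge \kappa$; by the sign-flip symmetry $x \mapsto -x$, $\Pr_x[-\delta \le w' \cdot x \le 0] = \Pr_x[0 \le w' \cdot x \le \delta]$, and a similar but slightly more delicate comparison handles the second term. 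Honestly, the most robust way to extract both one-sided bounds is to run Lemma~\ref{lem:2} on $\widetilde w'$ and then note that the event $\{0 \le \delta x_0 + w'\cdot x \le \delta\}$ with $x_0 = -1$ is exactly $\{\delta \le w'\cdot x \le 2\delta\} \supseteq \emptyset$ — so instead I would set up the padding so that the target window $[0,\delta]$ for $\widetilde w'$ maps precisely onto $[-\delta, 0]$ and $[0,\delta]$ for $w'$ via the two choices of $x_0$.

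To get the bound for $w$ itself rather than $w'$, I would observe that if $\mu < 1$ then $\{|w' \cdot x| \le \delta\} \supseteq \{|w \cdot x| \le \delta\}$ — wait, that inclusion goes the wrong way, since $\|w'\| < \|w\|$ makes $w' \cdot x$ \emph{smaller}, so actually $\{|w\cdot x|\le \delta\} \not\supseteq \{|w'\cdot x| \le \delta\}$ in general. The correct observation is that scaling \emph{down} only helps concentration near $0$: the distribution of $w' \cdot x = \mu (w \cdot x)$ is that of $w \cdot x$ scaled by $\mu < 1$, so $\Pr[0 \le w' \cdot x \le \delta] = \Pr[0 \le w \cdot x \le \delta/\mu] \ge \Pr[0 \le w \cdot x \le \delta]$. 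Good — so the bound for $w'$ transfers back to $w$ with the same constant $\kappa(\delta)$. The main (and really only) obstacle is bookkeeping: making sure the padding coordinate and the rescaling are arranged so that Lemma~\ref{lem:2}'s single interval $[0,\delta]$ cleanly produces \emph{both} one-sided estimates in the corollary, and that the rescaling step is pushed in the direction that preserves (rather than destroys) the anticoncentration bound. Everything else is an immediate consequence of permutation/sign symmetry plus Lemma~\ref{lem:2}.
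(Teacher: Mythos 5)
Your proposal has two genuine gaps, one of which you already half-noticed.

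\textbf{The pad-and-decompose step does not isolate the one-sided probability.} Applying Lemma~\ref{lem:2} to $\widetilde w' = (\delta, w')$ gives $\Pr[0 \le \delta x_0 + w' \cdot x \le \delta] \ge \kappa$, and conditioning on $x_0$ yields
\[
\tfrac12 \Pr\bigl[-\delta \le w' \cdot x \le 0\bigr] + \tfrac12 \Pr\bigl[\delta \le w' \cdot x \le 2\delta\bigr] \ge \kappa,
\]
from which one cannot conclude anything about the first summand alone: the second term could in principle carry all the mass. Your suggested fix --- choosing the pad so that the two conditional windows ``map precisely onto $[-\delta,0]$ and $[0,\delta]$'' --- is not achievable. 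If the pad weight is $a$ and the target window for $\widetilde w'$ is $[b,c]$, then the two conditional windows are $[b-a,c-a]$ and $[b+a,c+a]$, i.e.\ translates of the same interval by $\pm a$; these can both be length-$\delta$ intervals abutting $0$ only if $a = \delta/2$, $[b,c]=[-\delta/2,\delta/2]$. But then the pad coordinate $\delta/2$ need not be the maximum (the $w'_i$ may exceed $\delta/2$), so the padded vector is not in the form Lemma~\ref{lem:2} handles (where the anticoncentration window is $[0, w_1]$ with $w_1$ the max weight). There is no choice of pad that simultaneously keeps the pad coordinate maximal and makes the conditional windows symmetric about~$0$.

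\textbf{The rescaling transfer goes the wrong way.} You correctly note that $\Pr[0 \le w'\cdot x \le \delta] = \Pr[0 \le w\cdot x \le \delta/\mu] \ge \Pr[0 \le w\cdot x \le \delta]$ for $w' = \mu w$, $\mu < 1$, but then conclude ``the bound for $w'$ transfers back to $w$ with the same constant.'' This is backwards: you have shown the $w'$-probability is \emph{at least as large} as the $w$-probability, so a lower bound on the former says nothing about the latter. (And the rescaling cannot simply be dropped: when $\|w\|_2$ is close to $1$ there may be no pad weight $a$ with $a \ge w_1$ and $a^2 + \|w\|_2^2 \le 1$.)

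The paper's proof avoids padding altogether and instead does a direct case split on $w$: if $\|w\|_2$ is small, Hoeffding gives $\Pr[|w\cdot x|\le\delta]$ close to $1$; if $\|w\|_2$ is bounded away from $0$ and $w_1$ is bounded away from $0$, Lemma~\ref{lem:2} applies directly to $w$ with $\delta' = w_1$ (giving a window $[0,w_1]\subseteq[0,\delta]$ and a constant bounded in terms of $\delta$ since $w_1$ lives in a $\delta$-dependent interval); and if $\|w\|_2$ is bounded away from $0$ but $w_1$ is small, then $w$ is regular and Berry--Ess\'{e}en (Fact~\ref{fac:regular}) gives the anticoncentration. This case structure is what lets one get a genuine one-sided bound without the shift problem your padding introduces.
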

\begin{proof}
We start by considering the case when $\Vert w \Vert_2 \le \delta/\newa{100}$.
In this case, by Theorem~\ref{thm:chb}, we certainly get that
$\Pr_x [|w \cdot x| \le \delta] \ge 99/100$.  Hence, by symmetry,
$\Pr_x [-\delta \le w  \cdot x \le 0] \ge 99/200$ and
$\Pr_x [0 \le  w \cdot x \le \delta] \ge 99/200$.

Next, we consider the case when $\Vert w \Vert_2 > \delta/\newa{1500}$.
In this case, if $w_1 > \delta^2/\newa{1500}$,
then we apply Lemma~\ref{lem:2}, to get that $\Pr_x [0 \le w
\cdot x \le \delta^2/\newa{1500}] \ge \kappa_1$ and (by symmetry)
$\Pr_x [-\delta^2\newa{/1500} \le w \cdot x \le 0] \ge \kappa_1$
where $\kappa_1$ is a positive constant dependent only on $\delta$.

The only remaining case is when $w_1 \le \delta^2/\newa{1500}$.
In this case, the vector $w$ is $ \delta/\newa{15}$-regular.
Now, we can apply Fact~\ref{fac:regular} to get that $\Pr_x[0 \le w
\cdot x \le \delta] \ge \delta/\newa{15}$ and $\Pr_x[-\delta  \le w
\cdot x \le 0] \ge \delta/\newa{15}$.
By taking $\kappa = \min \{\delta/\newa{15}, \kappa_1, 99/200 \}$,
the proof is completed.
\end{proof}

Using the above corollary, we show the following lemma:

\begin{lemma}\label{lem:3}
Let $\alpha, \eta,\xi \in \mathbb{R}^+$ with $w \in \R^n$ be such that
$\xi \le \| w \|_2 \le 1$, $\alpha > 2 \eta$, and $\| w \|_{\infty}
\le \alpha - \eta$. Then, there are positive constants $\kappa = \kappa(\alpha,\eta,\xi)$ and $\gamma = \gamma (\alpha, \eta,\xi)$ such that
$$
\Pr_x [-2 \alpha + 2\eta \le  w \cdot x \le -\gamma] \ge \kappa.
$$
\end{lemma}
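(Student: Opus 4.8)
\textbf{Proof proposal for Lemma~\ref{lem:3}.}

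The plan is to first normalize $w$ and then run a two-case argument governed by a small threshold $\gamma>0$ (depending only on $\alpha,\eta,\xi$) that we pin down at the end. Since the distribution of $w\cdot x$ for uniform $x\in\bn$ is unchanged by permuting coordinates or by flipping signs $w_i\mapsto -w_i$, I would assume from the start that $w$ is proper, so that $w_1=\|w\|_\infty$. I will also record that $\alpha-\eta>0$ (from $\alpha>2\eta$), so that Corollary~\ref{cor:1} applies with parameter $\alpha-\eta$ and so that $\beta:=\tfrac12-\Phi(-2(\alpha-\eta))$ is a strictly positive constant.

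\emph{Case 1: $w_1\ge\gamma$.} Here I would pin $x_1=-1$ and apply Corollary~\ref{cor:1} to the vector $w^{(1)}=(w_2,\dots,w_n)$ with parameter $\delta:=\alpha-\eta$; this is legitimate because $\|w^{(1)}\|_2\le\|w\|_2\le1$ and $\|w^{(1)}\|_\infty\le\|w\|_\infty\le\alpha-\eta$. The corollary gives that with probability at least $\kappa_1:=\kappa(\alpha-\eta)>0$ over $x^{(1)}$ we have $w^{(1)}\cdot x^{(1)}\in[-(\alpha-\eta),0]$, and on this event
$$w\cdot x=-w_1+w^{(1)}\cdot x^{(1)}\in[-w_1-(\alpha-\eta),\,-w_1]\subseteq[-2\alpha+2\eta,\,-\gamma],$$
using $\gamma\le w_1\le\alpha-\eta$. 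By independence of $x_1$ and $x^{(1)}$ this yields $\Pr[w\cdot x\in[-2\alpha+2\eta,-\gamma]]\ge\kappa_1/2$.

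\emph{Case 2: $w_1<\gamma$.} Then $w$ is $(\gamma/\xi)$-regular (since $\|w\|_\infty<\gamma\le(\gamma/\xi)\xi\le(\gamma/\xi)\|w\|_2$), so Fact~\ref{fact:be} applied to the interval $(-2\alpha+2\eta,-\gamma]$, writing $\sigma=\|w\|_2\in[\xi,1]$, yields
$$\Pr[w\cdot x\in(-2\alpha+2\eta,-\gamma]]\ \ge\ \Phi(-\gamma/\sigma)-\Phi(-2(\alpha-\eta)/\sigma)-2\gamma/\xi\ \ge\ \Phi(-\gamma/\xi)-\Phi(-2(\alpha-\eta))-2\gamma/\xi,$$
where the last step uses $\sigma\ge\xi$, $\sigma\le1$, and monotonicity of $\Phi$. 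As $\gamma\to0^+$ the right-hand side tends to $\beta>0$, so there is $\gamma_0=\gamma_0(\alpha,\eta,\xi)>0$ with the right-hand side at least $\beta/2$ for all $0<\gamma\le\gamma_0$.

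To finish, I would set $\gamma:=\min\{\gamma_0,\alpha-\eta\}$ and $\kappa:=\min\{\kappa_1/2,\beta/2\}$, both positive and depending only on $\alpha,\eta,\xi$; since the cases $w_1\ge\gamma$ and $w_1<\gamma$ are exhaustive, this proves the lemma. The genuinely routine parts are the endpoint bookkeeping and the constant chase; the one structural point to get right is that neither tool alone suffices — the ``spiky'' regime (large $\|w\|_\infty$) is exactly where Berry--Ess\'een anti-concentration can fail, and it is handled instead by pinning the top coordinate and invoking Corollary~\ref{cor:1}, while the ``regular'' regime is precisely where Fact~\ref{fact:be} delivers probability mass in an interval bounded away from $0$.
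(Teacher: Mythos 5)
Your proposal is correct and takes essentially the same approach as the paper: split on whether the top coordinate is large or small (i.e.\ on regularity), handle the regular/small-$w_1$ case with the Berry--Ess\'een anti-concentration of Fact~\ref{fact:be}, and handle the spiky/large-$w_1$ case by conditioning on $x_1=-1$ and invoking Corollary~\ref{cor:1} on $w^{(1)}$ with parameter $\alpha-\eta$. The only cosmetic differences are that you parametrize the split directly by a threshold on $w_1$ rather than by a regularity parameter $\zeta$ (and so set $\gamma$ at the end rather than as $\zeta\xi$), and your Berry--Ess\'een case targets the full interval $(-2\alpha+2\eta,-\gamma]$ rather than a short interval $[-20\zeta\|w\|,-\zeta\|w\|]$; both variants work and the argument is otherwise identical.
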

\begin{proof}
We choose a sufficiently small $\zeta>0$ and consider two cases.

\smallskip

\noindent \noindent {\bf [Case 1: $w$ is $\zeta$-regular.]} In this case,
Theorem~\ref{thm:be} gives us \newa{(similar to Fact~\ref{fac:regular})
that for $\zeta \leq 1/20$, we have}
$$
\Pr_{x} [ -20 \zeta \cdot \Vert w \Vert \le w \cdot x  \le -\zeta \cdot \Vert w \Vert] \ge \zeta.
$$

\smallskip

\noindent \noindent {\bf [Case 2: $w$ is {\em not} $\zeta$-regular.]}
We assume without loss of generality that $w_1 = \|w\|_\infty.$
In this case, it follows by definition that $w_1 \ge \zeta \cdot \xi$, hence  $w_1 \in [\zeta \cdot \xi, \alpha-\eta]$.
Since $|w_{j}| \le \alpha - \eta$ for all
$j \geq 2$, Corollary~\ref{cor:1} says that
(recall that $w^{(1)} = (w_2,\dots,w_n)$)
$$
\Pr_{x^{(1)}} \left[ -\alpha + \eta \le w^{(1)} \cdot x^{(1)} \le  0 \right] \ge c(\alpha,\eta).
$$
By independence we thus get
$$
\Pr_{x} \left[ -2\alpha + 2\eta \le w\cdot x \le -\zeta \cdot \xi  \right] \ge c(\alpha,\eta)/2.
$$

Combining \textbf{Case 1} and \textbf{Case 2} and using $1 \ge \Vert w \Vert \ge \xi$, we get
$$
\Pr_{x} \left[\min\{ -2\alpha + 2\eta,-20 \zeta   \} \le w\cdot x \le -\zeta \cdot \xi  \right] \ge \min \{c(\alpha, \eta)/2 , \zeta \}.
$$
We now choose $\zeta>0$ so that $20 \zeta \le  \alpha -\eta$.
Finally, we set $\gamma = \zeta \cdot \xi$ and  $\kappa =  \min \{c(\alpha, \eta)/2 , \zeta \}$ and get the claimed result.
\end{proof}

The next lemma is a robust version of Lemma~\ref{lem:good-vector}.  It says that if a vector $w$
of length $n$ is very close to having its first $2k$ entries each being $\alpha$ and
its remaining entries all 0, then $\ell(x)=|w \cdot x|$ must have nonnegligible
Fourier mass at levels 4 and above.

\begin{lemma}\label{lem:fourier-mass}
Let $\alpha>0$, $w \in \mathbb{S}^{n-1}$ and $k \in \mathbb{N}$, $k>1$. Then there are sufficiently small positive constants $\eta = \eta(k)$, $\tau = \tau (k)$  with the following property : If for every $i \in [2k]$, we have
$w_i \in [\alpha - \eta, \alpha]$ and $\littlesum_{j>2k}^n (w_j)^2 \le \tau^2$, then the map $\ell : x \mapsto |w \cdot x|$  satisfies $\w^{\ge 4} [\ell]
\ge \gamma$ for some $\gamma = \gamma(k) >0$.
\end{lemma}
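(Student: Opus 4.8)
The plan is to prove Lemma~\ref{lem:fourier-mass} as a ``robustification'' of its exact counterpart Lemma~\ref{lem:good-vector}, via an $L^2$-continuity argument followed by Parseval's identity. Write $u^\ast \in \R^{2k}$ for the vector all of whose $2k$ entries equal $1/\sqrt{2k}$, and let $\ell_{2k}(x) = |u^\ast \cdot (x_1,\dots,x_{2k})| = |x_1 + \cdots + x_{2k}|/\sqrt{2k}$, regarded as a function on $\{-1,1\}^n$; since $k > 1$ we have $2k \ge 4$ with $2k$ even, so Lemma~\ref{lem:good-vector} (with its parameter instantiated at $2k$) gives $\w^{\ge 4}[\ell_{2k}] \ge 2^{-4k}/(2k)$. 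The goal is to show that under the hypotheses $\ell(x) = |w \cdot x|$ is $L^2$-close to $\ell_{2k}$, and then transfer this high-degree Fourier mass bound.

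First I would split $w = (u,v)$ with $u = (w_1,\dots,w_{2k})$ and $v = (w_{2k+1},\dots,w_n)$, write $x_H = (x_1,\dots,x_{2k})$ and $x_T = (x_{2k+1},\dots,x_n)$, and set $\ell_0(x) = |u \cdot x_H|$. Then $\|v\|_2 \le \tau$ and $\|u\|_2^2 = 1 - \|v\|_2^2 \in [1-\tau^2, 1]$, and the inequality $\big||a+b| - |a|\big| \le |b|$ gives $|\ell(x) - \ell_0(x)| \le |v \cdot x_T|$ pointwise, so $\|\ell - \ell_0\|_2^2 \le \E[(v \cdot x_T)^2] = \|v\|_2^2 \le \tau^2$. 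Next I would pin down $\alpha$: combining $\|u\|_2^2 \le 2k\alpha^2$ with $\|u\|_2^2 \ge 1-\tau^2$, and $\|u\|_2^2 \ge 2k(\alpha-\eta)^2$ with $\|u\|_2^2 \le 1$, yields $|\alpha - 1/\sqrt{2k}| \le \eta + \tau^2/\sqrt{2k}$, hence $|w_i - 1/\sqrt{2k}| \le 2\eta + \tau^2/\sqrt{2k}$ for each $i \in [2k]$ and therefore $\|u - u^\ast\|_2 \le 2\sqrt{2k}\,\eta + \tau^2$. Since $\ell_{2k}(x) = |u^\ast \cdot x_H|$, the same inequality $\big||a| - |b|\big| \le |a - b|$ gives $\|\ell_0 - \ell_{2k}\|_2 \le \|u - u^\ast\|_2$, so altogether $\|\ell - \ell_{2k}\|_2 \le \tau + 2\sqrt{2k}\,\eta + \tau^2 =: \epsilon(\eta,\tau,k)$, which $\to 0$ as $\eta, \tau \to 0$ with $k$ fixed.

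Finally, Parseval gives $\sum_{|S| \ge 4} (\wh\ell(S) - \wh{\ell_{2k}}(S))^2 \le \|\ell - \ell_{2k}\|_2^2 \le \epsilon^2$, so the reverse triangle inequality in $\ell_2$ yields $\sqrt{\w^{\ge 4}[\ell]} \ge \sqrt{\w^{\ge 4}[\ell_{2k}]} - \epsilon \ge \sqrt{2^{-4k}/(2k)} - \epsilon$. Choosing $\eta = \eta(k)$ and $\tau = \tau(k)$ small enough that $\epsilon(\eta,\tau,k) \le \frac12 \sqrt{2^{-4k}/(2k)}$ — for instance $\tau = \frac18 \sqrt{2^{-4k}/(2k)}$ and $\eta = \frac{1}{8\sqrt{2k}} \sqrt{2^{-4k}/(2k)}$ — gives $\w^{\ge 4}[\ell] \ge 2^{-4k}/(8k) =: \gamma(k) > 0$, proving the lemma. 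The argument is essentially routine once set up; the only step needing a little care is the bookkeeping that produces the single bound $\|\ell - \ell_{2k}\|_2 \le \epsilon$ from the coordinatewise hypotheses, in particular controlling the value of $\alpha$ itself, which is not given directly but is forced by $\|w\|_2 = 1$ together with the head/tail constraints.
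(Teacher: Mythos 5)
Your proof is correct and takes essentially the same route as the paper: compare $\ell$ to an idealized ``balanced'' function, bound the $L^2$ distance in terms of $\eta$ and $\tau$, and transfer the high-degree Fourier mass via Parseval. The only substantive difference is in the choice of comparison function and the final transfer step, and here your version is actually slightly cleaner. The paper compares against $\ell'(x) = |\alpha(x_1+\cdots+x_{2k})|$ (so $w'$ is \emph{not} a unit vector) and then invokes the inequality $\w^{\ge 4}[\ell] \ge \w^{\ge 4}[\ell'] - \|\ell-\ell'\|_2^2$, which is not a valid inequality in general (expanding $\|a\|^2 - \|b\|^2 + \|a-b\|^2 = 2\langle a, a-b\rangle$ shows the right-hand side can exceed the left); the correct form is the reverse triangle inequality $\sqrt{\w^{\ge 4}[\ell]} \ge \sqrt{\w^{\ge 4}[\ell']} - \|\ell-\ell'\|_2$, which is exactly what you use, and the paper's choice of constants would go through with this repair. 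You instead compare against the exactly normalized $u^\ast = (1/\sqrt{2k},\dots,1/\sqrt{2k},0,\dots,0)$, which costs you the extra (correct and carefully done) step of pinning down $|\alpha - 1/\sqrt{2k}| \le \eta + \tau^2/\sqrt{2k}$ from the constraint $\|w\|_2 = 1$, but keeps the base case of Lemma~\ref{lem:good-vector} directly applicable without tracking the scale factor $\alpha\sqrt{2k}$. Both work; yours is a hair more pedantic and a hair more rigorous at the final step.
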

\begin{proof}
Consider the vector $w' = (\underbrace{\alpha, \ldots, \alpha}_{2k}, 0, \ldots, 0)$ and the map $\ell' : x \mapsto |w' \cdot x|$. We have
$$
\ell'(x) = \alpha \cdot \sqrt{k} \cdot \left| \frac{x_1 + \ldots +x_k}{\sqrt{k}} \right|
$$
By applying Lemma~\ref{lem:good-vector}, we get $\w^{\ge 4} [ \ell'] \ge
\newa{\alpha^2} \cdot 2^{-2k}$. Note that if $\eta$ and $\tau$ are sufficiently small, then clearly $\alpha \ge \frac{1}{2\sqrt{k}}$. This implies $ \w^{\ge 4} [ \ell']  \ge \frac{2^{-2k}}{\newa{4k}}$.

We now observe that
\begin{eqnarray*}
|\ell(x) - \ell'(x)| &=& \left| \ | \littlesum_{i=1}^n w_i \cdot x_i|  - |\littlesum_{i=1}^n w'_i \cdot x_i| \ \right|\\ 
&\le&  \left| \ | \littlesum_{i=1}^k w_i \cdot x_i|  - |\littlesum_{i=1}^k w'_i \cdot x_i| \ \right| + \left|\littlesum_{j=k+1}^n w_i x_i \right|
\end{eqnarray*}
Let us use $h_1(x) = | \littlesum_{i=1}^k w_i \cdot x_i| $, $h_2(x) =  |\littlesum_{i=1}^k w'_i \cdot x_i| $ and $h_3(x) = |\littlesum_{j=k+1}^n w_i x_i |$.
Then we may rewrite the above as $|\ell(x) - \ell'(x) | \le |h_1(x) - h_2(x)|
+ h_3(x)$.  This implies that
$|\ell(x) - \ell'(x)|^2 \le 2(h_1(x) - h_2(x))^2 + 2 (h_3(x))^2$.  This in
turn yields
$$
\mathbf{E} [(\ell(x) - \ell'(x) )^2] \le 2 \mathbf{E} [(h_1(x) - h_2(x))^2] + 2 \mathbf{E} [(h_3(x))^2].
$$
Note that $\mathbf{E} [(h_3(x))^2] = \littlesum_{j=k+1}^n w_j^2 \le \tau^2$. Next, observe that
$$
|h_1(x) - h_2(x)| = \left|  | \littlesum_{i=1}^k w_i \cdot x_i|  - |\littlesum_{i=1}^k w'_i \cdot x_i|  \right| \le \left|\littlesum_{i=1}^k (w_i - w'_i) \cdot x_i\right|
$$
Hence, we get that $\mathbf{E} [(h_1(x) - h_2(x))^2] \le \mathbf{E} [(\littlesum_{i=1}^k (w_i - w'_i) \cdot x_i)^2] \le \littlesum_{i=1}^k \eta^2 = k \eta^2$.

Combining these bounds,
we get that $\mathbf{E} [(\ell(x) - \ell'(x) )^2]  \le 2 (k\eta^2 + \tau^2)$.
Hence, we have that
$$\w^{\ge 4} [\ell] \ge \w^{\ge 4} [\ell'] -
\mathbf{E} [(\ell(x) - \ell'(x) )^2]  \ge \frac{2^{-2k}}{\newa{4}k}
-2 (k\eta^2 + \tau^2).$$
We may choose $\eta$ and $\tau$ small enough so that $\frac{2^{-2k}}{\newa{4}k} -2 (k\eta^2 + \tau^2) \ge \frac{2^{-2k}}{\newa{8}k}$, and the proof is finished.
\end{proof}

Given the above lemmas, the proof of Theorem~\ref{thm:main-kk} proceeds as follows:
Let $w_1 = \alpha$ and $\eta= \eta(\alpha)>0$ be a sufficiently small constant. Let $L$ be the first index such that $w_{L} \le \alpha -\eta$.
\newa{Recalling that $w$ is canonical,} since $w_1>0.3$ and $\|w\|=1$, it is clear that $L \leq \newa{1/0.09 < 12}.$
We now consider two cases  :

\smallskip

\noindent {\bf [Case I: $L$ is even]} Then by Claim~\ref{claim:even}, there is a choice of $x_2, \ldots, x_{L-1}$, such that $\littlesum_{k=2}^{L-1} w_k x_k \in [-\eta, 0].$  Using Corollary~\ref{cor:1} and noting that $w_{L} \le \alpha -\eta$, there is some $\kappa = \kappa(\alpha,\eta)$ such that $\Pr_{x^{(L-1)}} [0 \le w^{(L-1)}x^{(L-1)} \le \alpha - \eta] \ge \kappa$. By independence, we thus get
\ignore{
$$
\Pr_{x} [- \eta \le w^{(1)} \cdot x^{(1)} \le \alpha - \eta] \ge \kappa \cdot 2^{-L}.
$$
That is, for $\eta$ sufficiently small, we get
}
\begin{equation} \label{eqn:ena}
\Pr_{x} [ -  \eta \le w^{(1)} \cdot x^{(1)} \le \alpha - \eta] \ge \kappa \cdot 2^{-L}.
\end{equation}
Note that (\ref{eqn:ena}) implies (by definition) that
$\Inf_1(\ell) \leq w_1^2-c_1$, for an appropriate constant
$c_1 = c_1(\kappa, L\newa{,\eta})>0.$

\smallskip

\noindent {\bf [Case II: $L$ is odd]}
Let us choose a sufficiently small $\xi >0$.
If $\| w^{(L-1)} \|_2 > \xi$, then observe that from
Claim~\newa{\ref{claim:even} (applied to the weights $w_1,\dots,w_{L-1}$)}
there is a choice of $x_2, \ldots, x_{L-1}$
satisfying $\littlesum_{k=2}^{L-1} w_k x_k \in [\alpha-\eta, \alpha]$,
i.e.
$$
\Pr \left[\alpha- \eta \le \littlesum_{k=2}^{L-1} w_k x_k \le \alpha\right]  \ge 2^{-L}.
$$
Combining this with Lemma~\ref{lem:3} applied to $w^{{(L-1)}}$, we get that
\begin{equation} \label{eqn:dyo}
\Pr_{x^{(1)}} [-\alpha + \eta \le w^{(1)} \cdot x^{(1)} \le \alpha - \gamma(\newa{\alpha,\eta,}\xi) ] \ge 2^{-L} \cdot \kappa.
\end{equation}
Exactly as before, (\ref{eqn:dyo}) implies (by definition) that $\Inf_1(\ell) \leq w_1^2-c_1$, for an appropriate constant $c_1>0.$

Now consider the only remaining case which is that
$\| w^{(L-1)} \|_2 \le \xi$.
Recall that $1 < L< 12$ and $L$ is odd; we first claim that
that $L>3$.  Indeed, this must be the case because $L = 3$ contradicts
(for $\xi$ and $\eta$ sufficiently small)
the assumption $\tau \ge \newa{2/5}$ \newa{(recall that $w$ is canonical)}.
Now, since $\ell \le 11$ and $\eta$ and $\xi$ are sufficiently small,
by applying
 Lemma~\ref{lem:fourier-mass}), we get that  $\ell$ has a constant fraction of its Fourier mass above level $2$, completing the proof.
This finishes the proof of Theorem~\ref{thm:main-kk}.


\fi



\section{Proof of Theorem~\ref{thm:w1} using Theorem~\ref{thm:robust-kk}} \label{sec:w1-k}

We first observe that it suffices to prove the theorem for balanced LTFs,
i.e. LTFs $f:\bn \to \bits$ with $\newa{\widehat{f}(\emptyset)=}\E[f] = 0$.
(Note that any balanced LTF can be represented with a threshold of $0$,
i.e. $f(x) = \sgn (w \cdot x)$ for some $w \in \R^n$.) \newa{
\begin{fact}\label{fac:balanced-w1}
Let $f:\bn \to \bits$ be an  $n$-variable
LTF. Then there is a balanced $(n+1)$-variable LTF $g : \{-1,1\}^{n+1} \rightarrow \{-1,1\}$ such that $\w^{\leq 1} [f]  =\w^{\leq 1} [g].$ \end{fact}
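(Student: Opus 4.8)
The plan is to use the standard ``pad to zero threshold'' construction and track its effect on the degree-$0$ and degree-$1$ Fourier coefficients. First I would choose a representation $f(x) = \sgn(w\cdot x - \theta)$ with $w \in \R^n$, $\theta \in \R$ for which $w\cdot x \neq \theta$ for every $x \in \{-1,1\}^n$; this can always be arranged by replacing $\theta$ with $\theta - \eps$ for a generic sufficiently small $\eps > 0$, which does not change $f$ as a function since $\sgn(0) = \sgn(\eps) = 1$. Because the finite set $\{w\cdot x : x \in \{-1,1\}^n\}$ is closed under negation (via $x \mapsto -x$), this automatically gives $w\cdot x \neq -\theta$ as well, so $w\cdot x \notin \{\theta,-\theta\}$ for all $x$. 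Now define $g : \{-1,1\}^{n+1} \to \{-1,1\}$ by $g(x,x_{n+1}) \eqdef \sgn(w\cdot x - \theta x_{n+1}) = \sgn\big(v\cdot (x,x_{n+1})\big)$ where $v = (w_1,\dots,w_n,-\theta)$; this is an $(n+1)$-variable LTF, and since $w\cdot x - \theta x_{n+1} = 0$ would force $w\cdot x = \pm\theta$, the argument of $\sgn$ in the definition of $g$ is never $0$.

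Next I would compute $\wh g(\emptyset)$, $\wh g(i)$ for $i \in [n]$, and $\wh g(n+1)$ by conditioning on $x_{n+1} = \pm 1$ and applying the measure-preserving substitution $x \mapsto -x$ on the first $n$ coordinates. Writing $a \eqdef \wh f(\emptyset) = \E_x[f(x)]$, the one identity that does the work is that for every $x$,
\[
\sgn(w\cdot x + \theta) = \sgn\big(-(w\cdot x - \theta)\big) = -\sgn(w\cdot x - \theta) = -f(x),
\]
where the middle equality is exactly where $w\cdot x \neq \theta$ is used. Substituting $x \mapsto -x$ then yields $\E_x[\sgn(w\cdot x + \theta)] = -a$ and $\E_x[\sgn(w\cdot x + \theta)\,x_i] = \wh f(i)$ for each $i \in [n]$, so that
\[
\wh g(\emptyset) = \tfrac12 a + \tfrac12(-a) = 0, \qquad \wh g(i) = \tfrac12 \wh f(i) + \tfrac12 \wh f(i) = \wh f(i)\ \ (i \in [n]), \qquad \wh g(n+1) = \tfrac12 a - \tfrac12(-a) = a = \wh f(\emptyset).
\]

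Finally, $\wh g(\emptyset) = 0$ says $g$ is balanced, and using the coefficients just computed,
\[
\w^{\leq 1}[g] = \wh g(\emptyset)^2 + \littlesum_{i=1}^{n+1} \wh g(i)^2 = 0 + \littlesum_{i=1}^{n}\wh f(i)^2 + \wh f(\emptyset)^2 = \w^{\leq 1}[f],
\]
which is the claim. The only genuinely delicate point — the step I would be careful to spell out — is the reduction in the first paragraph that makes $\sgn$ never hit $0$: without it, the identity $\sgn(w\cdot x + \theta) = -f(x)$ can fail on the inputs with $w\cdot x = \theta$ (there both sides equal $+1$), which would break both the balancedness of $g$ and the exact transfer of Fourier mass. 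Everything after that reduction is a one-line computation.
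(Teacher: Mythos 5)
Your proposal is correct and is essentially the paper's own argument: pad with one extra coordinate $x_{n+1}$, put the threshold on that coordinate, and track the degree-$\le 1$ coefficients (the paper writes $g(x,y)=\sgn(w_0 y + w\cdot x)$ and notes $\wh g(y)=\wh f(\emptyset)$, $\wh g(i)=\wh f(i)$, after assuming $w_0 \ne w\cdot x$ for all $x$). One small typo to fix: the displayed chain begins $\sgn(w\cdot x + \theta) = \sgn\bigl(-(w\cdot x - \theta)\bigr)$, which is false in general — it should begin $\sgn(-w\cdot x + \theta) = \sgn\bigl(-(w\cdot x - \theta)\bigr)$ — but your subsequent use of the measure-preserving substitution $x\mapsto -x$ and all the derived coefficient identities are exactly right, and you usefully spell out, more explicitly than the paper, why one may assume $w\cdot x \ne \pm\theta$ so that $\sgn$ never hits zero.
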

 \begin{proof} Let $f(x) = \sgn(w_0 + \littlesum_{i=1}^n w_ix_i)$
and note that we may assume that $w_0 \neq w \cdot x$ for all
$x \in \{-1,1\}^n.$
Consider the $(n+1)$-variable balanced LTF $g:(x, y) \to \bits$,
where $y \in \bits$, defined by
$g(x, y) = \sgn(w_0 y + \littlesum_{i=1}^n w_ix_i)$. Then it is easy to see that $\wh{g}(y) = \E[f]$ and $\wh{g}(i) = \wh{f}(i)$
for all $i \in [n]$. Therefore, $\w^{\leq 1} [f]  = \w^{1} [g] = \w^{\leq 1} [g].$
\end{proof}}
\medskip

Let $f = \sgn(w \cdot x)$ be an LTF. We may assume that $w$ is a proper unit
vector, i.e. that $\| w\|_2=1$ and $w_i \geq w_{i+1} >0$ for $i \in [n-1].$
We can also assume that $w \cdot x \neq 0$ for all $x \in \bn.$
We distinguish two cases: If $w$ is ``far'' from $w^{\ast}$ (i.e. the
worst-case vector for the Khintchine inequality),
the desired statement follows immediately from our robust inequality
(Theorem~\ref{thm:robust-kk}).
For the complementary case, we use a separate argument that exploits the
structure of $w$.  More formally, we have the following two cases:

\medskip

Let $\tau>0$ be a sufficiently small universal constant, to be specified.

\medskip

\noindent {\bf [Case I: $\|w - w^{\ast}\|_2 \geq \tau$]}. In this case, Proposition~\ref{prop:gl} and
Theorem~\ref{thm:robust-kk} give us
$$\w^1[f] \geq \left( \kk(w) \right)^2 \geq (1/\sqrt{2}+ c\tau )^2 \geq 1/2 + \sqrt{2} c  \tau$$
which completes the proof \newa{of Theorem~\ref{thm:w1}} for Case I.

\medskip

\noindent {\bf [Case II: $\|w - w^{\ast}\|_2 \leq \tau$]}.
In this case
\newa{the idea is to consider the restrictions of $f$ obtained by fixing the
variables $x_1, x_2$ and argue based on their bias.}
\ifnum\short=1
\newa{See the full version for details.}
\fi
\ifnum\short=0
\newa{Recall that} for a vector $y = (y_1, \ldots, y_n) \in \R^n$ and $i \in [n]$
we denote $y^{(i)} = (y_{i+1}, \ldots, y_n)$.
We consider the restrictions $f_{ij}: \bits^{n-2} \to \bits$ defined by
$$f_{ij}(y) =  \sgn(w_1 \cdot (-1)^{i} + w_2 \cdot (-1)^{j} + w^{(2)} \cdot y).$$
We fix $\lambda=3/4$ and consider the following two subcases:

\begin{enumerate}
\item[(a)] ($\E_y [f_{01}(y)] \leq \lambda$) In this case
the function $f_{01}$ is not very positively biased;
we show that the Cauchy-Schwarz inequality is not tight.
In particular, the degree-$1$ Fourier vector
$\newa{(\widehat{f}(i))_{i=1,\dots,n}}$ of $f(x)  = \sgn(w\cdot x)$
and the corresponding weight-vector $w$ form an angle bounded away from zero:

\begin{lemma} \label{lem:w1-close}
There are universal constants $\tau, \kappa = \kappa(\tau) >0$ such that the following holds: Let $w \in \mathbb{R}^n$ be any proper unit vector \newa{such that $\| w - w^{\ast} \|_2 \leq \tau$ and $\E_y [f_{01}(y)] \leq \lambda$ where $f(x)=\sign(w \cdot x)$.}  Then we have
$$\w^1[f] \geq  (1+\kappa) \cdot \left( \kk(w) \right)^2.$$
\end{lemma}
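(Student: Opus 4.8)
The plan is to extract a quantitative amount of slack from the Cauchy-Schwarz step in the proof of Proposition~\ref{prop:gl}. Since $\|w\|_2=1$, the identity $\|v\|_2^2\|w\|_2^2-\langle v,w\rangle^2=\littlesum_{i<j}(v_iw_j-v_jw_i)^2$ gives
\[
\w^1[f]-\big(\kk(w)\big)^2=\littlesum_{i<j}\big(\wh f(i)w_j-\wh f(j)w_i\big)^2\ \ge\ 0 ,
\]
so it suffices to show this difference is at least some absolute constant $c_0>0$: then, since $(\kk(w))^2\le\w^1[f]\le 1$, we get $\w^1[f]\ge(\kk(w))^2+c_0\ge(1+c_0)(\kk(w))^2$, and $\kappa=c_0$ works once $\tau$ is chosen small enough below. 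So the whole game is to locate a \emph{robust} source of this slack.

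The first step is to pass to the four restrictions $f_{ij}$ obtained by fixing $x_1,x_2$. The non-degeneracy assumption $w\cdot x\ne 0$ is used crucially here: it makes the identities $f_{10}(y)=-f_{01}(-y)$ and $f_{11}(y)=-f_{00}(-y)$ hold \emph{exactly} (even though $\sgn(0)=1$), and expanding each Fourier coefficient of $f$ over the four values of $(x_1,x_2)$ then yields
\[
\wh f(1)=\tfrac12\big(\E[f_{00}]+\E[f_{01}]\big),\qquad \wh f(2)=\tfrac12\big(\E[f_{00}]-\E[f_{01}]\big),\qquad \wh f(i)=\tfrac12\big(\wh{f_{00}}(i-2)+\wh{f_{01}}(i-2)\big)\ \ (i\ge 3).
\]
Write $A:=\E[f_{00}]\in[0,1]$ and $B:=\E[f_{01}]$. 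By Fact~\ref{fact:ordering} applied to $f$ we have $\wh f(i)\ge 0$ for all $i$ (in particular $\wh f(1),\wh f(2)\ge 0$), and applied to the LTFs $f_{00}$ and $f_{01}$ --- which share the (already non-increasing) weight vector $w^{(2)}$ --- it shows $\wh{f_{00}}(k)$ and $\wh{f_{01}}(k)$ have the same sign as $w_{k+2}$, hence $\wh{f_{00}}(k)\wh{f_{01}}(k)\ge 0$ for every $k$.

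Next I would bound the two sides in terms of $A$ and $B$. For $\kk(w)$: closeness to $w^{\ast}$ forces $w_1,w_2\in[\tfrac1{\sqrt2}-\tau,\tfrac1{\sqrt2}+\tau]$ and $\|w^{(2)}\|_2\le\tau$, so using $\wh f(1),\wh f(2)\ge 0$, Cauchy-Schwarz on the tail ($\littlesum_{i\ge3}|w_i\wh f(i)|\le\|w^{(2)}\|_2\sqrt{\w^1[f]}\le\tau$), and $A\le 1$,
\[
\kk(w)=\littlesum_i w_i\wh f(i)\ \le\ \tfrac1{\sqrt2}A+2\tau,\qquad\text{hence}\qquad \big(\kk(w)\big)^2\le\tfrac12A^2+O(\tau).
\]
For $\w^1[f]$: the restriction identities give $\w^1[f]=\tfrac12(A^2+B^2)+\littlesum_{i\ge3}\wh f(i)^2$, and expanding $\wh f(i)^2=\tfrac14(\wh{f_{00}}(i-2)+\wh{f_{01}}(i-2))^2$ and discarding the nonnegative quantities $\w^1[f_{00}]$ and $2\littlesum_k\wh{f_{00}}(k)\wh{f_{01}}(k)$ yields $\littlesum_{i\ge3}\wh f(i)^2\ge\tfrac14\w^1[f_{01}]$. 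Since $f_{01}$ is itself an LTF on $\bits^{n-2}$, the Gotsman-Linial bound (Proposition~\ref{prop:gl} with Fact~\ref{fac:balanced-w1} and Theorem~\ref{thm:kk}) gives $\w^{\le1}[f_{01}]\ge\tfrac12$, i.e.\ $\w^1[f_{01}]\ge\tfrac12-B^2$ (and trivially $\w^1[f_{01}]\ge 0$).

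Combining these, the $A^2$ terms cancel and $\w^1[f]-(\kk(w))^2\ge\tfrac12 B^2+\tfrac14\max\{0,\tfrac12-B^2\}-O(\tau)\ge\tfrac18-O(\tau)$, since $t\mapsto\tfrac{t}{2}+\tfrac14\max\{0,\tfrac12-t\}$ on $[0,1]$ is minimized at $t=0$ with value $\tfrac18$. Taking $\tau$ a sufficiently small absolute constant makes the right-hand side at least $\tfrac1{16}$, finishing the proof with $\kappa=\tfrac1{16}$. (The hypothesis $\E_y[f_{01}]\le\lambda$ is not actually needed for this bound; it serves in the surrounding Case~II to delimit subcase~(a) from subcase~(b), where $f$ is forced close to $\sgn(x_1+x_2)$ and a separate elementary argument applies. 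If one wishes to invoke it, it gives $\wh f(2)=\tfrac12(A-B)\ge\tfrac12(A-\lambda)$, which with $A\ge 1-O(\tau)$ coming from $\kk(w)\ge1/\sqrt2$ forces $\wh f(2)\gtrsim\tfrac18$.) The step I expect to be the main obstacle is the first one: getting the restriction identities exactly right --- in particular justifying the odd-symmetry relations from $w\cdot x\ne 0$ --- and, conceptually, recognizing that (unlike in the one-line proof of Proposition~\ref{prop:gl}) the slack must be harvested from the \emph{tail} coefficients $\wh f(i)$, $i\ge 3$, by applying Gotsman-Linial to the restriction $f_{01}$.
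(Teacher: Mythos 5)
Your proof is correct, and it takes a genuinely different route from the paper's. Both arguments pass to the four restrictions $f_{ij}$ and invoke Fact~\ref{fact:ordering} to control signs of the tail Fourier coefficients, but the slack is harvested differently. The paper's proof keeps the hypothesis $\E[f_{01}]\le\lambda$ together with its mirror $\E[f_{10}]\ge-\lambda$, applies Fact~\ref{fact:not-biased-deg1} to get the \emph{absolute} lower bound $\sum_{k\ge 3}\wh f(k)^2\ge 1/256$, and then extracts the gain from two rounds of Cauchy--Schwarz (splitting $(w_1,w_2)$ from the tail) to obtain $\kk(w)\le\sqrt{\w^1[f]-1/256}+\tau$. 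You instead compute $\wh f(1),\wh f(2)$ \emph{exactly} in terms of $A=\E[f_{00}]$ and $B=\E[f_{01}]$ via the odd-symmetry relations, observe that the $\tfrac12A^2$ contribution appears on both sides of $\w^1[f]-(\kk(w))^2$ and cancels to within $O(\tau)$, and replace Fact~\ref{fact:not-biased-deg1} with the Gotsman--Linial bound applied to the restriction $f_{01}$, which yields $\w^1[f_{01}]\ge\max\{0,\tfrac12-B^2\}$; the resulting expression $\tfrac12B^2+\tfrac14\max\{0,\tfrac12-B^2\}$ is then bounded below by $\tfrac18$ uniformly in $B\in[0,1]$. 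Your approach buys two things: the hypothesis $\E[f_{01}]\le\lambda$ is not needed, so the lemma in your form subsumes both subcases (a) and (b) of Case~II in one stroke; and the constant is much better ($\kappa=1/16$ versus the paper's $1/299$). One point worth making explicit in a final write-up, as you already flag: the identities $f_{10}(y)=-f_{01}(-y)$ and $f_{11}(y)=-f_{00}(-y)$ require $w\cdot x\neq 0$ for all $x\in\bn$, which is assumed at the start of Section~\ref{sec:w1-k} but is not restated in the lemma; the paper's own ``by symmetry'' step for $\E[f_{10}]$ rests on the same assumption, so this is not a defect relative to the paper.
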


\begin{proof}
\newa{Note that since $w_1 \geq w_2$ the function $f_{01}(y)$
is an LTF of the form $\sign(w^{(2)} \cdot y^{(2)} + \theta)$
with $\theta \geq 0$, and hence $\E[f_{01}] \geq 0.$}
To deal with this case we recall the following simple fact:

\begin{fact}[Lemma~2.4 in \cite{OS11:chow}] \label{fact:not-biased-deg1}
Let $f:\bn \to \bits$ be an LTF with $1- |\E[f]| = p$. Then
$\w^1[f] \geq p^2/2. $
\end{fact}

An application of Fact~\ref{fact:not-biased-deg1} for $f_{01}$ gives $$\w^1[f_{01}]   \geq 1/(32).$$
Note that by symmetry we also have that $\E_y [f_{10}(y)] \geq -\lambda$ and therefore
$$\w^1[f_{10}]   \geq 1/(32).$$
Fix $k \in \{3,\dots,n\}$. We have that
\begin{eqnarray*}
\wh{f}(k) &=& \Inf_k(f)\\
& = & (1/4) \cdot \littlesum_{i, j \in \newa{\{0,1\}}}
\Inf_{k-2} (f_{ij}) \geq (1/4) \cdot \left( \wh{f_{01}}(k-2)+ \wh{f_{10}}(k-2)
\right).
\end{eqnarray*}
\newa{Since the sign of $\widehat{f_{01}}(k-2)$
agrees with the sign of $\widehat{f_{10}}(k-2)$ for
all $k \in \{3,\dots,n\}$,}
we get that
$$
\littlesum_{k=3}^n \wh{f}(k)^2  \geq (1/16) \cdot \left( \w^1[f_{01}]  + \w^1[f_{10}] \right) \geq 1/(256).
$$

Recall that by assumption of the lemma it holds $\|w^{(2)}\|_2 = \sqrt{\littlesum_{i=3}^n w_i^2} \leq \tau$ and Parseval's identity implies that
$\littlesum_{i=1}^n \wh{f}^2(i)^2 \leq 1$. We can therefore now write
\begin{eqnarray*}
\kk(w) = \littlesum_{i=1}^n \wh{f}(i) w_i
&\leq& \sqrt{w_1^2+w_2^2} \cdot  \sqrt{\wh{f}^2(1)+\wh{f}^2(2)} + \|w^{(2)}||_2 \cdot \sqrt{\littlesum_{k=3}^n \wh{f}(k)^2} \\
&\leq& \sqrt{\w^1[f] - 1/(256)} + \tau
\end{eqnarray*}
where the first inequality follows by two applications of Cauchy-Schwarz and the second follows by our assumptions. By squaring and expanding, assuming that
$\tau>0$ is sufficiently small, we obtain
\begin{eqnarray*}
\left( \kk(w) \right)^2  &\leq&  \w^1[f] - 1/300 \\
&\leq&  \w^1[f] - (1/300)  \w^1[f]  = (299/300) \cdot  \w^1[f]
\end{eqnarray*}
where the second inequality follows from the fact that $\w^1[f] \leq 1$.  This
proves Lemma~\ref{lem:w1-close}.
\end{proof}

Theorem~\ref{thm:w1} follows easily from Lemma~\ref{lem:w1-close} in this subcase using the
``basic'' Khintchine inequality with optimal constant, $(\kk(w))^2 \geq 1/2.$
\newa{We turn now to the remaining subcase:}

\item[(b)] ($\E_y [f_{01}(y)] > \lambda = 3/4$)  In this case, we show that
\newa{the value $\widehat{f}(1)$ is so large that it alone causes $\w^{\leq 1}[f]$ to be
significantly larger than $1/2.$
Since $\E_y[f_{01}(y)] > 3/4$ it must certainly also be the case
that $\E_y[f_{00}(y)] > 3/4$, and by symmetry
$\E_y[f_{10}(y)] < -3/4$ and $\E_y[f_{11}(y) < -3/4$. Consequently we have
$\widehat{f}(1) = \E_x[f(x)x_1] > 3/4,$
and so $\w^{\leq 1}[f] \geq \widehat{f}(1)^2 \geq 9/16.$

This concludes the proof of Theorem~\ref{thm:w1}.

}

\end{enumerate}

\ignore{

START IGNORE

the influence of variable $x_1$ on $f$ is significantly larger than
the influence of variable $x_2$ on $f$.  In particular, we make the following  claim: \begin{claim}\label{clm:lambda}
$ \wh{f}(1) = \Inf_1(f) \geq \lambda$ and $\wh{f}(2) = \Inf_2(f) \leq (1/2) \cdot (1-\lambda)$
\end{claim}

To see, why this suffices, note that
\begin{equation}\label{eq:small}
\kk(w) = \littlesum_{i=1}^n \wh{f}(i) \cdot w_i  =\littlesum_{i=1}^n \wh{f}(i) \cdot w^{\ast}_i  + \littlesum_{i=1}^n(w_i-w^{\ast}_i) \cdot \wh{f}(i)  \le\tau+  \littlesum_{i=1}^n \wh{f}(i) \cdot w^\ast_i = \frac{1}{\sqrt{2}} \cdot (\hat{f}(1) + \hat{f}(2) ) + \tau
\end{equation}
Next we use the following simple fact.
\begin{fact}\label{clm:small}
$(\hat{f}(1) + \hat{f}(2) ) \le \sqrt{\frac{49}{37}} \cdot \w^1[f]$.
\end{fact}
\begin{proof}
Let $\hat{f}(1) = c \hat{f}(2)$. Note that $c \ge (2\lambda )/ (1-\lambda) \ge 6$. Then, $\hat{f}(1) + \hat{f}(2) = (c+1) \hat{f}(2)$ and $\w^1[f] \ge\sqrt{\hat{f}^2(1) + \hat{f}^2(2) }=\sqrt{(c^2+1)} \cdot \hat{f}(2)$. For $c \ge 6$, $\frac{\sqrt{c^2+1}}{c+1} \ge \sqrt{\frac{37}{49}}$ proving the fact.
\end{proof}

Combining Fact~\ref{clm:small} with (\ref{eq:small}), we get that
$$
\kk(w) \le \sqrt{\frac12} \cdot  \sqrt{\frac{49}{37}} \cdot \w^1[f]  + \tau =\sqrt{\frac {49}{74}} \cdot \w^1[f] + \tau
$$

For a small enough $\tau$, using the fact that $\w^1[f] \ge \frac{1}{\sqrt{2}}$, we get that that $\sqrt{\frac {49}{74}} \cdot \w^1[f] + \tau  \le \sqrt{\frac 56} \cdot \w^1[f]$ concluding the proof of Lemma~\ref{lem:w1-close} in this case.

We now get back to the proof of Claim~\ref{clm:lambda}. \begin{proof}[{\bf Proof of Claim~\ref{clm:lambda}}]  For the first part, our assumption $\E [f_{01}] > \lambda $, a fortiori implies that  $\E [f_{11}] < -\lambda $.
We now note that $\wh{f}(1|x_2=-1) = (1/2) \cdot (\E [f_{01}]  - \E [f_{11}] ) \ge \lambda$.
Likewise,  we get that $\wh{f}(1|x_2=1)= (1/2) \cdot (\E[f_{00}] - \E[f_{10}]) \ge \lambda$. This implies that $\wh{f}(1) \ge \lambda$ and hence  $\Inf_1(f) \geq \lambda$.

To get the upper bound on $\Inf_2(f)$, we again note that our assumption implies that $\E [f_{10}]  = - \E[f_{01}] < -\lambda$. Now, note that $\wh{f}(2|x_1=-1)  = (1/2) \cdot (\E [f_{10}] - \E[f_{11}]) \le  (1/2) \cdot (1-\lambda)$. Similarly, $\wh{f}(2|x_1=1) = (1/2) \cdot (\E[f_{00}] - \E[f_{01}) \le (1/2) \cdot (1-\lambda)$.  This implies that $\wh{f}(2) \le (1/2) \cdot (1-\lambda)$ and hence $\Inf_2(f) \le (1/2) \cdot (1-\lambda)$. This concludes the proof of the claim.

\end{proof}
\end{enumerate}
This completes the proof of Lemma~\ref{lem:w1-close} and Theorem~\ref{thm:w1}.
\end{proof}

END IGNORE

}

\fi



\section{Alternate proof of Theorem~\ref{thm:w1}} \label{sec:w1-os}

Recall that it suffices to prove the theorem for balanced LTFs.
The idea of the second proof is to perturb the original
halfspace slightly so that the perturbed halfspace is defined by
a sufficiently anti-concentrated linear form $w' \cdot x$.
If the perturbed halfspace is regular, one can show
that its degree-$1$ Fourier weight is close to $2/\pi$. Otherwise, there exists a large weight, hence an influential variable $x_1$ (say).
We are then able to show a non-trivial upper bound
on the influence of $x_1$ on the function $\ell(x) = |w' \cdot x|$.

We require the following terminology:

\begin{definition}
The \emph{ (relative) Hamming distance}
between two Boolean functions $f, g : \bn \to \bits$ is defined as follows: $\dist(f,g) \eqdef \Pr_{x}[f(x) \neq g(x)].$
If $\dist(f,g) \leq \eps$ we say that $f$ and $g$ are \emph{$\eps$-close}.
\end{definition}

\begin{definition}
Let $f:\bn \to \bits$ be an LTF, $f(x) = \sgn(w_0+\littlesum_{i=1}^n w_i x_i)$, where the weights are scaled
so that $\littlesum_{i=0}^n w_i^2=1$. Given a particular input $x \in \bn$ we define $\margarine (f,x) = |w_0+\littlesum_{i=1}^n w_i x_i|$.
\end{definition}

We start by recalling the following result from~\cite{OS11:chow} which essentially says that any LTF is extremely close to another LTF for which almost all points have large margin:

\begin{theorem} \label{thm:os-perturb}[Theorem 6.1 in~\cite{OS11:chow}]
Let $f:\bn \to \bits$ be any LTF and let $0<\tau<1/2$. Then there is an LTF $f' : \bn \to \bits$ with $\dist(f,f') \leq \eta(\tau)$ satisfying
$\Pr_x[\margarine(f', x) \leq \kappa(\tau)] \leq \tau$, where $\kappa(\tau) = 2^{-O(\log^3(1/\tau)/\tau^2)}$ and $\eta(\tau) = 2^{-1/\kappa(\tau)}.$
\end{theorem}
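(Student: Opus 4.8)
The plan is to run a case analysis on the \emph{critical index} of the weight vector of $f$: when the linear form has a substantial ``regular tail,'' anti-concentration comes essentially for free from Berry--Ess\'een, and when it does not, the form is (up to a tiny-probability event) determined by a low-dimensional ``head'' that can be perturbed by hand. Concretely, first I would normalize so that $\littlesum_{i=0}^n w_i^2 = 1$ and reorder so that $|w_1| \ge \cdots \ge |w_n|$ (and assume $w\cdot x \neq \theta$ for all $x\in\bn$). Fix an auxiliary regularity parameter $\tau' = \poly(\tau)$, let $c = c(w,\tau')$ be the $\tau'$-critical index, and write $H = [c-1]$, $T = \{c,\dots,n\}$, $\sigma = \sigma_c = \|w_T\|_2$; by the definition of the critical index, $w_T$ is $\tau'$-regular. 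The dichotomy is on whether $c$ is finite with $\sigma$ not too small, versus $c=\infty$ or $\sigma$ tiny.

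\textbf{Case 1: $c$ finite and $\sigma \ge \sigma_0(\tau)$.} Here I claim no perturbation is needed, i.e.\ $f' = f$. For each fixed restriction $\rho$ of the head variables $x_H$, the restricted function is $\sgn(w_T\cdot y - \theta_\rho)$ with $\theta_\rho = \theta - w_H\cdot\rho$, and since $w_T$ is $\tau'$-regular, Fact~\ref{fact:be} gives $\Pr_y[\,|w_T\cdot y - \theta_\rho|\le \delta\,]\le \delta/\sigma + 2\tau'$ for every $\delta>0$. Averaging over $\rho$ yields $\Pr_x[\margarine(f,x)\le\delta]\le \delta/\sigma_0 + 2\tau'$, so choosing $\kappa(\tau)$ small relative to $\tau\,\sigma_0$ and $\tau'$ small relative to $\tau$ makes this at most $\tau$. (One must keep track of the fact that $\margarine$ is measured in the normalized representation, but since $\|w\|_2\le 1$ the Berry--Ess\'een scale is compatible.)

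\textbf{Case 2: $c = \infty$, or $\sigma$ is tiny.} In either subcase Fact~\ref{fact:small-tail} shows the tail weight decays geometrically, so for $m := \min(c-1,K)$ with $K = \Theta(\log(1/\tau)/\tau'^2)$ the vector $(w_{m+1},\dots,w_n)$ has $\ell_2$-norm at most a doubly-small quantity $s$. A Hoeffding bound then shows the contribution of these coordinates exceeds a threshold $t$ only on a $2^{-\Omega(t^2/s^2)}$ fraction of inputs, so $f$ is $\eta(\tau)/2$-close to a ``head'' LTF $g$ supported on the $\le K$ coordinates of $H$. It remains to perturb $g$: I would slightly move its $\le K$ weights and its threshold so that all of the at most $2^m \le 2^K$ attainable values of the head linear form lie at distance $\ge \kappa(\tau)$ from the new threshold, optionally adding a fresh tiny $\tau'$-regular linear form on the discarded coordinates so that anti-concentration holds at an even finer scale. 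A pigeonhole/greedy argument over the $\le 2^K$ attainable values shows this can be done while flipping at most an $\eta(\tau)/2$ fraction of inputs; composing with the first reduction yields $\dist(f,f')\le\eta(\tau)$ and $\Pr_x[\margarine(f',x)\le\kappa(\tau)]\le\tau$.

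\textbf{Main obstacle.} The delicate part is Case 2: one must simultaneously keep the perturbation so small that $f$ changes on only a (doubly-exponentially) tiny fraction of inputs, and yet push \emph{every one} of the up to $2^{\poly(1/\tau)}$ attainable values of the low-dimensional head form at least $\kappa(\tau)$ away from the threshold. Balancing ``closeness'' against ``margin'' over exponentially many attainable values, together with the Hoeffding control of the discarded tail, is exactly what forces the stated dependence $\kappa(\tau)=2^{-O(\log^3(1/\tau)/\tau^2)}$ and $\eta(\tau)=2^{-1/\kappa(\tau)}$, and is the technical heart of the argument.
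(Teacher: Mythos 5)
You should first note that the paper does not prove this theorem at all: the bracket \textquotedblleft[Theorem 6.1 in~\cite{OS11:chow}]\textquotedblright\ in the statement is a citation, and the result is invoked as a black box from O'Donnell--Servedio. So there is no in-paper proof to compare against; what follows is an assessment of your proposal on its own terms, informed by the related critical-index machinery the present paper \emph{does} use (Lemma~\ref{lem:chow-OS}, Fact~\ref{fact:small-tail}, the Servedio~'07 lemma, etc.).

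Your high-level plan --- a dichotomy on the $\tau'$-critical index, with Berry--Ess\'een anti-concentration on the regular tail in one case and a low-dimensional head perturbation in the other --- is the right family of ideas, and Case~1 is essentially sound modulo bookkeeping. The gap is in Case~2, in two places. First, the inference \textquotedblleft Hoeffding bounds the tail contribution, \emph{so} $f$ is $\eta(\tau)/2$-close to the head junta $g$\textquotedblright\ is a non sequitur: a small tail contribution alone does not force $f(x)=g(x)$; one also needs the head form $w_H\cdot x_H-\theta$ to be anti-concentrated at the same scale, otherwise the tiny tail can still flip the sign on a large fraction of inputs. This head anti-concentration is exactly the content of the geometric-decay anti-concentration lemma (Theorem~4.2 of \cite{OS11:chow}, quoted in this paper as Lemma~\ref{lem:chow-OS}), which your outline never invokes. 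Second, even granting that step, your parameters do not close. With $K=\Theta(\log(1/\tau)/\tau'^2)$, Fact~\ref{fact:small-tail} only yields $\sigma_K\le (1-\tau'^2)^{(K-1)/2}=\poly(\tau)$, which is nowhere near the \textquotedblleft doubly-small\textquotedblright\ level you assert. To push the tail weight below roughly $\kappa^{3/2}$ (what Hoeffding would need to deliver $\eta=2^{-1/\kappa}$), you would need $e^{-K\tau'^2/2}\le 2^{-\Omega(K\log K)}$, i.e.\ $\tau'^2=\Omega(\log K)$, which is impossible for $\tau'<1$ as $K$ grows; increasing $K$ shrinks $\kappa\approx 2^{-O(K\log K)}$ faster than it shrinks $\sigma_K$, so the bootstrapping between $K$, $\kappa$, and $\eta$ diverges rather than converging to the stated $\kappa(\tau)=2^{-O(\log^3(1/\tau)/\tau^2)}$, $\eta(\tau)=2^{-1/\kappa(\tau)}$. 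The fix is precisely to replace the Hoeffding-only closeness argument with the critical-index anti-concentration bound $\Pr[|w\cdot x-w_0|\le\sqrt{t}\,\sigma_K]\le 2^{-t}$ from Lemma~\ref{lem:chow-OS}, which controls $\dist(f,f')$ without requiring the tail mass itself to be doubly-exponentially small.
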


Let $0<\tau<\delta$ be sufficiently small universal constants
(to be chosen later).
Given any balanced LTF $f(x)=\sign(w\cdot x)$, we consider the LTF
$f' = \sgn(w' \cdot x)$, $\|w'\|_2=1$,
obtained from Theorem~\ref{thm:os-perturb},
so $\dist(f, f') \leq \eta(\tau)$ and $\Pr_x[ |w' \cdot x | \leq \kappa(\tau)] \leq \tau.$
We will exploit the anti-concentration of $w' \cdot x$ to establish the theorem for $f'$. We will then use the fact that $f$ and $f'$ are close in Hamming distance to complete the theorem.

We apply Fact~\ref{fact:close} for the degree-$1$ Fourier vectors of $f$ and $f'$, i.e. $a_i = \wh{f}(i)$ and $b_i = \wh{f'}(i)$, $i \in [n]$.
Note that Parseval's identity gives that $\littlesum_{i=1}^n (\wh{f}(i))^2 \leq 1$ and $\littlesum_{i=1}^n (\wh{f'}(i))^2 \leq 1$. Moreover, Plancherel's identity implies
that $$\littlesum_{i=1}^n (\wh{f}(i) - \wh{f'}(i))^2 \leq \littlesum_{S \subseteq [n]} (\wh{f}(S) - \wh{f'}(S))^2 = \E_x [(f(x)-f'(x))^2] = 4\dist(f, f') \leq  4\eta.$$ Therefore,
\begin{equation} \label{eqn:close}
\left| \w^1[f] - \w^1[f'] \right| \leq 4\sqrt{\eta}.
\end{equation}
Therefore, Fact~\ref{fact:close} gives that

The above equation implies that if we show the theorem for $f'$ we are done as long as $\eta$ is sufficiently small. We can guarantee this by making $\tau$
sufficiently small. To show the theorem for $f'$, we consider two possibilities depending on whether the vector $w'$ defining $f'$ is $\delta$-regular
\newa{(where $\delta$ will be determined later).}


\medskip

\noindent {\bf [Case I: $w'$ is $\delta$-regular]}
In this case, we use  the following result from~\cite{MORS:10}:

\begin{theorem}[Theorem~48 in~\cite{MORS:10}] \label{thm:reg-w1}
Let $\delta>0$ be a sufficiently small universal constant  and $f$ be a $\delta$-regular LTF.  Then
$|\w^1[f] - W(\E[f]) | \leq \delta^{1/6}.$
\end{theorem}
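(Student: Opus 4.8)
The plan is to compare $f$ with the associated one-dimensional Gaussian threshold function and read off $\w^1[f]$ from that comparison. Recall that $W(\mu)$ denotes the level-$1$ Hermite weight of the one-dimensional standard-Gaussian threshold function with mean $\mu$. Normalize so that $f(x)=\sgn(\langle w,x\rangle-\theta)$ with $\|w\|_2=1$, $w$ being $\delta$-regular, and let $g(z)=\sgn(z-\theta)$, viewed as a function of $G\sim N(0,1)$; then $\E[g]=1-2\Phi(\theta)$ and $W(\E[g])=\widehat g(1)^2=\big(\E_G[G\,\sgn(G-\theta)]\big)^2=\big(2\phi(\theta)\big)^2$, where $\phi$ is the standard Gaussian density. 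Since $\langle w,x\rangle\sim N(0,1)$ under $x\sim\mathcal N^n$, Fact~\ref{fact:gaussian-vs-reg}(i) gives $|\E[f]-\E[g]|=O(\delta)$, and $W$ is easily seen to be $O(1)$-Lipschitz on $[-1,1]$, so it suffices to prove $\big|\w^1[f]-(2\phi(\theta))^2\big|\le\delta^{1/6}-O(\delta)$. First dispose of huge thresholds: if $|\theta|\ge\Theta(\sqrt{\log(1/\delta)})$ then the Hoeffding bound (Theorem~\ref{thm:chb}) forces $1-\E[f]^2=O(e^{-\theta^2/2})\le\tfrac14\delta^{1/6}$, and since both $\w^1[f]$ and $W(\E[f])$ are at most $1-\E[f]^2$ the bound is immediate. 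So assume henceforth $|\theta|=O(\sqrt{\log(1/\delta)})$.

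For the lower bound, Cauchy--Schwarz and $\|w\|_2=1$ give $\w^1[f]=\sum_i\widehat f(i)^2\ge\big(\sum_i\widehat f(i)w_i\big)^2=\big(\E_x[f(x)\langle w,x\rangle]\big)^2$ (expectation over uniform $x\in\bn$). Using the identity $f(x)\langle w,x\rangle=|\langle w,x\rangle-\theta|+\theta f(x)$, Fact~\ref{fact:gaussian-vs-reg}(ii) on the first term and Fact~\ref{fact:gaussian-vs-reg}(i) (scaled by $\theta$) on the second yield $\E_x[f(x)\langle w,x\rangle]=\E_G[|G-\theta|]+\theta\E[g]+O(|\theta|\delta)=\E_G[G\,\sgn(G-\theta)]+O(\delta\sqrt{\log(1/\delta)})=2\phi(\theta)+O(\delta\sqrt{\log(1/\delta)})$, whence $\w^1[f]\ge(2\phi(\theta))^2-O(\delta\sqrt{\log(1/\delta)})$, which is well inside the required slack.

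The upper bound is the crux. For any $\rho\in(0,1)$, nonnegativity of Fourier weights gives $\mathbf{NS}_{\rho}[f]=\sum_S\rho^{|S|}\widehat f(S)^2\ge\E[f]^2+\rho\,\w^1[f]$, i.e. $\w^1[f]\le(\mathbf{NS}_{\rho}[f]-\E[f]^2)/\rho$. On the Gaussian side, $\mathbf{NS}_{\rho}[g]=\sum_{k\ge0}\rho^k\widehat g(k)^2$ with $\widehat g(0)=\E[g]$ and $\widehat g(1)=2\phi(\theta)$, so $(\mathbf{NS}_{\rho}[g]-\E[g]^2)/\rho=(2\phi(\theta))^2+\sum_{k\ge2}\rho^{k-1}\widehat g(k)^2\le(2\phi(\theta))^2+\rho$. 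The heart of the argument is to show $\big|\mathbf{NS}_{\rho}[f]-\mathbf{NS}_{\rho}[g]\big|\le O(\delta^{c_0})$ for some universal $c_0>0$, uniformly over $\rho$ bounded away from $1$: since $w$ is $\delta$-regular, the joint distribution of $(\langle w,x\rangle,\langle w,y\rangle)$ for a $\rho$-correlated pair $x,y$ of uniform points is $O(\delta)$-close in cdf distance to a $\rho$-correlated pair of standard Gaussians (a two-dimensional Berry--Ess\'een estimate in the spirit of Fact~\ref{fact:be}), and the two sign discontinuities of $F(a,b)=\sgn(a-\theta)\sgn(b-\theta)$ are handled by replacing each $\sgn$ with a linear ramp of width $\epsilon$, at the cost of $O(\epsilon)$ anti-concentration mass near the jumps and $O(\delta/\epsilon)$ mollified invariance error; optimizing $\epsilon$ gives $c_0=\tfrac12$ (any $c_0\ge\tfrac13$ suffices below). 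Combining the three estimates with $|\E[f]^2-\E[g]^2|=O(\delta)$ gives $\w^1[f]\le(2\phi(\theta))^2+O(\rho)+O(\delta^{c_0}/\rho)$, and choosing $\rho=\delta^{c_0/2}$ makes the right-hand side at most $(2\phi(\theta))^2+\delta^{1/6}$ for $\delta$ small enough; transferring from $(2\phi(\theta))^2=W(\E[g])$ back to $W(\E[f])$ via Lipschitzness then closes both bounds.

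The step I expect to be the real obstacle is the estimate $\big|\mathbf{NS}_{\rho}[f]-\mathbf{NS}_{\rho}[g]\big|\le O(\delta^{c_0})$. A naive coordinate-by-coordinate Berry--Ess\'een bound is useless here, because the per-coordinate error (of order $\delta$, driven by the single largest weight) fails to decay for the arbitrarily many coordinates carrying negligible weight, so $\ell_2$-summing these errors can blow up with $n$; one is forced to compare the two-dimensional linear images directly and to mollify the sign discontinuity, and the unavoidable tension between the mollification width and the Berry--Ess\'een error is precisely what degrades the final bound from $O(\delta)$ to a fractional power such as $\delta^{1/6}$. The unboundedness of $\theta$ is only a minor nuisance, handled by the preliminary case split above.
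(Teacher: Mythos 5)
The paper does not prove this statement; it is imported verbatim as Theorem~48 of \cite{MORS:10}, and is used as a black box (e.g.\ inside the proof of Lemma~\ref{lem:reg-fourier-vs-hermite} and in Case~I of Section~\ref{sec:w1-os}). So there is no proof in this paper to compare yours against. Evaluated on its own terms, your outline is sound and, as far as I can tell, genuinely establishes the result. The lower bound via Cauchy--Schwarz/Plancherel and Fact~\ref{fact:gaussian-vs-reg} is clean, the preliminary reduction to $|\theta|=O(\sqrt{\log(1/\delta)})$ via Hoeffding and $\w^1[f]\le 1-\E[f]^2$, $W(\E[f])\le 1-\E[f]^2$ is correct, and the noise-stability upper bound $\w^1[f]\le(\mathbf{NS}_\rho[f]-\E[f]^2)/\rho$ is exactly the right device to sidestep the $\ell_2$-summation blow-up you correctly flag as fatal to a naive per-coordinate Berry--Ess\'een argument.

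The one place you should be more careful is the mollified-invariance estimate, which you yourself identify as the crux. Your claimed $O(\delta/\epsilon)$ error for the smoothed $F$ is not what a Lindeberg replacement gives (that costs $\|\partial^3\tilde F\|_\infty\cdot\sum|w_i|^3\sim\delta/\epsilon^3$, yielding only $c_0=1/4$ and, after the $\rho$-optimization, $\delta^{1/8}$, which does \emph{not} beat $\delta^{1/6}$). The $O(\delta/\epsilon)$ rate is recoverable, but only by a different mechanism: pair the $1/\epsilon$-Lipschitz constant of $\tilde F$ with an $O(\delta)$ bound on the Wasserstein-$1$ distance between the $\rho$-correlated Boolean pair $(w\cdot x,w\cdot y)$ and the corresponding Gaussian pair, rather than with a third-derivative replacement bound. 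Cleaner still, and what I would recommend writing down, is to drop the mollification entirely: each of the four quadrant events $\{a\gtrless\theta, b\gtrless\theta\}$ is convex in $\R^2$, so the multivariate CLT for convex sets (Bentkus) gives
$|\mathbf{NS}_\rho[f]-\mathbf{NS}_\rho[g]|\le C(1-\rho)^{-3/2}\sum_i|w_i|^3=O(\delta)$
outright (the $(1-\rho)^{-3/2}$ is harmless since you will take $\rho=\delta^{\Theta(1)}$), i.e.\ $c_0=1$ and a final bound of $O(\delta^{1/2})$. One small formal point: you assert ``any $c_0\ge 1/3$ suffices,'' but $c_0=1/3$ exactly gives only $O(\delta^{1/6})$ with an unabsorbable constant; you need strict inequality $c_0>1/3$ to hide the constants in the slack for $\delta$ small, which both the Wasserstein route ($c_0=1/2$) and the Bentkus route ($c_0=1$) comfortably provide.
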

\newa{We give a full description of the $W(\cdot)$ function in Section~\ref{ssec:gaussian-basics}; here we only will use the fact that
$W(0) = 2/\pi$.}  Theorem~\ref{thm:reg-w1} thus gives that $ \w^1[f']  \geq  \frac{2}{\pi}  - \delta^{1/6}$
and by (\ref{eqn:close}) we obtain
\begin{equation} \label{eqn:reg-bound}
 \w^1[f]  \geq  \frac{2}{\pi}  - \delta^{1/6} - 4 \sqrt{\eta}.
\end{equation}
This quantity can be made arbitrarily close to $2/\pi$ by selecting $\delta, \tau$ to be small enough constants.

\medskip

\noindent {\bf [Case II: $w'$ is {\em not} $\delta$-regular]}
In this case, we have that $|w'_1| = \max_i |w'_i| >\delta$.  Let us assume without loss of generality that $w'_1>0$. (The other case is entirely similar.)
By Proposition~\ref{prop:gl} and Fact~\ref{fac:even} we have
$$
 \w^1[f']  \geq \left(\kk(w')\right)^2 =  1 - \Var (\ell) \geq 1 - (1/2)
\cdot \Inf(\ell),
$$
where $\ell(x)=|w' \cdot x|.$
Lemma~\ref{lem:inf-ell} already implies that $\Var[\ell] \leq 1/2$, but we are able to prove a better upper bound in this case.
To prove a better upper bound on the variance, we exploit  that $w'_1>\delta$ to upper bound $\Inf_1(\ell)$ by a quantity strictly smaller than $(w'_1)^2$.
For this, we recall the following result from~\cite{MORS:10}:

\begin{theorem}[Theorem~39 in~\cite{MORS:10}] \label{thm:large-weight}
Let $f(x) = \sgn(\sum_{i=1}^n w_i x_i - w_0)$ be an LTF such that $\sum_{i} w_i^2 = 1$ and
$\delta \eqdef |w_1|$
$ \geq $
$|w_i|$ for all $i \in [n].$ Let $0 \leq
\eps \leq 1$ be such that $|\E[f]| \leq 1 - \eps$. Then
$|\hat{f}(1)| = \Omega(\delta \eps^6 \log(1/\eps)).$
\end{theorem}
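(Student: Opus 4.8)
Replacing $x_1$ by $-x_1$ if necessary — which changes neither $|\hat f(1)|$ nor the hypotheses — I may assume $w_1>0$, so $f$ is monotone non-decreasing in $x_1$; then $\hat f(1)\ge 0$ by Fact~\ref{fact:ordering} and $\hat f(1)=\Inf_1(f)=\Pr_x[x_1\text{ is pivotal for }f]$ by Fact~\ref{fact:influences}. Flipping $x_1$ flips $f$ precisely when $|w_0-g(x)|\le w_1=\delta$, where $g(x):=\sum_{i=2}^n w_ix_i$, so
\[
\hat f(1)=\Pr_x\!\left[\,g(x)\in[w_0-\delta,\ w_0+\delta)\,\right].
\]
Hence the theorem is an anti-concentration statement: the Rademacher sum $g$, all of whose coefficients are $\le\delta$ in absolute value and whose $\ell_2$-norm is $\sigma:=\sqrt{1-\delta^2}\le 1$, must put mass $\Omega(\delta\eps^6\log(1/\eps))$ on the length-$2\delta$ interval centred at $w_0$. (If $\delta$ is close to $1$ the tail vector $v:=(w_2,\dots,w_n)$ has tiny norm and the claim is easy; so assume $\delta$ bounded away from $1$, whence $\sigma=\Theta(1)$.)

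\textbf{The bias hypothesis makes $w_0$ central, and the regular case.} From $|\E[f]|\le 1-\eps$ we get $\Pr[f=1],\Pr[f=-1]\ge\eps/2$; since $f(x)=1$ forces $g(x)\ge w_0-\delta$ and $f(x)=-1$ forces $g(x)<w_0+\delta$, this gives $\Pr[g\ge w_0-\delta]\ge\eps/2$ and $\Pr[g\le w_0+\delta]\ge\eps/2$. Feeding these into the Hoeffding bound (Theorem~\ref{thm:chb}) for $g$ (increments with squares summing to $\sigma^2\le 1$) yields $|w_0|=O(\sqrt{\log(1/\eps)})$; the same argument applied conditionally shows that for any contiguous block of the coordinates $\{2,\dots,n\}$, some fixing of the rest leaves the block facing a threshold within $O(\|\mathrm{block}\|_2\sqrt{\log(1/\eps)})$ of its mean $0$. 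Now fix $\tau:=c_0\,\delta\,\eps^{6}$. If $v$ is $\tau$-regular, Fact~\ref{fact:be} says the c.d.f.\ of $g$ is within $2\tau$ of that of $\mathcal N(0,\sigma^2)$, so $\hat f(1)\ge\Phi\big([(w_0-\delta)/\sigma,(w_0+\delta)/\sigma]\big)-2\tau$; since $|w_0|/\sigma=O(\sqrt{\log(1/\eps)})$ the Gaussian density over that length-$2\delta/\sigma$ window is $\ge e^{-O(\log(1/\eps))}=\eps^{O(1)}$, so the $\Phi$-term is $\Omega(\delta\eps^{O(1)})$ and dominates $2\tau$ for $c_0$ small. (Tracking constants one actually gets $\hat f(1)=\Omega(\delta\eps)$ here; the heavier loss is incurred only below.)

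\textbf{Non-regular case, via the critical index.} If $v$ is not $\tau$-regular, let $k=c(v,\tau)<\infty$ be its $\tau$-critical index and split $g=g_H+g_T$ into the pre-critical ``head'' sum (a function of $<k$ bits) and the $\tau$-regular ``tail'' of norm $\sigma_T:=\|v_T\|_2$, which decays geometrically in $k$ by Fact~\ref{fact:small-tail}. The subtlety is that one must not condition on the entire head — a $k=\poly(1/\eps)$ would make a $2^{-k}$ factor fatal. The plan: if $\sigma_T\sqrt{\log(1/\eps)}<\delta/4$, then $|g_T|<\delta/2$ except with probability $\le\eps/4$ (Hoeffding), so $\hat f(1)\ge\Pr[g_H\in[w_0-\delta/2,w_0+\delta/2]]-\eps/4$, reducing the claim to the \emph{same} anti-concentration statement for the head form $g_H$; since the head weights still decay geometrically, one iterates the decomposition, peeling off the dominant head coordinate, and whenever the two leading head weights become comparable (so they span an interval of size $\Theta(\delta)$) one invokes the anti-concentration primitives already established in this paper — Corollary~\ref{cor:1}, Claims~\ref{claim:odd} and~\ref{claim:even}, and Lemma~\ref{lem:3} — to plant $\Omega(\kappa(\delta))$ mass in a $\delta$-window. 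If instead $\sigma_T\sqrt{\log(1/\eps)}\ge\delta/4$, convolution with the essentially-Gaussian $g_T$ of width $\sigma_T$ already smooths things: $\hat f(1)\approx\frac{2\delta}{\sigma_T}\,\E_{g_H}[\phi_{\sigma_T}(w_0-g_H)]\ge\frac{0.4\,\delta}{\sigma_T}\Pr[\,|w_0-g_H|\le\sigma_T\,]$, so it suffices to anti-concentrate $g_H$ in a window of width $2\sigma_T$, again by the peeling argument — now with $k=O(\log(1/\eps)/\tau^2)$, and fixing only $O(\log(1/\eps)/\tau^2)$ coordinates at a time, so that after each such block Fact~\ref{fact:small-tail} has driven the residual norm below $\poly(\eps)$ and we never pay worse than a $\poly(\eps)$-sized block probability.

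\textbf{Main obstacle.} Everything except the non-regular case is routine bookkeeping of $\poly(\eps)$ factors; the real work is the local-CLT-flavoured lower bound inside the last step — that a \emph{general} Rademacher sum with coefficients $\le\delta$ and $\ell_2$-norm $\le 1$, evaluated in a length-$\Theta(\delta)$ window around a point central for it, carries mass $\Omega(\delta\cdot\poly(\eps))$. The one place this can genuinely degrade is when $g_H$ is (nearly) supported on a coarse lattice — and that is precisely where the pre-critical geometric decay is available to let us either shrink the effective dimension or match the window width to the lattice spacing; carrying out that reduction quantitatively, and verifying that the accumulated losses combine to $\Omega(\delta\eps^6\log(1/\eps))$, is where the polynomial exponent comes from.
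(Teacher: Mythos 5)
This statement is \textbf{Theorem~39 of~\cite{MORS:10}}, which the paper you are reading cites as a black box; there is no proof of it in this paper to compare your attempt against. So I can only assess your sketch on its own merits, and there I see a real gap.

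Your ``regular case'' is fine: if the residual vector $v=(w_2,\dots,w_n)$ is $\tau$-regular for $\tau=c_0\delta\eps^6$, then Berry--Ess\'een plus the observation that $|w_0|=O(\sqrt{\log(1/\eps)})$ indeed puts $\Omega(\delta\,\eps^{O(1)})$ Gaussian mass in the length-$2\delta$ window, beating the $O(\tau)$ Berry--Ess\'een error. The reduction of $\hat f(1)=\Inf_1(f)$ to the anti-concentration probability $\Pr[g\in[w_0-\delta,w_0+\delta)]$ is also correct.

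The problem is the non-regular case, and you already half-sense it. You propose to handle it by reducing to the anti-concentration primitives of \emph{this} paper --- Lemma~\ref{lem:2}, Corollary~\ref{cor:1}, Lemma~\ref{lem:3}, Claims~\ref{claim:odd} and~\ref{claim:even} --- but those give probabilities of the form $\kappa(\delta)=2^{-\Theta(K_0)}$ with $K_0=\Theta(\tau^{-2}\log(1/\delta))$ for $\tau\le\delta/15$. In other words they are exponentially small in $\mathrm{poly}(1/\delta)$, not linear in $\delta$, and if one chooses the regularity parameter $\tau=c_0\delta\eps^6$ (as your sketch does), the critical index $k$ can be as large as $\Theta(\tau^{-2}\log(1/\tau))=\mathrm{poly}(1/\delta,1/\eps)$, so the $2^{-k}$ you would pay for conditioning the head is astronomically below the target $\Omega(\delta\,\eps^6\log(1/\eps))$. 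Your remark that ``we never pay worse than a $\mathrm{poly}(\eps)$-sized block probability'' is not justified by anything in the sketch: each time you fix a block of $m$ head bits you pay $2^{-m}$, and nothing forces $m=O(\log(1/\eps))$. The decomposition ``peel one dominant head coordinate at a time until the two leading weights are comparable'' suffers the same issue --- nothing bounds the number of peelings by $O(\log(1/\eps))$ when $\delta$ itself is small. This is not routine bookkeeping; it is the crux, and the tools you are importing are of the wrong quantitative shape. A correct proof of the stated bound must avoid a blanket $2^{-c(w,\tau)}$ union bound over the head (for instance by showing that the head sum itself, without conditioning, already lands in a length-$\Theta(\delta)$ window with polynomial probability given the bias constraint), and your sketch does not supply that argument.
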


\noindent We can now state and prove our main lemma for this case:

\begin{lemma}\label{lem:inf1-ell}
In the context of Case II, we have $ \Inf_1(\ell)  \leq  (w'_1)^2 - 2 (\wh{f'}(1) - 2 \tau) \kappa(\tau) w'_1+(\wh{f'}(1) - 2\tau)\kappa(\tau)^2.$
\end{lemma}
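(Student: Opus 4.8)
The plan is to reduce everything to the variance identity already extracted in the proof of Lemma~\ref{lem:inf-ell}. Write $a = w'_1$ (so $a > \delta$ in Case~II), abbreviate $\kappa = \kappa(\tau)$, and set $c_1 = c_1(x) = \littlesum_{j=2}^n w'_j x_j$, which does not depend on $x_1$. Exactly as in Lemma~\ref{lem:inf-ell}, $\ell(x) = |a x_1 + c_1|$, $\E_{x_1}[\ell^2(x)] = a^2 + c_1^2$, and $\E_{x_1}[\ell(x)] = \max\{a,|c_1|\}$, so that $\Var_{x_1}[\ell(x)] = a^2 + c_1^2 - \max\{a^2,c_1^2\} = \min\{a^2,c_1^2\}$ and hence $\Inf_1(\ell) = \E_{x^{(1)}}[\min\{a^2,c_1^2\}]$. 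Thus I want to show that $c_1^2$ is strictly smaller than $a^2$ — in fact at most $(a-\kappa)^2$ — on a large fraction of the $x^{(1)}$'s, the relevant fraction being $\widehat{f'}(1) - 2\tau$. Since $\kappa(\tau) = 2^{-O(\log^3(1/\tau)/\tau^2)}$ is tiny while $a > \delta$ is a fixed constant, we have $a > \kappa$ for $\tau$ small enough, so $(a-\kappa)^2 \leq a^2$ and the two strips introduced below do not overlap.

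The heart of the argument is the claim that $\Pr_{x^{(1)}}[\,|c_1| \leq a - \kappa\,] \geq \widehat{f'}(1) - 2\tau$. First I would compute $\widehat{f'}(1)$ exactly by conditioning on $x^{(1)}$: $\E_{x_1}[f'(x)x_1] = \tfrac12(\sgn(c_1 + a) - \sgn(c_1 - a))$, which is $1$ when $c_1 \in [-a,a)$ and $0$ otherwise, so $\widehat{f'}(1) = \Pr_{x^{(1)}}[c_1 \in [-a,a)]$. Since $\{|c_1| \leq a - \kappa\} = \{c_1 \in [-a+\kappa,\, a-\kappa]\}$, the gap between $\widehat{f'}(1)$ and $\Pr_{x^{(1)}}[|c_1| \leq a-\kappa]$ is exactly the probability of the two ``boundary strips'' $E = \{c_1 \in (a-\kappa, a)\}$ and $E' = \{c_1 \in [-a, -a+\kappa)\}$, which are disjoint. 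The key observation is that each strip maps into the low-margin event of $f'$: if $x^{(1)} \in E$ then the point $x$ with $x_1 = -1$ has $w' \cdot x = -a + c_1 \in (-\kappa, 0)$, and if $x^{(1)} \in E'$ then the point $x$ with $x_1 = +1$ has $w' \cdot x = a + c_1 \in [0,\kappa)$; both satisfy $|w' \cdot x| < \kappa(\tau)$. Since these two events in $x$-space fix $x_1$ to opposite values they are disjoint, and each lies inside $\{|w'\cdot x| \leq \kappa(\tau)\}$, so $\tfrac12(\Pr_{x^{(1)}}[E] + \Pr_{x^{(1)}}[E']) \leq \Pr_x[|w'\cdot x| \leq \kappa(\tau)] \leq \tau$ by the anti-concentration property of $f'$ from Theorem~\ref{thm:os-perturb}. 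Thus $\Pr_{x^{(1)}}[E \cup E'] \leq 2\tau$, and the claim follows by subtracting from $\widehat{f'}(1) = \Pr_{x^{(1)}}[c_1 \in [-a,a)]$.

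With the claim in hand, the lemma is a one-line averaging step: letting $p = \widehat{f'}(1) - 2\tau$, I bound $\min\{a^2,c_1^2\} \leq (a-\kappa)^2$ on the event $\{|c_1|\leq a-\kappa\}$ (probability $\geq p$) and $\min\{a^2,c_1^2\} \leq a^2$ elsewhere, giving $\Inf_1(\ell) \leq p(a-\kappa)^2 + (1-p)a^2 = a^2 - 2pa\kappa + p\kappa^2$; substituting $a = w'_1$, $\kappa = \kappa(\tau)$, $p = \widehat{f'}(1)-2\tau$ yields precisely the stated inequality (and if $p < 0$ the inequality is vacuous, since its right-hand side then exceeds $a^2$, while $\Inf_1(\ell) \leq a^2$ always). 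I expect the only place requiring care is the boundary-strip bookkeeping in the middle paragraph — keeping the containments and disjointness straight, and ensuring $\tau$ is small enough relative to the fixed constant $\delta$ that $a - \kappa > 0$; the rest is the identity from Lemma~\ref{lem:inf-ell} plus elementary algebra.
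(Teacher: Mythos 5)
Your proof is correct and follows essentially the same route as the paper: both decompose $\Inf_1(\ell)=\E_{x^{(1)}}[\Var_{x_1}\ell]=\E_{x^{(1)}}[\min\{(w'_1)^2,c_1^2\}]$, identify an event of probability at least $\wh{f'}(1)-2\tau$ on which $|c_1|\leq w'_1-\kappa(\tau)$ (you describe it via the two boundary strips of width $\kappa$, the paper via ``good pairs'' that are bichromatic and have margin $\geq\kappa(\tau)$, but these are the same event), and then average. If anything your bookkeeping of the factor $2\tau$, via disjointness of the two strips across $x_1=\pm1$, is slightly more explicit than the paper's union-bound sentence, but the argument is the same.
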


\begin{proof}
Since $w'_1>\delta$ and $\littlesum_{i=1}^n (w'_i)^2=1$,
an application of Theorem~\ref{thm:large-weight} gives $\Inf_1(f') = \wh{f'}(1) > c_1 \cdot w'_1$, where $c_1$ is a universal constant.

To analyze the desired quantity, we partition the hypercube $\bn$ into
pairs $(x^+ , x^-)$
that differ only in the fist coordinate with $x^{+}_1 = 1$ and $x^{-}_1 = -1.$ That is $x^+ = (1, y)$ and $x^- = (-1, y)$ with $y \in \bits^{n-1}.$
We say that such a pair is ``good"  if both the following conditions hold:
(1) the corresponding hypercube edge is bi-chromatic (i.e.
$f'(x^{+})=1$ and $f'(x^{-})=-1$)\footnote{This
is the only possibility since
$w'_1>0$, hence $f'$ is monotone nondecreasing in $x_1$.},
and
(2) $\min \{ |w' \cdot x^+| , |w' \cdot x^-| \} \geq \kappa(\tau)$.
It is easy to see that the fraction of pairs that are ``good''
is at least $\wh{f'}(1) - 2\tau$, i.e.
$\Pr_{y \in \bits^{n-1}} [\mathcal{G}] \geq \wh{f'}(1) - 2 \tau$,
where ${\cal G}$ is the event $\mathcal{G} = \{ y\in \bits^{n-1} ~\mid~ $
the pair $(1, y),(-1, y)$ is good$\}.$ Indeed,
the probability that the edge $(1, y)$, $(-1, y)$ is monochromatic is $1-\Inf_1(f') = 1-\wh{f'}(1)$ and the probability that either $|w' \cdot x^+| \leq \kappa(\tau)$ or
 $|w' \cdot x^-| \leq \kappa(\tau)$ is at most $\tau$, hence the claim follows by a union bound.

 Now if $y \in \bits^{n-1}$ is such that the corresponding pair $x^+ = (1, y)$ and $x^- = (-1, y)$ is good,
 we have that $|w' \cdot x^+| = w'_1 + c' \geq \kappa(\tau)$ and $|w' \cdot x^-| =  w'_1 - c' \geq \kappa(\tau)$, where $c' = (w'_2, \ldots, w'_n) \cdot y.$
 From this we deduce that $|c'| \leq |w'_1 - \kappa(\tau)| \leq |w'_1|$,
\newa{where the second inequality holds for a sufficiently small choice
of $\tau$.}
Hence, the analysis of Lemma~\ref{lem:inf-ell} yields that in this case
$\Var[\ell(x_1, y)]  = c'^2 \leq \left( w'_1 - \kappa(\tau) \right)^2.$ In all other cases, Lemma~\ref{lem:inf-ell} yields the upper bound
$\Var[\ell(x_1, y)]  = c'^2 \leq (w'_1)^2.$ We can thus bound from above the desired influence as follows:
\begin{eqnarray*}
\Inf_1(\ell) &= & \E_{y \in \bits^{n-1}} [\Var[\ell(x_1, y)]] \\
                   &\leq& (\wh{f'}(1) - 2\tau)  \cdot \left( w'_1 - \kappa(\tau) \right)^2 + (1 - \wh{f'}(1) + 2\tau)(w'_1)^2 \\
                   &\leq& (w'_1)^2 - 2 (\wh{f'}(1) - 2\tau)\kappa(\tau)w'_1+ (\wh{f'}(1) - 2\tau)\kappa(\tau)^2 .
\end{eqnarray*}
This completes the proof.
\end{proof}

Combining Lemma~\ref{lem:inf1-ell} with our earlier arguments, we obtain
$$
\w^1[f'] \geq \frac{1}{2} +  (\wh{f'}(1) - 2\tau) \kappa(\tau) w'_1 -\frac{ (\wh{f'}(1) - 2\tau)\kappa(\tau)^2}{2}
$$
and using (\ref{eqn:close}) we conclude
\begin{equation} \label{eqn:nonreg-bound}
\w^1[f] \geq  \frac{1}{2} +  (\wh{f'}(1) - 2\tau) \kappa(\tau) w'_1- \frac{ (\wh{f'}(1) - 2\tau)\kappa(\tau)^2}{2}-4\sqrt{\eta(\tau)}.
\end{equation}

At this point it is straightforward to complete the proof of Theorem~\ref{thm:w1}. Indeed,  we select $\delta>0$ to be a sufficiently small constant and $\tau \eqdef c_1 \cdot \delta /4 \ll \delta$. First, note that the bound of (\ref{eqn:reg-bound})  for the regular case can be made arbitrarily close to $2/\pi$. Regarding the bound of (\ref{eqn:nonreg-bound}) for the non-regular case observe that
 $$ \wh{f'}(1) - 2\tau > c_1 \delta - c_1 \delta/2 = c_1 \delta/2$$
which means that the advantage over $1/2$ is at least
$$ (1/2)\cdot c_1 \delta^2 \kappa(\tau) - (1/4) \cdot c_1 \delta \kappa(\tau)^2- 4\sqrt{\eta(\tau)} $$
which is lower bounded by a universal positive constant, since the
second and the third terms are  negligible compared to the first for
our choice of parameters.
This concludes the proof of Theorem~\ref{thm:w1}.



\section{Proof of Theorem~\ref{thm:w1-algo}:  An approximation algorithm for $\w^{\leq 1}[\ltf]$}

\ifnum\short=0
Our approach heavily uses Gaussian analysis, so we record some basic definitions
and facts that we will need below.

\subsection{Gaussian Facts} \label{ssec:gaussian-basics}

\begin{definition} \label{def:normal}
We write $\phi$ for the probability density function of a standard (i.e. zero mean, unit variance) Gaussian; i.e. $\phi(t) = (2\pi)^{-1/2} e^{-t^2/2}$.
We denote by $\Nd(0,1)$ the corresponding distribution and by $\Nd(0,1)^n$ (or $\Nd^n$) the standard $n$-dimensional Gaussian distribution.
\end{definition}

\begin{fact} (Rotational Invariance)  Let $U : \mathbb{R}^n \rightarrow \mathbb{R}^n$ be a unitary transformation, i.e., $U^t U = I$.
If $x \sim \Nd(0,1)^n$, then $Ux \sim \Nd(0,1)^n$.
\end{fact}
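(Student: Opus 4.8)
The plan is to prove rotational invariance of $\Nd(0,1)^n$ directly from its density, via the change-of-variables formula; I will also record the one-line characteristic-function alternative at the end. First I would recall that $x \sim \Nd(0,1)^n$ has density $p(x) = (2\pi)^{-n/2}\exp(-\|x\|_2^2/2)$ on $\R^n$, and isolate the structural fact that makes everything work: $p(x)$ depends on $x$ only through $\|x\|_2$, so $p$ is invariant under any norm-preserving linear map.

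Next I would collect three elementary consequences of $U^t U = I$ for a \emph{square} matrix $U$: (i) $U$ is invertible with $U^{-1} = U^t$, and $U^t$ is itself unitary, since $(U^t)^t U^t = U U^t = I$; (ii) $(\det U)^2 = \det(U^t)\det(U) = \det(U^t U) = 1$, so $|\det U| = |\det U^{-1}| = 1$; and (iii) $\|Ux\|_2^2 = (Ux)^t(Ux) = x^t U^t U x = x^t x = \|x\|_2^2$ for every $x$, i.e. $U$ (and likewise $U^{-1} = U^t$) is a Euclidean isometry.

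The main step is then the change of variables. For any Borel set $A \subseteq \R^n$,
\[
\Pr[Ux \in A] = \Pr[x \in U^{-1}A] = \int_{U^{-1}A} p(x)\, dx = \int_A p(U^{-1} y)\, |\det U^{-1}|\, dy = \int_A p(y)\, dy ,
\]
where the third equality is the substitution $y = Ux$ (so $x = U^{-1}y$), and the last uses $|\det U^{-1}| = 1$ from (ii) together with $p(U^{-1}y) = p(y)$, which holds because $\|U^{-1}y\|_2 = \|y\|_2$ by (iii) and $p$ depends on its argument only through its norm. Since this holds for every Borel $A$, the law of $Ux$ is exactly $\Nd(0,1)^n$.

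There is essentially no obstacle here: the only points needing (minimal) care are observing that $U^t U = I$ for a square matrix forces $U$ to be invertible with $|\det U| = 1$, so the change of variables is legitimate and the Jacobian factor equals $1$, and noting that norm-preservation is inherited by $U^{-1} = U^t$. If one prefers to avoid the change-of-variables formula altogether, an equivalent one-line argument uses characteristic functions: the characteristic function of $\Nd(0,1)^n$ at $t \in \R^n$ is $\exp(-\|t\|_2^2/2)$, so that of $Ux$ at $t$ is $\E\!\left[\exp(i\, t^t U x)\right] = \E\!\left[\exp\!\big(i\,(U^t t)^t x\big)\right] = \exp(-\|U^t t\|_2^2/2) = \exp(-\|t\|_2^2/2)$ by (iii); by uniqueness of characteristic functions, $Ux \sim \Nd(0,1)^n$.
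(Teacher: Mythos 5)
Your proof is correct. The paper states this as a standard Fact (in Section~\ref{ssec:gaussian-basics}) without giving any proof, so there is nothing in the paper to compare against; your change-of-variables argument is the textbook one, and the steps — $|\det U|=1$ so the Jacobian is trivial, $\|U^{-1}y\|_2=\|y\|_2$ so the density is invariant — are exactly right, as is the alternative characteristic-function computation.
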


\begin{definition}
Let $h_{\theta} : \R \to \bits$ denote the function of one Gaussian random variable $x$ given by $h_{\theta}(x) =\sgn(x-\theta)$.
\end{definition}

\begin{definition}
The function $\mu : \R \cup \{\pm \infty\} \to [-1,1]$ is defined as $\mu (\theta) = \E_{x \sim \Nd(0,1)}[h_{\theta}(x)]$.
Explicitly, $\mu(\theta) = -1 + 2 \int_{\theta}^{\infty} \phi(x) dx $.
We note that $\mu$ is strictly monotone decreasing, hence invertible on $[-1,1]$.
\end{definition}

\begin{definition}
The function $W : [-1,1] \to [0,2/\pi]$ is defined by
$W(x) = (2 \phi(\mu^{-1}(x)))^2$. Equivalently, $W$
is  defined so that $W(\mu(\theta)) =  (2 \phi(\theta))^2$.
\end{definition}

\noindent The next two facts appear as Propositions 24 and 25 in~\cite{MORS:10} respectively.

\begin{fact}\label{fact:gaussian-functions}
Let $X \sim \Nd(0,1)$. We have:  

\begin{itemize}

\item{(i)} $\E [ |X - \theta| ] = 2\phi(\theta) - \theta \mu(\theta)$,
\item {(ii)} $|\mu'| \leq \sqrt{2/\pi}$ everywhere and $|W'| < 1$ everywhere, 
and
\item {(iii)} If $|\nu| = 1-\eta$, then $W(\nu) = \Theta(\eta^2 \log(1/\eta)).$
\end{itemize}
\end{fact}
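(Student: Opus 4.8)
The plan is to prove all three parts by direct computation with the standard Gaussian density, using repeatedly the identity $\phi'(t) = -t\phi(t)$ together with the definitions $\mu(\theta) = -1 + 2\int_\theta^\infty \phi(x)\,dx$ and $W(\mu(\theta)) = (2\phi(\theta))^2$; although the statement is quoted from \cite{MORS:10}, the arguments are short and self-contained.

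For (i), I would split $\E[|X-\theta|] = \int_{-\infty}^\theta (\theta-x)\phi(x)\,dx + \int_\theta^\infty (x-\theta)\phi(x)\,dx$. The $x$-weighted integrals evaluate in closed form via $\phi' = -t\phi$, giving $\int_\theta^\infty x\phi(x)\,dx = \phi(\theta)$ and $\int_{-\infty}^\theta x\phi(x)\,dx = -\phi(\theta)$, which together contribute $2\phi(\theta)$; the constant-weighted parts contribute $\theta\Pr[X<\theta] - \theta\Pr[X>\theta] = -\theta\mu(\theta)$ by definition of $\mu$, so the sum is $2\phi(\theta) - \theta\mu(\theta)$. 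For (ii), differentiating the definition of $\mu$ gives $\mu'(\theta) = -2\phi(\theta)$, whence $|\mu'(\theta)| = 2\phi(\theta) \le 2\phi(0) = \sqrt{2/\pi}$. For $W$, I would differentiate $W(\mu(\theta)) = 4\phi(\theta)^2$ in $\theta$: the chain rule gives $W'(\mu(\theta))\,\mu'(\theta) = 8\phi(\theta)\phi'(\theta) = -8\theta\phi(\theta)^2$, so $W'(\mu(\theta)) = 4\theta\phi(\theta)$. Since $\mu$ maps $\R\cup\{\pm\infty\}$ onto $[-1,1]$, every point of $[-1,1]$ arises as some $\mu(\theta)$, and a one-line optimization shows $\theta\phi(\theta)$ attains its maximum at $\theta=1$ with value $(2\pi e)^{-1/2}$; hence $|W'| \le 4(2\pi e)^{-1/2} < 1$.

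Part (iii) requires the most care. By symmetry — $\phi$ even implies $\mu$ odd, so $\mu^{-1}$ is odd and $W$ is even — it suffices to handle $\nu = 1-\eta$ for small $\eta > 0$. Put $\theta = \mu^{-1}(\nu)$ and $t = -\theta > 0$; the defining relation forces $\Pr[X>t] = \eta/2$, i.e.\ the Gaussian upper tail at $t$ equals $\eta/2$. The standard estimate $\tfrac{t}{t^2+1}\phi(t) \le \Pr[X>t] \le \tfrac1t\phi(t)$ (valid for $t>0$ and tight up to constants once $t$ is bounded away from $0$) then gives $\phi(t) = \Theta(\eta t)$, so that $W(\nu) = (2\phi(\theta))^2 = 4\phi(t)^2 = \Theta(\eta^2 t^2)$. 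It remains to show $t^2 = \Theta(\log(1/\eta))$: taking logarithms in $\phi(t) = \Theta(\eta t)$ yields $t^2/2 = \log(1/\eta) - \log t + O(1)$, from which $t = O(\sqrt{\log(1/\eta)})$ is immediate, and substituting this bound back (so $\log t = O(\log\log(1/\eta)) = o(\log(1/\eta))$) gives the matching lower bound $t^2 \ge (2-o(1))\log(1/\eta)$. Hence $W(\nu) = \Theta(\eta^2\log(1/\eta))$, and the $\nu = -(1-\eta)$ case follows from evenness of $W$.

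I expect the only genuine obstacle to be the bootstrapping in (iii): controlling the lower-order $\log t$ term precisely enough to pin down $t^2$ up to constant factors, and checking that the constants implicit in the Gaussian tail estimate are uniform over all sufficiently small $\eta$ (equivalently, all sufficiently large $t$). Everything else is routine manipulation.
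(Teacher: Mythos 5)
Your proof is correct, and the comparison here is unusual: the paper does not prove Fact~\ref{fact:gaussian-functions} at all, but merely cites it as Propositions~24 and~25 of \cite{MORS:10}. So you are supplying a self-contained verification of something the paper treats as a black box.

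Your computations in (i) and (ii) are exactly right: the two evaluations $\int_\theta^\infty x\phi(x)\,dx=\phi(\theta)$ and $\int_{-\infty}^\theta x\phi(x)\,dx=-\phi(\theta)$ combine with the identity $\Pr[X<\theta]-\Pr[X>\theta]=-\mu(\theta)$ to give (i); the derivative identities $\mu'(\theta)=-2\phi(\theta)$ and $W'(\mu(\theta))=4\theta\phi(\theta)$, together with $\max_{\theta>0}\theta\phi(\theta)=\phi(1)=(2\pi e)^{-1/2}$ (attained at $\theta=1$, since $(\theta\phi)'=(1-\theta^2)\phi$), give $|W'|\le 4(2\pi e)^{-1/2}\approx 0.968<1$. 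One might add explicitly that since $\mu'<0$ everywhere, $\mu$ is a bijection from $\R\cup\{\pm\infty\}$ onto $[-1,1]$, so the bound on $W'(\mu(\theta))$ does cover all of $W'$'s domain, but this is implicit in your setup.

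For (iii), the reduction by evenness of $W$ and the relation $\Pr[X>t]=\eta/2$ with $t=-\mu^{-1}(1-\eta)>0$ are both correct, and the Mill's-ratio sandwich $\tfrac{t}{t^2+1}\phi(t)\le\Pr[X>t]\le\tfrac1t\phi(t)$ does yield $\phi(t)=\Theta(\eta t)$ and hence $W(\nu)=4\phi(t)^2=\Theta(\eta^2 t^2)$, with the bootstrapping $t^2=(2-o(1))\log(1/\eta)$ following as you indicate. The only caveat worth making explicit is that the statement is an asymptotic as $\eta\to 0$ (equivalently $|\nu|\to 1$): for $\eta$ bounded away from $0$ the expression $\eta^2\log(1/\eta)$ degenerates (e.g.\ it vanishes at $\eta=1$, while $W(0)=2/\pi$), and the Mill's-ratio constants require $t$ bounded away from $0$. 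You flag this, but since the paper applies the fact with $\eta$ small this is not a gap. Overall this is a clean and correct reconstruction of a result the paper leaves to a citation.
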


\begin{fact} \label{fact:gaussian-hermite}
Let $f(x) = \sgn (w \cdot x- \theta)$ be an LTF such that $\|w\|_2 = 1$. Then
\begin{itemize}

\item {(i)} $\tilde{f}(0) \eqdef \E_{x \sim \Nd^n} [f(x)] = \mu(\theta)$,
\item {(ii)} $\tilde{f}(i) \eqdef \E_{x \sim \Nd^n} [f(x) x_i]  = 
\sqrt{W(\tilde{f}(0))} w_i$, for all $i \in [n]$, and
\item {(iii)} $\littlesum_{i=1}^n \tilde{f}^2(i) = W(\tilde{f}(0)).$

\end{itemize}
\end{fact}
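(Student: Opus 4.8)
The plan is to reduce all three parts to one-dimensional Gaussian computations, using that $\|w\|_2 = 1$ together with rotational invariance of $\Nd^n$. For part (i), pick a unitary $U$ with $Ue_1 = w$; then by rotational invariance $g \eqdef w\cdot x$ is distributed as $\Nd(0,1)$ when $x \sim \Nd^n$, so $\tilde f(0) = \E_{x\sim\Nd^n}[\sgn(w\cdot x-\theta)] = \E_{g\sim\Nd(0,1)}[h_\theta(g)] = \mu(\theta)$ straight from the definition of $\mu$.

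The main step is part (ii). Fix $i \in [n]$ and write $x_i = w_i(w\cdot x) + r_i$ with $r_i \eqdef x_i - w_i(w\cdot x)$. Using $\sum_j w_j^2 = 1$ we get $\E[(w\cdot x)r_i] = \E[(w\cdot x)x_i] - w_i\E[(w\cdot x)^2] = w_i - w_i = 0$. Since $(w\cdot x, r_i)$ is jointly Gaussian, being uncorrelated forces independence; and as $f(x) = \sgn(w\cdot x - \theta)$ is a function of $w\cdot x$ alone, $f(x)$ and $r_i$ are independent. Hence
$$\tilde f(i) = \E[f(x)x_i] = w_i\,\E[f(x)(w\cdot x)] + \E[f(x)]\,\E[r_i] = w_i\,\E[f(x)(w\cdot x)],$$
using $\E[r_i]=0$. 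Now $\E[f(x)(w\cdot x)] = \E_{g\sim\Nd(0,1)}[\sgn(g-\theta)\,g]$; writing $|g-\theta| = \sgn(g-\theta)(g-\theta)$ and applying Fact~\ref{fact:gaussian-functions}(i) gives $2\phi(\theta) - \theta\mu(\theta) = \E[\sgn(g-\theta)g] - \theta\mu(\theta)$, so $\E[\sgn(g-\theta)g] = 2\phi(\theta)$ (one can alternatively integrate directly via $\phi'(t) = -t\phi(t)$). Therefore $\tilde f(i) = 2\phi(\theta)\,w_i$, and combining part (i) with the definition $W(\mu(\theta)) = (2\phi(\theta))^2$ we obtain $2\phi(\theta) = \sqrt{W(\mu(\theta))} = \sqrt{W(\tilde f(0))}$, i.e. $\tilde f(i) = \sqrt{W(\tilde f(0))}\,w_i$.

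Part (iii) is then immediate, since $\sum_{i=1}^n \tilde f^2(i) = W(\tilde f(0))\sum_{i=1}^n w_i^2 = W(\tilde f(0))$ by $\|w\|_2 = 1$. The only point that needs a little care is the independence assertion in part (ii): one must use that uncorrelated jointly Gaussian random variables are independent, and that $f$ factors through $w\cdot x$. Everything else is routine univariate Gaussian calculus and bookkeeping, so I do not anticipate any real obstacle.
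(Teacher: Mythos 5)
The paper does not prove Fact~\ref{fact:gaussian-hermite} itself; it is stated as a citation to Proposition~25 of~\cite{MORS:10}. Your argument is a correct, self-contained proof: the orthogonal decomposition $x_i = w_i(w\cdot x) + r_i$ with $r_i$ jointly Gaussian with and uncorrelated from (hence independent of) $w\cdot x$, together with the univariate computation $\E_{g\sim\Nd(0,1)}[\sgn(g-\theta)\,g] = 2\phi(\theta)$ and the definition $W(\mu(\theta)) = (2\phi(\theta))^2$, gives exactly (i)--(iii), and this rotation/projection argument is the standard route (and, modulo notation, the one used in~\cite{MORS:10}). One small thing worth making explicit if you write this up: $2\phi(\theta) > 0$ for all $\theta$, so the square root $\sqrt{W(\tilde f(0))}$ is indeed $2\phi(\theta)$ with the correct (positive) sign, which is needed for the identity $\tilde f(i) = \sqrt{W(\tilde f(0))}\,w_i$ rather than merely $\tilde f(i)^2 = W(\tilde f(0))\,w_i^2$.
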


\ignore{

}

\subsection{Proof of Theorem~\ref{thm:w1-algo}}

We recall the statement of Theorem~\ref{thm:w1-algo}:

\medskip

\noindent {\bf Theorem~\ref{thm:w1-algo}.}
\emph{
There is an algorithm that, on input an accuracy parameter $\eps>0$,
runs in time $2^{\poly(1/\eps)}$ and outputs a value $\Gamma_\eps$ such that
$$
\w^{\leq 1}[\ltf]   \leq \Gamma_\eps \leq  \w^{\leq 1}[\ltf]+\eps.
$$
}

\noindent We recall the simple algorithm used to prove Theorem~\ref{thm:w1-algo} from Section~\ref{ssec:w1algo-techniques}:

\begin{quote}
Let $K = \Theta(\eps^{-24}).$ Enumerate all $K$-variable zero-threshold
LTFs, and output the value $$\Gamma_\eps \eqdef \min\{\w^1[f] : f
\text{~is a zero-threshold $K$-variable LTF.} \}.$$
\end{quote}

\fi

As described in Section~\ref{ssec:w1algo-techniques}, it suffices to prove
that for any zero-threshold $n$-variable LTF $f(x)=\sign(w \cdot x),$ there is a $K$-variable
zero-threshold LTF $g$, where $K=\Theta(\eps^{-24})$, such that
\begin{equation} \label{eqn:w1-goal-redux}
|\w^1[f] - \w^1[g]|<\eps;
\end{equation}
we now proceed with the proof.
We can of course assume that $n>K$, since otherwise (\ref{eqn:w1-goal-redux}) is trivially satisfied for $g=f$ with $\eps=0.$

\ignore{

%
%
%
%
%


}

We choose a parameter $\delta = O(\eps^6)$; as described in Section~\ref{ssec:w1algo-techniques},
the proof is by case analysis on the value of the $\delta$-critical index $c(w, \delta)$ of the weight vector $w$.\ignore{
%
} Consider a parameter $L = L(\delta) = \tilde{\Theta}(\delta^{-2}).$ We consider the following two cases:

\medskip

\noindent {\bf [Case I: Large critical index, i.e. $c(w, \delta) \geq L(\delta)$]} In this case, the proof follows easily from the following lemma:

\begin{lemma}[Case II(a) of Theorem 1 of \cite{Servedio:07cc}]
Let $f(x) = \sign(w \cdot x) = \sign(w_H \cdot x_H+w_T \cdot x_T)$, where $w$ is proper, $H = [L(\delta)]$ and $T = [n]\setminus H.$ If $c(w, \delta) \geq L(\delta)$, then $f$ is $\delta$-close
in Hamming distance to function the junta $g(x) = \sign(w_H \cdot x_H).$
\end{lemma}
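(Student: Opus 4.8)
The plan is to run the standard ``small-tail'' argument for the critical index. Since the LTFs $f$ and $g$ can disagree on an input $x$ only when the tail contribution flips the sign of the head contribution, i.e.\ only when $|w_T\cdot x_T|\ge|w_H\cdot x_H|$, I would start from the bound
$$
\dist(f,g)\;=\;\Pr_x\!\big[\sgn(w_H\cdot x_H+w_T\cdot x_T)\neq\sgn(w_H\cdot x_H)\big]\;\le\;\Pr_x\!\big[\,|w_T\cdot x_T|>t\,\big]\;+\;\Pr_x\!\big[\,|w_H\cdot x_H|\le t\,\big],
$$
valid for any threshold $t>0$, and then pick $t$ making both terms small.

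For the first term the point is that the tail vector has tiny Euclidean norm. Since $L=L(\delta)\le c(w,\delta)$, Fact~\ref{fact:small-tail} (with $a=L$, using $\sigma_1=\|w\|_2=1$) gives $\|w_T\|_2=\sigma_{L+1}\le\sigma_L<(1-\delta^2)^{(L-1)/2}\le e^{-\delta^2(L-1)/2}$, which for $L=\tilde{\Theta}(\delta^{-2})$ (with a large enough polylogarithmic factor) is smaller than any prescribed power of $\delta$. The additive Hoeffding bound (Theorem~\ref{thm:chb}) then gives $\Pr_x[\,|w_T\cdot x_T|>t\,]\le 2\exp(-t^2/(2\|w_T\|_2^2))$, so choosing $t=\|w_T\|_2\sqrt{2\ln(10/\delta)}$ makes this term at most $\delta/5$; note that $t$ is itself smaller than any fixed power of $\delta$.

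The heart of the argument is the anti-concentration estimate for $\Pr_x[\,|w_H\cdot x_H|\le t\,]$. The relevant structural consequence of $c(w,\delta)\ge L$ is that, because $L=\tilde{\Theta}(\delta^{-2})$ is large and $n>L$, the head weights are squeezed (each $w_i^2\le\sigma_i^2<(1-\delta^2)^{i-1}$ for $i\le L-1$, while $w_i>\delta\sigma_i$), and in particular $w_H$ cannot put essentially all of its $\ell_2$-mass on a small set of roughly-equal coordinates — a configuration that would have $\delta$-critical index $O(\delta^{-2})\ll L$ and is therefore excluded. I would then bound $\Pr_x[\,|w_H\cdot x_H|\le t\,]$ by a case analysis on the decay profile of $(w_1,\dots,w_L)$: when the decay is slow, $w_1=\Theta(\delta)$ and $\Theta(\delta^{-2})$ leading coordinates carry $\Theta(1)$ worth of ``Gaussian mass'', so the Berry--Ess\'een theorem puts $w_H\cdot x_H$ within $O(\delta)$ (in cdf) of a Gaussian of standard deviation $\Theta(1)$; when the decay is fast, $\sum_{i\ge2}|w_i|<w_1$, so $|w_H\cdot x_H|\ge w_1-\sum_{i\ge2}|w_i|$ is bounded away from $0$ by a constant; and mixed profiles are handled by combining a Littlewood--Offord / Erd\H{o}s-type estimate on the ``spread'' part with the above. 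In all cases $\Pr_x[\,|w_H\cdot x_H|\le t\,]=O(t)+O(\delta)$, which together with the first term gives $\dist(f,g)=O(\delta)$ — and this suffices, since in the application $\delta$ is a free $\poly(\eps)$ parameter (alternatively, tracking constants or replacing $\delta$ by $\delta/C$ throughout yields the stated $\delta$-closeness).

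The hard part is this last step: extracting from the qualitative statement ``$c(w,\delta)\ge L$ excludes the pathological weight configurations'' a single quantitative anti-concentration bound for $w_H\cdot x_H$ that is uniform over all admissible decay profiles of $(w_1,\dots,w_L)$. Everything else — the disagreement inequality, the geometric decay of $\sigma_{L+1}$, and the Hoeffding estimate — is routine. (This is precisely Case~II(a) of Theorem~1 of \cite{Servedio:07cc}, so in the paper one may also simply cite it.)
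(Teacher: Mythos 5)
Your overall architecture -- the disagreement inequality $\dist(f,g)\le\Pr[|w_T\cdot x_T|>t]+\Pr[|w_H\cdot x_H|\le t]$, the geometric decay of $\sigma_{L+1}$ via Fact~\ref{fact:small-tail}, and the Hoeffding control of the first term -- is correct, and you have rightly identified the anti-concentration of $w_H\cdot x_H$ as the crux. The paper itself gives no proof (it simply cites \cite{Servedio:07cc}), so deferring to that citation is consistent with what the paper does.

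However, the sketch you give of the anti-concentration step would not go through as stated. The central difficulty is that the hypothesis $c(w,\delta)\ge L$ makes the head $w_H$ essentially \emph{maximally irregular}: every coordinate satisfies $|w_i|>\delta\sigma_i$, so $w_1>\delta\|w\|_2$ and each subsequent weight is large relative to its own tail. In particular $w_H$ is \emph{not} $\delta$-regular (it is $w_1$-regular at best, and $w_1$ can be as large as $0.9$), so the Berry--Ess\'een theorem yields error $O(w_1)$, not $O(\delta)$ -- it gives nothing for your ``slow decay'' case. Your ``fast decay'' case ($\sum_{i\ge 2}|w_i|<w_1$) is a genuine and easy subcase, but the dichotomy is far from exhaustive, and the ``mixed profiles via a Littlewood--Offord estimate'' step is precisely the content of the cited theorem, not something that can be waved through: one needs a uniform quantitative anti-concentration bound over all geometrically decaying weight profiles, which is a nontrivial ``peeling'' argument.

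The route that actually closes the gap using only tools available in this paper is to apply anti-concentration to the \emph{full} linear form rather than the head: from $|w_H\cdot x_H|\ge |w\cdot x|-|w_T\cdot x_T|$ one gets $\Pr[|w_H\cdot x_H|\le t]\le\Pr[|w\cdot x|\le 2t]+\Pr[|w_T\cdot x_T|>t]$, and $\Pr[|w\cdot x|\le \sqrt{t'}\,\sigma_L]\le 2^{-t'}$ is exactly Lemma~\ref{lem:chow-OS} (Theorem~4.2 of \cite{OS11:chow}, used later in the paper for the Tomaszewski argument) under the hypothesis $c(w,\delta)\ge L$. Choosing $t'=\Theta(\log(1/\delta))$ and $L=\tilde{\Theta}(\delta^{-2})$ then gives $\dist(f,g)\le\delta$. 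So the decomposition is fine, but the missing ingredient is anti-concentration of $w\cdot x$ against its own small tail (Lemma~\ref{lem:chow-OS} or the original Servedio~'07 peeling lemma), not Berry--Ess\'een applied to $w_H$.
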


Since $\dist(f,g) \leq \delta$,
\ifnum\short=0
Fact~\ref{fact:close} implies that
\fi
\ifnum\short=1
the Cauchy-Schwarz inequality can be used to show that
\fi
$|\w^1[f]  - \w^1[g]| < 4\sqrt{\delta} < \eps$. Noting that $g$ is a zero-threshold
 $K$-variable LTF (since $L<K$) completes the proof of Case I.



\medskip

\noindent {\bf [Case II: Small critical index, i.e. $c(w, \delta) < L(\delta)$]} This case requires an elaborate analysis: at a high-level we apply a variable reduction technique
to obtain a junta $g$ that closely approximates the degree-$1$ Fourier weight of $f$. Note that there is no guarantee (and it is typically not the case) that $f$ and $g$ are close
in Hamming distance. Formally, we prove the following theorem:

\begin{theorem} \label{thm:large-ci-w1}
Let $f(x) = \sign(w \cdot x) = \sign(w_H \cdot x_H+w_T \cdot x_T)$, where $w$ is proper, $H = [c(w, \delta)]$ and $T = [n]\setminus H.$ Consider the LTF $g: \bits^{|H|+M} \to \bits$, with $M = \Theta(\eps^{-24})$, defined by
$$g(x_H, z) = \sgn\left( w_H x_H + \|w_T\|_2 \cdot \littlesum_{i=1}^M \frac{z_i}{\sqrt{M}} \right).$$ Then
$|\w^1[f]  - \w^1[g] | < \eps.$
\end{theorem}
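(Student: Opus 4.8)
The plan is to carry out the three-step variable-reduction sketched in Section~\ref{ssec:w1algo-techniques}: starting from $f(x)=\sign(w_H\cdot x_H+w_T\cdot x_T)$ we interpolate to $g$ through two ``mixed'' functions and control the change in $\w^1$ at each step. Write $r=c(w,\delta)$, so $H=[r]$ with $r<L(\delta)=\tilde{\Theta}(\delta^{-2})$; by the definition of the critical index together with Fact~\ref{fact:small-tail}, the tail $w_T$ is $O(\delta)$-regular, and we set $u=w_T/\|w_T\|_2$ (an $O(\delta)$-regular unit vector) and $\beta=\|w_T\|_2>0$, so that for each fixing $x_H$ the restriction $f_{x_H}(x_T)=\sign(u\cdot x_T-\theta(x_H))$, with $\theta(x_H)=-(w_H\cdot x_H)/\beta$, is an $O(\delta)$-regular LTF. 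Since $g$ has $|H|+M=\poly(1/\eps)\le K$ variables it is among the LTFs the algorithm enumerates, so it suffices to prove $|\w^1[f]-\w^1[g]|<\eps$. Throughout I will use the following ``projection'' remark: for $\psi:\bits^{|H|}\to\R$, Parseval gives $\sum_{i\in H}\wh\psi(i)^2\le\E[\psi^2]\le\|\psi\|_\infty^2$, so the level-$1$ mass carried by the head block $H$ moves (in the sense of Fact~\ref{fact:close}) by at most a controlled function of the $L_\infty$-norm of the relevant difference, with \emph{no} loss of a $|H|$ factor — this is essential since $|H|$ can be as large as $\tilde{\Theta}(\delta^{-2})$.

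\emph{Step 1 (Gaussianize the tail).} Let $F(x_H,y)=\sign(w_H\cdot x_H+\beta\,u\cdot y)$ for $y\sim\Nd(0,1)^{|T|}$, with $\w^1[F]:=\sum_{i\in H}\wh F(i)^2+\sum_{i\in T}\wt F(i)^2$ (Fourier coefficients in $x_H$, Hermite coefficients in $y$). I claim $|\w^1[f]-\w^1[F]|=O(\delta^{\Omega(1)})$. For the head block, Fact~\ref{fact:be} (using that $w_T$ is $O(\delta)$-regular) gives $|\E_{x_T}[f_{x_H}]-\E_y[F(x_H,y)]|=O(\delta)$ uniformly in $x_H$, so $\wh f(i)-\wh F(i)=\E_{x_H}[x_i\,\psi(x_H)]$ with $\|\psi\|_\infty=O(\delta)$, whence by the projection remark $\sum_{i\in H}(\wh f(i)-\wh F(i))^2=O(\delta^2)$ and Fact~\ref{fact:close} yields $|\sum_{i\in H}\wh f(i)^2-\sum_{i\in H}\wh F(i)^2|=O(\delta)$. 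For the tail block, fix $x_H$: by Fact~\ref{fact:gaussian-hermite} the Hermite vector is $\wt{F_{x_H}}(i)=\sqrt{W(\mu(\theta(x_H)))}\,u_i$ exactly, while $f_{x_H}$ is $O(\delta)$-regular, so the structural results on regular LTFs from \cite{MORS:10} (Theorem~\ref{thm:reg-w1} for the length, Fact~\ref{fact:gaussian-vs-reg} and Fact~\ref{fact:gaussian-functions} for the direction and for replacing $\E[f_{x_H}]$ by $\mu(\theta(x_H))$) show that $(\wh{f_{x_H}}(i))_{i\in T}$ lies within $\ell_2$-distance $O(\delta^{\Omega(1)})$ of $\sqrt{W(\mu(\theta(x_H)))}\,u$. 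Averaging over $x_H$ (Jensen), $(\wh f(i))_{i\in T}$ and $(\wt F(i))_{i\in T}$ are within $\ell_2$-distance $O(\delta^{\Omega(1)})$ (both vectors having norm $\le 1$ by Parseval), and Fact~\ref{fact:close} converts this to $|\sum_{i\in T}\wh f(i)^2-\sum_{i\in T}\wt F(i)^2|=O(\delta^{\Omega(1)})$. Adding the two blocks proves the claim.

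\emph{Steps 2 and 3 (collapse, then de-Gaussianize).} Let $F'(x_H,G)=\sign(w_H\cdot x_H+\beta\,G)$, $G\sim\Nd(0,1)$, with $\w^1[F']:=\sum_{i\in H}\wh{F'}(i)^2+\wt{F'}(G)^2$. Since $u\cdot y\sim\Nd(0,1)$, conditioning on $x_H$ gives $\wh F(i)=\wh{F'}(i)$ for $i\in H$; and writing $y=(u\cdot y)u+y^\perp$ with $y^\perp$ independent of $u\cdot y$ (hence of $F$), rotational invariance gives $\E[F(x_H,y)\,y]=\E[F(x_H,y)(u\cdot y)]\,u$, so $\sum_{i\in T}\wt F(i)^2=(\E[F(x_H,y)(u\cdot y)])^2=\wt{F'}(G)^2$; thus $\w^1[F]=\w^1[F']$ \emph{exactly}. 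Finally compare $F'$ with $g(x_H,z)=\sign(w_H\cdot x_H+\beta\,s)$, $s=\sum_{j=1}^M z_j/\sqrt M$. By Berry--Ess\'een (Theorem~\ref{thm:be}) the cdf $F_s$ of $s$ satisfies $\sup_t|F_s(t)-\Phi(t)|=O(1/\sqrt M)$, so exactly as in Step 1 the head blocks satisfy $|\sum_{i\in H}\wh{F'}(i)^2-\sum_{i\in H}\wh g(i)^2|=O(1/\sqrt M)$. By symmetry each $\wh g(z_j)$ equals a common $\nu$ with $M\nu^2=(\E[g\,s])^2$, and $\wt{F'}(G)^2=(\E[F'\,G])^2$; conditioning on $x_H$, integration by parts writes $\E[g\,s\mid x_H]-\E[F'\,G\mid x_H]$ as $-2\big(\theta(x_H)(F_s-\Phi)(\theta(x_H))-\int_{-\infty}^{\theta(x_H)}(F_s-\Phi)(t)\,dt\big)$, and bounding $|F_s(t)-\Phi(t)|\le\min\{O(1/\sqrt M),O(e^{-t^2/2})\}$ (Berry--Ess\'een together with the Hoeffding bound, Theorem~\ref{thm:chb}) makes this $O(\sqrt{\log M/M})$ uniformly in $x_H$, so $|M\nu^2-\wt{F'}(G)^2|=O(\sqrt{\log M/M})$. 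Combining everything, $|\w^1[f]-\w^1[g]|=O(\delta^{\Omega(1)})+O(\sqrt{\log M/M})$, which is below $\eps$ once $\delta$ is a small enough polynomial in $\eps$ (so $|H|<L(\delta)=\poly(1/\eps)$) and $M=\Theta(\eps^{-24})$.

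The main obstacle is Step 1. Berry--Ess\'een for the $O(\delta)$-regular tail only yields $O(\delta)$ closeness of distribution functions, and the head $H$ may contain $\tilde{\Theta}(\delta^{-2})$ variables, so one cannot afford to lose even $\delta$ per head coordinate. Two ideas resolve this: (i) the projection/Parseval remark, which bounds how far the level-$1$ mass on the head block can move independently of $|H|$; and (ii) using the structural results on regular LTFs from \cite{MORS:10} to pin down, for each head restriction $f_{x_H}$, not merely the \emph{length} of the tail's degree-$1$ coefficient vector (which is what Theorem~\ref{thm:reg-w1} alone gives) but also its \emph{direction} — the same vector $u$ for every $x_H$ — so that these vectors add coherently along $u$ when one averages over $x_H$, and the averaged tail weight $\sum_{i\in T}\wh f(i)^2$ matches $(\E_{x_H}\sqrt{W(\mu(\theta(x_H)))})^2$ rather than $\E_{x_H}[W(\mu(\theta(x_H)))]$. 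Making the direction claim quantitative and uniform in $x_H$ — including the regime where $w_H\cdot x_H$ essentially determines the sign and all quantities are tiny — is the technical heart of the argument.
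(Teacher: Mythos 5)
Your proposal follows the paper's proof essentially step for step: the same three-stage variable reduction (Gaussianize the tail using the tail's $O(\delta)$-regularity, collapse the Gaussian tail to a single Gaussian by rotational invariance, Booleanize via a $\sqrt{M}$-normalized sum), the same $\|\cdot\|_\infty$/Parseval trick to control the head block without paying a factor of $|H|$, and the same appeal to the structural facts about regular LTFs from \cite{MORS:10} (in the paper this is packaged as Lemma~\ref{lem:reg-fourier-vs-hermite}, whose proof handles the high-bias regime separately exactly as you flag as the ``technical heart''). The only real deviation is in the tail of Step 3, where you give a direct integration-by-parts estimate with a Hoeffding cutoff to bound $|\E[g\,s]-\E[F'\,G]|$ uniformly in $x_H$ (yielding $O(\sqrt{\log M/M})$), whereas the paper's Lemma~\ref{lem:tail-step3} instead re-applies the Step-1 machinery (Lemma~\ref{lem:tail-fourier-vs-hermite} with $\tau=1/\sqrt M$) to get $O(M^{-1/24})$; your variant is a touch sharper but the paper's reuses existing lemmas, and either suffices with $M=\Theta(\eps^{-24})$.
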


Note that $g$ depends on $|H| + M \leq L+M \leq K$ variables. Hence, Theorem~\ref{thm:large-ci-w1} completes the analysis of Case II.
\ifnum\short=0
We refer the reader to Section~\ref{ssec:w1algo-techniques}
for intuition and motivation behind Theorem~\ref{thm:large-ci-w1} and proceed to its proof in the
next subsection.
\fi
\ifnum\short=1
We sketch the proof of Theorem~\ref{thm:large-ci-w1} in the next subsection
\newa{(see the full version for a complete proof)}.
\fi

\ignore{
%
%
%
%
}

\subsection{Proof of Theorem~\ref{thm:large-ci-w1}} \label{ssec:w1-algo-proof}

Let $f:\bn \to \bits$ where $f(x) = \sgn(w \cdot x) = \sgn(w_H x_H + w_T x_T)$, where the tail vector $w_T$ is $\delta$-regular.
Assume wlog that $\|w_T\|_2=1.$ We proceed in the following three steps, which together
yield Theorem~\ref{thm:large-ci-w1}.

\medskip

\noindent {\bf Step 1: ``Gaussianizing'' the tail.} First some notation:  we write
${\cal U}_n$ to denote the uniform distribution over $\bn.$
\ifnum\short=1
We write ${\cal N}^n$ to denote the standard $n$-dimensional Gaussian distribution.
\fi
Our main result in this case is the following theorem, which roughly says that letting tail variables
take Gaussian rather than Boolean values does not change the ``degree-1 Fourier coefficients''
by much:

\begin{theorem} \label{thm:w1-step1}
Let $f = \sign (w_H \cdot x_H + w_T \cdot x_T)$. For $i \in [n]$ define  $ \wh{f} (i) = \E_{x  \sim \mathcal{U}_n} [f(x)x_i]$
and $ \tilde{f} (i) = \E_{x_H  \sim \mathcal{U}_{|H|}, x_T \sim \mathcal{N}^{|T|}} [f(x)x_i]$. If $w_T$ is $\tau$-regular then
\ifnum\short=0
$$\littlesum_{i=1}^n \left( \wh{f}(i) - \tilde{f}(i) \right)^2 = O(\tau^{1/6}).$$
\fi
\ifnum\short=1
$\littlesum_{i=1}^n \left( \wh{f}(i) - \tilde{f}(i) \right)^2 = O(\tau^{1/6}).$
\fi
\end{theorem}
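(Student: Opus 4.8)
The plan is to condition on the Boolean head variables $x_H$ and reduce everything to statements about the $\tau$-regular tail LTF that remains, handling the head and tail coordinates of the two degree-$1$ vectors by completely different arguments. Normalize so that $\|w_T\|_2=1$ (as in the application). For $\rho\in\{-1,1\}^{|H|}$ write $f_\rho(x_T)=\sgn(w_T\cdot x_T+\theta_\rho)$ with $\theta_\rho=w_H\cdot\rho$, a $\tau$-regular LTF, and set $m_B(\rho)=\E_{x_T\sim\mathcal{U}_{|T|}}[f_\rho]$, $m_G(\rho)=\E_{x_T\sim\mathcal{N}^{|T|}}[f_\rho]$. Fact~\ref{fact:gaussian-vs-reg}(i) gives $|m_B(\rho)-m_G(\rho)|=O(\tau)$ for all $\rho$; combining this with the sub-Gaussian tail bound for $w_T\cdot x_T$ (Theorem~\ref{thm:chb}) and the Gaussian tail bound also yields $|\theta_\rho|\cdot|m_B(\rho)-m_G(\rho)|=O(\tau\sqrt{\log(1/\tau)})$.

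For a head coordinate $i\in H$ one has $\hat f(i)=\E_\rho[\rho_i\, m_B(\rho)]$ and $\tilde f(i)=\E_\rho[\rho_i\, m_G(\rho)]$, so $\hat f(i)-\tilde f(i)$ is exactly the degree-$1$ Fourier coefficient $\widehat{g}(\{i\})$ of $g:\{-1,1\}^{|H|}\to[-1,1]$, $g:=m_B-m_G$. Since $\|g\|_\infty=O(\tau)$, Parseval gives $\sum_{i\in H}(\hat f(i)-\tilde f(i))^2=\sum_{i\in H}\widehat g(\{i\})^2\le\|g\|_2^2=O(\tau^2)$; this is precisely where the naive per-coordinate Berry-Ess\'een bound, which would cost a factor of $|H|$, is avoided.

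For a tail coordinate $i\in T$ one has $\hat f(i)=\E_\rho[\widehat{f_\rho}(i)]$ and $\tilde f(i)=\E_\rho[\widetilde{f_\rho}(i)]$ (Boolean Fourier, resp.\ Hermite, coefficient of the restriction), so by convexity of the squared Euclidean norm it suffices to bound, for each fixed $\tau$-regular $f_\rho$, the quantity $\|a-b\|_2^2$ where $a=(\widehat{f_\rho}(i))_{i\in T}$ and $b=(\widetilde{f_\rho}(i))_{i\in T}$. Expanding $\|a-b\|_2^2=\|a\|_2^2-2\langle a,b\rangle+\|b\|_2^2$: by Fact~\ref{fact:gaussian-hermite}(ii)--(iii), $b=\sqrt{W(m_G(\rho))}\,w_T$ and $\|b\|_2^2=W(m_G(\rho))$; by Theorem~\ref{thm:reg-w1}, $\|a\|_2^2=\w^1[f_\rho]=W(m_B(\rho))\pm O(\tau^{1/6})$, which with $|W'|<1$ (Fact~\ref{fact:gaussian-functions}(ii)) gives $\big|\|a\|_2^2-\|b\|_2^2\big|=O(\tau^{1/6})$. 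For the cross term, Plancherel together with $f_\rho(x)(w_T\cdot x)=|w_T\cdot x+\theta_\rho|-\theta_\rho f_\rho(x)$ gives $\langle a,w_T\rangle=\E_{\mathcal{U}}[|w_T\cdot x_T+\theta_\rho|]-\theta_\rho m_B(\rho)$, and by Fact~\ref{fact:gaussian-vs-reg}(i)--(ii) plus the $|\theta_\rho|\cdot|m_B-m_G|$ bound this is within $O(\tau\sqrt{\log(1/\tau)})$ of $\E_{\mathcal{N}}[|w_T\cdot x_T+\theta_\rho|]-\theta_\rho m_G(\rho)=\langle b,w_T\rangle=\sqrt{W(m_G(\rho))}$; hence $\langle a,b\rangle=\sqrt{W(m_G(\rho))}\,\langle a,w_T\rangle=W(m_G(\rho))\pm O(\tau\sqrt{\log(1/\tau)})$. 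Plugging in yields $\|a-b\|_2^2=O(\tau^{1/6})$; the single degenerate case $W(m_G(\rho))\le\tau^{1/6}$ (where $\sqrt{W(\cdot)}$ is not Lipschitz near $0$) is handled directly, since there $\|a\|_2^2,\|b\|_2^2=O(\tau^{1/6})$ and so $\|a-b\|_2^2\le 2(\|a\|_2^2+\|b\|_2^2)=O(\tau^{1/6})$. Averaging over $\rho$ (Jensen) gives $\sum_{i\in T}(\hat f(i)-\tilde f(i))^2=O(\tau^{1/6})$.

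Combining the head and tail bounds gives $\sum_{i=1}^n(\hat f(i)-\tilde f(i))^2=O(\tau^2)+O(\tau^{1/6})=O(\tau^{1/6})$ (for $\tau$ below an absolute constant; larger $\tau$ are trivial since the left side is always $O(1)$). I expect the main obstacle to be the single-restriction tail estimate: Theorem~\ref{thm:reg-w1} pins down only the \emph{length} of $a$, so one must separately control its \emph{direction} by transferring $\E[\,|w_T\cdot x_T+\theta_\rho|\,]$ and $\E[f_\rho]$ from the uniform to the Gaussian measure and marrying these with the exact parallelism $b\parallel w_T$, all while coping with a possibly very large threshold $\theta_\rho$ (which the normalization $\|w_T\|_2=1$ can force) and with near-constant restrictions where $W(\E[f_\rho])$ is tiny. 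The head estimate, by contrast, is painless once one spots that the relevant differences are Fourier coefficients of the globally small function $m_B-m_G$.
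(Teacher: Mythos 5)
Your proposal is correct and follows essentially the same route as the paper: the same head/tail decomposition, the same sup-norm-plus-Parseval argument for the head coordinates, and the same Jensen-plus-single-restriction reduction for the tail, with the per-restriction estimate obtained by expanding $\|a-b\|^2$ and matching $\|a\|^2$, $\|b\|^2$, and $\langle a,b\rangle$ against $W(\cdot)$ exactly as in the paper's Lemma~\ref{lem:reg-fourier-vs-hermite}. The only cosmetic difference is that you package the large-threshold issue into a unified bound $|\theta_\rho|\cdot|m_B-m_G|=O(\tau\sqrt{\log(1/\tau)})$ rather than the paper's explicit case split on $|w_0|\gtrless\sqrt{2\ln(2/\tau)}$; your ``degenerate case'' $W(m_G(\rho))\le\tau^{1/6}$ is harmless but actually unnecessary, since the argument only uses Lipschitzness of $W$, never of $\sqrt{W}$.
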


\noindent
\ifnum\short=0
Note that by applying Fact~\ref{fact:close}
\fi
\ifnum\short=1
Using the Cauchy-Schwarz inequality,
\fi
the above theorem implies that $\w^1[f] \approx^{\tau^{1/12}} \widetilde{\w^{1}}[f]$, where
we define $\widetilde{\w^{1}}[f] \eqdef \littlesum_{i=1}^n (\tilde{f} (i))^2.$

\ifnum\short=0

\smallskip

To prove Theorem~\ref{thm:w1-step1} we need a few lemmas.
Our first lemma shows that for a regular LTF, its degree-$1$
Fourier coefficients are close to its corresponding Hermite coefficients.

\begin{lemma} \label{lem:reg-fourier-vs-hermite}
Let $f(x) = \sgn(w \cdot x - w_0)$ be an LTF.
For $i \in [n]$ define $\wh{f}(i) \eqdef \E_{x \in \mathcal{U}_n} [f(x)x_i]$ and  $\tilde{f}(i) \eqdef \E_{x \in \mathcal{N}^n} [f(x)x_i]$.
If $w$ is $\tau$-regular, then $\littlesum_{i=1}^n ( \wh{f}(i) - \tilde{f}(i) )^2 = O(\tau^{1/6})$.
\end{lemma}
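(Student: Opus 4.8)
The plan is to expand $D := \sum_{i=1}^n(\wh{f}(i)-\tilde{f}(i))^2$ and match its cross term against the Gaussian/Boolean comparison facts we already have. Since $f$ and the coefficients $\wh{f}(i),\tilde{f}(i)$ are invariant under positive scaling of $(w,w_0)$, and $\tau$-regularity is scale invariant, we may assume $\|w\|_2=1$ and write $\theta := w_0$. We may also assume $\tau$ is at most a sufficiently small absolute constant, since for larger $\tau$ the crude bound $(\wh{f}(i)-\tilde{f}(i))^2 \le 2\wh{f}(i)^2 + 2\tilde{f}(i)^2$ together with Parseval and Fact~\ref{fact:gaussian-hermite}(iii) gives $D \le 2\w^1[f] + 2\sum_i \tilde{f}(i)^2 \le 2 + 4/\pi$, which is already $O(\tau^{1/6})$. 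By Fact~\ref{fact:gaussian-hermite}, $\tilde{f}(0)=\mu(\theta)$ and $\tilde{f}(i)=\sqrt{W(\mu(\theta))}\,w_i$ for all $i$, so $\sum_i\tilde{f}(i)^2 = W(\mu(\theta))$ and $\sum_i\tilde{f}(i)w_i=\sqrt{W(\mu(\theta))}$. Writing $A := \sum_i\wh{f}(i)w_i$ and $B := \sqrt{W(\mu(\theta))}$, Plancherel's identity (applied on the cube and under the Gaussian, using that $\sum_i w_i x_i$ is its own degree-$1$ part in both bases) gives $A = \E_{x\sim\mathcal{U}_n}[f(x)(w\cdot x)]$ and $B = \E_{x\sim\mathcal{N}^n}[f(x)(w\cdot x)]$, and expanding the square yields
$$ D \;=\; \w^1[f] \;-\; 2AB \;+\; B^2 \;=\; \bigl(\w^1[f]-B^2\bigr) \;+\; 2B\,(B-A). $$

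First I would show $|\w^1[f]-B^2| = O(\tau^{1/6})$. Since $f$ is $\tau$-regular and $\tau$ is small, Theorem~\ref{thm:reg-w1} gives $|\w^1[f]-W(\E_{\mathcal{U}_n}[f])|\le\tau^{1/6}$. By Fact~\ref{fact:gaussian-vs-reg}(i), $|\E_{\mathcal{U}_n}[f]-\mu(\theta)|=|\E_{\mathcal{U}_n}[f]-\E_{\mathcal{N}^n}[f]|=O(\tau)$, and since $|W'|<1$ everywhere (Fact~\ref{fact:gaussian-functions}(ii)) we get $|W(\E_{\mathcal{U}_n}[f])-W(\mu(\theta))|=O(\tau)$. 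As $B^2=W(\mu(\theta))$, this gives $|\w^1[f]-B^2|=O(\tau^{1/6})$, and it remains to bound $B\,(B-A)$.

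For this I would split on the size of the threshold. If $|\theta|>\sqrt{\log(1/\tau)}$, then the standard tail bound $1-\Phi(t)\le\phi(t)/t$ gives $1-|\mu(\theta)| = 2\bigl(1-\Phi(|\theta|)\bigr)\le\tau^{1/2}$, so by Fact~\ref{fact:gaussian-functions}(iii), $B^2 = W(\mu(\theta)) = O(\tau\log(1/\tau)) = O(\tau^{1/6})$; combined with the previous paragraph this forces $\w^1[f]=O(\tau^{1/6})$ too, and then $D \le 2\w^1[f]+2\sum_i\tilde{f}(i)^2 = 2\w^1[f]+2B^2 = O(\tau^{1/6})$ finishes this case. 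If instead $|\theta|\le\sqrt{\log(1/\tau)}$, write $\ell(x)=w\cdot x-\theta$, so $f(x)=\sgn(\ell(x))$ and $f(x)(w\cdot x)=|\ell(x)|+\theta\,f(x)$; hence
$$ B-A \;=\; \bigl(\E_{\mathcal{N}^n}[\,|\ell|\,]-\E_{\mathcal{U}_n}[\,|\ell|\,]\bigr) \;+\; \theta\,\bigl(\E_{\mathcal{N}^n}[f]-\E_{\mathcal{U}_n}[f]\bigr). $$
Fact~\ref{fact:gaussian-vs-reg}(ii) bounds the first bracket by $O(\tau)$ and Fact~\ref{fact:gaussian-vs-reg}(i) bounds the second by $O(\tau)$, so $|B-A| = O\bigl(\tau\,(1+|\theta|)\bigr) = O\bigl(\tau\sqrt{\log(1/\tau)}\bigr) = O(\tau^{1/6})$; since $0\le B\le\sqrt{2/\pi}$ we get $2B(B-A)=O(\tau^{1/6})$ and therefore $D = O(\tau^{1/6})$.

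I expect the one genuinely delicate point to be the bookkeeping around a large threshold $w_0$: Fact~\ref{fact:gaussian-vs-reg} compares $\E[f]$ (and $\E[|\ell|]$) on the cube and under the Gaussian only with an absolute $O(\tau)$ error, so multiplying the former difference by $\theta$ is harmless only once we know $|\theta| = O(\sqrt{\log(1/\tau)})$ — which is exactly the regime isolated by the case split, the complementary regime being dispatched by noting that there both $\w^1[f]$ and the Gaussian degree-$1$ weight $W(\mu(\theta))$ are already $O(\tau^{1/6})$. Everything else is a routine expansion plus the cited facts.
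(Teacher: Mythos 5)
Your proof is correct and follows essentially the same blueprint as the paper's: both normalize $\|w\|_2=1$, split on the size of the threshold, and in the main (bounded-threshold) case combine Theorem~\ref{thm:reg-w1}, the bound $|W'|<1$ from Fact~\ref{fact:gaussian-functions}(ii), the identity $f(x)(w\cdot x)=|w\cdot x - w_0|+w_0 f(x)$, and Fact~\ref{fact:gaussian-vs-reg} to control the cross term with an error of size $(|w_0|+1)\tau$. Your reorganization of the expansion as $D=(\w^1[f]-B^2)+2B(B-A)$ is a cosmetic improvement, and the one genuinely different sub-step is the large-threshold case: the paper applies the Hoeffding bound on the cube to show $\wh{f}(0)$ is near $\pm 1$ and then kills both Fourier weights via Parseval, whereas you go through $1-|\mu(\theta)|\le\tau^{1/2}$, Fact~\ref{fact:gaussian-functions}(iii), and Theorem~\ref{thm:reg-w1} to conclude both $B^2$ and $\w^1[f]$ are $O(\tau^{1/6})$ — a valid alternative that avoids a separate Hoeffding application at the cost of a worse (but still sufficient) error in that sub-case.
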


\begin{proof}
We can assume that $\|w\|_2=1$. Since $w$ is $\tau$-regular, by Fact~\ref{fact:gaussian-vs-reg} (i) we have that $\wh{f}(0) \approx^{\tau} \tilde{f}(0)$.
It suffices to show that
\begin{equation} \label{eqn:goal}
\littlesum_{i=1}^n \wh{f}(i)^2 + \littlesum_{i=1}^n \tilde{f}(i)^2 \approx^{\tau^{1/6}}  2 \littlesum_{i=1}^n \wh{f}(i) \tilde{f}(i).
\end{equation}

We first note that the lemma follows easily for the case that $|w_0| > \sqrt{2\ln(2/\tau)}.$ In this case, by an application of the Hoeffding bound (Theorem~\ref{thm:chb})
it follows that $|\wh{f}(0)| \geq 1-2\tau$, hence $|\tilde{f}(0)| \geq 1-3\tau$.  By Parseval's identity we have
$$\littlesum_{i=1}^n \wh{f}(i)^2 \leq \littlesum_{\emptyset \neq S \subseteq [n]} \wh{f}(S)^2 = 1 - \wh{f}(0)^2 \leq 4\tau;$$
similarly,  in the Gaussian setting, we get 
$$\littlesum_{i=1}^n \tilde{f}(i)^2 \leq \littlesum_{\emptyset \neq S} \tilde{f}(S)^2 \leq 1 - \tilde{f}(0)^2 \leq 6\tau.$$
Hence, we conclude that $\littlesum_{i=1}^n ( \wh{f}(i) - \tilde{f}(i) )^2 \leq  2 \littlesum_{i=1}^n \wh{f}(i)^2 +  2 \littlesum_{i=1}^n \tilde{f}(i)^2 = O(\tau).$

We now consider the case that $|w_0| \leq \sqrt{2\ln(2/\tau)}$ and proceed to prove (\ref{eqn:goal}).
By Fact~\ref{fact:gaussian-hermite} (iii) we get $\littlesum_{i=1}^n \tilde{f}(i)^2 = W(\tilde{f}(0)).$ Moreover, Theorem~\ref{thm:reg-w1} gives
that $\littlesum_{i=1}^n \wh{f}(i)^2  \approx^{\tau^{1/6}} W(\wh{f}(0)).$ We now claim that $W(\wh{f}(0))  \approx^{\tau} W(\tilde{f}(0)).$ This follows from the mean value theorem,
since $\wh{f}(0) \approx^{\tau} \tilde{f}(0)$ and $|W'| <1$ everywhere, by Fact~\ref{fact:gaussian-functions}. Therefore, we conclude that the LHS of~(\ref{eqn:goal}) satisfies
$$ \littlesum_{i=1}^n \wh{f}(i)^2 + \littlesum_{i=1}^n \tilde{f}(i)^2  \approx^{\tau^{1/6}}  2 W(\tilde{f}(0)).$$
For the RHS of (\ref{eqn:goal})  we can write
$$
\littlesum_{i=1}^n \wh{f}(i) \tilde{f}(i) = \sqrt{W(\tilde{f}(0))}  \littlesum_{i=1}^n w_i \wh{f}(i)
=  \sqrt{W(\tilde{f}(0))}  \E_{x \in \mathcal{U}_n} [ (w \cdot x) \sgn (w \cdot x - w_0)]
$$
where the first equation follows from Fact~\ref{fact:gaussian-hermite} (ii) and the third is Plancherel's identity. Moreover, by definition we have
$$   \E_{x \in \mathcal{U}_n} [ (w \cdot x) \sgn (w \cdot x - w_0)] =    \E_{x \in \mathcal{U}_n} [ |w \cdot x-w_0|] + w_0  \E_{x \in \mathcal{U}_n} [f(x)].$$
Recalling Fact~\ref{fact:gaussian-vs-reg} we deduce that
$$   \E_{x \in \mathcal{U}_n} [ (w \cdot x) \sgn (w \cdot x - w_0)]  \approx^{(|w_0|+1)\tau}    \E_{x \in \mathcal{N}^n} [ (w \cdot x) \sgn (w \cdot x - w_0)] .$$
Now, the RHS above satisfies
\begin{eqnarray*} 
\E_{x \in \mathcal{N}^n} [ (w \cdot x) \sgn (w \cdot x - w_0)]  &=&  
\E_{X \in \mathcal{N}} [ |X-w_0|] +w_0  \E_{x \in \mathcal{N}^n} [f(x)]\\
& = & 2\phi(w_0) = \sqrt{W(\tilde{f}(0))}
\end{eqnarray*}
where the first equality follows by definition, the second uses Fact~\ref{fact:gaussian-functions} (i) and the third uses the definition of $\phi$. Therefore,
$$  \E_{x \in \mathcal{U}_n} [ (w \cdot x) \sgn (w \cdot x - w_0)]  \approx^{(w_0+1)\tau}   \sqrt{W(\tilde{f}(0))}.$$
Since the function $W$ is uniformly bounded from above by $2/\pi$, we conclude that the RHS of~(\ref{eqn:goal}) satisfies
$$ \littlesum_{i=1}^n \wh{f}(i) \tilde{f}(i)  \approx^{(|w_0|+1)\tau}  W(\tilde{f}(0)).$$
The proof now follows from the fact that $(|w_0|+1) \tau < \tau^{1/6}$, 
which holds since $|w_0| = O(\sqrt{\log(1/\tau)})$.
\end{proof}

Our next lemma, a simple generalization of Lemma~\ref{lem:reg-fourier-vs-hermite} above, shows that for any LTF,
if the variables in its tail are replaced by independent standard Gaussians, the corresponding degree $1$-Fourier and Hermite
coefficients of the tail variables are very close to each other.

\begin{lemma}\label{lem:tail-fourier-vs-hermite}
Let $f = \sgn(w_H \cdot x_H + w_T \cdot x_T)$.
For $i \in T$, define $ \wh{f} (i) = \E_{x  \sim \mathcal{U}_n} [f(x)x_i]$ and
$ \tilde{f} (i) = \E_{x_H  \sim \mathcal{U}_{|H|}, x_T \sim \mathcal{N}^{|T|}} [f(x)x_i]$.
If $w_T$ is $\tau$-regular, then we have 
$\littlesum_{i \in T} (\wh{f}(i) - \tilde{f}(i))^2  =  O(\tau^{1/6}).$
\end{lemma}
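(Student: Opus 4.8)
The plan is to deduce this from Lemma~\ref{lem:reg-fourier-vs-hermite} by conditioning on the head variables. First I would fix an arbitrary $\rho \in \bits^{|H|}$ and pass to the restricted function $f_\rho : \bits^{|T|} \to \bits$ given by $f_\rho(x_T) = f(\rho, x_T) = \sgn(w_T \cdot x_T - \theta_\rho)$, where $\theta_\rho \eqdef - w_H \cdot \rho$. The crucial point is that $f_\rho$ is an LTF on the $|T|$ tail variables whose weight vector is exactly $w_T$, independently of $\rho$; since $w_T$ is $\tau$-regular and regularity is scale-invariant, Lemma~\ref{lem:reg-fourier-vs-hermite} applies verbatim to $f_\rho$ and gives
\[
\sum_{i \in T} \big( \wh{f_\rho}(i) - \tilde{f_\rho}(i) \big)^2 = O(\tau^{1/6}),
\]
with the $O(\cdot)$ constant uniform over $\rho$. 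Here $\wh{f_\rho}(i) \eqdef \E_{x_T \sim \mathcal{U}_{|T|}}[f_\rho(x_T)\,(x_T)_i]$ and $\tilde{f_\rho}(i) \eqdef \E_{x_T \sim \mathcal{N}^{|T|}}[f_\rho(x_T)\,(x_T)_i]$ denote the degree-$1$ Fourier and Hermite coefficients of $f_\rho$, respectively.

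Next I would note that for every $i \in T$ the coefficients of $f$ are precisely the $\rho$-averages of those of $f_\rho$: directly from the definitions, $\wh{f}(i) = \E_{\rho \sim \mathcal{U}_{|H|}}[\wh{f_\rho}(i)]$ and $\tilde{f}(i) = \E_{\rho \sim \mathcal{U}_{|H|}}[\tilde{f_\rho}(i)]$, so that $\wh{f}(i) - \tilde{f}(i) = \E_\rho[\wh{f_\rho}(i) - \tilde{f_\rho}(i)]$. Applying Jensen's inequality (equivalently, Cauchy--Schwarz) gives $(\wh{f}(i) - \tilde{f}(i))^2 \le \E_\rho[(\wh{f_\rho}(i) - \tilde{f_\rho}(i))^2]$; summing over the finitely many $i \in T$ and exchanging the sum with the expectation then yields
\[
\sum_{i \in T} \big( \wh{f}(i) - \tilde{f}(i) \big)^2 \;\le\; \E_\rho\Big[ \sum_{i \in T} \big( \wh{f_\rho}(i) - \tilde{f_\rho}(i) \big)^2 \Big] \;=\; O(\tau^{1/6}),
\]
which is the claimed bound.

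I do not anticipate a genuine obstacle in this argument; it is essentially a ``conditioning and averaging'' reduction to the already-established Lemma~\ref{lem:reg-fourier-vs-hermite}. The only points needing a little care are (i)~verifying that the restricted weight vector is literally $w_T$ for every head assignment $\rho$, so that $\tau$-regularity is inherited by each $f_\rho$ with no loss in the parameter, and (ii)~applying the Jensen/convexity step in the correct direction (the square of an average is at most the average of the squares). Everything else is routine.
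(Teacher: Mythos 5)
Your proof is correct and follows essentially the same conditioning-and-averaging argument as the paper: restrict to $f_\rho$ for each fixed head assignment $\rho$, observe that $f_\rho$ is a $\tau$-regular LTF with weight vector $w_T$ (so that Lemma~\ref{lem:reg-fourier-vs-hermite} applies with a bound uniform in the threshold), and then use Jensen/Cauchy--Schwarz to pass the bound from the $\rho$-average of squares to the square of the $\rho$-average. Your explicit remark that the $O(\cdot)$ is uniform in $\rho$ is the one point the paper leaves implicit, and it is exactly the right thing to check.
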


\begin{proof}
Fix an assignment $\rho \in \{-1 , 1\}^{|H|}$ to the variables in $H$ (head coordinates) and consider the restriction $f_{\rho}$ over the coordinates in $T$,
i.e. $f_{\rho} (x_T) = \sgn(w_H \cdot \rho + w_T  \cdot x_T).$ For every assignment $\rho$, the restriction $f_{\rho}$
is a $\tau$-regular LTF (with a different threshold); hence Lemma~\ref{lem:reg-fourier-vs-hermite} yields that
for all $\rho \in \{-1 , 1\}^{|H|}$ we have
\begin{equation} \label{eqn:regular-restriction}
\littlesum_{i \in T}  \left( \wh{f_{\rho}}(i) -  \tilde{f_{\rho}}(i) \right)^2 = O(\tau^{1/6}).
\end{equation}
Hence, we obtain
\begin{eqnarray*}
\littlesum_{i \in T } \left( \wh{f}(i) - \tilde{f}(i) \right)^2  =  \littlesum_{i \in T } {\left ( \E_{\rho \sim \mathcal{U}_{|H|}}  \left[ \wh{f_{\rho}}(i) -  \tilde{f_{\rho}}(i) \right]    \right)^2 }
&\leq&  \littlesum_{i \in T} { \E_{\rho \sim \mathcal{U}_{|H|}}  \left[ \left( \wh{f_{\rho}}(i) -  \tilde{f_{\rho}}(i) \right)^2 \right]   } \\
&=& \E_{\rho \sim \mathcal{U}_{|H|}}   \left[    \littlesum_{i \in T}  \left( \wh{f_{\rho}}(i) -  \tilde{f_{\rho}}(i)  \right)^2  \right]  \\
&=&  O(\tau^{1/6})
\end{eqnarray*}
where the first equality uses the definition of the Fourier/Hermite coefficients, the first inequality follows from Jensen's inequality for each summand,
the second equality follows by linearity and the last equality uses (\ref{eqn:regular-restriction}).
\end{proof}

Replacing the Boolean tail variables by Gaussians alters the Fourier 
coefficients of the head variables as well.
Our next lemma shows that the corresponding change is bounded in terms of 
the regularity of the tail.

\begin{lemma} \label{lem:head-fourier-vs-hermite}
Let $f = \sgn(w_H \cdot x_H + w_T \cdot x_T)$. For $i \in H$ define $ \wh{f} (i) = \E_{x  \sim \mathcal{U}_n} [f(x)x_i]$
and $ \tilde{f} (i) = \E_{x_H  \sim \mathcal{U}_{|H|}, x_T \sim \mathcal{N}^{|T|}} [f(x)x_i]$.
If $w_T$ is $\tau$-regular, then we have $\littlesum_{i \in H} (\wh{f}(i) - \tilde{f}(i))^2  = O( \tau^2).$
\end{lemma}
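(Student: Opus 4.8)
The idea is to dualize the argument used for the tail coefficients (Lemma~\ref{lem:tail-fourier-vs-hermite}): instead of freezing the head and averaging over the tail, here I would average over the tail and obtain two functions of the head coordinates, then compare them coefficient-by-coefficient using Parseval. First I would assume, without loss of generality, that $\|w_T\|_2 = 1$ (scaling the whole weight vector $(w_H,w_T)$ by a positive constant changes neither $f$ nor the $\tau$-regularity of $w_T$). For each fixed head assignment $\rho \in \bits^{|H|}$ set $\theta_\rho = - w_H \cdot \rho$ and define the tail averages $g(\rho) = \E_{x_T \sim \mathcal{U}_{|T|}}[\sgn(w_T \cdot x_T - \theta_\rho)]$ and $\tilde g(\rho) = \E_{x_T \sim \mathcal{N}^{|T|}}[\sgn(w_T \cdot x_T - \theta_\rho)]$; these are exactly the uniform- and Gaussian-expectations of the $\tau$-regular LTF $x_T \mapsto \sgn(w_T \cdot x_T - \theta_\rho)$.

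Next I would record that for $i \in H$ the coefficient $\wh{f}(i)$ depends on the tail only through $g$: since $x_i$ is a function of $x_H$ alone, $\wh{f}(i) = \E_{x_H \sim \mathcal{U}_{|H|}}[x_i\, g(x_H)]$ and likewise $\tilde f(i) = \E_{x_H \sim \mathcal{U}_{|H|}}[x_i\, \tilde g(x_H)]$. Therefore $\wh{f}(i) - \tilde f(i)$ is precisely the $\chi_{\{i\}}$-Fourier coefficient (over the cube $\bits^{|H|}$) of the function $g - \tilde g$, and Parseval/Bessel gives $\littlesum_{i \in H}\big(\wh{f}(i) - \tilde f(i)\big)^2 \le \littlesum_{S \subseteq H}\widehat{(g-\tilde g)}(S)^2 = \E_{x_H \sim \mathcal{U}_{|H|}}\big[(g(x_H) - \tilde g(x_H))^2\big] \le \max_{\rho}\,(g(\rho) - \tilde g(\rho))^2$.

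Finally I would bound $\max_\rho |g(\rho) - \tilde g(\rho)|$ pointwise: for each fixed $\rho$, $w_T$ is a $\tau$-regular unit vector, so Fact~\ref{fact:gaussian-vs-reg}(i) applied to the linear form $w_T \cdot x_T - \theta_\rho$ yields $|g(\rho) - \tilde g(\rho)| = O(\tau)$. Chaining this with the previous display gives $\littlesum_{i \in H}(\wh{f}(i) - \tilde f(i))^2 = O(\tau^2)$, as claimed. I do not expect a real obstacle here; the only points needing a bit of care are (a) applying Fact~\ref{fact:gaussian-vs-reg} with the correct normalization, i.e. rescaling the tail vector to unit length (which only alters the threshold $\theta_\rho$, not the regularity), and (b) noting that the improvement to $O(\tau^2)$ — versus the $O(\tau^{1/6})$ of Lemmas~\ref{lem:reg-fourier-vs-hermite} and \ref{lem:tail-fourier-vs-hermite} — comes for free because here we only need the zeroth-order ($\E[f]$-level) agreement between the uniform and Gaussian worlds, which Berry–Ess{\'e}en delivers at rate $\tau$.
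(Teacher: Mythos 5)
Your proof is correct and follows essentially the same route as the paper's: average over the tail to produce two functions $g,\tilde g$ on $\bits^{|H|}$, observe that the differences $\wh f(i)-\tilde f(i)$ are precisely the degree-one Fourier coefficients of $g-\tilde g$, bound their sum of squares by $\|g-\tilde g\|_\infty^2$ via Parseval, and then control the sup-norm pointwise using Berry--Ess{\'e}en (the paper invokes Fact~\ref{fact:be} directly where you invoke Fact~\ref{fact:gaussian-vs-reg}(i), but these are the same estimate). Your remark (b) about why the rate improves to $O(\tau^2)$ here is also accurate.
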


\begin{proof}
We define the functions $f' : \{-1,1\}^{|H|} \to [-1,1]$ and
$f'' : \{-1,1\}^{|H|} \to [-1,1]$ as follows :
$$
f'(x_H) = \E_{x_T \in \mathcal{U}_{|T|}} \left[ f(x_H, x_T) \right]  \textrm{ and }
f''(x_H) = \E_{x_T \in\mathcal{N}^{|T|}} \left[ f(x_H, x_T) \right].
$$
By definition, for all $i \in H$ it holds $\wh{f'}(i) = \wh{f}(i)$ and $\wh{f''}(i) = \tilde{f}(i)$.
We can therefore write
\begin{eqnarray*}
\littlesum_{i \in H} (\wh{f}(i) - \tilde{f}(i))^2 &=& 
\littlesum_{i \in H} (\wh{f'}(i) - \wh{f''}(i))^2\\
&\leq& \littlesum_{S \subseteq H}  (\wh{f'}(S)  - \wh{f''}(S))^2 = \E_{x \in \mathcal{U}_{|H|} }  (f'(x) - f''(x))^2 \le \| f' - f'' \|_{\infty}^2
\end{eqnarray*}
where the second equality is Parseval's identity and the final inequality follows from the monotonicity of the norms
($\|\cdot\|_{\infty}$ denotes the sup-norm of a random variable).

In order to bound $ \| f' - f'' \|_{\infty}$ we exploit the regularity of the tail via the Berry-Ess{\'e}en theorem.
Indeed, fix an assignment $\rho \in \bits^{|H|}$ to $x_H$. Then
\begin{eqnarray*}
\left| f' (\rho) - f''(\rho) \right| &\le& 
2 \left| \Pr_{x_T \in  \{-1,1\}^{|T|}}  \left[ w_T \cdot x_T  + w_H \cdot  \rho  \geq 0 \right] - \right.\\
& & \left.\text{~~~~~~~~~~}\Pr_{x_T \in  \mathcal{N}^{|T|}} \left [w_T \cdot x_T + w_H \cdot \rho  \geq 0 \right] \right|
\end{eqnarray*}
Since $w_T$ is $\tau$-regular, by Fact~\ref{fact:be}, the RHS above is bounded from above by $2\tau$. Since this holds for any restriction $\rho$ to the head we conclude that
$ \| f' - f'' \|_{\infty} \leq 2\tau$ as desired.
\end{proof}

\noindent Theorem~\ref{thm:w1-step1} follows by combining Lemmas~\ref{lem:tail-fourier-vs-hermite} and~\ref{lem:head-fourier-vs-hermite}.

\fi

\medskip

\noindent {\bf Step 2: ``Collapsing'' the tail.}
Let $F : \bits^{|H|} \times  \R \to \bits$ be defined by
$F(x_H, y) = \sgn(w_H x_H + y)$
\newa{(recall that we have assumed that $w$ is scaled so that
the ``tail weight'' $\|w_T\|_2$ equals 1).}
 For $i \in H$, we define
$$\wh{F}(i) = \E_{x_H \sim \mathcal{U}_{|H|}, y \sim \mathcal{N}(0,1)} [F(x_H, y) x_i]$$ and
$$\wh{F}(y) = \E_{x_H \sim \mathcal{U}_{|H|}, y \sim \mathcal{N}(0,1)} [F(x_H, y) y].$$
We also denote $\widetilde{\w^1}[F] = \littlesum_{i \in H} (\wh{F}(i))^2 + \wh{F}(y)^2.$
Our main result for this step is that ``collapsing'' all $|T|$ tail Gaussian variables
to a single Gaussian variable does not change the degree-1 ``Fourier weight'':

\begin{theorem} \label{thm:w1-step2}
We have that $\widetilde{\w^1}[F] =  \widetilde{\w^1}[f].$
\end{theorem}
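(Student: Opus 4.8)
The plan is to establish the identity coordinate by coordinate, splitting $[n] = H \cup T$ and handling the head coordinates and the tail coordinates separately. The single fact driving everything is that, since we have scaled $w$ so that $\|w_T\|_2 = 1$, a linear combination of independent standard Gaussians with unit-norm coefficient vector is again a standard Gaussian; moreover this holds jointly with the head, so the pair $(x_H,\, w_T \cdot x_T)$ with $x_H \sim \mathcal{U}_{|H|}$ independent of $x_T \sim \mathcal{N}^{|T|}$ has exactly the same distribution as $(x_H,\, y)$ with $y \sim \mathcal{N}(0,1)$ independent of $x_H$. Since $F(x_H,y) = \sgn(w_H x_H + y)$ is obtained from $f(x_H,x_T) = \sgn(w_H x_H + w_T \cdot x_T)$ precisely by this substitution, and for $i \in H$ the quantity $f(x)x_i$ is a function of $(x_H,\, w_T\cdot x_T)$ alone, I would conclude immediately that $\wh{F}(i) = \E[F(x_H,y)x_i] = \E[f(x_H,x_T)x_i] = \tilde f(i)$ for every $i \in H$, hence $\littlesum_{i \in H} (\wh F(i))^2 = \littlesum_{i \in H} (\tilde f(i))^2$.

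It then remains to show $\wh{F}(y)^2 = \littlesum_{i \in T} (\tilde f(i))^2$, and here I would condition on the head. Fix $\rho \in \bits^{|H|}$ and set $\theta_\rho \eqdef -\, w_H \cdot \rho$. The restriction $f_\rho(x_T) = \sgn(w_T \cdot x_T - \theta_\rho)$ is a Gaussian LTF with unit weight vector $w_T$, so by Fact~\ref{fact:gaussian-hermite} its degree-$1$ Hermite coefficients are $\tilde{f_\rho}(i) = \sqrt{W(\tilde{f_\rho}(0))}\, w_{T,i} = \sqrt{W(\mu(\theta_\rho))}\, w_{T,i} = 2\phi(\theta_\rho)\, w_{T,i}$ for $i \in T$, using $W(\mu(\theta)) = (2\phi(\theta))^2$. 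Likewise, the restriction $F_\rho(y) = \sgn(y - \theta_\rho)$ is a one-dimensional Gaussian LTF, and either by Fact~\ref{fact:gaussian-hermite} with $n = 1$, $w = 1$, or directly from Fact~\ref{fact:gaussian-functions}(i) (writing $X = (X - \theta_\rho) + \theta_\rho$), its degree-$1$ Hermite coefficient is $\wh{F_\rho}(y) = \E_{X \sim \mathcal{N}(0,1)}[\sgn(X - \theta_\rho)\, X] = 2\phi(\theta_\rho)$. Averaging over the head (legitimate since $x_i$ for $i \in T$ and $y$ depend only on the tail) gives $\tilde f(i) = \E_\rho[\tilde{f_\rho}(i)] = w_{T,i}\cdot \E_\rho[2\phi(\theta_\rho)]$ and $\wh F(y) = \E_\rho[\wh{F_\rho}(y)] = \E_\rho[2\phi(\theta_\rho)]$, so that, using $\littlesum_{i \in T} w_{T,i}^2 = \|w_T\|_2^2 = 1$,
$$\littlesum_{i \in T} (\tilde f(i))^2 = \Big(\E_\rho[2\phi(\theta_\rho)]\Big)^2 \littlesum_{i \in T} w_{T,i}^2 = \Big(\E_\rho[2\phi(\theta_\rho)]\Big)^2 = \wh F(y)^2.$$

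Combining the two parts yields $\widetilde{\w^1}[F] = \littlesum_{i \in H}(\wh F(i))^2 + \wh F(y)^2 = \littlesum_{i \in H}(\tilde f(i))^2 + \littlesum_{i \in T}(\tilde f(i))^2 = \widetilde{\w^1}[f]$, which is the claim. I do not expect a real obstacle here; the only step requiring a little care is the tail computation, namely the observation that for each head restriction $\rho$ the entire vector of tail Hermite coefficients of $f_\rho$ is the single scalar $2\phi(\theta_\rho)$ times the unit vector $w_T$ — so the tail "weight" $\littlesum_{i \in T}(\tilde{f_\rho}(i))^2$ already equals $(\wh{F_\rho}(y))^2$ before averaging over the head — and since the averaging step is linear and identical on the two sides, the equality is preserved.
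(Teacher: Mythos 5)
Your proof is correct, and the head part ($\widehat{F}(i) = \tilde{f}(i)$ for $i\in H$, via the distributional identity $(x_H,\, w_T\cdot x_T) \overset{d}{=} (x_H,\, y)$) is essentially identical to Lemma~\ref{lem:head-step2} in the paper. The tail part, however, takes a genuinely different route. The paper's Lemma~\ref{lem:tail-step2} argues abstractly: it proves that the sum of squared degree-$1$ Hermite coefficients is invariant under rotation of the Gaussian coordinates (Proposition~\ref{prop:unitary}), then specializes to the unitary sending $x_1 \mapsto w_T \cdot x$ (Proposition~\ref{prop:gaussian-step2}), applied to $\Psi(y) = \E_{x_H}[F(x_H,y)]$. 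This works for an \emph{arbitrary} function of $w_T \cdot x_T$, with no appeal to the LTF structure. You instead condition on the head, invoke the closed-form expression for the degree-$1$ Hermite coefficients of a Gaussian LTF (Fact~\ref{fact:gaussian-hermite}(ii), giving $\tilde{f_\rho}(i) = 2\phi(\theta_\rho)\,w_{T,i}$ and $\widehat{F_\rho}(y) = 2\phi(\theta_\rho)$), and average over $\rho$. This is more computational and uses the specific LTF structure, but it also yields a slightly stronger, coordinate-wise statement that the paper never records explicitly: $\tilde f(i) = w_{T,i}\cdot\widehat{F}(y)$ for each $i \in T$, not merely equality of the squared $\ell_2$ masses. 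Both arguments are sound; the paper's generalizes more cleanly (to any test function of the tail linear form), while yours is shorter given that the relevant Hermite-coefficient formulas (Facts~\ref{fact:gaussian-functions} and~\ref{fact:gaussian-hermite}) are already available.
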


\ifnum\short=0

The theorem follows by combining the following two lemmas.

\begin{lemma}  \label{lem:head-step2}
For every $i \in H$, $\tilde{f}(i) =\wh{F}(i)$.
\end{lemma}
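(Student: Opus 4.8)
The plan is to use the stability of the Gaussian distribution under linear combinations. The one fact that makes everything work is this: if $x_T \sim \mathcal{N}^{|T|}$, then by rotational invariance of the standard Gaussian the scalar random variable $w_T \cdot x_T$ is distributed as $\mathcal{N}(0, \|w_T\|_2^2)$, and since we have normalized so that $\|w_T\|_2 = 1$, it is distributed exactly as the single variable $y \sim \mathcal{N}(0,1)$ appearing in the definition of $F$. Since the only way the tail variables enter $f$ is through the scalar $w_T \cdot x_T$ (and the only way $y$ enters $F$ is as itself), we can replace one by the other without changing any expectation.

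Concretely, I would first fix an arbitrary $x_H \in \{-1,1\}^{|H|}$ and write $c = w_H \cdot x_H \in \R$. Conditioning on this assignment,
$$
\E_{x_T \sim \mathcal{N}^{|T|}}\bigl[ f(x_H, x_T) \bigr]
= \E_{x_T \sim \mathcal{N}^{|T|}}\bigl[ \sgn(c + w_T \cdot x_T) \bigr]
= \E_{y \sim \mathcal{N}(0,1)}\bigl[ \sgn(c + y) \bigr]
= \E_{y \sim \mathcal{N}(0,1)}\bigl[ F(x_H, y) \bigr],
$$
where the middle equality holds because $w_T \cdot x_T$ and $y$ are identically distributed and $t \mapsto \sgn(c+t)$ is a fixed bounded measurable function of a single real variable.

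Next I would multiply by $x_i$ for $i \in H$ and take the expectation over $x_H \sim \mathcal{U}_{|H|}$. Because $i \in H$, the factor $x_i$ is a function of $x_H$ only and hence is constant with respect to the inner expectation over the tail; pulling it out and using the identity above,
$$
\tilde{f}(i) = \E_{x_H \sim \mathcal{U}_{|H|}}\Bigl[ x_i \cdot \E_{x_T \sim \mathcal{N}^{|T|}}[ f(x_H, x_T) ] \Bigr]
= \E_{x_H \sim \mathcal{U}_{|H|}}\Bigl[ x_i \cdot \E_{y \sim \mathcal{N}(0,1)}[ F(x_H, y) ] \Bigr]
= \wh{F}(i),
$$
where the interchange of the two expectations (to match the definitions of $\tilde{f}(i)$ and $\wh{F}(i)$ as joint expectations) is justified by Fubini's theorem since all integrands are bounded. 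This proves the lemma.

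There is really no obstacle here; the only points requiring a bit of care are (i) invoking the equality in distribution of $w_T \cdot x_T$ and $y$ at the level of the one-dimensional marginal, which suffices because $f$ and $F$ see the tail only through that scalar, and (ii) keeping track of which factors depend on $x_H$ versus on the tail variables so that the interchanges of expectation are legitimate.
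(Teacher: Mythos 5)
Your proof is correct and takes essentially the same approach as the paper: both use that $w_T \cdot x_T$ (with $x_T \sim \mathcal{N}^{|T|}$) is distributed as $\mathcal{N}(0,1)$ once $\|w_T\|_2 = 1$, so the joint expectations defining $\tilde{f}(i)$ and $\wh{F}(i)$ coincide. Your version simply spells out the conditioning on $x_H$ and the Fubini step, which the paper treats as immediate.
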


\begin{proof}
The lemma follows straightforwardly by the definitions. Indeed,
for every $i \in H$,
\begin{eqnarray*}
\tilde{f}(i) &=& \E_{x_H \sim \mathcal{U}_{|H|}, x_T \sim 
\mathcal{N}^{|T|}} [ \sgn (w_H x_H + w_T  x_T) x_i ]\\
& = &
\E_{x_H \sim \mathcal{U}_{|H|}, y \sim \mathcal{N}(0,1)} 
[ \sgn (w_H x_H + y ) x_i ]  = \widehat{F}(i)
\end{eqnarray*}
where the third equality uses the fact 
that $w_T \cdot x_T$ is distributed as $\mathcal{N}(0,1)$.
\end{proof}

\begin{lemma}\label{lem:tail-step2}
We have that $(\wh{F}(y))^2  = \littlesum_{i \in T} (\tilde{f}(i))^2.$
\end{lemma}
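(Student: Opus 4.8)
The plan is to exploit the rotational invariance of the Gaussian distribution on the tail coordinates. Fix an assignment $\rho \in \{-1,1\}^{|H|}$ to the head variables and work conditionally on $x_H = \rho$. Writing $y \eqdef w_T \cdot x_T$, the assumption $\|w_T\|_2 = 1$ together with $x_T \sim \mathcal{N}^{|T|}$ gives $y \sim \mathcal{N}(0,1)$, and moreover $f(\rho, x_T) = \sgn(w_H \cdot \rho + w_T \cdot x_T) = F(\rho, y)$ depends on $x_T$ only through $y$. I would decompose each tail coordinate as $(x_T)_i = (w_T)_i\,y + z_i$, where $z = x_T - (w_T \cdot x_T)\,w_T$ is the projection of $x_T$ onto the hyperplane orthogonal to $w_T$. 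The one fact needed here is that $z$ is independent of $y$ and each $z_i$ has mean $0$; this is immediate from rotational invariance (complete $w_T$ to an orthonormal basis, in which $y$ and the coordinates of $z$ are independent standard Gaussians).

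Next I would compute the tail Hermite coefficients. For $i \in T$,
\begin{align*}
\tilde{f}(i) &= \E_{x_H \sim \mathcal{U}_{|H|},\, x_T \sim \mathcal{N}^{|T|}}\!\left[ f(x_H, x_T)\,(x_T)_i \right]
= \E_{\rho}\,\E_{x_T}\!\left[ F(\rho, y)\big( (w_T)_i\, y + z_i \big) \right] \\
&= (w_T)_i \cdot \E_{\rho}\,\E_{x_T}\!\left[ F(\rho, y)\, y \right] + \E_{\rho}\,\E_{x_T}\!\left[ F(\rho, y)\, z_i \right].
\end{align*}
The second term vanishes: conditioning further on $y$, the factor $F(\rho, y)$ is a constant while $\E[z_i \mid \rho, y] = \E[z_i] = 0$ since $z$ is independent of $y$ (and does not depend on $\rho$). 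For the first term, $(\rho, y)$ is distributed exactly as $\mathcal{U}_{|H|} \times \mathcal{N}(0,1)$, so $\E_{\rho}\E_{x_T}[F(\rho,y)\,y] = \wh{F}(y)$ by definition. Hence $\tilde{f}(i) = (w_T)_i \cdot \wh{F}(y)$ for every $i \in T$.

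Summing over the tail coordinates and using $\littlesum_{i \in T}(w_T)_i^2 = \|w_T\|_2^2 = 1$ then gives
\[
\littlesum_{i \in T} (\tilde{f}(i))^2 = \Big( \littlesum_{i \in T} (w_T)_i^2 \Big)\,(\wh{F}(y))^2 = (\wh{F}(y))^2,
\]
which is the claim. There is no real obstacle: the only step requiring a word of care is the orthogonal decomposition and its independence properties, but that is standard. As an alternative route one could apply Fact~\ref{fact:gaussian-hermite}(ii) twice — once to the restricted LTF $x_T \mapsto f(\rho, x_T)$ with unit weight vector $w_T$, giving $\sqrt{W(\alpha(\rho))}\,(w_T)_i$ with $\alpha(\rho) = \E_{x_T \sim \mathcal{N}^{|T|}}[f(\rho, x_T)]$, and once to the one-dimensional LTF $y \mapsto F(\rho, y)$ with weight vector $(1)$, giving $\sqrt{W(\beta(\rho))}$ with $\beta(\rho) = \E_{y \sim \mathcal{N}(0,1)}[F(\rho, y)]$ — and then observe $\alpha(\rho) = \beta(\rho)$ because $w_T \cdot x_T \sim \mathcal{N}(0,1)$; averaging over $\rho$ and squaring yields the same identity.
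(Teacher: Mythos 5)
Your proof is correct, and in fact gives a slightly stronger conclusion than the paper's own argument. The paper proves the lemma through a pair of intermediate propositions: first a general statement that the $L_2$-mass of the degree-one Hermite vector is invariant under any unitary change of coordinates, and then a specialization of that to functions of the form $\Psi(w\cdot x)$ with $\|w\|_2=1$, which is then applied with $\Psi(y)=\E_{x_H}[F(x_H,y)]$. This chain only produces the sum-of-squares identity $\sum_{i\in T}\tilde f(i)^2=\wh F(y)^2$, whereas your orthogonal decomposition $(x_T)_i=(w_T)_i\,y+z_i$ (with $z=x_T-(w_T\cdot x_T)w_T$ independent of $y$ by rotational invariance) directly yields the coordinate-wise identity $\tilde f(i)=(w_T)_i\,\wh F(y)$, from which the lemma follows by squaring and summing. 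Both arguments rest on the same ingredient (rotational invariance of the standard Gaussian), but yours is more elementary, avoids the abstract unitary-invariance proposition, and makes the individual tail Hermite coefficients explicit rather than only their squared sum. Your alternative route via Fact~\ref{fact:gaussian-hermite}(ii) applied to each restriction $f_\rho$ and $F_\rho$ is also valid and is essentially a third way of seeing the same rotational fact.
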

\begin{proof}
This lemma is intuitively clear but we nonetheless give a proof.
We need the following simple propositions.

\begin{proposition}\label{prop:unitary}
Let $h : \R^m \to \R$ with $h \in L_2(\mathcal{N}(0,1)^m)$.
Let  $U : \R^m \to \R^m$ be a unitary linear transformation.
For $i \in [m]$, define $\tilde{h}(i) =  \E_{x \sim \mathcal{N}^m} [h(x)x_i]$
and $\tilde{h}(i)' =  \E_{x \sim \mathcal{N}^m} [ h(x) (Ux)_i ].$
Then, $\littlesum_{i=1}^m \tilde{h}(i)^2 = \sum  _{i=1}^m \tilde{h}(i)'^2$.
\end{proposition}

\begin{proof}
Let $(Ux)_i = \littlesum_{j=1}^m a_{ij} x_j$. By linearity, we get that $\tilde{h}(i)'  =  \littlesum_{j=1}^m a_{ij} \tilde{h}(j)$. Then,
$$
\littlesum_{i=1}^m \tilde{h}(i)'^2 = \littlesum_{j=1}^m \left( \littlesum_{i=1}^m a_{ij}^2 \right) \tilde{h}(j)^2  +   \littlesum_{j \neq i} \left( \littlesum_{k=1}^m a_{kj} a_{ki} \right) \tilde{h}(i) \tilde{h}(j)
$$
By elementary properties of unitary matrices, we have 
(i) $\littlesum_{i=1}^m a_{ij}^2=1$ for all $j$, and (ii) 
$\littlesum_{k=1}^m a_{kj}a_{ki}=0$
for $i \neq j$.
Substitution completes the proof.
\end{proof}

\begin{proposition}\label{prop:gaussian-step2}
Let $\Psi: \R \to \R$ and $\Phi : \R^{m} \to \R$ with $\Phi  \in L_2(\mathcal{N}(0,1)^m)$ defined as
$\Phi(x) = \Psi \left(\littlesum_{i=1}^{m} w_i x_i\right)$,
where $x, w \in \R^m$ with $\| w \|_2 =1$.
Then
$\littlesum_{i=1}^{m} \widetilde{\Phi} (x_i)^2 =\widetilde{\Psi}(y)^2.$
\end{proposition}

\begin{proof}
It is clear there is a unitary matrix $U$ such that
$ U : x_1 \mapsto \littlesum_{i=1}^m w_i x_i$. Hence,
an application of Proposition~\ref{prop:unitary} gives us
\begin{equation}\label{eq:1}
\littlesum_{i=1}^{m} \tilde{\Phi}(x_i)^2 = \littlesum_{i=1}^{m} \left( \E_{x \in \mathcal{N}^m}  [ \Phi(x) \cdot (U x)_i ]  \right)^2
\end{equation}
Now observe that for $i>1$, $\E_{x \in \mathcal{N}^{m}}  [ \Phi(x) \cdot (U x)_i ]  = 0$
as $\Phi(x)$ is independent of $(U x)_i$.
Using the rotational invariance of the Gaussian measure and using $y$ instead of $(Ux)_1$ we deduce
$$
 \E_{x \in \mathcal{N}^{m}}   [\Phi(x) \cdot (U x)_1] = \E_{y \in \mathcal{N}(0,1)}  [ (\Psi(y)  y ]  =\widetilde{\Psi}(y).
$$
Combining with (\ref{eq:1}) completes the proof.
\end{proof}

The proof of the lemma follows by a simple application of the above proposition.
Indeed, set
$\Psi(y) = \E_{x_H \sim \mathcal{U}_{|H|}} [ F(x_H, y)].$
 An application of Proposition~\ref{prop:gaussian-step2} gives us
 $\littlesum_{i \in T} \widetilde{\Phi}(x_i) ^2 = \widetilde{\Psi}(y)^2.$
 Now note that for $i \in T$, by definition,
 $\widetilde{\Phi}(x_i) = \widetilde{f}(i)$,
 and $\widetilde{\Psi}(y) =  \wh{F}(y)$.
 This completes the proof of the lemma.
\end{proof}

\fi

\medskip

\noindent {\bf Step 3: ``Booleanizing'' the tail.}
 Let $M  =  \Theta (\epsilon^{-24})$. Consider the LTF
$g$ mapping $(x_H, z) \to \bits$, where $x_H \in \bits^{|H|}$ 
and $z \in \bits^{M}$, defined by
$$ g(x_H, z) = \sgn \left(w_H \cdot x_H + (\littlesum_{i=1}^{M} z_i) / \sqrt{M}\right).$$

In this step we show that replacing the (single) Gaussian tail variable with a scaled sum of
Boolean variables does not change the degree-1 ``Fourier weight'' by much:

\begin{theorem} \label{thm:w1-step3}
We have that $| \widetilde{\w^{1}}[F] - \w^1[g] | = O(M^{-1/24}).$
\end{theorem}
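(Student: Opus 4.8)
The plan is to compare $F$ and $g$ not pointwise (they live on different domains) but through their ``degree-$1$ Fourier/Hermite coefficients'', the engine being the Berry--Ess\'een theorem (Theorem~\ref{thm:be}): writing $S=(\littlesum_{i=1}^M z_i)/\sqrt M$, the cdf of $S$ is within $\rho_3/\sigma^3=M^{-1/2}$ of $\Phi$ in sup-norm. I would split the estimate into a ``head'' part, comparing $\littlesum_{i\in H}\wh F(i)^2$ with $\littlesum_{i\in H}\wh g(i)^2$, and a ``tail'' part, comparing $\wh F(y)^2$ with $\littlesum_{j=1}^M\wh g(j)^2$, and then add.

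\textbf{Head coefficients.} Put $F'(x_H)=\E_y[F(x_H,y)]$ and $g'(x_H)=\E_z[g(x_H,z)]$, so $\wh F(i)=\wh{F'}(i)$ and $\wh g(i)=\wh{g'}(i)$ for $i\in H$. For fixed $x_H$, writing $a=w_H\cdot x_H$, we have $F'(x_H)=1-2\Phi(-a)$ while $g'(x_H)=1-2\Pr[S<-a]$, so Berry--Ess\'een gives $\|F'-g'\|_\infty\le 2M^{-1/2}$. By Parseval, $\littlesum_{i\in H}(\wh F(i)-\wh g(i))^2\le\littlesum_{S\subseteq H}(\wh{F'}(S)-\wh{g'}(S))^2=\E_{x_H}[(F'-g')^2]\le 4M^{-1}$; since $F^2\equiv g^2\equiv 1$, Parseval (in the mixed Boolean$\times$Gaussian space for $F$, ordinarily for $g$) also gives $\littlesum_{i\in H}\wh F(i)^2\le 1$ and $\littlesum_{i\in H}\wh g(i)^2\le 1$, so Fact~\ref{fact:close} yields $\big|\littlesum_{i\in H}\wh F(i)^2-\littlesum_{i\in H}\wh g(i)^2\big|\le 4M^{-1/2}$.

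\textbf{Tail coefficient.} I would first record exact identities. Fixing $x_H$ and using $\sgn(t)\,t=|t|$ (together with Fact~\ref{fact:gaussian-hermite}(ii), equivalently Fact~\ref{fact:gaussian-functions}(i)) gives $\wh F(y)=\E_{x_H}[2\phi(a)]=\E_{x_H}\big[\E_y|a+y|-a\,\E_y\sgn(a+y)\big]$. On the Boolean side, by symmetry $\littlesum_{j=1}^M\wh g(j)^2=M\,\wh g(1)^2$, and summing $z_j$ over $j$ shows $\sqrt M\,\wh g(1)=\E_{x_H}\big[\E_z|a+S|-a\,\E_z\sgn(a+S)\big]$, a nonnegative quantity (since $g$ is monotone in $z_1$). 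Hence it suffices to bound, uniformly in $a$, the quantities $\E_z\sgn(a+S)-\E_y\sgn(a+y)$, $\E_z|a+S|-\E_y|a+y|$, and the $a$-weighted sign difference. Berry--Ess\'een gives $|\E_z\sgn(a+S)-\E_y\sgn(a+y)|\le 2M^{-1/2}$ for all $a$. For the $|a+\cdot|$ term, the layer-cake formula gives $\big|\E_z|a+S|-\E_y|a+y|\big|\le\int_0^\infty\!\big(|\Pr[S>t-a]-\Pr[y>t-a]|+|\Pr[S<-t-a]-\Pr[y<-t-a]|\big)\,dt$; I would cut this integral at $|t-a|=T$ with $T=\Theta(\sqrt{\log M})$, using Berry--Ess\'een on the window of length $O(T)$ (contributing $O(TM^{-1/2})$) and the Hoeffding bound (Theorem~\ref{thm:chb}) for $S$ plus the Gaussian tail bound for $y$ outside it (contributing $O(Te^{-T^2/8})=O(M^{-1})$). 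The same split bounds $a\cdot(\E_z\sgn(a+S)-\E_y\sgn(a+y))$: for $|a|\le T$ it is $\le TM^{-1/2}$, and for $|a|>T$ the sign difference itself is $O(e^{-a^2/8})$, so $a$ times it is $O(Te^{-T^2/8})$. Combining, $|\sqrt M\,\wh g(1)-\wh F(y)|\le\tO(M^{-1/2})$, and since both quantities are $O(1)$, $\big|\wh F(y)^2-\littlesum_{j}\wh g(j)^2\big|\le\tO(M^{-1/2})$.

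\textbf{Conclusion and the main obstacle.} Adding the two parts, $|\widetilde{\w^1}[F]-\w^1[g]|\le\tO(M^{-1/2})=O(M^{-1/24})$ for $M$ large (with room to spare; the exponent $1/24$ is chosen only to be consistent with the $\poly(1/\eps)$ parameter budget of Theorem~\ref{thm:large-ci-w1}). I expect the main technical point to be the uniform-in-$a$ control of $\E_z|a+S|-\E_y|a+y|$: because $|a+\cdot|$ is unbounded, a naive global application of Berry--Ess\'een diverges, so one must trade Berry--Ess\'een accuracy against Hoeffding/Gaussian tail decay through the $T=\Theta(\sqrt{\log M})$ window, and simultaneously ensure the factor of $a$ multiplying the (cruder) sign term does not blow up --- which it does not, precisely because that sign difference is exponentially small once $|a|$ exceeds $T$.
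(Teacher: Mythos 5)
Your proof is correct, but the route you take for the tail coefficient is genuinely different from the paper's, and in fact quantitatively stronger and more self-contained. The head part matches: the paper's Lemma~\ref{lem:head-step3} defines exactly your $F'$ and $g'$, compares them in sup-norm via Berry--Ess\'een, and converts via Parseval/Fact~\ref{fact:close}, just as you do. For the tail, however, the paper's Lemma~\ref{lem:tail-step3} does not compare $\wh F(y)$ and $\sqrt M\,\wh g(1)$ directly. Instead it introduces the ``Gaussianized'' coefficients $\tilde g(z_i)$ (replacing the Boolean $z$ by $M$ independent Gaussians), invokes the already-proved Lemma~\ref{lem:tail-fourier-vs-hermite} with $\tau=1/\sqrt M$ to get $\littlesum_i(\wh g(z_i)-\tilde g(z_i))^2=O(M^{-1/12})$, applies Fact~\ref{fact:close} to obtain $O(M^{-1/24})$ for the difference of sums of squares, and then repeats the rotation-invariance argument of Lemma~\ref{lem:tail-step2} to get the exact identity $\littlesum_i\tilde g(z_i)^2=\wh F(y)^2$. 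That chain is economical in the sense of re-using Step~1's machinery, but it routes through Lemma~\ref{lem:reg-fourier-vs-hermite} and hence through Theorem~\ref{thm:reg-w1} (Theorem~48 of~\cite{MORS:10}), a non-trivial external input, and it is responsible for the ugly exponent $1/24$. Your argument instead writes both $\wh F(y)$ and $\sqrt M\,\wh g(1)$ as $\E_{x_H}\big[\E|a+\cdot|-a\,\E\sgn(a+\cdot)\big]$, compares them uniformly in $a=w_H\cdot x_H$ by a layer-cake decomposition with a truncation window $T=\Theta(\sqrt{\log M})$, using Berry--Ess\'een inside the window and Hoeffding/Gaussian tails outside (and the exponential smallness of the sign discrepancy to absorb the factor $a$). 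This is elementary, avoids~\cite{MORS:10} entirely, and yields $\widetilde O(M^{-1/2})$ rather than $O(M^{-1/24})$; the paper's weaker exponent appears only because it is happy to recycle the Step~1 lemma, whose loss comes from Theorem~\ref{thm:reg-w1}. One small point worth making explicit in a write-up: the bound $|\sqrt M\,\wh g(1)|\le 1$ (needed to pass from a bound on $|\sqrt M\,\wh g(1)-\wh F(y)|$ to a bound on the difference of squares) follows cleanly from $\sqrt M\,\wh g(1)=\E[\sgn(a+S)S]$ and Cauchy--Schwarz, so both quantities being $O(1)$ is immediate.
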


\ifnum\short=0

As expected the theorem follows by combining two lemmas, one to deal
with the head and one for the tail.

\begin{lemma}\label{lem:head-step3}
We have $\littlesum_{i \in H} (\wh{F}(i) -\wh{g}(i))^2 = O(M^{-1}).$
\end{lemma}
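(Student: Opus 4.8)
The plan is to follow the template of the proof of Lemma~\ref{lem:head-fourier-vs-hermite}: pass from the degree-$1$ coefficients to the ``projected'' functions on the head coordinates, and then bound their sup-distance. Concretely, I would define $f', g' : \{-1,1\}^{|H|} \to [-1,1]$ by
$$f'(x_H) = \E_{y \sim \mathcal{N}(0,1)}\big[F(x_H,y)\big], \qquad g'(x_H) = \E_{z \sim \mathcal{U}_M}\big[g(x_H,z)\big].$$
Since the head weights $w_H$ and the head distribution $\mathcal{U}_{|H|}$ are identical on both sides, for every $i \in H$ we have $\wh{f'}(i) = \wh{F}(i)$ and $\wh{g'}(i) = \wh{g}(i)$, so by Parseval's identity
$$\littlesum_{i \in H}(\wh{F}(i) - \wh{g}(i))^2 \le \littlesum_{S \subseteq H}(\wh{f'}(S) - \wh{g'}(S))^2 = \E_{x_H}\big[(f'(x_H) - g'(x_H))^2\big] \le \|f' - g'\|_{\infty}^2,$$
and it suffices to prove $\|f'-g'\|_{\infty} = O(M^{-1/2})$.

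To do this I would fix $\rho \in \{-1,1\}^{|H|}$, set $\theta = -\,w_H \cdot \rho$ and $S_M = (\littlesum_{i=1}^{M} z_i)/\sqrt{M}$. Since $\sgn(t)=1$ for $t\ge 0$ and $-1$ otherwise and a standard Gaussian has no atoms, $f'(\rho) = 1 - 2\Phi(\theta)$ while $g'(\rho) = 1 - 2\Pr[S_M < \theta]$, so $|f'(\rho)-g'(\rho)| = 2\,|\Pr[S_M < \theta] - \Phi(\theta)|$. The variables $z_i/\sqrt{M}$ are independent with mean $0$, with $\sqrt{\littlesum_i \E[(z_i/\sqrt M)^2]} = 1$ and $\littlesum_i \E[|z_i/\sqrt M|^3] = M \cdot M^{-3/2} = M^{-1/2}$, so Theorem~\ref{thm:be} gives $\sup_x \big|\Pr[S_M \le x] - \Phi(x)\big| \le M^{-1/2}$. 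Accounting for the difference between $\Pr[S_M < \theta]$ and $\Pr[S_M \le \theta]$, which is at most the largest atom $\max_k \binom{M}{k}2^{-M} = O(M^{-1/2})$, we obtain $|f'(\rho)-g'(\rho)| = O(M^{-1/2})$ uniformly in $\rho$, hence $\|f'-g'\|_{\infty}^2 = O(M^{-1})$ as claimed.

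The argument is essentially routine once Berry--Ess{\'e}en (Theorem~\ref{thm:be}) is available; the only point requiring care — the ``hard part,'' modest as it is — is the discreteness of $S_M$, since Berry--Ess{\'e}en controls the right-continuous cdf whereas $g'(\rho)$ involves $\Pr[S_M < \theta]$, which forces the separate bound on the largest atom. It is worth recording that the $M^{-1/2}$ rate here is genuinely the best possible (the Berry--Ess{\'e}en error for a normalized sum of $M$ signs is $\Theta(M^{-1/2})$), which is ultimately why $M$ must be taken polynomially large in $1/\eps$; the specific choice $M = \Theta(\eps^{-24})$ is dictated by the weaker tail estimate needed elsewhere in Step~3 rather than by this head bound.
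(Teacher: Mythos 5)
Your proof is correct and follows essentially the same route as the paper: project to the head coordinates, bound the sum of squared degree-$1$ differences by $\|f'-g'\|_\infty^2$ via Parseval, and apply Berry--Ess\'een pointwise to get $\|f'-g'\|_\infty = O(M^{-1/2})$. The extra step bounding the largest atom of $S_M$ is harmless but not needed: since $\Phi$ is continuous, taking left limits in the Kolmogorov bound $\sup_x |\Pr[S_M\le x]-\Phi(x)|\le M^{-1/2}$ already gives $|\Pr[S_M<\theta]-\Phi(\theta)|\le M^{-1/2}$ directly.
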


\begin{proof}
The proof closely parallels that of Lemma~\ref{lem:head-fourier-vs-hermite}.
Namely, we will define the function $h, h' : \bits^{|H|} \rightarrow [-1,1]$ as
$h(x_H) = \E_{y \sim \mathcal{N}(0,1)} [F(x_H,y)] $ and $h' (x_H) = \E_{z \sim \mathcal{U}_{M}} [g(x_H,z)].$
Note that for $i \in H$, $\wh{h}(i) = \wh{F}(i)$ and $\wh{h'}(i)  = \wh{g}(i)$.
As in Lemma~\ref{lem:head-fourier-vs-hermite}, we have
$\littlesum_{i \in H} (\wh{h}(i)  - \wh{h'}(i)  )^2 \leq \| h- h' \|_{\infty}^2$.
For any $\rho \in \bits^{|H|}$, we can write
$$|h (\rho) - h'(\rho)| = 2 |\Pr_{y \in \mathcal{N}(0,1)} [ w_H \rho + y  \geq 0 ]  - \Pr_{z \in \mathcal{U}_M} [ w_H \rho + \littlesum_{i=1}^M z_i /\sqrt{M} \geq 0] |.$$
Theorem~\ref{thm:be} shows that the RHS  is bounded from above by $2/\sqrt{M}$, which completes the proof of the lemma.
\end{proof}

\begin{lemma}\label{lem:tail-step3}
Let $F$ and $g$ as defined above.
Then
$|\littlesum_{i=1}^M (\wh{g}(z_i))^2 - (\wh{F}(y))^2| = O(M^{-1/24}).$
\end{lemma}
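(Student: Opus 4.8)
The plan is to reduce the claimed two–sided bound to a one–dimensional comparison between a normalized Rademacher sum and a standard Gaussian, obtained by conditioning on the head variables $x_H$. First I would use that $g$ is symmetric in the $M$ tail variables, so $\wh g(z_i)=\wh g(z_1)$ for all $i$ and hence
$\littlesum_{i=1}^M \wh g(z_i)^2 = M\,\wh g(z_1)^2 = \big(\E_{x_H\sim\mathcal U_{|H|},\,z\sim\mathcal U_M}[\,g(x_H,z)\,S\,]\big)^2$, where $S\eqdef \frac1{\sqrt M}\littlesum_{i\le M}z_i$ (here I use $\littlesum_i\wh g(z_i)=M\wh g(z_1)=\sqrt M\,\E[gS]$). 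Conditioning on $x_H=\rho$ and writing $c_\rho\eqdef w_H\cdot\rho$, this becomes $(\E_\rho[\beta_\rho])^2$ with $\beta_\rho\eqdef \E_z[\sgn(c_\rho+S)\,S]$; similarly, since $F(x_H,y)=\sgn(w_H x_H+y)$, conditioning on $x_H=\rho$ gives $\wh F(y)^2=(\E_\rho[\gamma_\rho])^2$ with $\gamma_\rho\eqdef \E_{Y\sim\mathcal N(0,1)}[\sgn(c_\rho+Y)\,Y]$. Since $|\beta_\rho|\le\sqrt{\E[S^2]}=1$ and $|\gamma_\rho|\le\E[|Y|]\le1$, we get $\big|(\E_\rho\beta_\rho)^2-(\E_\rho\gamma_\rho)^2\big|\le 2\,\E_\rho|\beta_\rho-\gamma_\rho|$, so it suffices to bound $|\beta_\rho-\gamma_\rho|$ for each fixed $\rho$.

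Next I would evaluate the Gaussian side exactly. Using the pointwise identity $\sgn(c+t)\,t=|t+c|-c\,\sgn(c+t)$, together with Fact~\ref{fact:gaussian-functions}(i) (applied with $\theta=-c$, using $\phi$ even and $\mu(-c)=-\mu(c)$) one gets $\E_Y[|Y+c|]=2\phi(c)-c\mu(c)$ and $\E_Y[\sgn(c+Y)]=-\mu(c)$, hence $\gamma_\rho=2\phi(c_\rho)$. The same identity gives $\beta_\rho=\E_z[|S+c_\rho|]-c_\rho\,\E_z[\sgn(c_\rho+S)]$, which I then compare to $\gamma_\rho$ by a case analysis on the size of $|c_\rho|$.

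For $|c_\rho|\le\sqrt{\log M}$: the tail weight vector $(1/\sqrt M,\dots,1/\sqrt M)$ is a unit vector that is $(1/\sqrt M)$-regular, so the Berry-Ess{\'e}en theorem (Theorem~\ref{thm:be}, cf.\ Fact~\ref{fact:be} and Fact~\ref{fact:gaussian-vs-reg}) shows the cdf of $S$ matches that of $Y$ up to $O(1/\sqrt M)$ uniformly. This directly gives $|\E_z[\sgn(c_\rho+S)]-\E_Y[\sgn(c_\rho+Y)]|=O(1/\sqrt M)$, and — writing $\E[|S+c_\rho|]=\int_0^\infty\Pr[|S+c_\rho|>a]\,da$ and truncating the integral at $a=|c_\rho|+\sqrt{\log M}$, using the uniform cdf bound below the truncation point and the Hoeffding and Gaussian tail bounds above it — gives $|\E_z[|S+c_\rho|]-\E_Y[|Y+c_\rho|]|=O\big((|c_\rho|+\sqrt{\log M})/\sqrt M\big)$; combining, $|\beta_\rho-\gamma_\rho|=O\big(\sqrt{(\log M)/M}\big)$. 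For $|c_\rho|>\sqrt{\log M}$: both sides are already negligible, so no CLT is needed — Hoeffding (Theorem~\ref{thm:chb}) gives $\Pr[|S|>|c_\rho|]\le 2e^{-c_\rho^2/2}$, so $\E[S]=0$ and Cauchy--Schwarz yield $|\beta_\rho|=2\big|\E[S\,\mathbf 1[|S|>|c_\rho|]]\big|\le 2\sqrt{\Pr[|S|>|c_\rho|]}=O(e^{-c_\rho^2/4})=O(M^{-1/4})$, while $|\gamma_\rho|=2\phi(c_\rho)=O(M^{-1/4})$. Hence in all cases $\sup_\rho|\beta_\rho-\gamma_\rho|=O(M^{-1/4})\le O(M^{-1/24})$, and plugging this into the first paragraph finishes the proof.

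The main obstacle is that, after the normalization $\|w_T\|_2=1$, the head threshold $c_\rho=w_H\cdot\rho$ need not be bounded, so one cannot simply quote a Berry-Ess{\'e}en / Fact~\ref{fact:gaussian-vs-reg}-style comparison uniformly over all $\rho$: the integrand $|t+c_\rho|$ defining $\E_z[|S+c_\rho|]$ grows with $|c_\rho|$, which forces the truncation estimate, and it is also why the large-$|c_\rho|$ regime must instead be dispatched by a plain tail bound. The rest is bookkeeping on the error terms.
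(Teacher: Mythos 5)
Your proof is correct, and it takes a genuinely different and more direct route than the paper's. The paper reuses its ``Gaussianize-then-collapse'' machinery: it replaces the Boolean tail $z$ by Gaussians, invokes Lemma~\ref{lem:tail-fourier-vs-hermite} with $\tau = 1/\sqrt{M}$ to get $\sum_i(\wh g(z_i) - \tilde g(z_i))^2 = O(M^{-1/12})$, converts this to a bound on the difference of squared norms via Fact~\ref{fact:close} (picking up the lossy square root that produces the $M^{-1/24}$ exponent), and then collapses the $M$ Gaussian tail variables to a single one using the rotational-invariance argument of Lemma~\ref{lem:tail-step2} / Proposition~\ref{prop:gaussian-step2}. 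You instead exploit symmetry to reduce both sides to a single scalar, $(\E_\rho\beta_\rho)^2$ versus $(\E_\rho\gamma_\rho)^2$, and compare the two one-dimensional quantities $\E[\sgn(c_\rho+S)\,S]$ and $\E[\sgn(c_\rho+Y)\,Y]$ directly via Berry--Ess\'een, with a case split on $|c_\rho|$ to handle the fact that the head threshold is unbounded after the normalization $\|w_T\|_2 = 1$. This avoids the Hermite-coefficient comparison and the wasteful Fact~\ref{fact:close} step entirely, and yields the stronger bound $O(M^{-1/4})$ (even $O(\sqrt{(\log M)/M})$ when $|c_\rho| \le \sqrt{\log M}$), which of course implies the stated $O(M^{-1/24})$.

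One small slip worth fixing in the large-$|c_\rho|$ case: the intermediate identity ``$|\beta_\rho| = 2\big|\E[S\,\mathbf 1[|S|>|c_\rho|]]\big|$'' is not right --- that expectation vanishes identically by symmetry of $S$. The correct computation (say for $c_\rho > 0$) is $\beta_\rho = \E[S] - 2\E[S\,\mathbf 1[S<-c_\rho]] = -2\E[S\,\mathbf 1[S<-c_\rho]]$, and then Cauchy--Schwarz gives $|\beta_\rho| \le 2\sqrt{\E[S^2]\,\Pr[S<-c_\rho]} \le 2\sqrt{\Pr[|S|>|c_\rho|]}$, which is the bound you actually use; so the conclusion is unaffected. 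You should also note in passing that since $S$ is discrete, $\E_z[\sgn(c_\rho+S)]$ and $1-2F_S(-c_\rho)$ can differ by $2\Pr[S=-c_\rho] = O(1/\sqrt M)$ (the convention $\sgn(0)=1$ matters here), but this is absorbed into the same $O(1/\sqrt M)$ error.
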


\begin{proof}
First note that, by symmetry, for all $i, j \in [M]$ we have $\wh{g}(z_i)  = \wh{g}(z_j).$
By definition, we can write
$$\wh{g}(z_i) = \E_{x_H} \E_{z \sim \mathcal{U}_M} \left[ \sgn \left(w_H \cdot x_H + (\littlesum_{i=1}^M z_i) / \sqrt{M}\right) \right]$$
and let us also denote
$$\tilde{g}(z_i) = \E_{x_H} \E_{z \sim \mathcal{N}^M} \left[ \sgn \left(w_H \cdot x_H + (\littlesum_{i=1}^M z_i) / \sqrt{M}\right) \right].$$
Since  the tail of $g$ is $1/\sqrt{M}$-regular Lemma~\ref{lem:tail-fourier-vs-hermite} implies that
$$\littlesum_{i=1}^M (\wh{g}(z_i) - \tilde{g}(z_i))^2 = O(M^{-1/12})$$ and by Fact~\ref{fact:close}
$$|\littlesum_{i=1}^M (\wh{g}(z_i))^2   - \littlesum_{i=1}^M (\tilde{g}(z_i))^2|= O(M^{-1/24}).$$
Since $\wh{F}(y) = \E_{x_H} \E_{y \sim \mathcal{N}} \left[ \sgn \left(w_H \cdot x_H + y\right) \right]$,
arguments identical to those of Lemma \ref{lem:tail-step2} give us
$$\littlesum_{i=1}^M (\tilde{g}(z_i))^2 = (\wh{F}(y))^2.$$
This completes the proof.
\end{proof}

\fi




\section{Proof of Theorem~\ref{thm:algo-T}:  An approximation algorithm for $\T(\mathbb{S})$}
\label{sec:alg-T}

\ignore{

We start with a couple of definitions. For $n \in \mathbb{N}$ and $n >1$, let $\mathbb{S}^{n-1}$ denote the sphere in $n$ dimensions.
In other words, $\mathbb{S}^{n-1} = \{w  \in \mathbb{R}^n : \| w \|_2=1\}$. Also, for a vector $w \in \R^n$, we define $\T(w) \in \R$ to be
$$
\T(w) = \Pr_{x \in \bn} [|w \cdot x| \le 1]
$$
Let us now define $\mathbb{S}^ = \cup_{n>1} \mathbb{S}^{n-1}$. We define $$\T(\mathbb{S}^{n-1}) = \inf_{w \in \mathbb{S}^{n-1}} \T(w) \quad \text{and} \quad  \T(\mathbb{S}^{\infty}) = \inf_{n  \in \mathbb{N} : n >1 } \T(\mathbb{S}^{n-1})$$
It is known that $3/8 \le \T(\mathbb{S}^{\infty}) \le 1/2$, the lower bound being due to Holzman and Kleitman~\cite{HK92}. The upper bound is tight for the vector $(1/\sqrt{2}, 1/\sqrt{2})$. In fact, the upper bound of $1/2$ is conjectured to be the right answer (due to B. Tomaszewski, see \cite{HK92, G86}. As in the case of $\w^{\leq 1}[\ltf]$, we tackle the problem of computing $\T(\mathbb{S}^{\infty} )$ to an arbitrarily small accuracy $\epsilon>0$. As before, since $\T(\mathbb{S}^{\infty} )$ is defined by taking infimum over an infinite set of objects, it is not a priori clear whether this problem is even decidable.

}

%
%
%
%


In this section we prove Theorem~\ref{thm:algo-T} (restated below):

\medskip

\noindent {\bf Theorem~\ref{thm:algo-T}.}
\emph{
There is an algorithm that, on input an accuracy parameter $\eps>0$,
runs in time \newa{$2^{{\poly(1/\eps)}}$} and outputs a value $\Gamma_\eps$ such that
\[
\T(\mathbb{S})   \leq \Gamma_\eps \leq  \T(\mathbb{S})+\eps.
\]
}

The main structural result required to prove Theorem~\ref{thm:algo-T} is the following theorem
(recall that $\mathbb{S}^{n-1}$ denotes the unit sphere in $\R^n$,
i.e. $\mathbb{S}^{n-1} = \{x \in \R^n: \|x\|_2=1\}$):
\begin{theorem}\label{thm:T-dog}
For any $\epsilon>0$, there is a value $K_{\epsilon} = \poly(1/\epsilon)$ such that for any $n \in \mathbb{N}$,
$$
\T(\mathbb{S}^{n-1}) \le \T(\mathbb{S}^{K_{\epsilon}-1})  \le \T(\mathbb{S}^{n-1}) + \epsilon.
$$
As a corollary, we have $\T(\mathbb{S}) \le \T(\mathbb{S}^{K_{\epsilon}-1})  \le \T(\mathbb{S}) + \epsilon$.
\end{theorem}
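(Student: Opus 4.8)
The left-hand inequality $\T(\mathbb{S}^{n-1}) \le \T(\mathbb{S}^{K_\epsilon-1})$ is the easy direction (and is really only needed for $n \ge K_\epsilon$, which together with the right-hand inequality gives the stated corollary by taking an infimum over $n$): zero-padding a vector does not change $\T(\cdot)$, so $\{\T(w) : w \in \mathbb{S}^{m-1}\} \subseteq \{\T(w) : w \in \mathbb{S}^{n-1}\}$ whenever $m \le n$, hence $\T(\mathbb{S}^{m-1})$ is non-increasing in $m$. For the substantive right-hand inequality it suffices to show: for every proper unit vector $w \in \mathbb{S}^{n-1}$ there is a unit vector $v$ supported on at most $K_\epsilon = \poly(1/\epsilon)$ coordinates with $\T(v) \le \T(w) + \epsilon$, since taking the infimum over $w$ then yields $\T(\mathbb{S}^{K_\epsilon-1}) \le \T(\mathbb{S}^{n-1}) + \epsilon$. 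If $w$ already has at most $K_\epsilon$ nonzero coordinates we take $v = w$, so assume otherwise; fix parameters $\delta = \poly(\epsilon)$ and $L = L(\delta) = \tilde{\Theta}(1/\delta^2)$ and split on the size of the $\delta$-critical index $c(w,\delta)$.

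\medskip
\noindent\textbf{Small critical index, $c(w,\delta) < L$.} Here I would run an analogue of the variable-reduction argument behind Theorem~\ref{thm:large-ci-w1}, which is in fact simpler here because we can compare distribution functions directly via Berry-Ess{\'e}en rather than passing through Hermite coefficients. Write $H = [c(w,\delta)]$ and $T = [n]\setminus H$, so that $w_T$ is ($O(\delta)$-)regular. Conditioned on $x_H$, the event $\{|w\cdot x|\le 1\}$ says that the tail form $w_T\cdot x_T$ lands in a fixed interval, so two applications of the Berry-Ess{\'e}en theorem (Theorem~\ref{thm:be}) give: (i) replacing $w_T\cdot x_T$ by a single Gaussian $G \sim \mathcal{N}(0,\|w_T\|_2^2)$ changes $\Pr[|w\cdot x|\le 1]$ by $O(\delta)$; and (ii) replacing $G$ by $\|w_T\|_2\cdot\frac{1}{\sqrt M}\sum_{i=1}^M z_i$ with $z \in \bits^M$, $M = \poly(1/\epsilon)$, changes it by $O(1/\sqrt M)$. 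The vector $v = (w_H,\, \|w_T\|_2/\sqrt M,\, \ldots,\, \|w_T\|_2/\sqrt M)$ is then a unit vector on $|H| + M \le L + M = \poly(1/\epsilon)$ coordinates with $v\cdot(x_H,z) = w_H\cdot x_H + \|w_T\|_2\cdot\frac{1}{\sqrt M}\sum_i z_i$, so $|\T(v)-\T(w)| = O(\delta + 1/\sqrt M) \le \epsilon$ for suitable choices of $\delta$ and $M$.

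\medskip
\noindent\textbf{Large critical index, $c(w,\delta)\ge L$.} Taking $H = [L-1]$, $T = [n]\setminus H$, Fact~\ref{fact:small-tail} gives $\|w_T\|_2 = \sigma_L < (1-\delta^2)^{(L-1)/2} =: \beta$, and by choosing $L = \poly(1/\epsilon)$ large enough we can make $\beta$ (hence also $t^\ast := \beta\cdot\mathrm{polylog}(1/\epsilon)$) as small as we wish. I would take $v = w_H/\|w_H\|_2$, a unit vector on $L-1 \le K_\epsilon$ coordinates, and bound $\T(v) = \Pr[\,|w_H\cdot x_H|\le\|w_H\|_2\,]$ from above using $\|w_H\|_2 \le 1$ together with a Hoeffding bound $\Pr[\,|w_T\cdot x_T| > t^\ast\,] \le \epsilon/2$ (which needs no regularity of $w_T$). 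This reduces the task to showing $\Pr[\,|w\cdot x| \in (1,\,1+t^\ast]\,] \le \epsilon/2$, i.e.\ that $w\cdot x$ puts little mass in a short window around $\pm 1$ (the $+1$ and $-1$ sides being equivalent by the symmetry of $w\cdot x$). This anti-concentration statement is exactly where the large-critical-index case departs from the small one (and from the BKS algorithm): it is \emph{not} automatic, since the values of $w_H\cdot x_H$ can be spaced arbitrarily finely, and I would deduce it from the anti-concentration machinery of \cite{OS11:chow} for linear forms with large critical index (cf.\ Theorem~\ref{thm:os-perturb}). Choosing $L$ (hence $\beta,t^\ast$) appropriately then gives $\T(v) \le \T(w) + \epsilon$.

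\medskip
\noindent The step I expect to be the main obstacle is precisely this last one: establishing, via the structural results of \cite{OS11:chow}, that a linear form with a large critical index places negligible mass in a short interval around the specific points $\pm 1$ --- and checking that the parameters $\delta$, $L$, $\beta$, $t^\ast$ (and $M$ in the other case) can be chosen simultaneously $\poly(1/\epsilon)$ and mutually consistent. Once Theorem~\ref{thm:T-dog} is in hand, Theorem~\ref{thm:algo-T} follows by computing $\T(\mathbb{S}^{K_\epsilon-1})$ --- an infimum over an uncountable but only $\poly(1/\epsilon)$-dimensional domain --- through a decision procedure for the existential theory of the reals, as sketched in the introduction.
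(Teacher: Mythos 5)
Your proposal is correct and takes essentially the same approach as the paper: a case split on the size of the critical index, with a regular-tail Berry--Ess{\'e}en replacement (Gaussianize the tail, then re-Booleanize with a flat sum of $\poly(1/\epsilon)$ signs) in the small-critical-index case, and anti-concentration of $w\cdot x$ around $\pm 1$ in the large-critical-index case. One correction and one remark: the anti-concentration result you want in the large-critical-index case is Theorem~4.2 of \cite{OS11:chow} (restated as Lemma~\ref{lem:chow-OS} here), not the perturbation result Theorem~\ref{thm:os-perturb}; also, the paper's choice of reduced vector there, $v' = (w_1,\dots,w_K,\sigma_K,0,\dots,0)$, leaves the head linear form exactly unchanged and makes the two indicator functions agree \emph{pointwise} on a set of probability $\ge 1-\epsilon$, which is a bit cleaner than your normalized-head choice $v = w_H/\|w_H\|_2$, though the latter works as well once the $\pm 1$ anti-concentration is in hand.
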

Theorem~\ref{thm:T-dog} implies that to compute $\T(\mathbb{S}) $ up to accuracy $\epsilon$, it suffices to compute $\T(\mathbb{S}^{K_{\epsilon}-1})$; i.e.,  we need to compute $\inf_{w \in \mathbb{S}^{K_{\epsilon} -1}} \T(w)$. While $\mathbb{S}^{K_{\epsilon} -1}$ is a finite-dimensional object, it is an (uncountably) infinite set and hence it is not immediately obvious  how to compute $\inf_{w \in \mathbb{S}^{K_{\epsilon} -1}} \T(w)$. The next lemma says that this can indeed be computed in time \newa{$2^{\widetilde{O}(K_{\epsilon}^2 )}$}.

\begin{lemma}\label{lem:compute-T}
For any $m \in \mathbb{N}$, $\T(\mathbb{S}^{m-1})$ can be computed exactly in time \newa{$2^{\tilde{O}(m^2)}$}.
\end{lemma}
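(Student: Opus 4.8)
The plan is to reduce the computation of $\T(\mathbb{S}^{m-1}) = \inf_{w \in \mathbb{S}^{m-1}} \Pr_{x \in \bits^m}[|w \cdot x| \le 1]$ to a finite collection of instances of the decision problem for the existential theory of the reals, and then to invoke a known algorithm (e.g. Renegar~\cite{Ren:88}) for that problem. The first observation is that for a fixed unit vector $w$, the quantity $\T(w)$ is determined purely by the ``sign-and-threshold pattern'' of $w$ relative to the $2^m$ points $x \in \bits^m$: namely $\T(w) = 2^{-m} \cdot |\{x \in \bits^m : -1 \le w\cdot x \le 1\}|$. Thus $\T(w)$ can only take one of the $2^m+1$ values $k/2^m$ for $k \in \{0,1,\dots,2^m\}$. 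To compute the infimum it therefore suffices, for each value $k$ from $0$ upward, to decide whether there \emph{exists} a unit vector $w \in \R^m$ with $\T(w) \le k/2^m$; the first $k$ for which the answer is ``yes'' gives $\T(\mathbb{S}^{m-1}) = k/2^m$ (one must take a small amount of care about whether the infimum is attained, discussed below).

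The key step is to express the statement ``there exists $w \in \mathbb{S}^{m-1}$ with $|\{x : |w\cdot x|\le 1\}| \le k$'' as a sentence in the existential theory of the reals. We introduce the $m$ real variables $w_1,\dots,w_m$ together with the constraint $\sum_i w_i^2 = 1$. Enumerate the $2^m$ points $x^{(1)},\dots,x^{(2^m)}$ of $\bits^m$. For a subset $S \subseteq [2^m]$ of size $2^m - k$ (the points we want to lie \emph{outside} the slab), form the conjunction $\bigwedge_{j \in S} \big( (w \cdot x^{(j)})^2 > 1 \big)$; then the desired sentence is the disjunction over all such $S$ of these conjunctions, together with $\|w\|_2^2 = 1$ and existential quantification over $w_1,\dots,w_m$. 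Each such formula has $m$ variables, at most $\binom{2^m}{k} + 1 \le 2^{O(m\cdot 2^m)}$ polynomial (in)equalities, each of degree $2$ and with coefficients in $\{-1,0,1\}$ (entries of $x^{(j)}$ are $\pm 1$). The decision procedure of~\cite{Ren:88} runs in time polynomial in the number of polynomials, exponential in the number of variables, and polynomial in the bit-complexity; with $m$ variables and $2^{O(m 2^m)}$ constraints this is $2^{\tilde O(m^2)}$ per call, and since we make at most $2^m + 1$ calls (one per candidate value of $k$) the total running time is still $2^{\tilde O(m^2)}$, as claimed. (Here I am using a generous reading of $\tilde O$; a tighter accounting of the exponents only helps.)

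The main technical subtlety — and the step I expect to require the most care — is the distinction between $<,\le$ and strict/non-strict satisfiability, i.e. whether the infimum over the (non-compact-in-the-relevant-sense, though actually compact) set $\mathbb{S}^{m-1}$ is attained, and how to phrase $|w \cdot x^{(j)}| \le 1$ versus $|w\cdot x^{(j)}| > 1$ at the boundary where $w\cdot x^{(j)} = \pm 1$ exactly. One clean way around this: since $\mathbb{S}^{m-1}$ is compact and, for each fixed subset $S$, the set $\{w : (w\cdot x^{(j)})^2 \ge 1 \ \forall j \in S\}$ is closed, the function $w \mapsto \T(w)$ is lower semicontinuous (it is a finite sum of indicators of closed conditions $(w\cdot x)^2 \ge 1$ after rewriting $|w\cdot x| \le 1$ as the complement), so the infimum \emph{is} attained and equals the minimum; hence it is legitimate to search over the discrete value set $\{k/2^m\}$ and to use the non-strict inequalities $(w\cdot x^{(j)})^2 \ge 1$ in the existential-theory sentence. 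I would verify this semicontinuity claim carefully and then state the sentence with $\ge$ rather than $>$; with that fix the algorithm is: for $k = 0, 1, 2, \dots$, build and decide the sentence ``$\exists w\ (\|w\|_2^2 = 1 \wedge \bigvee_{|S| = 2^m-k} \bigwedge_{j\in S} (w\cdot x^{(j)})^2 \ge 1)$,'' and output $k/2^m$ for the first $k$ that is satisfiable. Correctness is immediate from lower semicontinuity and the fact that $\T(w) \in \{0,1/2^m,\dots,1\}$ always, and the running-time bound follows from the complexity of~\cite{Ren:88} as above.
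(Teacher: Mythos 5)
Your high-level plan---reduce to a finite family of feasibility questions in the existential theory of the reals and invoke Renegar---is the same as the paper's. However, there is a genuine gap in the running-time analysis that your argument does not address, and it is exactly where the paper's key idea lives.

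You propose, for each candidate value $k$, to decide whether some $w$ has at least $2^m - k$ points $x$ with $(w\cdot x)^2 \ge 1$, and you express this as a disjunction over all $\binom{2^m}{k}$ subsets $S$ of that size. Summed over $k$, that is $2^{2^m}$ subsets. Renegar's bound (Theorem~\ref{thm:first-order}) is $L^{O(1)}\cdot (Nd)^{O(n)}$ where $N$ is the number of polynomial (in)equalities; it is not independent of $N$. Even for a single value of $k$ the formula has $N = \binom{2^m}{k}$ up to $2^{\Theta(2^m)}$ atomic constraints, so a single call costs at least $2^{\Omega(2^m)}$ (indeed $(Nd)^{O(m)} = 2^{O(m\log N)} = 2^{O(m^2 2^m)}$), which is doubly exponential in $m$ rather than $2^{\tilde O(m^2)}$. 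Running one Renegar call per disjunct has the same problem, since there are $2^{\Theta(2^m)}$ disjuncts. Your parenthetical ``with $m$ variables and $2^{O(m 2^m)}$ constraints this is $2^{\tilde O(m^2)}$ per call'' is where the analysis breaks: the number of constraints feeds multiplicatively into the exponent.

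The missing idea is to observe that one need not range over \emph{all} subsets of a given size, only over sets that actually arise as $\{x \in \{-1,1\}^m : |w\cdot x| > 1\}$ for some $w$. Any such set is of the form $H \cup (-H)$ where $H = \{x : w\cdot x > 1\}$ is the positive region of a threshold function over $\{-1,1\}^m$; since there are only $2^{O(m^2)}$ distinct threshold functions over $\{-1,1\}^m$ and they can be enumerated in time $2^{O(m^2\log m)}$, one obtains a list of only $2^{O(m^2)}$ candidate sets $S$. For each, a single Renegar call (conjunction of $|S| + 2 \le 2^m + 2$ degree-$2$ constraints in $m$ variables) costs $2^{O(m^2)}$, so the whole algorithm runs in $2^{\tilde O(m^2)}$. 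Without this restriction to halfspace-induced sets, the running-time bound in the lemma is simply not achieved by your scheme.

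One smaller point: your semicontinuity claim is backwards. Each indicator $\mathbf{1}[|w\cdot x|\le 1]$ is the indicator of a closed set, hence upper (not lower) semicontinuous, so $\T$ is upper semicontinuous; an upper semicontinuous function on a compact set attains its maximum, not its minimum, so this does not by itself justify attainment of the infimum. Fortunately attainment is immediate anyway because $\T$ takes only the finitely many values $\{k/2^m : 0\le k \le 2^m\}$, so the infimum is a minimum of a finite set and is attained. The paper tacitly uses this (working with the maximizing set $S^*$ and strict inequalities $(w\cdot x)^2 > 1$), and you can replace your semicontinuity argument with the finite-range observation.
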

Theorem~\ref{thm:algo-T} follows by combining Theorem~\ref{thm:T-dog} and Lemma~\ref{lem:compute-T}.

\subsection{Proof of Theorem~\ref{thm:T-dog}}
\begin{proof}[Proof of Theorem~\ref{thm:T-dog}]
Let $w \in \mathbb{S}^{n-1}$. For $\epsilon>0$, we will prove that there exists a value $K_{\epsilon} = O(1/\epsilon^3)$ and $v \in \mathbb{S}^{K_{\epsilon}-1}$ such that $|\T(v)  - \T(w)|\le \epsilon$. Clearly, the upper bound on $ \T(\mathbb{S}^{K_{\epsilon}-1})$ in Theorem~\ref{thm:T-dog} follows from this. The lower bound on $ \T(\mathbb{S}^{K_{\epsilon}-1})$ is obvious.

To prove the existence of vector $v \in  \mathbb{S}^{K_{\epsilon}-1}$, we begin by considering the $\eta$-critical index of $w$ for $\eta= \epsilon/64$.
We also let $K = C \cdot t/\eta^2   \cdot \log (t/\eta)$  where $t$ will be chosen later to be $O(\log (1/\eta))$ and $C$ to be a sufficiently large constant. Clearly, for this choice of $\eta$ and $t$, we have that $K =O(1/\epsilon^3)$. The next two claims show that whether $\newa{c(w,\eta)}$,
the $\eta$-critical index of $w$, is larger or smaller than $K$, the desired vector $v$
exists in either case.
\begin{claim}\label{clm:large-critical-index}
Let $w \in \mathbb{S}^{n-1}$ be such that $c(w,\eta)>K$. Then there is a vector $v \in \mathbb{S}^{K}$
such that $|\T(v) - \T(w)| \le \eta$.
\end{claim}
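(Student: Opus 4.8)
The plan is to show that, when $c(w,\eta) > K$, the vector $w$ is essentially captured by its first $K$ coordinates, and that the rounding error introduced by discarding the rest can be charged to a tiny interval around $\pm 1$ in which a linear form with large critical index places little mass. First I would assume (as we may) that $w$ is proper, so $w_1 \ge \cdots \ge w_n > 0$ and $\sigma_1 = 1$, and split $[n]$ into the head $H = [K]$ and the tail $T = [n]\setminus H$. Since $c(w,\eta) > K$, Fact~\ref{fact:small-tail} gives $\|w_T\|_2 = \sigma_{K+1} < (1-\eta^2)^{K/2} \le e^{-\eta^2 K/2} =: \beta$, and for the choice $K = C\,t\,\eta^{-2}\log(t/\eta)$ with $t = \Theta(\log(1/\eta))$ this $\beta$ is super-polynomially small in $\eta$ (in particular $\beta \le \eta^{10}$, say). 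I would then take the witness to be $v := (w_1,\dots,w_K,\sigma_{K+1}) \in \R^{K+1}$, which satisfies $\|v\|_2^2 = \|w_H\|_2^2 + \sigma_{K+1}^2 = 1$, so $v \in \mathbb{S}^{K}$. (Note $\T(v)$ is unchanged by permuting $v$'s coordinates, so there is no need to re-sort.)

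The next step is to compare $\T(w)$ and $\T(v)$ by coupling them through a shared copy of $x_H \in \bits^K$. Write $Y := w_H \cdot x_H$; then $\T(w) = \Pr_{x_H,x_T}[\,|Y + w_T\cdot x_T| \le 1\,]$ while $\T(v) = \Pr_{x_H,y}[\,|Y + \sigma_{K+1}y| \le 1\,]$ with $y \in \bits$. By the Hoeffding bound (Theorem~\ref{thm:chb}) applied to $w_T\cdot x_T$, whose summands lie in $[-|w_i|,|w_i|]$ and whose variance is $\beta^2$, we get $\Pr_{x_T}[\,|w_T\cdot x_T| > \lambda\,] \le \eta/2$ for $\lambda := \beta\sqrt{2\ln(4/\eta)}$, which is still tiny. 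On the event $|w_T\cdot x_T| \le \lambda$, and always since $|\sigma_{K+1}y| \le \lambda$, the two indicators $\mathbf{1}[\,|Y + w_T\cdot x_T| \le 1\,]$ and $\mathbf{1}[\,|Y + \sigma_{K+1}y| \le 1\,]$ can disagree only when $|Y| \in [1-\lambda,\,1+\lambda]$. Hence
\[
|\T(w) - \T(v)| \;\le\; \tfrac{\eta}{2} \;+\; \Pr_{x_H}\big[\,|w_H\cdot x_H| \in [1-\lambda,\,1+\lambda]\,\big],
\]
and everything reduces to the anti-concentration estimate $\Pr_{x_H}[\,|w_H\cdot x_H| \in [1-\lambda,1+\lambda]\,] \le \eta/2$.

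The main obstacle is exactly this last estimate: $w_H$ is the head of $w$ and is therefore highly \emph{irregular} (its $\eta$-critical index exceeds $K$, since $|w_j| > \eta\sigma_j \ge \eta\,\|(w_j,\dots,w_K)\|_2$ for all $j \le K$), so Berry-Ess\'een cannot be applied directly. Instead I would invoke the anti-concentration result for linear forms with large critical index from \cite{OS11:chow}: a weight vector whose critical index is at least $K = C\,t\,\eta^{-2}\log(t/\eta)$ places at most $\eta/2$ mass in any interval of length comparable to $\lambda$; the two $\log$ factors and the choice $t = \Theta(\log(1/\eta))$ in the definition of $K$ are calibrated precisely so that this holds (intuitively, a large critical index forces the atoms of $w_H\cdot x_H$ to be spread out, so no short interval catches much mass once $K \gg \eta^{-2}$). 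Applying this to $[1-\lambda,1+\lambda]$ and, by symmetry, to $[-1-\lambda,-1+\lambda]$ yields $|\T(w) - \T(v)| \le \eta$, proving the claim. (If a sufficiently sharp large-critical-index anti-concentration statement is not available off the shelf, there is enough slack in the constants to instead first apply the perturbation theorem, Theorem~\ref{thm:os-perturb}, to two-sidedly anti-concentrate $w\cdot x$ near $\pm 1$ while changing $\T(w)$ by $o(\eta)$, at the cost of a slightly larger but still $\poly(1/\eta)$ value of $K$.)
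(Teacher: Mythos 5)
Your overall strategy is the same as the paper's: take $v = (w_1,\dots,w_K,\sigma_{K+1})$, couple through $x_H$, note that the tail $w_T\cdot x_T$ is tiny by Hoeffding (so disagreements between $\T(w)$ and $\T(v)$ are confined to a thin shell $|w_H\cdot x_H|\in[1-\lambda,1+\lambda]$), and finish with anti-concentration from \cite{OS11:chow}. However, the step you leave as the ``main obstacle'' is in fact a genuine gap. Lemma~\ref{lem:chow-OS} (Theorem~4.2 of \cite{OS11:chow}) gives anti-concentration of $w\cdot x$ \emph{at scale $\sqrt{t}\,\sigma_K$ where $\sigma_K = \|w^{(K)}\|_2$ is the tail weight}. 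You want to apply this to the truncated head $w_H$ normalized to a unit vector, but for a $K$-dimensional vector the quantity $\sigma_K = \|w_H^{(K)}\|_2$ is simply zero, so the lemma is vacuous at the truncation point $K$. Using a smaller truncation $K'<K$ is possible but requires a calculation (to certify that the resulting scale $\sqrt{t'}\cdot\bigl(\sum_{K'<j\le K}w_j^2\bigr)^{1/2}$ dominates your $\lambda=\beta\sqrt{2\ln(4/\eta)}$) that you have not done, and for the crude lower bound $\sum_{K'<j\le K}w_j^2 \ge w_K^2 > \eta^2\beta^2$ one would need $t'\gtrsim \eta^{-2}\log(1/\eta)$, hence $K'\gtrsim \eta^{-4}$, which exceeds the $K$ you chose. (The resolution the paper uses sidesteps this entirely: anti-concentrate the \emph{full} sum $w\cdot x$ at scale $\sqrt{t}\sigma_K$, use Hoeffding to show $|w_T\cdot x_T|\le\sqrt{t/8}\,\sigma_K$ with high probability, and subtract to get a margin for $w_H\cdot x_H$ directly, with no need to apply the anti-concentration lemma to a truncated vector.)

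Your fallback via Theorem~\ref{thm:os-perturb} also does not salvage the argument within $\poly(1/\eta)$: the margin guaranteed there is $\kappa(\tau) = 2^{-O(\log^3(1/\tau)/\tau^2)}$, which is super-exponentially small in $1/\tau$, so requiring the disagreement window $\lambda$ to be below $\kappa(\tau)$ would force $K$ (which must make $\beta=\sigma_{K+1}$ that small) to be super-polynomial in $1/\eta$, not ``slightly larger but still $\poly(1/\eta)$'' as you claim. The correct fix is the one above: apply Lemma~\ref{lem:chow-OS} to the full weight vector $w$ (not to $w_H$), and combine with the tail bound to control $w_H\cdot x_H$ in the shell.
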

\begin{claim}\label{clm:small-critical-index}
Let $w \in \mathbb{S}^{n-1}$ be such that $c(w,\eta) \leq K$. Then there exists $v \in \mathbb{S}^{K  +\lambda(\eta)-1}$ such that
$
|\T(v) - \T(w)| \le \newa{8}\eta
$,
where $\lambda(\eta) = 4/\eta^2$.
\end{claim}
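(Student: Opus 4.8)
The plan is to adapt the ``small critical index'' argument of Theorem~\ref{thm:large-ci-w1} to the quantity $\T$; since $\T(w)$ is a single probability rather than a sum of squared Fourier coefficients, the argument is in fact shorter, needing only two replacement steps. First note that $\T(w)$ is unchanged if we permute the coordinates of $w$, flip the sign of any $w_i$ (undone by flipping $x_i$, which is a bijection of $\{-1,1\}^n$), or append/remove zero coordinates; so we may assume $w$ is proper with $w_1\ge\cdots\ge w_n>0$. Set $m\eqdef c(w,\eta)\le K$, let $H=[m-1]$, $T=\{m,\dots,n\}$, and $s\eqdef\|w_T\|_2$. By the definition of the critical index, $w_m\le\eta\,\sigma_m=\eta s$; since $w$ is proper this says exactly that the tail vector $w_T$ is $\eta$-regular.

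\emph{Step 1 (Gaussianize the tail).} Fix $\rho\in\{-1,1\}^{m-1}$ and let $c_\rho=w_H\cdot\rho$; then $|w\cdot x|\le1$ iff $w_T\cdot x_T\in[-1-c_\rho,\,1-c_\rho]$. By Fact~\ref{fact:be} applied to the $\eta$-regular vector $w_T$, the probability $\Pr_{x_T}[w_T\cdot x_T\in(-1-c_\rho,\,1-c_\rho]]$ is within $2\eta$ of $\Phi([(-1-c_\rho)/s,(1-c_\rho)/s])=\Pr_{G\sim\mathcal N(0,1)}[sG\in[-1-c_\rho,1-c_\rho]]$, and the same Fact bounds the endpoint atom $\Pr[w_T\cdot x_T=-1-c_\rho]\le2\eta$. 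Hence $|\Pr_{x_T}[w_T\cdot x_T\in[-1-c_\rho,1-c_\rho]]-\Pr[sG\in[-1-c_\rho,1-c_\rho]]|\le4\eta$, and averaging over $\rho\sim\mathcal U_{m-1}$ gives $|\T(w)-\Pr_{x_H,G}[\,|w_H\cdot x_H+sG|\le1\,]|\le4\eta$.

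\emph{Step 2 (Booleanize the Gaussian).} Put $M\eqdef\lambda(\eta)=4/\eta^2$ and replace $sG$ by $\frac{s}{\sqrt M}\sum_{i=1}^M z_i$, $z\in\{-1,1\}^M$. The weight vector $(\frac{s}{\sqrt M},\dots,\frac{s}{\sqrt M})$ has $\ell_2$-norm $s$ and regularity parameter $1/\sqrt M=\eta/2$, so the identical estimate (now with error $\eta+\eta=2\eta$ per $\rho$, interval contribution plus atom) gives $|\Pr_{x_H,G}[\,|w_H x_H+sG|\le1\,]-\Pr_{x_H,z}[\,|w_H x_H+\frac{s}{\sqrt M}\sum_i z_i|\le1\,]|\le2\eta$. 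Now let $v$ be the vector whose head block is $w_H$ and whose remaining $M$ coordinates all equal $\frac{s}{\sqrt M}$; then $\|v\|_2^2=\|w_H\|_2^2+M\cdot\frac{s^2}{M}=\|w\|_2^2=1$, and $v$ lives in $\R^{(m-1)+M}$ with $(m-1)+M\le(K-1)+\lambda(\eta)=K+\lambda(\eta)-1$, so after zero-padding $v\in\mathbb S^{K+\lambda(\eta)-1}$. Since $\T(v)=\Pr_{x_H,z}[\,|w_H x_H+\frac{s}{\sqrt M}\sum_i z_i|\le1\,]$, combining the two steps yields $|\T(v)-\T(w)|\le6\eta\le8\eta$, proving the claim.

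The main obstacle is purely the endpoint bookkeeping: Berry--Ess\'een and Fact~\ref{fact:be} give closeness of cdf's / half-open-interval probabilities, whereas $\T$ involves the \emph{closed} interval $[-1,1]$, so passing between them introduces the atoms $\Pr[w_T\cdot x_T=\pm1-c_\rho]$ (and their Step-2 analogues), which must be controlled. Bounding these is exactly where the $\eta$-regularity of the tail — hence the split at the critical index — is essential; everything else is a direct application of the already-developed anti-concentration and Berry--Ess\'een tools. The complementary regime, where one instead invokes the anti-concentration result about margins, is exactly what Claim~\ref{clm:large-critical-index} handles.
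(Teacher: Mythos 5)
Your proof is correct and is essentially the same as the paper's: both split $w$ at the $\eta$-critical index, keep the head exactly, replace the tail with $\lambda(\eta)=4/\eta^2$ equal coordinates of magnitude $(\eta/2)\|w_T\|_2$ so that the new tail is $(\eta/2)$-regular and the norm is preserved, and compare the two tail distributions via Berry--Ess\'een (twice, through the intermediate Gaussian). The only differences are organizational — the paper factors the two Berry--Ess\'een comparisons into Proposition~\ref{prop:regular-tail} and applies it separately to the events $\{w\cdot x\le 1\}$ and $\{w\cdot x\ge -1\}$, whereas you work directly with the interval $[-1,1]$ and account for the endpoint atoms explicitly, yielding a marginally better constant ($6\eta$ vs.\ $8\eta$).
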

In both Claim~\ref{clm:large-critical-index} and Claim~\ref{clm:small-critical-index}, the final vector $v$ is at most $K+ \lambda(\eta) \le O(1/\epsilon^3)$-dimensional. Hence Theorem~\ref{thm:T-dog} follows by choosing $K_{\epsilon} = O(1/\epsilon^3)$.
\end{proof}
We start by proving Claim~\ref{clm:large-critical-index}. We will require the following anti-concentration
lemma from \cite{OS11:chow}:
\begin{lemma}
\label{lem:chow-OS}(Theorem~4.2 in \cite{OS11:chow})
Let $w \in \mathbb{S}^{n-1}$, $0 <\eta <1/2$, $t>1$  and let $K$ be defined (in terms of $t$ and $\eta$) as above. If $c(w,\eta)>K$, then for any $w_0 \in \mathbb{R}$, we have
$$
\Pr_{x \in \{-1,1\}^n} \left[\left|\littlesum_{i=1}^n w_i x_i -w_0\right| \le \sqrt{t}  \cdot \sigma_{K} \right] \le  2^{-t}
$$
where $\sigma_{K} = \newa{\|w^{(K)}\|_2} = \sqrt{\sum_{j>K} w_j^2}$.
\end{lemma}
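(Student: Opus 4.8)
This is Theorem~4.2 of \cite{OS11:chow}, so the plan is to reconstruct that argument. Begin with two reductions. Since $|\littlesum_i w_i x_i - w_0|$ is unchanged under $w_i \mapsto |w_i|$, $x_i \mapsto \sgn(w_i) x_i$, we may assume $w_1 \geq w_2 \geq \cdots \geq w_n > 0$; write $\sigma_i := \sqrt{\littlesum_{j \geq i} w_j^2}$ (so $\sigma_1 = 1$ and $\sigma_{K+1} = \|w^{(K)}\|_2$ is the quantity called $\sigma_K$ in the statement), and set $\delta := \sqrt{t}\,\sigma_{K+1}$. Next, split $[n]$ into the head $H = [K]$ and tail $T = [n] \setminus H$; conditioning on $x_T$ and using independence of the coordinates, it suffices to prove
\[ \sup_{c \in \R}\ \Pr_{x_H}\!\Bigl[\, \bigl|\littlesum_{i=1}^{K} w_i x_i - c\bigr| \leq \delta \,\Bigr] \ \leq\ 2^{-t}. \]
So the whole content is anti-concentration of the ``head'' linear form at scale $\delta$.

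The hypothesis $c(w,\eta) > K$ is exactly what forces the head weights to be spread over many scales, and the two facts driving everything are: (i) $w_i > \eta\sigma_i$ for all $i \leq K$, hence also $w_i \leq \sigma_i$, i.e. $w_i \in (\eta\sigma_i,\,\sigma_i]$; and (ii) $\sigma_{i+1}^2 = \sigma_i^2 - w_i^2 < (1-\eta^2)\sigma_i^2$ for all $i \leq K$, so $\sigma_i$ \emph{strictly contracts by a factor $\sqrt{1-\eta^2}$ at every step} --- there is no stalling (cf. Fact~\ref{fact:small-tail}). Iterating (ii), $\sigma_{K+1}^2 < (1-\eta^2)^{K-j+1}\sigma_j^2$ for every $j \leq K+1$, hence $\sigma_j > e^{\eta^2(K-j+1)/2}\,\sigma_{K+1}$; in particular $\sigma_j \geq (3\sqrt t/\eta)\,\sigma_{K+1} = 3\delta/\eta$ as soon as $K - j + 1 \geq L' := \lceil 2\log(3\sqrt t/\eta)/\eta^2 \rceil$, i.e. for all $j \leq K - L' + 1$. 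Now set $L := \lceil 2\log(3/\eta)/\eta^2 \rceil$ ($\leq L'$) and take the indices $j_m := 1 + (m-1)L$ for $m = 1, \dots, r$, where $r$ is the largest index with $j_r \leq K - L' + 1$, so $r \geq (K - L')/L$. These form a \emph{factor-$3$ super-increasing subsequence}: by (ii), $\sigma_{j_m}/\sigma_{j_{m+1}} = \sigma_{j_m}/\sigma_{j_m + L} \geq (1-\eta^2)^{-L/2} \geq e^{\eta^2 L/2} \geq 3/\eta$, so by (i) $w_{j_m} > \eta\sigma_{j_m} \geq (3/\eta)\sigma_{j_{m+1}} \geq (3/\eta)\,w_{j_{m+1}} \geq 3\,w_{j_{m+1}}$; and $w_{j_r} > \eta\sigma_{j_r} \geq 3\delta$ since $j_r \leq K - L' + 1$. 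With $K = \Theta\!\bigl(t\log(t/\eta)/\eta^2\bigr)$, the constant can be chosen so that $r \geq (K-L')/L = \Omega\!\bigl(t\log(t/\eta)/\log(1/\eta)\bigr) \geq t$.

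Given such a subsequence, anti-concentration is a one-line Littlewood--Offord-at-many-scales argument. If $v_1 > \cdots > v_r > 0$ satisfy $v_m \geq 3 v_{m+1}$ and $v_r \geq 3\delta$, then $\littlesum_{m > k} v_m \leq \tfrac32 v_{k+1} \leq \tfrac12 v_k$, so for distinct $\epsilon,\epsilon' \in \{-1,1\}^r$, letting $k_0$ be the first coordinate where they differ, $\bigl|\littlesum_m v_m \epsilon_m - \littlesum_m v_m \epsilon'_m\bigr| \geq 2 v_{k_0} - 2\littlesum_{m > k_0} v_m \geq v_{k_0} \geq v_r \geq 3\delta > 2\delta$; hence the $2^r$ reals $\bigl\{\littlesum_m v_m \epsilon_m\bigr\}_{\epsilon}$ are pairwise more than $2\delta$ apart, so at most one lies in any interval of length $2\delta$. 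Taking $v_m = w_{j_m}$ and writing $V := \littlesum_{i \in H \setminus \{j_1,\dots,j_r\}} w_i x_i$ (independent of $x_{j_1},\dots,x_{j_r}$), we get for every $c$
\[ \Pr_{x_H}\!\Bigl[\bigl|\littlesum_{i=1}^{K} w_i x_i - c\bigr| \leq \delta\Bigr] \;=\; \E_{V}\!\Bigl[\,\Pr_{x_{j_1},\dots,x_{j_r}}\!\bigl[\littlesum_{m=1}^{r} w_{j_m} x_{j_m} \in [c - V - \delta,\, c - V + \delta]\bigr]\Bigr] \;\leq\; \E_V\bigl[2^{-r}\bigr] \;\leq\; 2^{-t}, \]
which is exactly what we needed.

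The argument above is close to complete; the one place requiring genuine care is the middle paragraph --- choosing the spacing $L$ and cutoff $L'$ and checking $r \geq t$, which is precisely what pins down the quantitative dependence $K = \Theta\!\bigl(t\log(t/\eta)/\eta^2\bigr)$ in the statement. (A cruder variant --- grouping indices into runs of nearly-equal weights and taking one representative per run --- also works but yields only the weaker $K = \Theta(t/\eta^4)$, so the sharper bound does demand the ``no-stalling'' observation~(ii) used above.) The two reductions at the start and the Littlewood--Offord step are routine.
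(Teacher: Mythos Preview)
The paper does not prove this lemma at all: it simply cites Theorem~4.2 of \cite{OS11:chow} and adds a one-line remark that the constant $C$ in the definition of $K$ can be enlarged to replace the original $O(2^{-t})$ bound by $2^{-t}$. So there is no ``paper's own proof'' to compare against beyond that citation.

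Your reconstruction is the standard \cite{OS11:chow} argument and is essentially correct: condition on the tail, use geometric decay of $\sigma_i$ before the critical index to extract a $3$-super-increasing subsequence of $r \geq t$ head weights all exceeding $3\delta$, then apply the elementary fact that the $2^r$ signed sums of such a sequence are pairwise more than $2\delta$ apart. One small arithmetic slip: in the chain ``$w_{j_m} > \eta\sigma_{j_m} \geq (3/\eta)\sigma_{j_{m+1}} \geq (3/\eta)\,w_{j_{m+1}} \geq 3\,w_{j_{m+1}}$'' the second step should read $\eta\sigma_{j_m} \geq 3\,\sigma_{j_{m+1}}$ (you multiplied $\sigma_{j_m} \geq (3/\eta)\sigma_{j_{m+1}}$ by $\eta$, which kills the $1/\eta$), and then $3\sigma_{j_{m+1}} \geq 3w_{j_{m+1}}$ directly --- the conclusion $w_{j_m} \geq 3w_{j_{m+1}}$ is unaffected. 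The counting $r \geq (K-L')/L \geq t$ for $K = C\,t\log(t/\eta)/\eta^2$ with $C$ sufficiently large is fine, which is exactly what the paper's remark about enlarging $C$ is alluding to.
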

\begin{remark}
Lemma~\ref{lem:chow-OS} as stated in~\cite{OS11:chow} has the probability bounded by $O(2^{-t})$. However, by making $C$ large enough, it is obvious that the probability can be made $2^{-t}$.
\end{remark}
\begin{proof}[{{ Proof of Claim~\ref{clm:large-critical-index}}}]
Choose a specific $w_0$ (we will fix it later).
By Lemma~\ref{lem:chow-OS}, we have that,
\begin{equation}\label{eq:ineq1}
\Pr_{x \in \{-1,1\}^n} \left[\left|\littlesum_{i=1}^n w_i x_i -w_0\right| \le \sqrt{t}  \cdot \sigma_{K} \right] \le  2^{-t}.
\end{equation}
Note that $\littlesum_{i=1}^n w_i x_i = \littlesum_{i \le K} w_i x_i + \littlesum_{i>K} w_i x_i$.
Let $w_T$ denote the ``tail weight vector'' $w_T = (w_{K+1},\dots,w_n).$
Since $\Vert w_T \Vert_2 = \sigma_{K}$, by Hoeffding's inequality
(Theorem~\ref{thm:chb}), we have that
\begin{equation}\label{ineq:2}
\Pr_{x \in \{-1,1\}^{n-K}} \left[\left|w_T \cdot x_T\right| \newa{>} \frac{1}{2} \cdot \sqrt{\frac{t}{2}} \cdot \sigma_{K}  \right]  \le 2^{\frac{-t}{8}}.
\end{equation}
Define the set $A_{\good,w_0}$ as follows :
$$
A_{\good,w_0} := \left\{x \in \{-1,1\}^n : \left|\littlesum_{i=1}^n w_i x_i -w_0\right| \ge \sqrt{t}  \cdot \sigma_{K} \textrm{  and } \left|\littlesum_{i>K} w_i x_i\right| \le \sqrt{\frac{t}{8}} \cdot \sigma_{K} \right\}
$$
We next make a couple of observations about the set $A_{\good,w_0}$. The first is that combining (\ref{eq:ineq1}) and (\ref{ineq:2}), we get that $\Pr_{x \in \{-1,1\}^n} [x \not \in A_{\good,w_0}] \le  2^{\frac{-t}{8}} +  2^{-t} $.
Second, for every $x \in A_{\good,w_0}$, we have
$$\left|\littlesum_{i=1}^{K} w_i x_i - w_0\right| \ge \left|\littlesum_{i=1}^{n} w_i x_i - w_0\right| - \left|\littlesum_{i>K} w_i x_i\right| \ge \sqrt{t} \cdot \sigma_K \cdot \left( 1- \frac{1}{2\sqrt{2}} \right) \ge \sqrt{t} \cdot \sigma_K \cdot  \frac{3}{5}.$$
Hence for every $x \in A_{\good, w_0}$, we have
$$\mathbf{1}_{\littlesum_{i=1}^n w_i x_i \le w_0} = \mathbf{1}_{\littlesum_{i=1}^K w_i x_i \le w_0}.$$

Now, consider the vector $v' \in \mathbb{S}^{n}$ defined as follows:

\begin{itemize}

\item $v'_i=w_i$ for $1 \le i \le K$;

\item $v'_{K+1} = \sigma_K$; and

\item $v'_j = 0$ for $j > K+1.$

\end{itemize}
 Note that for every $x \in \{-1,1\}^n$ (and hence for every $x \in A_{\good,w_0}$),
 we have $\littlesum_{i=1}^K w_i x_i = \littlesum_{i=1}^K v'_i x_i$.  Recalling that \
$$
\left|\littlesum_{i=1}^K v'_i x_i  - w_0\right| = \left| \littlesum_{i=1}^K w_i x_i - w_0 \right| \ge   \sqrt{t} \cdot \sigma_K \cdot  \frac{3}{5}
\quad \text{for $x \in A_{\good, w_0}$}
$$
and that all $x \in \bn$ satisfy  $|\littlesum_{i>K}  v'_i x_i | \le \sigma_K$, for $t$ sufficiently large
we get that
every $x \in A_{\good,w_0}$ satisfies
$$
\mathbf{1}_{\littlesum_{i=1}^n v'_i x_i \le w_0} = \mathbf{1}_{\littlesum_{i=1}^K v'_i x_i \le w_0}.
$$
Thus, for $x \in A_{\good, w_0}$, we have that all four events coincide:
$$
 \mathbf{1}_{\littlesum_{i=1}^n v'_i x_i \le w_0}= \mathbf{1}_{\littlesum_{i=1}^K v'_i x_i \le w_0} =  \mathbf{1}_{\littlesum_{i=1}^K w_i x_i \le w_0}=\mathbf{1}_{\littlesum_{i=1}^n w_i x_i \le w_0}.
$$
Likewise, we also get that for $x \in A_{\good,w_0}$,
$$
 \mathbf{1}_{\littlesum_{i=1}^n v'_i x_i \ge w_0}= \mathbf{1}_{\littlesum_{i=1}^K v'_i x_i \ge w_0} =  \mathbf{1}_{\littlesum_{i=1}^K w_i x_i \ge w_0}=\mathbf{1}_{\littlesum_{i=1}^n w_i x_i \ge w_0}.
$$
Now, let $S  = A_{\good,1} \cap A_{\good,-1}$. We then get that for $x \in S$,
$$
 \mathbf{1}_{\littlesum_{i=1}^n v'_i x_i \in [-1,1] } =   \mathbf{1}_{\littlesum_{i=1}^n w_i x_i \in [-1,1] }.
$$
Since $\Pr_{x \in \{-1,1\}^n} [x \not \in A_{\good,w_0}] \le  2^{\frac{-t}{8}} +  2^{-t} $ for $w_0 \in \{-1,1\}$, as a result we have $\Pr[x \not \in S] \le 2 \cdot (2^{-t/8} + 2^{-t} )$.  Taking $t = 8 \log (16/\eta)$, we get that $\Pr[x \not \in S] \le \eta/4$. This implies that $$\left| \Pr_{x \in \{-1,1\}^n} \left[\left|\littlesum_{i=1}^n w_i x_i\right| \le 1 \right] - \Pr_{x \in \{-1,1\}^n} \left[\left|\littlesum_{i=1}^n v'_i x_i\right| \le 1 \right] \right| \le \eta/4.$$ Since the final
$n-K-1$ coordinates of $v'$ are zero, if we simply
truncate $v'$ to the first $K+1$ coordinates, we get a
vector $v \in \mathbb{S}^{K}$ such that
$$
\left| \Pr_{x \in \{-1,1\}^n} \left[\left|\littlesum_{i=1}^n w_i x_i\right| \le 1 \right] - \Pr_{x \in \{-1,1\}^{K+1}} \left[\left|
\littlesum_{i=1}^{\newa{K+1}} v_i x_i\right| \le 1 \right] \right| \le \eta/4,
$$
and Claim~\ref{clm:large-critical-index} is proved.
\end{proof}

We next move to the proof of  Claim~\ref{clm:small-critical-index}. For that, we will need the following key proposition.
 \begin{proposition}\label{prop:regular-tail}
 Let $w, u \in \mathbb{S}^{n-1}$ be such that $w_i  = u_i$ for  $1 \le i \le K$. Suppose moreover that
 $w^{(K)}\eqdef (w_{K+1},\dots,w_n)$ and $u^{(K)} \eqdef (u_{K+1},\dots,u_n)$ are both $\eta$-regular.  Then, for any $w_0 \in \mathbb{R}$, we have that $$\left|\Pr_{x \in \{-1,1\}^n} [{w \cdot x \le w_0}] - \Pr_{x \in \{-1,1\}^n} [{u \cdot x \le w_0}] \right| \le 4\eta$$
 and
 $$\left|\Pr_{x \in \{-1,1\}^n} [{w \cdot x \ge w_0}] - \Pr_{x \in \{-1,1\}^n} [{u \cdot x \ge w_0}] \right| \le 4\eta.$$
 \end{proposition}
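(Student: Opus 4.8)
The plan is to reduce everything to a one-dimensional Berry--Ess\'een comparison on the ``tail'' coordinates. Write $H = [K]$ and $T = \{K+1,\dots,n\}$, and decompose $w \cdot x = w_H \cdot x_H + w^{(K)} \cdot x_T$ and likewise $u \cdot x = u_H \cdot x_H + u^{(K)} \cdot x_T$, where for $x$ uniform on $\bn$ the blocks $x_H$ and $x_T$ are independent and uniform on $\bits^{K}$ and $\bits^{n-K}$. By hypothesis $w_H = u_H$, and since $\|w\|_2 = \|u\|_2 = 1$ we get $\|w^{(K)}\|_2^2 = 1 - \|w_H\|_2^2 = 1 - \|u_H\|_2^2 = \|u^{(K)}\|_2^2$; denote this common value $\sigma^2$. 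If $\sigma = 0$ then $\eta$-regularity forces $w^{(K)} = u^{(K)} = 0$, hence $w = u$ and both inequalities hold trivially; so assume $\sigma > 0$ from now on.

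First I would condition on the head. For $\rho \in \bits^{K}$ put $c(\rho) := w_H \cdot \rho = u_H \cdot \rho$, so that
\[
\Pr_{x}[w \cdot x \le w_0] = \E_{\rho}\Big[\Pr_{x_T}\big[w^{(K)} \cdot x_T \le w_0 - c(\rho)\big]\Big],
\]
and likewise with $u^{(K)}$ in place of $w^{(K)}$. Since $w^{(K)}$ is $\eta$-regular with $\|w^{(K)}\|_2 = \sigma$, Fact~\ref{fact:be} applied to the interval $(-N,\, w_0 - c(\rho)]$ and letting $N \to \infty$ (legitimate, since $w^{(K)} \cdot x_T$ takes only finitely many values, so both the probability and the Gaussian mass converge) yields
\[
\Big|\Pr_{x_T}\big[w^{(K)} \cdot x_T \le w_0 - c(\rho)\big] - \Phi\big((w_0 - c(\rho))/\sigma\big)\Big| \le 2\eta,
\]
and the identical estimate holds with $u^{(K)}$ in place of $w^{(K)}$, as $u^{(K)}$ is also $\eta$-regular of norm $\sigma$. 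By the triangle inequality the two tail probabilities differ by at most $4\eta$ for every fixed $\rho$, and averaging over $\rho$ gives $|\Pr_x[w\cdot x \le w_0] - \Pr_x[u \cdot x \le w_0]| \le 4\eta$.

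For the second inequality I would run the same argument with $-w^{(K)}$ and $-u^{(K)}$ (still $\eta$-regular, still of norm $\sigma$) and threshold $-(w_0 - c(\rho))$, using the identity $\Pr_{x_T}[w^{(K)} \cdot x_T \ge w_0 - c(\rho)] = \Pr_{x_T}[-w^{(K)} \cdot x_T \le -(w_0 - c(\rho))]$; both sides are then within $2\eta$ of $\Phi\big(-(w_0 - c(\rho))/\sigma\big)$, and conditioning on the head as above finishes the proof. I do not expect a serious obstacle here: the only points needing care are the passage from the bounded-interval statement of Fact~\ref{fact:be} to a half-line (a trivial limit, since $w^{(K)}\cdot x_T$ is finitely supported) and the degenerate case $\sigma = 0$, which must be dispatched separately as above; everything else is conditioning, Berry--Ess\'een, and the triangle inequality.
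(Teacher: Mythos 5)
Your proof is correct and follows essentially the same route as the paper's: condition on the head $x_H$, apply Berry--Ess\'een to the two $\eta$-regular tails (which have the same norm), and combine via the triangle inequality; the second inequality is handled by the symmetric/reflected argument. The only differences are cosmetic --- you explicitly dispatch the degenerate case $\sigma = 0$ and note the (trivial) limit passage from the interval statement of Fact~\ref{fact:be} to the half-line, whereas the paper invokes Theorem~\ref{thm:be} directly --- and these do not change the substance of the argument.
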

 \begin{proof}
Consider any fixed setting of variables $x_1, \ldots, x_K \in \{-1,1\}$. Note that $\littlesum_{i=1}^n w_{i}  x_i = \sum_{i \le K} w_{i} x_i + \sum_{i>K} w_{i} x_i  $.  We have  \begin{eqnarray*}
 \Pr_{{x^{(K)} \in \{-1,1\}^{n-K}}} \left[\littlesum_{i=1}^n w_{i}  x_i  \le w_0\right] &=& \Pr_{x^{(K)} \in \{-1,1\}^{n-K}} \left[\littlesum_{i>K} w_{i}  x_i  \le w_0 -\littlesum_{i \le K} w_{i} x_i  \right].  \end{eqnarray*}
 However, as $w^{(K)}$ is $\eta$-regular, by Theorem~\ref{thm:be},  we get
\begin{eqnarray*}
&&\left|\Pr_{x^{(K)} \in \{-1,1\}^{n-K}} \left[\littlesum_{i>K} w_{i}  x_i  \le w_0 -\littlesum_{i \le K} w_{i} x_i  \right] \right.\\
&& \left.- \Pr \left[ \mathcal{N}(0, \Vert w^{(K)} \Vert) \le w_0 -\littlesum_{i \le K} w_{i} x_i  \right]  \right| \le 2 \eta.
\end{eqnarray*}
Likewise, we have 
\begin{eqnarray*}
&&\left|\Pr_{x^{(K)} \in \{-1,1\}^{n-K}} \left[\littlesum_{i>K} u_{i}  x_i  \le w_0 -\littlesum_{i \le K} u_{i} x_i  \right]\right.\\
&& \left. - \Pr \left[ \mathcal{N}(0, \Vert u^{(K)} \Vert) \le w_0 -\littlesum_{i \le K} u_{i} x_i  \right]  \right| \le 2 \eta. 
\end{eqnarray*}
As $w_{i} =u_{i}$ for $1\le i \le K$, we get that 
\begin{eqnarray*}
&&
\left|\Pr_{x^{(K)} \in \{-1,1\}^{n-K}} \left[\littlesum_{i>K} u_{i}  x_i  \le w_0 -\littlesum_{i \le K} u_{i} x_i  \right] \right.\\
&&\left.  - \Pr_{x^{(K)} \in \{-1,1\}^{n-K}} \left[\littlesum_{i>K} w_{i}  x_i  \le w_0 -\littlesum_{i \le K} w_{i} x_i  \right]   \right| \le 4 \eta. 
\end{eqnarray*}
As the above equation is true for any setting of $x_1, \ldots, x_K$, we get that
$$\left|\Pr_{x \in \{-1,1\}^n} [w \cdot x \le w_0] - \Pr_{x \in \{-1,1\}^n} [u \cdot x \le w_0] \right| \le 4\eta.$$ The second part of the proposition follows in exactly the same way.
\end{proof}
\begin{proof}[{{ Proof of Claim~\ref{clm:small-critical-index}}}]
Let $w= (w_1, \ldots, w_K, \ldots, w_n)$ where $K' \leq K$ is the $\eta$-critical index of $w$. Construct a new vector $v'$ such that $v'_i = w_i$ for $1 \le i \le K'$. For $1 \le j \le \lambda(\eta) =4/\eta^2$, we let $v'_{i+j} = (\eta/2) \cdot  \Vert w^{(K')} \Vert$, where as before
$w^{(K')}$ denotes the $(n-K')$-dimensional vector $(w_{K'+1},\dots,w_n).$  For $j > \lambda(\eta)$, we define $v'_{i+j}=0$.

 It is clear that $v' \in \mathbb{S}^n$ and that $v'^{(K')}$ is $\eta$-regular. By Proposition~\ref{prop:regular-tail}, we have
$$
\left|\Pr_{x \in \{-1,1\}^n} [{w \cdot x \le 1}] - \Pr_{x \in \{-1,1\}^n} [{v' \cdot x \le 1}] \right| \le 4\eta
$$
and
$$
\left|\Pr_{x \in \{-1,1\}^n} [{w \cdot x \ge -1}] - \Pr_{x \in \{-1,1\}^n} [{v' \cdot x \ge -1}] \right| \le 4\eta.
$$
Combining these two, we get
$$
\left|\Pr_{x \in \{-1,1\}^n} [|w \cdot x| \le 1] - \Pr_{x \in \{-1,1\}^n} [|v' \cdot x| \le 1] \right| \le 8\eta.
$$
As all the coordinates of $v'$ beyond the first $K'+ \lambda(\eta)$ coordinates are zero, if we truncate $v'$ to its first $K'+ \lambda(\eta)$ coordinates, we get $v \in \mathbb{S}^{K'+\lambda(\eta) -1}$ such that
$$
\left|\Pr_{x \in \{-1,1\}^n} [| w \cdot x |\le 1] - \Pr_{x \in \{-1,1\}^{K'+\lambda(\eta)}} [|v\cdot x |\le 1] \right| \le 8\eta
$$
and Claim~\ref{clm:small-critical-index} is proved.  \end{proof}

 \subsection{Proof of Lemma~\ref{lem:compute-T}}

\newa{
The proof of Lemma~\ref{lem:compute-T} that we give below is based on the decidability of the existential theory of the reals.  We believe that it may be possible to prove this lemma without invoking the
existential theory of the reals, by combining
perturbation-based arguments with convex programming.   However, carefully formalizing such arguments is
a potentially involved process, so we have given (what seemed to us to be) a more concise proof,
using the existential theory of reals, below.}

 \begin{proof}[Proof of Lemma~\ref{lem:compute-T}]
We use the following result due to Renegar~\cite{Ren:88}.
 \begin{theorem}\label{thm:first-order}\cite{Ren:88}
 There is an algorithm $\mathcal{A}_{Ren}$ which, given a set of real polynomials $p_1, \ldots, p_m : \mathbb{R}^n \rightarrow \mathbb{R}$ and $q_1, \ldots, q_k : \mathbb{R}^n \rightarrow \mathbb{R}$ with
 rational coefficients,  decides whether there exists an $x \in \mathbb{R}^n $ such that
 \begin{itemize}
 \item $\forall i \in [m]$, $p_ i(x) \ge 0$, and
 \item $\forall i \in [k]$, $q_i (x) >0$.
 \end{itemize}
 If the bit length of all coefficients in all polynomials is at most $L$ and the maximum degree of
 any polynomial is at most $d$, then the running time of $\mathcal{A}_{Ren}$ is $L^{O(1)} \cdot ((m+k)\cdot d)^{O(n)}$.
 \end{theorem}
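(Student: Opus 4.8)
\medskip
\noindent\textbf{Proof proposal.}
The plan is to reduce the feasibility question for
\[
W \;=\; \{\, x \in \R^n \;:\; p_i(x) \ge 0 \text{ for all } i \in [m], \text{ and } q_j(x) > 0 \text{ for all } j \in [k] \,\}
\]
to the evaluation of the sign conditions at a \emph{finite} sample set $S \subseteq \R^n$ that is guaranteed to meet every connected component of $W$ whenever $W \neq \emptyset$; then $W \neq \emptyset$ if and only if some point of $S$ satisfies all the conditions, and the whole cost is dominated by constructing $S$ and describing its points algebraically. The engine is the \emph{critical points method}: if $V \subseteq \R^n$ is a nonempty, bounded, smooth real algebraic variety and $a \in \R^n$ is chosen generically, then $x \mapsto \|x - a\|^2$ restricted to $V$ has finitely many critical points and every connected component of $V$ contains at least one of them (the component is compact, so the distance attains a minimum there, and at a smooth point a minimum is a critical point). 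Thus it suffices to (i) replace $W$ by a variety of this kind without changing the answer, and (ii) compute the critical points and test the sign conditions there.

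For step (i) I would work over the real closed field $\R\langle\zeta,\zeta'\rangle$ of algebraic Puiseux series in two infinitesimals and deform: intersect with a large ball $\sum_i x_i^2 \le \zeta^{-2}$ to force boundedness, relax each strict constraint $q_j > 0$ to the closed constraint $q_j \ge \zeta'$, and perturb each closed constraint $p_i \ge 0$ to $p_i \ge -\zeta'$ so that the boundary strata become smooth hypersurfaces in general position. By the Tarski--Seidenberg transfer principle, together with the standard fact that the ``limit as $\zeta,\zeta' \to 0^+$'' of a semialgebraic family neither creates nor destroys connected components, $W$ is nonempty over $\R$ if and only if the deformed set is nonempty over $\R\langle\zeta,\zeta'\rangle$, and on the deformed set the union (over strata) of critical loci of the distance to a generic $a$ is a zero-dimensional set meeting every component. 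This introduces only $O(1)$ extra variables, keeps the number of polynomials $O(m+k)$, and keeps all degrees $O(d)$.

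For step (ii), each stratum's critical locus is cut out by a subsystem of $\{\, p_i,\ q_j - \zeta' \,\}$ augmented with the vanishing of the appropriate minors of its Jacobian extended by the row $x - a$; this is a (nearly) square system, and a B\'ezout-type bound gives $|S| = ((m+k)d)^{O(n)}$ --- the same arithmetic that yields the classical $((m+k)d)^{O(n)}$ bound on the number of connected components of a semialgebraic set defined by $m+k$ polynomials of degree $d$ in $n$ variables. I would then compute a \emph{rational univariate representation} of $S$ by a $u$-resultant computation: a single univariate polynomial $h(T)$ of degree $((m+k)d)^{O(n)}$ over $\R\langle\zeta,\zeta'\rangle$, together with rational functions recovering the coordinates of the points of $S$ from the roots of $h$. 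Producing $h$ and these rational functions is a structured determinant/elimination computation; carried out with fast linear algebra, and tracking that all coefficient bit-lengths stay bounded by $L \cdot ((m+k)d)^{O(n)}$, it costs $L^{O(1)} \cdot ((m+k)d)^{O(n)}$. Finally one isolates the real roots of $h$ (Sturm sequences or Thom encodings over the ordered field $\R\langle\zeta,\zeta'\rangle$) and, at each real root, evaluates the signs of every $p_i$ and $q_j$ by a resultant-type sign computation at an algebraic number; the infinitesimals are discarded at the end because the sign of an element of $\R\langle\zeta,\zeta'\rangle$ is read off from its lowest-order Puiseux term. Output ``feasible'' iff some sample point satisfies all $p_i \ge 0$ and $q_j > 0$; multiplying the $L^{O(1)}$ and $((m+k)d)^{O(n)}$ factors together gives the stated running time.

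I expect step (i) to be the main obstacle: getting a single deformation to \emph{simultaneously} achieve boundedness, smoothness, and general position of the strata, and then \emph{rigorously} certifying that feasibility is unchanged, requires the full apparatus of real closed fields, Puiseux series, the transfer principle, and a careful ``connected components in the limit'' lemma --- there is no slick shortcut. A secondary difficulty is the complexity bookkeeping in step (ii): one must check that the $u$-resultant has degree only $((m+k)d)^{O(n)}$ rather than something like $d^{O(n^2)}$, and that coefficient heights grow only to $L \cdot ((m+k)d)^{O(n)}$; both rest on B\'ezout-type degree bounds for the critical-point systems and on (standard but delicate) height bounds for resultants and for arithmetic in $\R\langle\zeta,\zeta'\rangle$.
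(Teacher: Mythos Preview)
The paper does not prove this theorem at all: it is stated as a black-box citation to Renegar~\cite{Ren:88} and immediately applied (via Corollary~\ref{corr:first-order}) to decide feasibility of a quadratic system. So there is no ``paper's own proof'' to compare your proposal against; the comparison you were asked to make is vacuous here.

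That said, your sketch is a recognisable outline of how results of this type are actually proved in the real algebraic geometry literature (critical-points method, infinitesimal deformation to force smoothness and boundedness, rational univariate representation, sign determination at algebraic points), and the complexity bookkeeping you flag --- $((m+k)d)^{O(n)}$ sample points, coefficient heights $L \cdot ((m+k)d)^{O(n)}$ --- is the right shape. Renegar's original argument is somewhat different in its internal organisation (it predates the now-standard critical-points/RUR machinery and instead builds on cell decomposition ideas and careful perturbation arguments), but the later treatments by Basu, Pollack and Roy follow essentially the template you describe. Your self-identified obstacles are the genuine ones: the deformation step and the ``connected components survive in the limit'' lemma are where most of the work lives, and turning your paragraph into an actual proof is book-length. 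For the purposes of this paper, however, none of that is needed --- the theorem is simply quoted.
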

 The following is an obvious corollary of the above theorem :
 \begin{corollary}\label{corr:first-order}
There is an algorithm  which, given a set $S \subset \{-1,1\}^m$, decides whether there exists a vector $w \in \mathbb{S}^{m-1}$ such that every $x \in S$ has $|w \cdot x| >1$.  The algorithm runs in time $2^{O(m^2)}$.
 \end{corollary}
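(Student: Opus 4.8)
The plan is to translate the existential statement ``there exists $w \in \mathbb{S}^{m-1}$ with $|w\cdot x| > 1$ for every $x \in S$'' directly into an instance of the decision problem for the existential theory of the reals solved by the algorithm $\mathcal{A}_{Ren}$ of Theorem~\ref{thm:first-order}, and then to read off the running time.

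First I would introduce $m$ real variables $w_1, \ldots, w_m$ and encode membership of $w$ in the unit sphere by the two non-strict polynomial inequalities
\[
p_1(w) := \littlesum_{i=1}^m w_i^2 - 1 \ge 0, \qquad p_2(w) := 1 - \littlesum_{i=1}^m w_i^2 \ge 0,
\]
which together are equivalent to $\|w\|_2^2 = 1$, i.e.\ to $w \in \mathbb{S}^{m-1}$. Next, for each $x \in S$ I would express the condition $|w\cdot x| > 1$ as the single strict polynomial inequality
\[
q_x(w) := (w\cdot x)^2 - 1 = \Bigl(\littlesum_{i=1}^m x_i w_i\Bigr)^2 - 1 > 0 .
\]
Since $x \in \bits^m$, every coefficient occurring in $q_x$ (the products $x_i x_j \in \bits$ and the constant $-1$), as well as those in $p_1,p_2$, has bit length $O(1)$, and each of these polynomials has degree at most $2$. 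Hence the resulting system has $n = m$ variables, $2$ non-strict inequalities, $k = |S| \le 2^m$ strict inequalities, maximum degree $d = 2$, and coefficient bit length $L = O(1)$.

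I would then run $\mathcal{A}_{Ren}$ on this system: it decides whether there is a point $w \in \R^m$ satisfying $p_1,p_2 \ge 0$ and $q_x > 0$ for all $x \in S$ simultaneously, which by construction is exactly the assertion that some $w \in \mathbb{S}^{m-1}$ has $|w\cdot x| > 1$ for every $x \in S$. Finally I would invoke the running-time bound of Theorem~\ref{thm:first-order}, which gives
\[
L^{O(1)} \cdot \bigl((2 + k)\cdot d\bigr)^{O(n)} \;=\; O(1)\cdot \bigl((2 + 2^m)\cdot 2\bigr)^{O(m)} \;\le\; \bigl(2^{m+2}\bigr)^{O(m)} \;=\; 2^{O(m^2)},
\]
as claimed.

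I do not expect any real obstacle here; the statement is an immediate consequence of Theorem~\ref{thm:first-order}. The only points needing (minor) care are: (i) writing the sphere constraint using polynomial (in)equalities rather than as a literal equality, which is handled by the pair $p_1,p_2$ above; and (ii) checking that the number $|S| \le 2^m$ of strict constraints enters the Renegar bound only inside the base of the exponential $(\,\cdot\,)^{O(m)}$, so that multiplying it by the degree and raising to the $O(m)$ keeps the overall bound at $2^{O(m^2)}$.
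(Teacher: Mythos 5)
Your proof is correct and follows essentially the same approach as the paper: encoding the sphere constraint via the pair of non-strict inequalities $\|w\|_2^2 - 1 \ge 0$ and $1 - \|w\|_2^2 \ge 0$, encoding each condition $|w \cdot x| > 1$ as the strict degree-2 inequality $(w \cdot x)^2 - 1 > 0$, and invoking Renegar's algorithm on the resulting system of $|S| + 2 \le 2^m + 2$ constraints in $m$ variables. The running-time bookkeeping matches the paper's as well.
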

 \begin{proof}
 Let $p_{0} : \mathbb{R}^m \rightarrow \mathbb{R}$ be defined as $p_0(w) = \sum_{i=1}^m w_i^2-1$. For each
 $x \in S$, define $q_x : \mathbb{R}^m \rightarrow \mathbb{R}$ to be $q_x(w) = (\littlesum_{i=1}^m w_i \cdot x_i)^2-1$.  Consider the following set of constraints (call it $\mathcal{L}$) :
 \begin{itemize}
 \item $p_0(w) \ge 0$.
 \item $-p_0(w) \ge 0$.
 \item $\forall x \in S$, $q_x(w) >0$.
 \end{itemize}
 Clearly, this set of constraints has a solution if and only if there exists some $w \in \mathbb{S}^{m-1}$ such that  $|w \cdot x| >1$ for all $x \in S$.  Note that each of the polynomials in $\mathcal{L}$ is of degree $2$ and all coefficients have constant-size representations.  The total number of constraints
 in ${\cal L}$ is $|S| +2 \le 2^m + 2$. This means that $\mathcal{A}_{Ren}$ can decide the feasibility of $\mathcal{L}$ in time $2^{m \cdot O(m)} =2^{O(m^2)}$ which proves the claim.
 \end{proof}

 \newa{Next, we define a set $S \subseteq \{-1,1\}^m$ to be a
\emph{separable} set if there exists some $w \in \mathbb{S}^{m-1}$
such that $S = \{x \in \{-1,1\}^m : |w \cdot x| >1\}$.
 The next claim says that we can enumerate over (a superset of) the set of all separable sets in time $2^{\tilde{O}(m^2)}$.
 \begin{claim}\label{clm:separable}
 There is an algorithm which runs in time $2^{\tilde{O}(m^2)}$ and lists (a superset of ) all the separable sets of $\{-1,1\}^m$.
  \end{claim}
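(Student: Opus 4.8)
The plan is to exploit the fact that whether a point $x\in\{-1,1\}^m$ lies in the separable set determined by a weight vector $w$ depends only on which cell $w$ occupies in a fixed hyperplane arrangement, and then to enumerate a point in each such cell. Concretely, let $\mathcal{A}$ be the arrangement in $\mathbb{R}^m$ consisting of the (at most $2^{m+1}$) hyperplanes $H_{x,b}=\{w\in\mathbb{R}^m:\langle w,x\rangle=b\}$ for $x\in\{-1,1\}^m$ and $b\in\{-1,1\}$. For $w$ off all hyperplanes of $\mathcal{A}$, the set $S_w\eqdef\{x:|\langle w,x\rangle|>1\}$ is a function of the sign vector $\big(\sgn(\langle w,x\rangle-1),\ \sgn(\langle w,x\rangle+1)\big)_{x\in\{-1,1\}^m}$ (indeed $|\langle w,x\rangle|>1$ iff $\sgn(\langle w,x\rangle-1)=+1$ or $\sgn(\langle w,x\rangle+1)=-1$), so $S_w$ is constant on each face of $\mathcal{A}$. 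Since $\mathcal{A}$ has $N\le 2^{m+1}$ hyperplanes in $\mathbb{R}^m$ it has at most $(2N)^m=2^{O(m^2)}$ faces, so there are at most $2^{O(m^2)}=2^{\tilde O(m^2)}$ separable sets; the content of the claim is thus the enumeration. (One legitimate option here is simply to invoke standard algorithms for enumerating the faces of a rational hyperplane arrangement, which output a point in every face in time $\poly(N^m)$; applying $w\mapsto S_w$ to those points gives the list. I describe an explicit self-contained version below.)

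For the algorithm, I would enumerate representative points via vertices plus infinitesimal perturbations. For every $m$-element set $T=\{(x^{(1)},b_1),\dots,(x^{(m)},b_m)\}$ of hyperplane labels whose equations are linearly independent, solve exactly over $\mathbb{Q}$ for the unique vertex $v_T$ with $\langle v_T,x^{(j)}\rangle=b_j$; by Hadamard's bound its entries are rationals of bit-length $\tilde O(m)$. For each such $T$ and each $\sigma\in\{-1,1\}^m$, let $d=d(T,\sigma)$ solve $\langle d,x^{(j)}\rangle=\sigma_j$ for all $j$, and consider $v_T+\epsilon d$ for an infinitesimal $\epsilon>0$: for every $x$ and every $b\in\{-1,1\}$, the sign of $\langle v_T+\epsilon d,x\rangle-b$ for all small $\epsilon>0$ equals $\sgn(\langle v_T,x\rangle-b)$ if that is nonzero, equals $\sgn(\langle d,x\rangle)$ if $\langle v_T,x\rangle=b$ but $\langle d,x\rangle\neq 0$, and equals $0$ otherwise; from these one reads off $S(T,\sigma)\eqdef\{x:|\langle v_T+\epsilon d,x\rangle|>1\}$. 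Also record $S(v_T)\eqdef\{x:|\langle v_T,x\rangle|>1\}$. The output is the list of all $S(T,\sigma)$ and $S(v_T)$ obtained. There are at most $\binom{2^{m+1}}{m}\cdot 2^m=2^{O(m^2)}$ iterations, each using $\poly(m)$ exact operations on $\tilde O(m)$-bit numbers and $O(2^m)$ sign evaluations, for a total running time $2^{O(m^2)}=2^{\tilde O(m^2)}$.

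Finally I would prove completeness: for every $w_0\in\mathbb{S}^{m-1}$ the set $S_{w_0}$ occurs in the list. Let $F_0$ be the (relatively open) face of $\mathcal{A}$ containing $w_0$, so $S_w=S_{w_0}$ for all $w\in F_0$. The arrangement $\mathcal{A}$ is pointed, because the normals $\{x:x\in\{-1,1\}^m\}$ span $\mathbb{R}^m$ (as $e_i=\tfrac12\big((e_i+e_j)+(e_i-e_j)\big)$ with $e_i\pm e_j$ among the normals); hence the polyhedron $P\eqdef\overline{F_0}$ is line-free and has a vertex $v$, which is a vertex of $\mathcal{A}$, so $v=v_T$ for some admissible $T$ consisting of $m$ linearly independent hyperplanes through $v$. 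If $\dim F_0=0$ then $w_0=v$ and $S_{w_0}=S(v_T)$ is in the list. If $\dim F_0\ge1$, choose $T$ through $v$ and set $\sigma_h$ equal to the sign of $F_0$ relative to $h$ when $h\not\supseteq F_0$, and $\sigma_h=-b_h$ (a strictly inward perturbation) when $h\supseteq F_0$; the point $v_T+\epsilon d(T,\sigma)$ then lies in $F_0$, so $S(T,\sigma)=S_{w_0}$. The reason the inward choice works cleanly is that every $h\ni v$ has $b_h\in\{-1,1\}$ and $\langle v,x_h\rangle=b_h$, so perturbing with $\langle d,x_h\rangle=-b_h$ sends $\langle v+\epsilon d,x_h\rangle$ to $(1-\epsilon)b_h$, keeping $|\langle v+\epsilon d,x_h\rangle|<1$; moreover all $F_0$-containing hyperplanes share the point-set $F_0$, so the $-b_h$ choices on a spanning subset of them force the right value of $\langle d,x_h\rangle$ on all of them. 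I expect the main obstacle to be precisely this last bookkeeping for degenerate vertices and intermediate-dimensional faces $F_0$ — verifying that $T$ and $\sigma$ can always be chosen consistently and that hyperplanes through $v$ not in $T$ end up on the correct side — which is why it may be cleaner in the final write-up to cite the known hyperplane-arrangement face-enumeration routine rather than re-derive it.
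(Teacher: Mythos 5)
Your high-level plan (enumerate representative points of the faces of the hyperplane arrangement $\mathcal{A}$ in $w$-space and apply $w \mapsto S_w$ to each) is a legitimate alternative route, but the explicit vertex-perturbation scheme has a gap in the completeness step, and it is precisely the one you flag yourself. After fixing $T \subseteq H(v)$ (an independent $m$-set of hyperplanes through the vertex $v$) and the sign vector $\sigma$, the direction $d(T,\sigma)$ is entirely determined; so for any $h \in H(v) \setminus T$ that does \emph{not} contain $F_0$, the sign of $\langle d, x_h\rangle$ is forced, and nothing in your argument guarantees it agrees with the side of $F_0$ relative to $h$. Your observation that the $\sigma_h = -b_h$ choices on a spanning subset of $\{h \in H(v) : h \supseteq F_0\}$ propagate correctly to all such $h$ is right (since $x_{h'} = \sum_i c_i x_{h_i}$ and, all these hyperplanes passing through $v$, $b_{h'} = \sum_i c_i b_{h_i}$, whence $\langle d, x_{h'}\rangle = -b_{h'}$), but that says nothing about $h \in H(v) \setminus \bigl(T \cup H(F_0)\bigr)$. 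A smaller misstatement: after pushing strictly inward off the $F_0$-containing hyperplanes, $v_T + \epsilon d$ does \emph{not} lie in $F_0$; what you need and implicitly use is only that it produces the same set $S$. Repairing this either requires a genuinely different perturbation scheme (e.g.\ lexicographic symbolic perturbation) or, as you suggest, directly citing a face-enumeration routine for rational arrangements; as written, the construction could miss a face and hence a separable set.

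For contrast, the paper's proof is a different decomposition that sidesteps arrangement combinatorics entirely. It exploits the central symmetry of $S_w$ (each $x$ and $-x$ are simultaneously in or out of $S_w$): writing $S_{+,w} = \{x : w\cdot x > 1\}$, one has $S_w = S_{+,w} \cup (-S_{+,w})$, so it suffices to enumerate the possible $S_{+,w}$. Because $\{-1,1\}^m$ is finite, the slack $\alpha_w \eqdef \inf_{x \in S_{+,w}}(w\cdot x - 1)$ is strictly positive when $S_{+,w}\neq\emptyset$, so $S_{+,w}$ is exactly the positive set of the genuine LTF $h_w(x) = \sign(w\cdot x - 1 - \alpha_w/2)$. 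Enumerating all LTFs over $\{-1,1\}^m$ --- a $2^{O(m^2)}$-size list enumerable in time $2^{O(m^2\log m)}$, a fact the paper already uses elsewhere --- and outputting $h^{-1}(1)\cup(-h^{-1}(1))$ for each gives the desired superset. Your approach buys a transparent geometric picture in $w$-space; the paper's buys reuse of the LTF-enumeration black box and avoids the degenerate-face bookkeeping that trips up your version.
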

  \begin{proof}
Consider any separable set $S \subseteq \{-1,1\}^m$, so
there is some $w \in \mathbb{S}^{m-1}$ such that
$S = \{x \in \{-1,1\}^m : |w \cdot x| >1\}$.
Define $S_{+,w} = \{x \in \{-1,1\}^m : w \cdot x >1\}$.
If we define $S'_{+,w}$ to be the set obtained by negating every element
of $S_{+,w}$, then it is easy to see that $S = S_{+,w} \cup S'_{+,w}$.
Thus, it suffices to enumerate over the sets $S_{+,w}$ for all choices
of $w \in \mathbb{S}^{m-1}$, and output $S = S_{+,w} \cup S'_{+,w}$.

Next, we show how to enumerate all such sets $S_{+,w}$.  For this,
given any $w \in \mathbb{S}^{m-1}$, we define $\alpha_w =
\inf_{x \in S_{+,w}} (w \cdot x-1)$. It is easy to see that if we
define $h_w(x) = \sign(w \cdot x -1 -\alpha/2)$, then $S_{+,w} = h_w^{-1}(1)$.
Thus, if we enumerate all possible halfspaces $h$ over $\{-1,1\}^m$ and
list $h^{-1}(1)$ for each halfspace $h$,
then all possible subsets of the form   $h_w^{-1}(1)$
are included in this list. However, it is well known that
there are $2^{O(m^2)}$ halfspaces $h$ over $\{-1,1\}^m$,
and that these halfspaces can be enumerated in time
$2^{O(m^2 \log m)}$ (see e.g. \cite{MORS:10}).
This finishes the proof of the claim.
  \end{proof}
 }

Finally, our algorithm is simply the following:

\medskip


\begin{itemize}

\item \newa{Run the algorithm in Claim~\ref{clm:separable} and arrange the sets in the output in decreasing order of their size.}
\newa{For each set $S$ in this list},

\begin{itemize}

\item Use the algorithm in Corollary~\ref{corr:first-order} to decide
whether there is a vector $w \in \mathbb{S}^{m-1}$ such that
$|w \cdot x| >1$ for all $x \in S$.
\item If there is such a vector $w$ then exit and return $1 - |S|/2^{m-1}$,
else go to the next set $S$.
\end{itemize}

\end{itemize}


\medskip

The total running time of the first step of the algorithm is clearly $2^{\widetilde{O}(m^2)}$. Since the total number of sets in the list is $2^{\tilde{O}(m^2)}$ and every step in the algorithm takes time $2^{O(m^2)}$, hence the total running time is $2^{\widetilde{O}(m^2)}$.

To establish the 
correctness of the algorithm, fix a vector $w^* \in \mathbb{S}^{m-1}$ such that $\Pr_{x \in \{-1,1\}^m}[|w^* \cdot x| \leq 1] = \T(\mathbb{S}^{m-1})$,
i.e. $\Pr_{x \in \{-1,1\}^m}[|w^* \cdot x| \leq 1] \leq
\Pr_{x \in \{-1,1\}^m}[|w \cdot x| \leq 1]$
for all $w \in \mathbb{S}^{m-1}$.  \ignore{By symmetry of the random
variable $v \cdot x$ for every vector $v$,} We have that
 $\Pr_{x \in \{-1,1\}^m}[|w^* \cdot x| > 1] \geq
\Pr_{x \in \{-1,1\}^m}[|w \cdot x| > 1].$
Since our algorithm enumerates over all
sets $S \subseteq \{-1,1\}^m$ in the output of Claim~\ref{clm:separable} in decreasing order of their size, its correctness follows.
 \end{proof}

\section*{Acknowledgements}
We thank Mihalis Yannakakis for helpful conversations.

\bibliography{allrefs}

\newcommand{\etalchar}[1]{$^{#1}$}
\begin{thebibliography}{MORS10}

\bibitem[AHS12]{AHS:12}
N.~Alon, H.~Huang, and B.~Sudakov.
\newblock {Nonnegative k-sums, fractional covers, and probability of small
  deviations}.
\newblock {\em Journal of Combinatorial Theory, Series B}, 102(3):784--796,
  2012.

\bibitem[BD12]{BK:12}
V.~Bentkus and D.~Dzindzalieta.
\newblock {A tight Gaussian bound for weighted sums of Rademacher random
  variables}.
\newblock Technical report, preprint, 2012.

\bibitem[BDD98]{BenDavidDichterman:98}
S.~Ben-David and E.~Dichterman.
\newblock Learning with restricted focus of attention.
\newblock {\em Journal of Computer and System Sciences}, 56(3):277--298, 1998.

\bibitem[Ben04]{Bentkus:04}
V.~Bentkus.
\newblock {On Hoeffding's Inequality}.
\newblock {\em {Annals of Probability}}, 32:1650--1673, 2004.

\bibitem[BKS99]{BKS:99}
I.~Benjamini, G.~Kalai, and O.~Schramm.
\newblock Noise sensitivity of {B}oolean functions and applications to
  percolation.
\newblock {\em Inst. Hautes \'{E}tudes Sci. Publ. Math.}, 90:5--43, 1999.

\bibitem[BTNR02]{BNR02}
A.~Ben-Tal, A.~Nemirovski, and C.~Roos.
\newblock {Robust Solutions of Uncertain Quadratic and Conic-Quadratic
  Problems}.
\newblock {\em {SIAM Journal on Optimization}}, 13(2):535--560, 2002.

\bibitem[DDFS12]{DDFS:12stoc}
A.~De, I.~Diakonikolas, V.~Feldman, and R.~Servedio.
\newblock {Near-optimal solutions for the Chow Parameters Problem and
  low-weight approximation of halfspaces}.
\newblock In {\em Proc. 44th ACM Symposium on Theory of Computing (STOC)},
  pages 729--746, 2012.

\bibitem[DGJ{\etalchar{+}}10]{DGJ+:10}
I.~Diakonikolas, P.~Gopalan, R.~Jaiswal, R.~Servedio, and E.~Viola.
\newblock Bounded independence fools halfspaces.
\newblock {\em SIAM J. on Comput.}, 39(8):3441--3462, 2010.

\bibitem[DS09]{DiakonikolasServedio:09}
I.~Diakonikolas and R.~Servedio.
\newblock Improved approximation of linear threshold functions.
\newblock In {\em Proc.\ 24th Annual IEEE Conference on Computational
  Complexity (CCC)}, pages 161--172, 2009.

\bibitem[Fel68]{Feller}
W.~Feller.
\newblock {\em An introduction to probability theory and its applications}.
\newblock John Wiley \& Sons, 1968.

\bibitem[Fil12]{Filmus-khintchine}
Y.~Filmus.
\newblock {Khintchine-Kahane using Fourier Analysis}.
\newblock Posted at http://www.cs.toronto.edu/\~{ }yuvalf/KK.pdf, 2012.

\bibitem[Gar07]{Garling}
D.J.H. Garling.
\newblock {\em Inequalities: A journey into linear analysis}.
\newblock Cambridge, 2007.

\bibitem[GL94]{GL:94}
C.~Gotsman and N.~Linial.
\newblock Spectral properties of threshold functions.
\newblock {\em Combinatorica}, 14(1):35--50, 1994.

\bibitem[Guy86]{G86}
R.~K. Guy.
\newblock {Any answers anent these analytical enigmas?}
\newblock {\em {American Math Monthly}}, 93:279--281, 1986.

\bibitem[Haa82]{Haagerup82}
U.~Haagerup.
\newblock {The best constants in the Khintchine inequality}.
\newblock {\em {Studia Math.}}, 70:231--283, 1982.

\bibitem[HK92]{HK92}
R.~Holzman and D.~J. Kleitman.
\newblock {On the product of sign vectors and unit vectors}.
\newblock {\em Combinatorica}, 12(3):303--316, 1992.

\bibitem[HK94]{HK:93}
P.~Hitczenko and S.~Kwapie\'n.
\newblock {On the Rademacher series}.
\newblock {\em {Probability in Banach spaces}}, pages 31--36, 1994.
\newblock Birkh{\"a}user, Boston, MA.

\bibitem[HLNZ08]{HLNZ:07}
S.~He, Z.~Luo, J.~Nie, and S.~Zhang.
\newblock {Semidefinite Relaxation Bounds for Indefinite Homogenous Quadratic
  Optimization}.
\newblock {\em {SIAM Journal on Optimization}}, 19:503--523, 2008.

\bibitem[Jac06]{Jackson:06}
J.~Jackson.
\newblock Uniform-distribution learnability of noisy linear threshold functions
  with restricted focus of attention.
\newblock In {\em Proceedings of the Nineteenth Annual Conference on
  Computational Learning Theory (COLT)}, pages 304--318, 2006.

\bibitem[KLO96]{KLO:96}
S.~Kwapie\'{n}, R.~Latala, and K.~Oleszkiewicz.
\newblock {Comparison of moments of sums of independent random variables and
  differential inequalities}.
\newblock {\em J. Funct. Anal.}, 136:258--268, 1996.

\bibitem[KSTJ99]{KST99}
H.~K\"{o}nig, C.~Sch\"{u}tt, and N.~Tomczak-Jaegermann.
\newblock {Projection constants of symmetric spaces and variants of
  Khintchine's inequality}.
\newblock {\em J. Reine Agnew. Math.}, 511:1--42, 1999.

\bibitem[LO94]{LO94}
R.~Latala and K.~Oleszkiewicz.
\newblock {On the best constant in the Khinchin-Kahane inequality}.
\newblock {\em {Studia Math.}}, 109:101--104, 1994.

\bibitem[MORS10]{MORS:10}
K.~Matulef, R.~O'Donnell, R.~Rubinfeld, and R.~Servedio.
\newblock Testing halfspaces.
\newblock {\em SIAM J. on Comput.}, 39(5):2004--2047, 2010.

\bibitem[MS90]{Montgomery-Smith:90}
S.~Montgomery-Smith.
\newblock {The distribution of Rademacher sums}.
\newblock In {\em Proceedings of the American Mathematical Society}, pages
  517--522, 1990.

\bibitem[MT94]{MaassTuran:94}
W.~Maass and G.~Turan.
\newblock How fast can a threshold gate learn?
\newblock In {\em Computational Learning Theory and Natural Learning Systems:
  Volume I: Constraints and Prospects}, pages 381--414. MIT Press, 1994.

\bibitem[O'D12]{ODonnellbook}
R.~O'Donnell.
\newblock Analysis of boolean functions.
\newblock Technical report, 2012.
\newblock Book serialization available at
  http://analysisofbooleanfunctions.org/.

\bibitem[Ole96]{Olek:96}
K.~Oleszkiewicz.
\newblock {On the Stein property of Rademacher sequences}.
\newblock {\em {Probability and Mathematical Statistics}}, 16:127--130, 1996.

\bibitem[Ole99]{Oleszkiewicz:99}
K.~Oleszkiewicz.
\newblock {Comparison of moments via Poincar\'{e}-type inequality}.
\newblock {\em Contemporary Mathematics}, 234:135--148, 1999.

\bibitem[OS11]{OS11:chow}
R.~O'Donnell and R.~Servedio.
\newblock {The Chow Parameters Problem}.
\newblock {\em SIAM J. on Comput.}, 40(1):165--199, 2011.

\bibitem[Per04]{Peres:04}
Y.~Peres.
\newblock Noise stability of weighted majority, 2004.
\newblock Available at http://arxiv.org/abs/math/0412377.

\bibitem[Pin94]{Pinelis:94}
I.~Pinelis.
\newblock Extremal probabilistic problems and {H}otelling's $t^2$ test under a
  symmetry condition.
\newblock {\em Ann. Statist.}, 22:357--368, 1994.

\bibitem[Pin12]{Pinelis:12}
I.~Pinelis.
\newblock {An asymptotically Gaussian bound on the Rademacher tails}.
\newblock {\em {Electronic Journal of Probability.}}, 17:1--22, 2012.

\bibitem[Ren88]{Ren:88}
J.~Renegar.
\newblock {A faster PSPACE algorithm for deciding the existential theory of the
  reals}.
\newblock {\em {IEEE Annual Symposium on Foundations of Computer Science}},
  pages 291--295, 1988.

\bibitem[Ser07]{Servedio:07cc}
R.~Servedio.
\newblock {Every linear threshold function has a low-weight approximator}.
\newblock {\em Comput. Complexity}, 16(2):180--209, 2007.

\bibitem[So09]{So09}
A.~M.-C. So.
\newblock Improved approximation bound for quadratic optimization problems with
  orthogonality constraints.
\newblock In {\em SODA}, pages 1201--1209, 2009.

\bibitem[Sza76]{Szarek:76}
S.J. Szarek.
\newblock {On the best constants in the Khinchine inequality}.
\newblock {\em Studia Math.}, 58:197--208, 1976.

\bibitem[Tom87]{Tomaszewski-Khintchine}
B.~Tomaszewski.
\newblock {A simple and elementary proof of the Khintchine inequality with the
  best constant}.
\newblock {\em {Bull. Sci. Math.}}, 111(2):103--109, 1987.

\end{thebibliography}
\bibliographystyle{alpha}

\end{document}